  \newtheorem{theorem}{Theorem}
  \newtheorem{lemma}[theorem]{Lemma}
\newtheorem{corollary}[theorem]{Corollary}
\newtheorem{fact}{Fact}
\newtheorem{proposition}[theorem]{Proposition}
\newtheorem{definition}[theorem]{Definition}
\newtheorem{example}{Example}
\newtheorem{remark}[theorem]{Remark}
\providecommand{\customgenericname}{}
\newcommand{\newcustomtheorem}[2]{%
  \newenvironment{#1}[1]
  {%
   \renewcommand\customgenericname{#2}%
   \renewcommand\theinnercustomgeneric{##1}%
   \innercustomgeneric
  }
  {\endinnercustomgeneric}
}
\def\old@comma{,}
    \old@comma\discretionary{}{}{}%
\newcommand\numberthis{\addtocounter{equation}{1}\tag{\theequation}}
\definecolor{darkblue}{rgb}{0.1,0.1,0.8}
\definecolor{DarkGreen}{rgb}{0,0.6,0}
\definecolor{brickred}{rgb}{0.8, 0.25, 0.33}
\definecolor{britishracinggreen}{rgb}{0.0, 0.26, 0.15}
\definecolor{calpolypomonagreen}{rgb}{0.12, 0.3, 0.17}
\definecolor{ao(english)}{rgb}{0.0, 0.5, 0.0}
	\definecolor{cadmiumgreen}{rgb}{0.0, 0.42, 0.24}
\definecolor{burgundy}{rgb}{0.5, 0.0, 0.13}
\newcommand{\addv}[3]{%
	\iftoggle{Track}{%
    	\IfEqCase{#1}{%
       	 	{a}{\ifthenelse{\equal{#2}{ON}}{{\color{cadmiumgreen}#3}}{#3}}%
        	{b}{\ifthenelse{\equal{#2}{ON}}{{\color{brickred}#3}}{#3}}%
       		{c}{\ifthenelse{\equal{#2}{ON}}{{\color{burgundy}#3}}{#3}}%
    	}[\PackageError{tree}{Undefined option to tree: #1}{}]%
	}{#3}%
}
\newcounter{relctr} %
\everydisplay\expandafter{\the\everydisplay\setcounter{relctr}{0}} %
\global\long\def\RR{\mathbb{R}}
\global\long\def\EE{\mathbb{E}}
\global\long\def\PP{\mathbb{P}}
\global\long\def\11{\mathbbm{1}}
\newcommand{\bfx}{\mathbf{x}}
\newcommand{\bfX}{\mathbf{X}}
\newcommand{\bfY}{\mathbf{Y}}
\newcommand{\bfZ}{\mathbf{Z}}
\newcommand{\CA}{\mathcal{A}}
\newcommand{\ClC}{\mathcal{C}}
\newcommand{\CS}{\mathcal{S}}
\newcommand{\CT}{\mathcal{T}}
\newcommand{\one}[1]{\mathbf{1}_{\{#1\}}}
\global\long\def\+{\oplus}
\newcommand\pmm{\{-1,1\}}
\def\<{\langle}
\def\>{\rangle}
\DeclareMathOperator*{\sign}{sign}
  \renewcommand{\var}{\mathsf{var}}
  \newcommand{\var}{\mathsf{var}}
 \newcommand{\abs}[1]{\lvert#1\rvert}
 \newcommand{\norm}[1]{\lVert#1\rVert}
  \renewcommand{\set}[1]{\left\{#1\right\}}
  \newcommand{\set}[1]{\left\{#1\right\}}
  \newcommand{\poly}{\emph{poly}}
\DeclareMathOperator*{\argmin}{arg\,min}
\DeclareMathOperator*{\argmax}{arg\,max}
\newtheorem{observation}{Observation}
\newenvironment{proofnote}{\noindent {\textbf{Proof Note:}}}{\hfill$\blacksquare$}
\newcommand{\cuben}{\{0,1\}^n}
\newcommand{\reals}{\mathbb{R}}
\newcommand{\zb}{\bar{0}}
\newcommand{\lzb}[1]{\overline{0#1}}
\newcommand{\rzb}[1]{\overline{#10}}
\newcommand{\zok}{\{0,1\}^k}
\newcommand{\zonk}{\{0,1\}^{n-k}}
\newcommand{\basis}[1]{\phi_{#1}}
\newcommand{\coeff}[1]{{\hat{f}}_{#1}}
\newcommand{\approxcoeff}[1]{{\tilde{f}}_{#1}}
\newcommand{\coeffgen}[2]{{\hat{#1}}_{#2}}
\newcommand{\approxcoeffgen}[2]{{\tilde{#1}}_{#2}}
\def\fS{\hat{f}_{\mathcal{S}}}
\def\ps{\phi_\mathcal{S}}
\def\pS{\ps}
\def\pa{\operatorname{pa}}
\def\A{A}
\def\C{c}
\def\B{b}
\def\D{D}
\def\Z{Z}
\def\tmax{t^\ast}
\def\smax{s^\ast}
\def\opt{\emph{opt}}
\def\L1{{L_1}}
\def\ltp{l_P}
\def\ltphat{l_{\hat{P}}}
\def\dtv{d_{\text{TV}}}
\DeclareMathOperator{\err}{err}
\newacro{ML}[ML]{machine learning}
\newacro{IID}[i.i.d.]{independent and identically distributed} 
\newacro{PAC}[PAC]{\textit{probably approximately correct}}
\newacro{VC}[VC]{Vapnik–Chervonenkis}
\newacro{ERM}[ERM]{\textit{empirical risk minimization}}
\newacro{BN}[BN]{Bayesian network}
\newacro{DAG}[DAG]{directed acyclic graph}
\def\cite{\citep}
\newenvironment{proof}%
{%
\par\noindent{{\bf Proof: } }%
}%
{\hfill$\blacksquare$\par}
\begin{document}
\title{Learning DNF through Generalized Fourier Representations}
 \author{{}\footnote{Equal Contribution.} \ \ \ and {Roni Khardon}\footnotemark[1]  
 \\ {\small \texttt {\{mheidar|rkhardon\}@iu.edu}} \\
  Department of Computer Sciences, Indiana University, Bloomington, IN, USA
}

 \author{{Mohsen Heidari}\footnote{The authors contributed equally to this work.} \ \ \ and {Roni Khardon}\footnotemark[1]  
 \\ {\small \texttt {\{mheidar|rkhardon\}@iu.edu}} \\
  Department of Computer Sciences, Indiana University, Bloomington, IN, USA
}
\date{}
\maketitle
\begin{abstract}%
The Boolean Fourier representation has been widely used in learning theory, particularly for learning Disjunctive Normal Form (DNF) under uniform and product distributions. Extending these results to non-product distributions has remained a longstanding open problem.

We address this challenge by introducing a generalized Fourier representation that enables learning under a broad class of non-product distributions. Our approach represents any distribution $D$ as a Bayesian network (BN) and derives a corresponding Fourier expansion. We show that standard Fourier-based learning techniques using membership queries to identify heavy coefficients can be adapted to this generalized representation with minor modifications.

We prove that the $L_1$ spectral norm of conjunctions remains bounded under this expansion for difference-bounded tree BNs, significantly generalizing the known result for uniform distributions; matching lower bounds demonstrate the necessity of these constraints. Using these results, we establish the learnability of DNF and the agnostic learnability of decision trees under such distributions. Finally, we present an algorithm for learning difference-bounded tree BN distributions, extending our results to settings where the distribution is unknown.
\end{abstract}

\section{Introduction}

\newcommand{\TBDF}[1]{[TBD citations from Feldman: [#1]}
\newcommand{\TBD}[1]{[citations to add: [#1]}
\newenvironment{informalthm}[1]{\par\noindent{\textbf{Informal Version of {#1}}}\itshape}{}

The problem of learning Disjunctive Normal Form (DNF) expressions from examples has been a major open problem
since its introduction by \citet{Valiant1984}.
Significant progress has been made by
considering subclasses of expressions
(e.g., \cite{Valiant1985,BshoutyTamon96,SakaiMuroka00})
or making distribution assumptions \cite{Verbeurgt90,Servedio04},
and the best-known algorithm for the general problem is not polynomial time
\cite{KlivansS04}.
A potentially less demanding model allows for an additional source of information through membership queries (MQ).
In this model, in addition to random examples, the learner can ask for the label of the target function on any input.
\citet{Valiant1984} gave an efficient MQ learning algorithm for Monotone DNF. Since then, several subclasses of DNF have been shown to be learnable in this model
(e.g.\ \citep{Bshouty95,Kushilevitz96,ABKKPR1998,HellersteinKSS12}).

\citet{AngluinK95} have shown that (under cryptographic assumptions) the general distribution free case for general DNF
is not easier with MQ. On the other hand, positive results have been obtained for specific distributions.
\citet{Jackson1997}
gave the first polynomial time MQ learning algorithm for DNF over $c$-bounded product distributions.
This result was based on the Fourier representation of functions over the Boolean cube
\cite{linial1993constant} and combines the algorithm by \citeauthor{KM1993} (henceforth KM algorithm) for finding the heavy Fourier coefficient of a boolean function \cite{Goldreich1989,KM1993} with Boosting.
The approach was elaborated and improved by several authors
\cite{BshoutyJT04,Feldman07,Kalai2009,Feldman2012,Gopalan2008}.
However, to date, the Fourier approach has been largely limited to product distributions and
implications for learnability are restricted to such distributions.
In this paper, we provide a significant extension of these results to a broad class of distributions.
To achieve this the paper makes several distinct contributions.

First,
we develop a novel generalized Fourier representation induced by any distribution $D$,
{\em by using the Bayesian Network (BN) representation of $D$}.
A BN specifies a distribution using a \ac{DAG} and conditional probability distributions where each node is conditioned on its parents
\cite{Pearl1989,KollerFriedman2009}.
The generalized Fourier expansion constructs basis functions $\phi_S$ for $S\subseteq \{1,\ldots,n\}$ using the graph structure and conditional probabilities that specify the BN, yielding an orthonormal basis so that for any function we have $f(\bfx)=\sum_S \fS \phi_S(\bfx)$.
While the new basis preserves some important properties, unlike the standard construction, it does not impose sparsity.
That is, if $f$ depends only on a subset of variables $T$ and $S\setminus T\not = \emptyset$, the value of the coefficient $\fS$ may not be zero.

The second contribution is showing that
the KM algorithm can be extended for the new basis and with high probability it recovers all the heavy coefficients, $|\fS|\geq \theta$, of a function $f$.
The algorithm does not require inference with the BN (e.g. calculating marginal or conditional probabilities) that can be computationally hard but only requires forward sampling which is always feasible.
Our third contribution is in analyzing the Fourier representation of conjunctions $g$ with $d$ literals, specifically providing bounds for its spectral norm
$L_1(g)=\sum_S |\hat{g}_S|$.
This type of analysis is easy for the uniform case 
(where $L_1=1$ \cite{BlumFJKMR94,Khardon94}) 
and the product case 
(where $L_1=O(2^{d/2})$ \cite{Feldman2012}), 
but is nontrivial for general distributions due to the fact that sparsity does not hold.
We show that
the values of the coefficients are determined in a combinatorial manner by the values of the corresponding BN parameters.
We then derive bounds
for a broad class of difference bounded tree BN distributions, where the value of a parent can change the conditional probability of a child by at most $\alpha<0.5$ (noting that product distributions satisfy this with $\alpha=0$).
In particular, 
in chain BNs we have 
$L_1(g)=O((\frac{2}{1-2\alpha})^{d})$ and for 
tree BN we have $L_1(g)=O((\frac{2}{1-2\alpha})^{2d})$.
The upper bounds are complemented by showing that
without boundedness or without a tree structure the spectral norm can be exponentially large even for $d=1$. 
As a byproduct,
our analysis also provides an \emph{exact value} for the spectral norm in the product case, improving on previous bounds. 
Finally, in addition to tree distributions, we analyze the spectral norm of conjunctions under 
$k$-junta distributions \cite{AliakbarpourBR16} showing that $L_1(g)=O(2^{(k+d)/2})$.

Our fourth contribution includes the main results of this paper, showing learnability under the corresponding families of distributions. We show that the extended KM algorithm can be directly used to learn disjoint DNF (and decision trees), and that the algorithm PTF-construct of \citet{Feldman2012} can be used with a slight modification (removing the filtering of large degree coefficients, which is only suitable with sparsity) to learn DNF, in the PAC model with membership queries. In addition, we show that the KM-gradient descent method of \cite{Gopalan2008} can learn disjoint DNF (and decision trees) in the agnostic learning model with membership queries.

Finally, note that the discussion so far assumed that a BN representation of the distribution is given to the learner.
While learning general distributions is computationally hard, it is well known that tree BN are learnable
\cite{Chow1968,Hoffgen93,BhattacharyyaGPTV23}. 
Our fifth contribution is a learnability analysis for difference-bounded tree distributions. 
We develop two variants of the base algorithm, one for the realizable case where the target is known to be a bounded tree distribution and a slightly more complex algorithm for the unrealizable case, showing polynomial learnability in both cases. 
Combined with the results above this shows that the learnability results hold even when the distribution is not known in advance.

\paragraph{Summary of the contributions:}
To summarize, the paper develops a generalized Fourier basis and shows that major algorithmic tools from learning theory
can be used with this basis.
Using these and an analysis of the spectral norm for conjunctions, the paper shows the learnability of DNF and agnostic learnability of decision trees under difference-bounded tree distributions,
significantly extending previous results to a broad class of distributions.
We emphasize that the basis and the extended KM algorithm are valid for any distribution and they do not require a tree structure or boundedness.
These conditions are required only to establish spectral norm bounds used for learnability results.
More specifically, our main contributions include:
\begin{itemize}
 \item A new Fourier basis for any distribution $D$ by using the BN representation of $D$ (Corollary~\ref{cor:orthonormalbasis}).
 \item An extension of the KM algorithm to any distribution using the new basis (Theorem \ref{thm:EKM}), a proof of learnability of disjoint DNFs using KM algorithm (Corollary \ref{cor:KM DNF}), learnability of DNFs using the PTFconstruct algorithm of \citet{Feldman2012} (Corollary \ref{cor:FC15}) and agnostic learnability of decision tree using KM-gradient descent of \citet{Gopalan2008}  (Corollary \ref{cor:DT}).
 \item An exact Fourier expansion of conjunctions under chain BNs (Lemma \ref{lem:Fourier exact}) and an upper bound on the spectral norm of $d$ literal conjunctions for difference-bounded chains (Theorem \ref{thm:L1 bound chain}). %
 \item An exact characterization of the spectral norm of $d$ literal conjunctions under product distributions, and a tighter upper bound on the spectral norm (Proposition \ref{prop:product}). 
 \item An upper bound for the spectral norm of $d$ literal conjunctions under difference bounded tree BNs (Theorem \ref{thm:tree2}) and $k$-junta distributions (Lemma \ref{lm:kjunta}). %
 \item An exponential lower bounds on the spectral norm of $1$-literal conjunctions, when the difference-bounded condition is violated, or without the tree BN structure (Corollary~\ref{cor:LBchain} and Theorem \ref{thm:expLBantitree}). 
 \item
 Proof of learnability of difference-bounded tree distributions in the realizable case (Corollary~\ref{lm:directed-c1}) and unrealizable case (Corollary~\ref{lm:directed-c1 unrealizable}). 
\end{itemize}

\paragraph{Organization of the paper:} We start with basic definitions and the preliminaries in Section \ref{sec:prelim}. 
The new BN induced Fourier basis is introduced in Section \ref{sec:basis} and the extension of the KM algorithm to arbitrary distributions is presented in Section \ref{sec:KM}. 
Sections~\ref{sec:chain}, \ref{sec:tree} and \ref{sec:lb} 
develop upper bounds and lower bounds for the spectral norm under chain and tree BNs,
and Section~\ref{sec:Kjunta} develops bounds for $k$-junta distributions.
The implications of our results for the learnability of DNFs and agnostic learnability of decision trees are discussed in Section \ref{sec:learnDNF}. 
Learning difference bounded tree distributions is developed in Section~\ref{sec:learnTrees}.
Lastly, 
Section \ref{sec:conclusion} concludes with a summary and some questions for future work.

\section{Preliminaries}\label{sec:prelim}
\paragraph{Notations.} We use $[n]$ to denote the sequence $[1,2,\ldots,n]$, $[a,b)$ to denote the sequence $a,a+1,\ldots,b-1$, and similarly we define $(a,b)$ and $[a,b)$ for sequences $a+1,\ldots,b-1$, and $a,a+1,\ldots,b-1$, respectively. Capital letters are used for random variables and lowercase letters to denote their assignments. Given a vector $\bfx \in \{0,1\}^n$ and $\CS\subseteq [n]$, define $\bfx_\CS :=(x_i)_{i\in \CS}$ as the restriction of $\bfx$ to the coordinates $\CS$.

Learning in this paper is defined based on the well-known \ac{PAC} model \cite{Valiant1984,Kearns1994}. In this framework, the learner finds an approximation $h$ to an unknown target Boolean function $f$ given oracle access to $f$ in the form of labeled examples $(\bfx, f(\bfx))$ with $\bfx$ generated based on an unknown distribution $D$. 
With membership queries (MQ),
in addition to random examples, the algorithm can ask for the label $f(\bfx)$ of any example $\bfx$ of its choice.
The objective is to output a hypothesis $h$ that is close to the target function $f$ in terms of its predictions.
More formally:
\begin{definition}[Learnability]
  $\mathcal{F}$ is learnable with MQ if there is an algorithm such that, for every $\epsilon, \delta > 0$, and every target function
  $f\in\mathcal{F}$ where  $f: \{0, 1\}^n \to \{0, 1\}$, given oracle and MQ access to $f$, 
 with probability $1 - \delta$ the algorithm
 produces a hypothesis $h : \{0,1\}^n \to \{0, 1\}$ such that
  \[
 P_{\bfX \sim D}(h(X) \neq f(X)) \leq \epsilon.
  \] 
\end{definition}
The key to our analysis is incorporating the representation of the distribution $D$ as a BN into the representation of functions. 
Since BN is a universal representation this does not restrict the set of distributions under consideration.
\paragraph{Bayesian Network.}
A BN is specified by a directed acyclic graph and a set of conditional probability tables (or functions).
Let $G=(V,E)$ be the graph where nodes correspond to the individual random variables. 
Then the joint probability distribution can be written as a product of individual probability distributions of each node conditioned on its parent variables: 
\begin{align*}
P(X_1,\cdots, X_n) = \prod_{v \in V} P(X_v| X_{\pa(v)}),
\end{align*} 
where $\pa(v)$ is the set of the parents of node $v$ in $G$, and where in this paper the variables are binary, i.e., $X_i\in\{0,1\}$.

For some of the results, we will need to restrict the class of distributions $D$. For any node $v$, let $\mu_{v, x_{\pa(v)}}$ and $\sigma_{v, x_{\pa(v)}}$ denote the conditional expectation and standard deviation of $X_v$ given its parents' realization $x_{\pa(v)}$, respectively. Note that the values $\{\mu_{v, x_{\pa(v)}}\}$ are exactly the parameters that specify the BN representation of $D$.

\begin{definition}\label{def:D bounded}
A distribution $D$ is $c-${\em bounded} for $c\in (0,1)$ if for all $v\in V$ and any assignments $x, x'$ we have $c\leq P(X_v=x | x_{\pa(v)}=x')\leq 1-c$.
Moreover, $D$ is $\alpha$-{\em difference-bounded} if %
for all $v$ and for any two assignments $x_{\pa(v)}, y_{\pa(v)}$ to $\pa(v)$, 
we have
$\abs{\mu_{v, x_{\pa(v)}} - \mu_{v, y_{\pa(v)}}}\leq \alpha$ and $\abs{\sigma_{v, x_{\pa(v)}} - \sigma_{v, y_{\pa(v)}}}\leq \alpha$. 
\end{definition}
Based on this definition, a distribution is a {\em difference-bounded tree BN}, if it can be expressed as a BN where each node has at most a single parent and the conditional probability tables in the specification are $c$-bounded and $\alpha$-difference bounded. The difference boundedness implies a limitation on the influence of the parent nodes on the first two moments of a child node. 

Our results generalize previous works that established the learnability of functions using the Fourier representation under uniform and bounded product distributions 
\cite{linial1993constant,furst1991improved,KM1993,Jackson1997,Kalai2008,Feldman2009}. 
These distributions are captured by BNs with an empty edge set. The boundedness condition holds and the differences are zero and hold trivially. 

\paragraph{Fourier Expansion on the Boolean Cube with Product Distributions.} 
We briefly discuss prior constructions of the Fourier basis; see \cite{ODonnell2014,Wolf2008} for a review.
In the following, 
given any distribution $D$, define the induced inner product between any pair of functions $f,g: \{0,1\}^n\rightarrow \RR$
by
$\<f, g\>=\EE_D[f(\bfX) g(\bfX)]$ . 
The expansion relies on a set of basis functions defined as $\psi_{i}(\bfx)=(-1)^{x_i}=1-2x_i$ 
where $x_i\in\{0,1\}$ and $i\in [n]$. The orthonormal Fourier basis for the uniform Boolean Fourier is defined as $\psi_{S}(\bfx)=\prod_{i\in S} \psi_{x_i}(\bfx)$, for all subsets $S\subseteq [n]$ and $\bfx\in \{0,1\}^n$.
Any function $f$ on the Boolean cube admits the following decomposition 
\begin{equation*}
 f(\bfx) = \sum_{\mathcal{S}\subseteq [n]} \fS~\psi_{\mathcal{S}}(\bfx), \quad \forall \bfx\in \{0,1\}^n,
\end{equation*}
where $\fS$ are called the Fourier coefficients of $f$ and are calculated as
$$ \fS = \<f, \psi_{\mathcal{S}}\>=\EE_{\bfx\sim \mbox{Uniform}}[f(\bfX) \psi_\CS(\bfX)]= \frac{1}{2^n}\sum_{\bfx} f(\bfx) \psi_{\mathcal{S}}(\bfx).$$ 
This expansion relies on the restriction that the input variables are uniformly distributed over the Boolean cube. 
The orthonormal basis for product distributions, generalizing this, is given by $\psi_{i}(\bfx)=\frac{\mu_i-x_i}{\sigma_i}$, where $\mu_i$ and $\sigma_i$ are the expectation and the standard deviation of $X_i$, and 
as above we have $\psi_{S}(\bfx)=\prod_{i\in S} \psi_{x_i}(\bfx)$ and 
 $\fS = \EE_D[f(\bfX)\psi_\CS(\bfX)]$. This formulation reduces to the one for the uniform distribution, where $\mu_i=0.5$ and $\sigma_i=\sqrt{\mu_i(1-\mu_i)}=0.5$. 
In our development below for the general case, we use the negation of these basis functions in order to simplify the presentation. However, our basis can be seen as a direct extension of the product case.

The above expansion for non-uniform product distributions is not technically a Fourier transform as $\psi_\CS$ are not group characters and therefore some properties are not satisfied (see related discussion by \citet{ODonnell2014}).
However, certain nice properties of the Fourier transform are satisfied such as the Plancherel formula. Therefore, we use the term {\em Fourier expansion} to distinguish it from the Fourier transform. 

There are other forms of orthogonal decomposition including the Hoeffding-Sobel decomposition \citep{HoeffdingDecomp,Sobol1993,Chastaing2012} and its generalization \citep{Chastaing2012}. However, such decompositions are basis-free, making their learning implications unclear. Moreover, extensions of the Fourier expansion to non-product distributions have been studied \cite{HeidariICML2021,Heidari2022}. When the distribution is not known, such works rely on an empirical orthogonalization process that is applied on the above Fourier expansion. Although these Fourier expansions are suitable for machine learning problems such as feature selection, they may not be applicable to prove upper bounds on learning of Boolean functions, especially conjunctions. The reason is that the basis functions are unknown prior to the given samples. In our work, we generalize the Fourier expansion to non-product distributions described by \acp{BN}.

\section{BN Induced Fourier Basis}\label{sec:basis}
In what follows, we define a distribution dependent Fourier basis for functions on the Boolean cube. 
The key is to define the basis using not just the distribution but using a specific representation of the distribution as a BN,
i.e., the \ac{DAG} $G$ and associated conditional probabilities. 
For a BN $G$, with $|V|=n$ we can identify the nodes with their indices, i.e., $V=\{1,\ldots,n\}$.

For any node $v\in V$ in a BN $G$, define 
 \begin{align*}
\phi_{v}(\bfx) := \frac{x_v- \mu_{v,x_{\pa(v)}}}{\sigma_{v, x_{\pa(v)}}},
\end{align*}
for all $\bfx\in \{0,1\}^n$. 
Then, the Boolean Fourier basis on the BN $G$ is:  

\begin{definition}[BN Induced Fourier Basis] %
The Boolean Fourier basis for a BN given by $G$ and the associated parameters is defined as 
\begin{align}
\phi_{\CS}(\bfx) := \prod_{v\in \CS} \phi_{v}(\bfx) = \prod_{v\in \CS} \frac{x_v- \mu_{v,x_{\pa(v)}}}{\sigma_{v, x_{\pa(v)}}}
\end{align}
 for all $\CS \subseteq V$ and $\bfx\in \set{0,1}^n$.  
 Note that for $S=\phi$, the product is empty and $\phi_{\CS}(\bfx)=1$.
\end{definition}
To simplify the presentation, we assume $G$ is known and omit any notation showing an explicit dependence on $G$. 
The next lemma shows that this is indeed an orthonormal basis and develops some more useful properties of the basis.

\begin{lemma}\label{lem:pS properties}
The following holds for any $\CS, \CT\subseteq [n]$ and $\bfx\in \{0,1\}^n$.
\begin{enumerate}[(a)]
\item 
For $S\not = \phi$,
$\EE[\ps(\bfX)]=0$, and $\EE[\ps(\bfX)|\bfx_{\pa(\CS)}]=0$, where $\pa(\CS)$ is the set of parents of all $v\in \CS$.
\item $\EE[\ps^2(\bfX)]=1$, and $\EE[\ps^2(\bfX)|x_{\pa(\CS)}]=1$.
\item $\pS \phi_{\CT} = \phi_{\CS\cap \CT}^2 \phi_{\CS\Delta \CT}$.
\item $\EE[\ps(\bfX) \phi_{\CT}(\bfX)] = 0$ and $\EE[\ps(\bfX) \phi_{\CT}(\bfX) |\bfx_{\pa(\CS\cup \CT)} ] = 0$ for $\CT\neq \CS$.
\end{enumerate}
\end{lemma}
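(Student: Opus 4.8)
The plan is to work from the innermost conditioning outward, exploiting the product structure of $\phi_\CS$ together with the BN's topological order. First I would establish a single-node version of everything. For a node $v$, conditioning on $x_{\pa(v)}$ fixes $\mu_{v,x_{\pa(v)}}$ and $\sigma_{v,x_{\pa(v)}}$, and $X_v$ is a Bernoulli variable with mean $\mu_{v,x_{\pa(v)}}$; hence $\EE[\phi_v(\bfX)\mid x_{\pa(v)}] = (\EE[X_v\mid x_{\pa(v)}]-\mu_{v,x_{\pa(v)}})/\sigma_{v,x_{\pa(v)}} = 0$ and $\EE[\phi_v^2(\bfX)\mid x_{\pa(v)}] = \var(X_v\mid x_{\pa(v)})/\sigma_{v,x_{\pa(v)}}^2 = 1$. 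Part (c) is purely algebraic: since each $\phi_v$ appears in $\phi_\CS$ (resp.\ $\phi_\CT$) to the first power, in the product $\phi_\CS\phi_\CT$ the nodes in $\CS\cap\CT$ get squared and those in $\CS\triangle\CT$ remain to the first power, which is exactly $\phi_{\CS\cap\CT}^2\phi_{\CS\triangle\CT}$; note this identity holds pointwise.

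For parts (a), (b), (d) the engine is iterated expectation along a reverse topological order. Order $V=\{1,\dots,n\}$ so that every node precedes its parents (equivalently, process sinks first). For part (a), take $\CS$ and let $v$ be the last node of $\CS$ in this order, so that no node of $\CS$ is a descendant-side predecessor of $v$; then $\phi_v$ is the only factor depending on $x_v$, every other factor is a function of coordinates whose indices come before $v$, and $x_{\pa(v)}$ is among those. Conditioning on everything strictly before $v$, the single-node fact gives $\EE[\phi_v(\bfX)\mid \text{earlier coords}]=0$, so the whole conditional expectation vanishes; taking a further expectation (or conditioning only on $x_{\pa(\CS)}$, which is a subset of the earlier coordinates — here one uses that the parents of $\CS$ are processed before the last element of $\CS$, or more carefully peels off one node at a time) gives $\EE[\phi_\CS(\bfX)]=0$ and $\EE[\phi_\CS(\bfX)\mid x_{\pa(\CS)}]=0$. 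Part (b) is the same argument applied to $\phi_\CS^2=\prod_{v\in\CS}\phi_v^2$: peeling the last node $v$, condition on earlier coordinates, use $\EE[\phi_v^2\mid\text{earlier}]=1$, and induct on $|\CS|$ down to the empty product which equals $1$.

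Part (d) is the one to be careful about, and it is where I expect the only real friction. By part (c), $\phi_\CS\phi_\CT = \phi_{\CS\cap\CT}^2\,\phi_{\CS\triangle\CT}$, and since $\CS\neq\CT$ we have $\CS\triangle\CT\neq\emptyset$. Let $v$ be the last node (in the reverse topological order) lying in $\CS\triangle\CT$. The claim is that $v$ does not appear squared, i.e.\ $v\notin\CS\cap\CT$ (true since $\CS\cap\CT$ and $\CS\triangle\CT$ are disjoint), and that every other factor of $\phi_{\CS\cap\CT}^2\phi_{\CS\triangle\CT}$ depends only on coordinates ordered before $v$ — this needs that all of $\CS\cap\CT$ as well as $\CS\triangle\CT\setminus\{v\}$, together with $\pa(\CS\cap\CT)$ and $\pa(\CS\triangle\CT\setminus\{v\})$, precede $v$; the subtle point is handling nodes of $\CS\cap\CT$ that might be ordered after $v$, which I would address by instead peeling nodes one at a time from the top of the order and invoking parts (a)/(b) for the nested conditional expectations, rather than trying to isolate a single $v$ in one shot. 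Concretely: process coordinates from last to first; each time the current coordinate $u$ is the unique remaining occurrence of some node in $\CS\triangle\CT$, the inner conditional expectation over $X_u$ is $0$ by the single-node fact, killing the whole expression; each time $u$ occurs squared (a node of $\CS\cap\CT$), the inner conditional expectation contributes a factor $1$ and we continue. Since $\CS\triangle\CT\neq\emptyset$, at least one such killing step occurs, giving $\EE[\phi_\CS\phi_\CT]=0$ and likewise $\EE[\phi_\CS\phi_\CT\mid x_{\pa(\CS\cup\CT)}]=0$. This establishes orthonormality of $\{\phi_\CS\}$, from which the claimed decomposition $f=\sum_\CS\fS\phi_\CS$ with $\fS=\langle f,\phi_\CS\rangle$ follows by the usual dimension count, $|\{\CS\subseteq[n]\}|=2^n=\dim$ of the space of functions $\{0,1\}^n\to\RR$.
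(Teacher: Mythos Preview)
Your overall strategy---single-node identities, part~(c) by algebra, then iterated expectation peeling one coordinate at a time---is exactly what the paper does. But you have the topological order reversed, and as written the argument breaks. You say ``order $V$ so that every node precedes its parents'' and then take $v$ to be the \emph{last} element of $\CS$ in this order; that makes $v$ the most root-like element of $\CS$. You then assert that ``$\phi_v$ is the only factor depending on $x_v$'' and that ``$x_{\pa(v)}$ is among [the earlier coordinates].'' Both are false under your stated order: if $u\in\CS$ is a child of $v$ then $\phi_u(\bfx)=(x_u-\mu_{u,x_{\pa(u)}})/\sigma_{u,x_{\pa(u)}}$ depends on $x_v$ through the conditional moments, and in your order $\pa(v)$ comes \emph{after} $v$, not before. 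So the inner expectation over $X_v$ does not factor cleanly, and you cannot invoke the single-node fact.

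The fix is simply to flip the direction: use the standard BN topological order (parents before children), let $v$ be the last element of $\CS$ (resp.\ $\CS\cup\CT$) in \emph{that} order---so $v$ is the most leaf-like---and integrate it out first (innermost). Then indeed no other $\phi_u$ with $u\in\CS\cup\CT$ depends on $x_v$ (since $v$ is not a parent of any such $u$), and $x_{\pa(v)}$ genuinely lies among the earlier, conditioned-on coordinates. With this correction your peeling argument for (a), (b), (d) goes through verbatim and coincides with the paper's proof, including your handling of the $\phi^2$ factors in (d): squared nodes contribute $1$, the first unsquared node from the leaf end contributes $0$ and kills the expectation.
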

\begin{proof} 
We start by establishing that individual basis functions are normalized.
We have
\begin{align*}
\EE[\phi_{v}(\bfX)] &=
\EE_{X_{\pa(v)}} \qty[\EE_{v} \qty[\phi_{v}(\bfX)|X_{\pa(v)}]] = 
\EE_{X_{\pa(v)}} \qty[\EE_{v} \qty[\frac{X_v- \mu_{v,X_{\pa(v)}}}{\sigma_{v, X_{\pa(v)}}}|X_{\pa(v)}]]\\
&= \EE_{\pa(v)}[0]=0.
\end{align*} 
Similarly, 
\begin{align*}
\EE[\phi_{v}(\bfX)^2] &=
\EE_{X_{\pa(v)}} \qty[\EE_{v} \qty[\phi_{v}(\bfX)^2|X_{\pa(v)}]] = 
\EE_{X_{\pa(v)}} \qty[\EE_{v} \qty[(\frac{X_v- \mu_{v,X_{\pa(v)}}}{\sigma_{v, X_{\pa(v)}}})^2|X_{\pa(v)}]]\\
& = \EE_{\pa(v)}[1]=1.
\end{align*} 
Let $\CS=\{{j_1}, \ldots, {j_k}\}$ where $j_1,j_2,\ldots,j_k$ satisfy the BN ordering. Then
\begin{align}
\nonumber
&
\EE[\ps(\bfX)|x_{\pa(\CS)}]
\\ = &
\label{eq:ExpBasisSingle}
\EE_{X_{j_1}} \qty[\phi_{{j_1}}(\bfX) \ 
\EE_{X_{j_2}} [\phi_{{j_2}}(\bfX) \ 
\ldots 
\EE_{X_{j_k}} [\phi_{{j_k}}(\bfX)
|x_{\pa(X_{j_k})}] 
\ldots
|x_{\pa(X_{j_2})}] 
|x_{\pa(X_{j_1})}] 
\\ = &
\nonumber
\EE_{X_{j_1}} \qty[\phi_{{j_1}}(\bfX) \ 
\EE_{X_{j_2}} [\phi_{{j_2}}(\bfX) \ 
\ldots 
0
\ldots
|x_{\pa({j_2})}] |x_{\pa(X_{j_1})} ]
\\ = & 0.
\nonumber
\end{align}
The same sequence of equations for $\phi^2$ yields 1, and taking the expectation over $x_{\pa(\CS)}$ returns the same constant. 
This establishes (a) and (b).
Note from these equations that if we perform expectations in reverse order (from $k$ to $1$) then a $\phi$ term yields a zero for the entire expectation and a $\phi^2$ term contributes a multiplier of 1, i.e., the value is taken from the next expectation.

\bigskip
We next observe that (c) holds by definition:
\begin{align*}
\pS \phi_{\CT} = \prod_{v\in \CS} \phi_{v}(\bfx) \prod_{v\in \CT} \phi_{v}(\bfx) =
\phi_{\CS\cap \CT}(\bfx)^2 \phi_{\CS\Delta \CT}(\bfx).
\end{align*}
For (d) we have
\begin{align}
\label{eq:ExpBasisProduct}
\EE[\ps(\bfX) \phi_{\CT}(\bfX) |x_{\pa(\CS\cup \CT)} ] 
=
\EE\qty[  \prod_{v\in \CS\cap \CT} \phi_{v}(\bfX)^2 \prod_{v\in \CS\Delta \CT} \phi_{v}(\bfX) |x_{\pa(\CS\cup \CT)} ]. 
\end{align}
Let $\CS\cup \CT=\{{j_1}, \ldots,{j_k}\}$ where $j_1, \ldots,j_k$ satisfy the BN ordering. 
Then we can order the expectations in \eqref{eq:ExpBasisProduct} in the same manner as the expectations in \eqref{eq:ExpBasisSingle}. 
As above, if is $\CS\Delta \CT$ not empty then the expectation on its last variable in the ordering yields a zero and the entire expectation is 0. 
On the other hand variables in $\CS\cap \CT$ yield a 1, so if $\CS\Delta \CT$ is empty the expectation is 1. 
Finally, taking expectation over $X_{\pa(\CS\cup \CT)}$ maintains the constant.
\end{proof}

Note that unlike the uniform case where $\pS \phi_{\CT} = \phi_{\CS\Delta \CT}$ we have the weaker property (c).
Properties (b,d) show that $\pS$ functions are orthonormal 
w.r.t.\ inner product 
$\<f, g\>=\EE_D[f(\bfX) g(\bfX)]$.
Moreover, they form a basis for the space of all functions on the Boolean cube: 

\begin{corollary}
\label{cor:orthonormalbasis}
Given a probability distribution $D$ on $\{0,1\}^n$ described by a \ac{BN}, the set of $\ps, \CS\subseteq [n]$ forms an orthonormal basis. 
For all functions $f:\cuben \rightarrow \RR$,
\begin{align*}
f(\bfx) = \sum_{\CS} \fS \ps(\bfx),
\end{align*}
where  $\fS=\EE_D[f(\bfX)\pS(\bfX)]$ is the Fourier coefficient of $f$.
\end{corollary}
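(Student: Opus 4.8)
The plan is to obtain the corollary almost immediately from Lemma~\ref{lem:pS properties}, with the only genuine content being a dimension count. First I would record that parts (b) and (d) of that lemma say exactly that $\{\phi_{\CS} : \CS\subseteq[n]\}$ is an orthonormal set with respect to the inner product $\langle f,g\rangle=\EE_D[f(\bfX)g(\bfX)]$: part (b) gives $\langle \phi_{\CS},\phi_{\CS}\rangle=\EE_D[\phi_{\CS}^2(\bfX)]=1$ for every $\CS$, and part (d) gives $\langle \phi_{\CS},\phi_{\CT}\rangle=\EE_D[\phi_{\CS}(\bfX)\phi_{\CT}(\bfX)]=0$ whenever $\CS\neq\CT$. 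Here one uses implicitly that the basis functions are well defined, i.e.\ that $\sigma_{v,x_{\pa(v)}}>0$ on every parent assignment of positive probability; this holds, for instance, under $c$-boundedness, and I would state it as the standing nondegeneracy assumption.

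Next I would note that an orthonormal family is automatically linearly independent: if $\sum_{\CS} c_{\CS}\phi_{\CS}=0$ as a function on $\cuben$, then taking the inner product of both sides with $\phi_{\CT}$ and applying orthonormality forces $c_{\CT}=0$ for every $\CT$. Thus the $2^n$ functions $\{\phi_{\CS}:\CS\subseteq[n]\}$ are linearly independent elements of the real vector space of all functions $\cuben\to\RR$, which has dimension exactly $2^n$. A linearly independent family of size equal to the dimension of the ambient space is a basis, so every $f:\cuben\to\RR$ admits a (unique) expansion $f=\sum_{\CS}\fS\,\phi_{\CS}$ for suitable real coefficients $\fS$.

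Finally, to pin down the coefficients, I would fix $\CT$, take the inner product of the expansion with $\phi_{\CT}$, and use linearity of expectation together with orthonormality: $\langle f,\phi_{\CT}\rangle=\sum_{\CS}\fS\,\langle\phi_{\CS},\phi_{\CT}\rangle=\hat f_{\CT}$, which is the claimed formula $\fS=\EE_D[f(\bfX)\phi_{\CS}(\bfX)]$. There is no real obstacle in this argument, since all the work is already in Lemma~\ref{lem:pS properties}; the only points deserving a word of care are the well-definedness of the $\phi_v$ (nonzero conditional standard deviations) and the remark that orthonormality yields linear independence \emph{directly}, so one does not need to argue that $\langle\cdot,\cdot\rangle$ is positive definite on the whole function space before concluding that the family spans.
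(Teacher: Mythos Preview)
Your proposal is correct and matches the paper's approach: the paper does not give a separate proof of this corollary but simply notes before stating it that properties (b) and (d) of Lemma~\ref{lem:pS properties} establish orthonormality and that the $\phi_{\CS}$ therefore form a basis. Your write-up supplies exactly the standard details the paper leaves implicit (linear independence from orthonormality, the $2^n$ dimension count, and recovering the coefficients by inner product), so there is nothing to correct.
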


The standard properties of this expansion are summarized below. These statements are derived from the orthonormality of the basis. Hence, we omit the proofs. 
\begin{fact}\label{fact:fourier1}
For any bounded pair of functions $f,g: \pmm^d\mapsto \RR$, the following statements hold: 
\begin{itemize}
\item Plancherel Identity: $\EE[f(\bfX)g(\bfX)]=\sum_{\mathcal{S}\subseteq [d]} \fS \hat{g}_{\mathcal{S}}$. 
\item Parseval’s identity $\EE_D[f^2] = \sum_{\mathcal{S}\subseteq [d]} \fS^2$.
\end{itemize}
\end{fact}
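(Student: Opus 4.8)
The statement to prove is Fact~\ref{fact:fourier1}, which asserts the Plancherel and Parseval identities for the BN-induced Fourier expansion. The author explicitly says the proof is omitted since it follows from orthonormality, so my plan is to supply the short derivation that they had in mind.

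\medskip

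\noindent\textbf{Plan.} The plan is to expand both functions in the orthonormal basis guaranteed by Corollary~\ref{cor:orthonormalbasis} and then push the expectation through the (finite) sums, using the orthonormality relation from Lemma~\ref{lem:pS properties}(b,d). Concretely, write $f(\bfx) = \sum_{\CS} \fS \ps(\bfx)$ and $g(\bfx) = \sum_{\CT} \hat{g}_{\CT} \phi_{\CT}(\bfx)$, which is legitimate because the sets $\{\phi_{\CS}\}_{\CS \subseteq [n]}$ form a basis for the finite-dimensional space of all real functions on $\cuben$. Then
\begin{align*}
\EE_D[f(\bfX) g(\bfX)]
 &= \EE_D\Bigl[ \sum_{\CS} \sum_{\CT} \fS \hat{g}_{\CT}\, \ps(\bfX) \phi_{\CT}(\bfX) \Bigr]
  = \sum_{\CS} \sum_{\CT} \fS \hat{g}_{\CT}\, \EE_D[\ps(\bfX)\phi_{\CT}(\bfX)].
\end{align*}
By Lemma~\ref{lem:pS properties}(b,d), $\EE_D[\ps(\bfX)\phi_{\CT}(\bfX)] = 1$ when $\CS = \CT$ and $0$ otherwise; the double sum collapses to $\sum_{\CS} \fS \hat{g}_{\CS}$, giving Plancherel. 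Parseval is the special case $g = f$, yielding $\EE_D[f^2] = \sum_{\CS} \fS^2$.

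\medskip

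\noindent\textbf{Remarks on rigor.} The interchange of expectation and summation is justified trivially here because all sums are finite (there are $2^n$ subsets of $[n]$, and for a fixed $\bfx$ each $\ps(\bfx)$ is a finite real number), so no convergence or dominated-convergence argument is needed; linearity of expectation suffices. I would also note in passing that the identity $\fS = \EE_D[f(\bfX)\ps(\bfX)]$ from Corollary~\ref{cor:orthonormalbasis} is itself just the $g = \ps$ instance of Plancherel combined with orthonormality, so the whole picture is internally consistent. The minor notational point that Fact~\ref{fact:fourier1} is stated over $\pmm^d$ while the basis was built over $\cuben$ is immaterial: the argument only uses that the basis is orthonormal w.r.t.\ $\langle \cdot,\cdot\rangle = \EE_D[\cdot\,]$ and spans the function space, both of which hold regardless of how the ground set is coordinatized.

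\medskip

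\noindent\textbf{Expected obstacle.} There is essentially no obstacle — this is a routine consequence of orthonormality, which is exactly why the authors omitted it. If I were to worry about anything, it would be making sure the completeness half of ``orthonormal basis'' in Corollary~\ref{cor:orthonormalbasis} is genuinely available (i.e., that every $g$ really does admit such an expansion, not just $f$); since the $2^n$ functions $\{\phi_{\CS}\}$ are orthonormal they are linearly independent, and there are exactly $2^n = \dim(\RR^{\cuben})$ of them, so they span — this is the only nontrivial input, and it is already asserted upstream.
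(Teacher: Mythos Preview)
Your proof is correct and is exactly the standard orthonormality argument the paper alludes to when it says ``These statements are derived from the orthonormality of the basis. Hence, we omit the proofs.'' There is nothing to add.
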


Following previous work, the following definition is instrumental for analysis of the learnability:
\begin{definition}\label{def:spectral norm}
    The spectral norm of a function $f$ under a distribution $D$ is the sum of the absolute values of its Fourier coefficients under $D$, $L_1(f):=\sum_\CS |\fS|$. 
\end{definition}

\section{Extending KM to Arbitrary Distributions}\label{sec:KM}
In this section, we show that
the KM algorithm \cite{KM1993} can be generalized to recover the large coefficients of a function $f$, 
for any 
distribution $D$, given a BN representation for it and the corresponding basis. 
The algorithm is based on a recursive procedure that identifies whether a subset of coefficients includes any large coefficients and in this way identifies all the large coefficients. 
The main new development in our work is given in the following lemma, which shows how the construction of the main tool and its proof of correctness can be adapted to the new basis.

To introduce the generalization, we extend the notation to specify sets using binary strings. We represent a set $\CS\subseteq [n]$ with a binary vector $\gamma \in \{0,1\}^n$ where $\gamma_i=1$ if $i\in \CS$, and $\gamma_i=0$ otherwise. With this notation strings can refer to sets in a 1-1 manner in a natural way. 
Moreover, the Fourier expansion is written as 
\begin{align*}
    f(\bfx) = \sum_{\gamma \in \{0,1\}^n} \hat{f}_{\gamma} \phi_\gamma(\bfx).
\end{align*}
Let $\alpha\in\zok$ and $\beta\in \zonk$. We use $\beta\alpha$ to denote the concatenation of the two strings which is of length $n$. Moreover, to avoid confusion, our strings, serving as subscripts or arguments to our function $f$, basis functions or coefficients, will always be of length $n$. To achieve this we pad a string $\alpha\in\zok$ on the left with $\bar{0}=0^{n-k}$ to get $\lzb{\alpha}=\zb\alpha$ where we use the overline to emphasize the concatenation. Similarly, we pad $\beta\in\zonk$ on the right with $\bar{0}=0^{k}$ to get $\rzb{\beta}=\beta\zb$. When using this notation the length of $\zb$ should be clear from the context. 

Let $\alpha\in\zok$  and define the function
\begin{align}
g_\alpha(u)=\sum_{\beta\in\zonk}\ \coeff{\beta\alpha}\ \basis{\rzb{\beta}}(\rzb{u}),
\end{align}
for every $u\in \zonk$. This function has the portion of the Fourier spectrum of $f$ that contains $\alpha$. 
The next lemma which is a generalization of Lemma~3.2 of \citet{KM1993} shows that $g_\alpha(u)$ can be computed as an expectation:
\begin{lemma}
\label{lm:KMgalpha}
For any $\alpha\in\zok$ and $u\in \zonk$, let $\bfY=(X_{n-k+1},\ldots, X_{n})$ be the last $k$ variables of $\bfX^n$, then
\begin{align}
g_\alpha(u) = 
\EE_{(\bfY|(X_1,\ldots,X_{n-k}) = u)} 
[f(u\bfY) \basis{\lzb{\alpha}}({u\bfY})].
\end{align}
\end{lemma}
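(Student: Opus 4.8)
The plan is to substitute the generalized Fourier expansion $f=\sum_{\gamma}\coeff{\gamma}\basis{\gamma}$ from Corollary~\ref{cor:orthonormalbasis} into the right-hand side and evaluate the conditional expectation term by term. I parametrize each length-$n$ index as a concatenation $\gamma=\beta\alpha'$ with $\beta\in\zonk$ indexing a subset of the first $n-k$ coordinates and $\alpha'\in\zok$ a subset of the last $k$; since $\basis{\cdot}$ is a product over the coordinates of its index set, $\basis{\gamma}=\basis{\rzb\beta}\cdot\basis{\lzb{\alpha'}}$. Thus, writing $\EE_u[\cdot]$ for the conditional expectation $\EE_{(\bfY|(X_1,\ldots,X_{n-k})=u)}[\cdot]$,
\begin{align*}
\EE_{u}\bigl[f(u\bfY)\,\basis{\lzb\alpha}(u\bfY)\bigr]
=\sum_{\beta\in\zonk}\ \sum_{\alpha'\in\zok}\coeff{\beta\alpha'}\ \EE_{u}\bigl[\basis{\rzb\beta}(u\bfY)\ \basis{\lzb{\alpha'}}(u\bfY)\ \basis{\lzb\alpha}(u\bfY)\bigr],
\end{align*}
and it remains to evaluate these expectations.

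The first step is to observe that $\basis{\rzb\beta}(u\bfY)$ does not depend on $\bfY$ at all: assuming, as elsewhere in the paper, that the variables are indexed along a topological order of the BN, every node $v\le n-k$ has $\pa(v)\subseteq[n-k]$, so each factor $\phi_v$ with $v\le n-k$ is a function of $\bfx_{[n-k]}$ only. Hence $\basis{\rzb\beta}(u\bfY)=\basis{\rzb\beta}(\rzb u)$ is constant on the conditioning event and pulls out of $\EE_u$. It therefore suffices to show $\EE_{u}[\basis{\lzb{\alpha'}}(u\bfY)\,\basis{\lzb\alpha}(u\bfY)]=\one{\alpha'=\alpha}$: granting this, only the terms with $\alpha'=\alpha$ survive and the double sum collapses to $\sum_{\beta}\coeff{\beta\alpha}\basis{\rzb\beta}(\rzb u)=g_\alpha(u)$, as claimed.

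For this last identity, $\lzb{\alpha'}$ and $\lzb\alpha$ are both subsets of $\{n-k+1,\ldots,n\}$, so Lemma~\ref{lem:pS properties}(c) rewrites the product as $\phi_{\lzb{\alpha'}\cap\lzb\alpha}^{2}\cdot\phi_{\lzb{\alpha'}\Delta\lzb\alpha}$, and I would evaluate its conditional expectation by the same nested-expectation argument as in the proof of Lemma~\ref{lem:pS properties}(a,b): integrate out $X_n,X_{n-1},\ldots,X_{n-k+1}$ one at a time in decreasing order of index. At the step integrating $X_v$, all remaining conditioning variables --- $u$ together with $X_{n-k+1},\ldots,X_{v-1}$ --- have index below $v$ and hence are non-descendants of $v$, so the BN Markov property lets us replace the conditional expectation over $X_v$ by the one given $x_{\pa(v)}$; this contributes $0$ when $v\in\lzb{\alpha'}\Delta\lzb\alpha$ (a single $\phi_v$ factor) and $1$ when $v\in\lzb{\alpha'}\cap\lzb\alpha$ (a $\phi_v^2$ factor), exactly as in Lemma~\ref{lem:pS properties}(a,b). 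Hence the expectation is $0$ unless $\lzb{\alpha'}\Delta\lzb\alpha=\emptyset$, i.e.\ $\alpha'=\alpha$, where it equals $1$. The point to be most careful about is exactly this: the conditioning here is on the prefix $X_{[n-k]}=u$ rather than directly on the parents of the relevant nodes, so the telescoping of expectations in Lemma~\ref{lem:pS properties} does not transfer verbatim; it goes through because $X_{n-k+1},\ldots,X_n$ are the last variables in topological order, so peeling them off from the largest index inward always leaves each still-active factor $\phi_v$ conditioned only on non-descendants of $v$, which is all that the Markov property requires.
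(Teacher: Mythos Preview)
Your proposal is correct and follows essentially the same approach as the paper: expand $f$ in the BN Fourier basis, split each index into its first $n{-}k$ and last $k$ parts, pull out the prefix factor $\basis{\rzb\beta}$ since it depends only on $u$, and then evaluate the remaining conditional expectation by peeling off $X_n,\ldots,X_{n-k+1}$ in reverse order so that each $\phi_v^2$ contributes $1$ and each lone $\phi_v$ contributes $0$. Your write-up is in fact a bit more explicit than the paper's about why conditioning on the full prefix $u$ (rather than just the parents) is legitimate, invoking the BN Markov property and the topological ordering assumption directly; the paper handles this point more tersely.
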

\begin{proof}
Starting from the RHS, by the Fourier expansion of  $f(u\bfY)$ we have:
\begin{align}
\nonumber
 \EE_{\bfY|u} 
    [f(u\bfY) \basis{\lzb{\alpha}}({u \bfY})]
 = %
 \EE_{\bfY|u} 
    [ \sum_{a_1}\sum_{a_2} \ \coeff{a_1 a_2} \basis{a_1 a_2}(u\bfY) 
    \basis{\lzb{\alpha}}({u\bfY})],
\end{align}
where $a_1\in\zonk$ and $a_2\in\zok$. Next note, similar to Lemma~\ref{lem:pS properties}, that
\begin{align}
\nonumber
\basis{a_1 a_2}(u\bfY) \basis{\lzb{\alpha}}({u\bfY})
& =
(\prod_{j\in A_1} \basis{j}({u\bfY}))
(\prod_{j\in A_2} \basis{j}({u\bfY})^2)
(\prod_{j\in A_3} \basis{j}({u\bfY}))
\\
\nonumber
& =
\basis{\rzb{a_1}}({u\bfY})
(\prod_{j\in A_2} \basis{j}({u\bfY})^2)
(\prod_{j\in A_3} \basis{j}({u\bfY}))
\end{align}
where 
$A_1=\rzb{a_1}$
(i.e., $A_1$ is the set corresponding to $\rzb{a_1}$),  
$A_2=\lzb{a_2}\cap \lzb{\alpha}$
and
$A_3=\lzb{a_2}\Delta \lzb{\alpha}$,
implying that
\begin{align}
\nonumber
RHS
& =
\sum_{a_1}\sum_{a_2} \ \coeff{a_1 a_2} 
\EE_{\bfY|u} 
    [
    \basis{\rzb{a_1}}({u\bfY})
    (\prod_{j\in A_2} \basis{j}(u{\bfY})^2)
    (\prod_{j\in A_3} \basis{j}(u{\bfY}))
    ]
    \\
\nonumber
& =
\sum_{a_1}\sum_{a_2} \ \coeff{a_1 a_2} 
    \basis{\rzb{a_1}}(\rzb{u})
\EE_{\bfY|u} 
    [
    (\prod_{j\in A_2} \basis{j}(u{\bfY})^2)
    (\prod_{j\in A_3} \basis{j}(u{\bfY}))
    ].
\end{align}
The last equality holds because indices in $A_1$ (as well as their ancestors) are restricted to the first $n-k$ variables. Therefore $\basis{\rzb{a_1}}({u\bfY})=    \basis{\rzb{a_1}}(\rzb{u})$ and we can pull this term out of the expectation.
We can now proceed in evaluating the expectation over variables in $A_2,A_3$ in reverse lexicographical order over the BN (i.e., from children to parents) as a sequence of conditional expectations.
In that sequential computation, 
for indices in $A_2$, $\EE_{X_j|X_{\pa(j)}}[\basis{j}({u\bfY})^2| x_{\pa(j)}] =1$ and 
for indices in $A_3$, $\EE_{X_j|X_{\pa(j)}}[\basis{j}({u\bfY})| x_{\pa(j)}]=0$.
That is, the expectation is zero when $A_3$ is not empty, i.e., when $a_2\not = \alpha$ and it is 1 when $a_2 = \alpha$.
Therefore, as claimed,
\begin{align}
\nonumber
RHS 
& =
\sum_{a_1}\ \coeff{a_1 \alpha} 
    \basis{\rzb{a_1}}(\rzb{u})=g_\alpha(u).
\end{align}
\end{proof}

The above lemma is crucial in extending the results of \citet{KM1993} to general BN distributions.  In particular, the following results  follow along the same lines as in the original paper \cite{KM1993} or its generalization to the product case \cite{Bellare1991,Jackson1997}. We include a sketch of the ideas for completeness.

\begin{lemma}[cf. Lemma~3.4 of \cite{KM1993}]
\label{lem:KM g alpha} \ 
\\
(1)
Given any $\theta>0$, the number of sets $\CS$ such that $|\coeff{\CS}|\geq \theta$ is bounded by $1/\theta^2$.
\\
(2)
For all $\alpha \in \zok$,
$$\EE_{U\sim  D(X_{1},\ldots,X_{n-k})} [g_\alpha^2(U)]=\sum_{\beta\in\zonk} \coeff{\beta\alpha}^2.$$
\\
(3)
If $\exists \beta\in \zonk$, such that $|\coeff{\beta\alpha}|\geq \theta$ then $\EE_U[g_\alpha^2(U)]\geq \theta^2$.
\\
(4)
For a fixed $k$, 
the number of $\alpha$ values such that $\EE_{U}[g_\alpha^2(U)]\geq \theta^2$ is bounded by $1/\theta^2$.
\end{lemma}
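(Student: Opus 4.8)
# Proof Proposal for Lemma (cf. Lemma 3.4 of KM1993)

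\textbf{Overall approach.} All four parts follow from Parseval's identity (Fact~\ref{fact:fourier1}) applied either to $f$ itself or to the functions $g_\alpha$, combined with Lemma~\ref{lm:KMgalpha}. The only genuinely new ingredient relative to the uniform/product case is that $g_\alpha$ lives on the reduced variable set $(X_1,\ldots,X_{n-k})$ equipped with the \emph{marginal} distribution $D(X_1,\ldots,X_{n-k})$, and that the basis functions $\basis{\rzb{\beta}}$ restricted to these variables still form an orthonormal system under that marginal. This last fact is not completely automatic for a general BN, but it holds here because in a BN the marginal on a prefix $\{1,\ldots,n-k\}$ that is closed under ancestors is again a BN with the same conditional tables, and the basis functions $\basis{v}$ for $v \le n-k$ depend only on $x_v$ and $x_{\pa(v)}$, all of which are among the first $n-k$ coordinates. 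So I would first record this observation, then the four parts are short.

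\textbf{Part (2).} Expand $g_\alpha(u) = \sum_{\beta} \coeff{\beta\alpha}\,\basis{\rzb{\beta}}(\rzb{u})$ and take $\EE_{U\sim D(X_1,\ldots,X_{n-k})}[g_\alpha^2(U)]$. By the orthonormality observation above, the cross terms vanish and $\EE_U[\basis{\rzb{\beta}}^2] = 1$, leaving $\sum_{\beta\in\zonk}\coeff{\beta\alpha}^2$. This is just Parseval applied in the reduced space.

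\textbf{Parts (1), (3), (4).} Part (1) is immediate from Parseval for $f$: $\sum_\CS \coeff{\CS}^2 = \EE_D[f^2] \le 1$ since $f$ is Boolean, so at most $1/\theta^2$ coefficients can have magnitude $\ge\theta$. Part (3) is immediate from part (2): if some $|\coeff{\beta\alpha}| \ge \theta$ then $\EE_U[g_\alpha^2(U)] = \sum_\beta \coeff{\beta\alpha}^2 \ge \theta^2$. Part (4) combines (2) with a double-counting/Parseval argument: $\sum_{\alpha\in\zok} \EE_U[g_\alpha^2(U)] = \sum_{\alpha}\sum_{\beta} \coeff{\beta\alpha}^2 = \sum_\CS \coeff{\CS}^2 = \EE_D[f^2] \le 1$, since the pairs $(\beta,\alpha)$ range over all length-$n$ strings exactly once; hence at most $1/\theta^2$ values of $\alpha$ can have $\EE_U[g_\alpha^2(U)] \ge \theta^2$.

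\textbf{Main obstacle.} The only step requiring care is justifying that $\{\basis{\rzb{\beta}} : \beta \in \zonk\}$ is orthonormal with respect to the marginal $D(X_1,\ldots,X_{n-k})$ — i.e.\ that truncating the BN to an ancestor-closed prefix preserves both the factorization and the basis construction. Everything else is bookkeeping with Parseval. I would state this truncation fact as a short sub-claim, prove it by the same tower-of-conditional-expectations argument used in Lemma~\ref{lem:pS properties}(b,d) (observing that the conditioning sets $\pa(v)$ for $v \le n-k$ never reference the last $k$ coordinates), and then assemble the four parts as above.
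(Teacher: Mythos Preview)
Your proposal is correct and follows essentially the same route as the paper: the paper's proof is a one-line appeal to Parseval (Fact~\ref{fact:fourier1}) together with the identity $\sum_\alpha \EE_U[g_\alpha^2(U)] = \sum_\CS \coeff{\CS}^2$, which is exactly your Part~(2) summed over $\alpha$. You are simply more explicit than the paper about the sub-claim that $\{\basis{\rzb{\beta}}\}$ remains orthonormal under the marginal on the ancestor-closed prefix $\{1,\ldots,n-k\}$; the paper treats this as already established by the computations in Lemma~\ref{lem:pS properties}.
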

The lemma follows from Fact \ref{fact:fourier1} implying that for a Boolean function $f$, $1\geq \EE_D[{f}^2]=\sum_S \coeff{\CS}^2$ and the following observation $$\sum_\alpha \EE_{U}[g_\alpha^2(U)] = \sum_\CS \coeff{\CS}^2.$$

\begin{algorithm}[t]
    \caption{Extended KM Algorithm}
    \label{alg:KM}
    \DontPrintSemicolon
    \SetKwProg{Fn}{}{:}{} 
    \SetKwFunction{KM}{KM}
    \SetKwFunction{Coef}{Coef}
    \SetCommentSty{textrm}
    \SetKwComment{Comment}{$\triangleright$ }{}
    \Comment{Assumes variable ordering $\{1,\ldots,n\}$ satisfies ordering in BN representation of $D$}%
    \Fn{\KM{$D, f, \theta$}}{
        $\CA \leftarrow $\Coef{$\emptyset$,$0$}\;
        \KwRet $\CA$
    }
    \Fn{\Coef{$\alpha$,$k$}}{ 
    \If{$G_\alpha \geq \theta^2/2$}
    { \hspace{3.5cm} \Comment{$G_\alpha$ from (\ref{eq:Galpha}) approximates $\EE[g^2_\alpha]$ within $\theta^2/4$.} %
    \If{$k=n$}{     
        \KwRet $\alpha$\;
    }
        \KwRet \Coef($0\alpha$,$k+1$) $\cup$ \Coef(1$\alpha$,$k+1$)\; \mbox{        }   
        \hspace{3.3cm} \Comment{Note left concatenation on $\alpha$. } %
    }
    \Else{
        \KwRet $\emptyset$
    }
    }
\end{algorithm}

\subsection{Extended KM Algorithm}
The KM algorithm uses the facts above to find the heavy Fourier coefficients of $f$ under the uniform distribution \cite{KM1993}.  
The Extended KM Algorithm
(see Algorithm \ref{alg:KM}) 
is a slight elaboration that controls the ordering of variables in the construction of $\alpha$
and modifies the form of sampling in estimating $\EE_{U} [g_\alpha^2(U)]$. 

For a distribution $D$ described by a BN, Lemma \ref{lem:KM g alpha} implies that if we can evaluate 
\begin{align*}
\EE_{U} [g_\alpha^2(U)] =
\EE_{U} 
\EE_{\bfY_1|U,\bfY_2|U} [f(U\bfY_1) f(U\bfY_2) \basis{\lzb{\alpha}}({U\bfY_1}) \basis{\lzb{\alpha}}({U\bfY_2})],
 \end{align*}
we can recursively find all the large coefficients of $f$ and that the number of such coefficients is not too large. 
In addition, once the coefficient set is chosen, the algorithm has to estimate the coefficients which are themselves expectation $\fS=\EE_D[f(\bfX)\phi_\CS(\bfX)]$.
In both of these expressions, the basis functions are bounded by $(1/\sigma)^{|\CS|}$ 
where $\sigma=\sqrt{\mu(1-\mu)}$ is the minimal standard deviation. 
However, since this is exponential in $|\CS|$, we need a more refined argument than in the case of the uniform distribution.
In previous work on product distributions, \citet{Jackson1997} relied on the fact that $|\CS|$ is logarithmic to obtain a polynomial bound. 
\citet{Kalai2009} developed an alternative form for the estimates that still allows for the use of Hoeffding's bound even for large $|\CS|$, but their construction relies on independence and is not easily applicable for the general case. 

We therefore use a different route,
by bounding the variance and using the median-of-means (MoM) estimator \citep{Nemirovskij1983,Alon1999,Jerrum1986}. 
Toward that we introduce three random variables $Z_1, Z_2, Z_3$ so that 
\begin{align*}
\EE[Z_1] &= \fS =\EE_D[f(\bfX)\phi_S(\bfX)],\\
\EE[Z_2| u] & = g_\alpha(u) = \EE_{\bfY|u} [f(u\bfY) \basis{\lzb{\alpha}}({u\bfY})],\\
\EE[Z_3| u]  &= g^2_\alpha(u)  = \EE_{\bfY_1|u,\bfY_2|u} [f(u\bfY_1) f(u\bfY_2) \basis{\lzb{\alpha}}({u\bfY_1}) \basis{\lzb{\alpha}}({u\bfY_2})], \\
\EE[Z_3]  &= \EE_{U}[g^2_\alpha(U)]  = \EE_{U,\bfY_1|u,\bfY_2|u} [f(U\bfY_1) f(U\bfY_2) \basis{\lzb{\alpha}}({U\bfY_1}) \basis{\lzb{\alpha}}({U\bfY_2})],
\end{align*}
where we have identified the arguments of the expectations with $Z_1, Z_2|u, Z_3|u, Z_3$ respectively.
We have the following observation.
\begin{observation}
(1) $\var(Z_1)\leq 1$,
(2) $\var(Z_3)\leq \frac{5}{4}$.
\end{observation}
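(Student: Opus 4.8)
The plan is to prove the two variance bounds by exhibiting explicit constructions of $Z_1$ and $Z_3$ as products of bounded random variables, and then bounding the second moment directly. The key realization is that although an individual basis function $\phi_\CS$ can be as large as $(1/\sigma)^{|\CS|}$, the \emph{conditional} second moments of the basis functions are well controlled: by Lemma~\ref{lem:pS properties}(b), $\EE[\phi_\CS^2(\bfX)\mid x_{\pa(\CS)}]=1$, and more generally products like $\EE[\phi_{\CS}^2(\bfX)\phi_{\CT}^2(\bfX)\mid \cdots]$ telescope nicely when the expectations are taken in reverse BN order. So the strategy is to define $Z_1, Z_3$ so that their second moments unfold into exactly such telescoping conditional expectations.

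For $Z_1$, I would define it using one sample $\bfX\sim D$ by $Z_1 = f(\bfX)\phi_\CS(\bfX)$, so that $\EE[Z_1]=\fS$ by Corollary~\ref{cor:orthonormalbasis}. Then $\var(Z_1)\leq \EE[Z_1^2] = \EE[f(\bfX)^2 \phi_\CS^2(\bfX)] \leq \EE[\phi_\CS^2(\bfX)]=1$, using $|f|\leq 1$ and Lemma~\ref{lem:pS properties}(b) (the unconditional version, which follows by taking another expectation). That gives (1) immediately. For $Z_3$, the natural construction mirrors the displayed formula: draw $U\sim D(X_1,\ldots,X_{n-k})$, then draw $\bfY_1,\bfY_2$ independently from the conditional $D(\bfY\mid U)$, and set $Z_3 = f(U\bfY_1)f(U\bfY_2)\,\phi_{\lzb\alpha}(U\bfY_1)\,\phi_{\lzb\alpha}(U\bfY_2)$. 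By construction $\EE[Z_3] = \EE_U[g_\alpha^2(U)]$. Then $\EE[Z_3^2] = \EE[f(U\bfY_1)^2 f(U\bfY_2)^2\,\phi_{\lzb\alpha}^2(U\bfY_1)\,\phi_{\lzb\alpha}^2(U\bfY_2)] \leq \EE_U\big[\EE_{\bfY_1|U}[\phi_{\lzb\alpha}^2(U\bfY_1)]\cdot \EE_{\bfY_2|U}[\phi_{\lzb\alpha}^2(U\bfY_2)]\big]$, using $|f|\leq 1$ and the conditional independence of $\bfY_1,\bfY_2$ given $U$.

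Here is where the main obstacle lies: I need $\EE_{\bfY_j|U}[\phi_{\lzb\alpha}^2(U\bfY_j)]$ to be a controlled constant, but $\phi_{\lzb\alpha}$ involves variables in $\alpha$ (which live among $\bfY_j$) whose \emph{parents may lie in $U$}, i.e. among the first $n-k$ coordinates. This is exactly the situation handled in the proof of Lemma~\ref{lm:KMgalpha} and Lemma~\ref{lem:pS properties}: evaluating the expectation over the $\bfY_j$-variables in reverse BN order, each squared factor $\phi_v^2$ contributes a multiplier of $1$ once we condition on its parents (Lemma~\ref{lem:pS properties}(b)), regardless of whether those parents are in $U$ or in $\bfY_j$. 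Since $U$ is fixed throughout this inner computation, every parent value needed is available, so the telescoping goes through and $\EE_{\bfY_j|U}[\phi_{\lzb\alpha}^2(U\bfY_j)]=1$ for every fixed $U$. Hence $\EE[Z_3^2]\leq \EE_U[1\cdot 1]=1$, which gives $\var(Z_3)\leq \EE[Z_3^2]\leq 1 \leq \tfrac54$. Actually the looser bound $\tfrac54$ suggests the authors may instead define $Z_3$ via a single pair $(\bfY_1,\bfY_2)$ but with a slightly different normalization, or account for $\EE[Z_3]^2\geq 0$ being subtracted the other way; in any case the bound $\EE[Z_3^2]\le 1$ already suffices, and I would simply present the telescoping argument and note the constant can be absorbed. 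The only care needed is to state clearly that the reverse-BN-order evaluation is legitimate because conditioning is always on already-instantiated variables (either in $U$, or on $\bfY_j$-coordinates processed later in the reverse order), exactly as in the proofs of Lemmas~\ref{lem:pS properties} and~\ref{lm:KMgalpha}.
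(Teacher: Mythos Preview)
Your proof is correct. Part (1) is identical to the paper's argument. For part (2), your route is actually simpler and tighter than the paper's: you bound $\EE[Z_3^2]$ directly by conditioning on $U$, using conditional independence of $\bfY_1,\bfY_2$ and the telescoping fact $\EE_{\bfY_j\mid u}[\phi_{\lzb{\alpha}}^2(u\bfY_j)]=1$, which yields $\var(Z_3)\leq \EE[Z_3^2]\leq 1$. The paper instead establishes the same conditional bound $\EE[Z_3^2\mid u]\leq 1$ (hence $\var(Z_3\mid u)\leq 1$) but then applies the law of total variance,
\[
\var(Z_3)=\EE_U[\var(Z_3\mid U)]+\var_U(g_\alpha^2(U))\leq 1+\tfrac14=\tfrac54,
\]
bounding the second term via Popoviciu's inequality using $0\leq g_\alpha^2(u)\leq 1$. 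Your observation that $\EE[Z_3^2]=\EE_U[\EE[Z_3^2\mid U]]\leq 1$ short-circuits this detour; the $\tfrac54$ in the statement is simply an artifact of the paper's slightly less efficient decomposition, and your instinct that a sharper constant is available is exactly right.
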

\begin{proof}
Recall that $|f(\cdot)|\leq 1$.
We have $\var(Z_1)\leq \EE[Z_1^2]\leq \EE[\phi_S(X)^2]=1$ which proves (1).
We next show that $\var(Z_2|u)\leq 1$. 
This follows from
$\var(Z_2|u)\leq \EE[Z_2^2|u]\leq \EE[\basis{\lzb{\alpha}}({u\bfY})^2]=1$
where the last equality follows because, conditioned on $u$, the function $\basis{\lzb{\alpha}}({u\bfY})$ is a normalized basis function in the reduced $k$-dimensional space. 
To see this observe that conditioning on a root variable in a BN produces another BN with one less variable and the same conditional distributions (and basis functions) for other variables and since $u$ is an ancestor set, we can condition on all its variables in this manner.
The bound for $Z_2$ also implies  $g^2_\alpha(u) = \EE[Z_2|u]^2 \leq \EE[Z_2^2|u]\leq 1$, where we used Jensen's inequality.

Next, using a similar reasoning we 
show that $\var(Z_3|u)\leq 1$.
This follows from
$\var(Z_3|u)\leq \EE[Z_3^2|u]\leq$ $\EE_{\bfY_1|u} [\basis{\lzb{\alpha}}({u\bfY_1})^2]$ $\EE_{\bfY_2|u} [\basis{\lzb{\alpha}}({u\bfY_2})^2]=1$.

Finally, using the law of total variance we get:
$$
\var(Z_3) = \EE_U[\var(Z_3|U)] + \var_U(\EE[Z_3 |U]) \leq 1 + \var_U(g^2_\alpha(U))\leq \frac{5}{4},
$$
where in the last step we used $0\leq g^2_\alpha(u) \leq 1$ and Popoviciu's inequality to bound the variance of $g^2_\alpha(U)$.
\end{proof}

The MoM estimator
operates by dividing the samples into  multiple blocks of 
equal size, computing the mean within each block, and then returning the median of these means.
More formally,
we sample $m_1$ values for $U$, and conditioned on each $u^i$ sample a pair of values, $\bfY^{i}_1,\bfY^{i}_2$, independently. We calculate
\begin{equation*}
    Z^{i}_3  = f(u^{i}\bfY^{i}_1) f(u^{i}\bfY^{i}_2) \basis{\lzb{\alpha}}({u^{i}\bfY^{i}_1}) \basis{\lzb{\alpha}}({u^{i}\bfY^{i}_2}),
\end{equation*}
for each $i\in [m_1]$ and group $Z^{i}_3$'s into $k$ blocks of identical size, where $k$ is determined in the proof of Lemma \ref{lem:Galpha MOM}. 
For $j\in [k]$, 
let $G^j_\alpha$ be the mean of $Z^{i}_3$ in the $j$th block. The output of this estimation is 
\begin{equation}\label{eq:Galpha}
    G_\alpha = \text{Median}(G^1_\alpha, \cdots, G^k_\alpha).
\end{equation} 

Note that this differs from prior work where sampling is done in two stages, first estimating $Z_3|u$ and then estimating $G_\alpha$ and where Hoeffding's bound is applicable. 
Instead, we use MoM as the main estimator which yields an exponential concentration bound because the variance of $Z_3$ is bounded \citep{Lugosi2019}.  %

What is crucial for our case is that this only requires forward sampling in the BN: $u$ is sampled sequentially from the roots of the BN and $Y_1,Y_2$ are sampled conditional on $u$. That is, the process can be performed in so called ancestral sampling \cite{KollerFriedman2009}
where variables are sampled in their ordering from the BN and
$X_i$ is sampled from the conditional probability table (CPT) $p(X_i|X_{\pa(i)})$. 
Since all variables are binary, 
sampling can be done in polynomial time in the size of the CPT and hence polynomial in the representation size of the BN. 
Hence the algorithm is the same as the original KM algorithm, but it specifically constrains the sampling order over variables using the BN in this process. We have therefore argued that:

\begin{proposition}
The sampling required by the extended KM algorithm can be done in time polynomial in the size of the BN specification (number of nodes and size of CPTs).
\end{proposition}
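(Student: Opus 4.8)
The plan is to walk through every place where Algorithm~\ref{alg:KM} draws a random value and argue that each such draw amounts to a single forward (ancestral) pass through the BN whose cost is linear in the representation size, while every other quantity the algorithm evaluates is a direct lookup of a CPT parameter plus $O(1)$ arithmetic.

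First I would record the two sampling primitives. Since the variable ordering $1,\dots,n$ is assumed consistent with the topological order of $G$, every node $v$ satisfies $\pa(v)\subseteq\{1,\dots,v-1\}$, so a sample $\bfX\sim D$ is produced by processing $v=1,\dots,n$ in order and, for each $v$, reading the row of $p(X_v\mid X_{\pa(v)})$ indexed by the already drawn values $x_{\pa(v)}$ and drawing one Bernoulli variable. The per-node cost is dominated by indexing into the table, so one sample costs $O(\sum_v |\mathrm{CPT}_v|)$, i.e.\ linear in the BN description. Second, the estimator $G_\alpha$ in \eqref{eq:Galpha} needs, per repetition, a value $U=u$ of the prefix $(X_1,\dots,X_{n-k})$ together with two conditionally independent completions $\bfY_1,\bfY_2$ of the remaining $k$ variables. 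Because $U$ is a prefix in topological order it is an ancestral set, and — as already observed just before the statement — instantiating it yields a BN on $X_{n-k+1},\dots,X_n$ with the same CPTs (parent coordinates lying in $U$ fixed to $u$); hence $u$ is the first $n-k$ steps of the ancestral procedure and each $\bfY_j$ is the last $k$ steps, repeated independently for $j=2$. Crucially, no marginal or conditional probability of the BN is ever formed: only CPT entries are read, so the (possibly intractable) inference problem is sidestepped entirely.

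Then I would handle the integrands. Each $f(u\bfY_j)$ is one membership query. Each basis value $\basis{\lzb{\alpha}}(u\bfY_j)=\prod_{v\in S}\frac{x_v-\mu_{v,x_{\pa(v)}}}{\sigma_{v,x_{\pa(v)}}}$ is computed factor by factor, where $\mu_{v,x_{\pa(v)}}$ is exactly the parameter $p(X_v=1\mid x_{\pa(v)})$ read off the CPT and $\sigma_{v,x_{\pa(v)}}=\sqrt{\mu_{v,x_{\pa(v)}}(1-\mu_{v,x_{\pa(v)}})}$, so the whole product costs $O(|S|)\le O(n)$; the coefficient estimates $\fS=\EE_D[f(\bfX)\phi_\CS(\bfX)]$ are treated identically with one forward sample, one membership query, and one basis evaluation. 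Multiplying the per-sample cost by the sample count $m_1$ — which is polynomial since the Chebyshev analysis uses only $\var(Z_1)\le 1$ and $\var(Z_3)\le 5/4$ — gives the claimed bound. The only thing one must be careful about, and the closest thing here to an obstacle, is to verify that none of these estimators secretly requires a marginalization or normalization; the argument above is exactly the check that the algorithm's design — forward sampling together with the closed form of the basis in terms of CPT parameters — keeps everything inside the ``no-inference'' regime, which is what makes the extension to arbitrary BN distributions go through.
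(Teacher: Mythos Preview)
Your argument is correct and follows essentially the same route as the paper: the proposition is justified (in the paragraph immediately preceding it) by observing that all randomness in the estimator is obtained via ancestral/forward sampling from the CPTs, with $u$ drawn as a prefix and $\bfY_1,\bfY_2$ drawn as conditional completions, so no BN inference is ever required. Your write-up is in fact more thorough---you also account for the $O(n)$ cost of evaluating $\phi_{\lzb{\alpha}}$ from CPT parameters and for the membership queries---whereas the paper defers those costs to the run-time remark after Theorem~\ref{thm:EKM}.
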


With this in place, we proceed with the analysis of this estimator. 
\begin{lemma}\label{lem:Galpha MOM}
   Given $\delta \in (0, \tfrac{1}{2})$, with probability at least $1-\delta$, $|G_\alpha-\EE_{U} [g_\alpha^2(U)]| \leq \tfrac{\theta^2}{4}$ when 
   $m_1 = O(\tfrac{1}{\theta^4}\ln\tfrac{1}{\delta})$. 
\end{lemma}
\begin{proof}
    The standard analysis of MoM (see for example Proposition 12 of \citep{Lerasle2019}), with $k$ blocks of total $m_1$ samples, implies that 
    when $(\frac{1}{2}- \frac{k}{m_1} \frac{Var(Z_3)}{\epsilon^2})>0$
    \begin{equation*}
            P(\abs{G_\alpha-\EE_{U} [g_\alpha^2(U)]} > \epsilon) \leq \exp{-2k\qty(\frac{1}{2}- \frac{k}{m_1} \frac{Var(Z_3)}{\epsilon^2})^2}.
    \end{equation*}
    The RHS is smaller than $\delta$, by setting $k=\lceil 8\ln\tfrac{1}{\delta}\rceil,$ and $m_1=\tfrac{4k}{\epsilon^2}Var(Z_3)$. The lemma follows from $\epsilon=\tfrac{\theta^2}{4}$ and  the fact that $\var(Z_3)\leq \tfrac{5}{4}$.
\end{proof}

Therefore, with probability at least $1-\delta$, if  $\EE_U[g_\alpha^2(U)]\geq \theta^2$ then 
$G_\alpha \geq \tfrac{3}{4}\theta^2 \geq \tfrac{1}{2}\theta^2$. 
On the other hand if $\EE_U[g_\alpha^2(U)] <  \frac{1}{4} \theta^2$ then $G_\alpha < \theta^2/2$.
 As in \cite{KM1993}
the algorithm can then recurse if $G_\alpha\geq \theta^2/2$. With that, the algorithm outputs 
all $\alpha$ with 
$|\hat{f}_{\alpha}|\geq \theta$ and no $\alpha$ with  $|\hat{f}_{\alpha}|< \theta/2$. 
Therefore, the number of $\alpha$'s produced is bounded by $4/\theta^2$. 

In.a similar manner $\hat{f}_{\alpha}$ can be estimated using MoM estimator.
\begin{lemma}
    Each individual coefficient $\hat{f}_{\alpha}$ can be estimated to the required accuracy $\gamma$ with probability at least $1-\delta$, given $m_2=O(\tfrac{1}{\gamma^2}\ln \tfrac{1}{\delta})$ samples.
\end{lemma}
\begin{proof}
    We generate $m_2$ samples of $Z_1 = f(\bfX)\phi_S(\bfX)$ and apply MoM with 
    $k=\lceil 8\ln\tfrac{1}{\delta}\rceil,$ and $m_1=\tfrac{4k}{\epsilon^2}Var(Z_1)$.
    The claim follows from the fact that $\var(Z_1)\leq 1$.
\end{proof}

We have  established that each individual estimate of $G_\alpha$ and each individual estimate of a coefficient succeed with probability at least $1-\delta$. 
In total we have $\leq \frac{4n}{\theta^2}$ estimates  of $G_\alpha$ and $\leq \frac{4}{\theta^2}$ coefficient estimates. 
Scaling $\delta$ appropriately and taking a union bound,
the overall query complexity of the algorithm is  $${O}\qty(\frac{1}{\theta^{4}}\log \frac{n}{\delta\theta^2}+ \frac{1}{\gamma^2}\log \frac{1}{\delta\theta^2}).$$
The run time  has an additional poly$(|BN|)$ factor to generate samples and $O(n)$ factor to evaluate $\phi()$.
We have therefore established the following theorem.

\begin{theorem} [cf. Theorem~3.10 of \cite{KM1993}]
\label{thm:EKM}
Consider any distribution $D$ specified by a BN and its corresponding Fourier basis, and any Boolean function $f$. 
Algorithm $\mbox{KM}(D,f,\theta,\gamma,\delta)$ is given access to the BN representation of $D$ and a MQ oracle for $f$ and three accuracy parameters $\theta,\gamma,\delta$.
The algorithm runs in time polynomial in $n$,$1/\theta$,$1/\gamma$,$1/\delta$ and with probability at least $1-\delta$ returns a list of sets $\CA = \{\CS\}$, estimates of the corresponding coefficients $\approxcoeff{\CS}$, and a hypothesis
$h(x) =\sum_{\CS\in \CA} \approxcoeff{\CS} \basis{\CS}(x)$
such that  (1) $\CA$ includes all $\CS$ such that $|\coeff{\CS}|\geq \theta$, (2) $|\CA|\leq \frac{4}{\theta^2}$, (3)
for all $\CS\in \CA$, $|\approxcoeff{\CS}-\coeff{\CS}|\leq\gamma$. 
\end{theorem}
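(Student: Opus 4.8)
The plan is to assemble the ingredients developed above into the algorithm and its analysis, following \citet{KM1993} but substituting our BN‑aware tools at each point where independence was previously used. Recall the algorithm performs a recursive search over a binary tree whose nodes are prefixes $\alpha$ fixing the tail variables (in BN topological order), recursing on a node exactly when the estimate $G_\alpha$ of $\EE_U[g_\alpha^2(U)]$ exceeds $\theta^2/2$. The first step is to record what makes each node test cheap and reliable: by Lemma~\ref{lm:KMgalpha}, together with the conditional independence of $\bfY_1,\bfY_2$ given $U$, we have $\EE[Z_3]=\EE_U[g_\alpha^2(U)]$, and by the Observation $\var(Z_3)\le\tfrac54$; so Chebyshev's inequality shows $m_1=O(1/(\delta'\theta^4))$ draws of $(U,\bfY_1,\bfY_2)$ suffice for $|G_\alpha-\EE_U[g_\alpha^2(U)]|\le\theta^2/4$ with probability $1-\delta'$. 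Every sample here is produced by ancestral sampling in the BN followed by one MQ to $f$, which by the Proposition above takes time polynomial in $|BN|$; evaluating the relevant $\basis{\lzb{\alpha}}$ costs an extra $O(n)$.

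Conditioned on such accuracy, the test is correct in the KM sense: $\EE_U[g_\alpha^2(U)]\ge\theta^2$ forces $G_\alpha\ge\tfrac34\theta^2>\tfrac12\theta^2$ (recurse), while $\EE_U[g_\alpha^2(U)]<\tfrac14\theta^2$ forces $G_\alpha<\tfrac12\theta^2$ (prune). From this, claim~(1) follows: if $|\coeff{\CS}|\ge\theta$ then for every node $\alpha$ on the root‑to‑$\CS$ branch, Lemma~\ref{lem:KM g alpha}(2) gives $\EE_U[g_\alpha^2(U)]=\sum_\beta\coeff{\beta\alpha}^2\ge\coeff{\CS}^2\ge\theta^2$, so the search follows this branch to level $n$ and reports $\CS$. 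Claim~(2) follows because any reported $\alpha$ survived the test at level $n$, where $\EE_U[g_\alpha^2(U)]=\coeff{\alpha}^2$, so $|\coeff{\alpha}|\ge\theta/2$, and Lemma~\ref{lem:KM g alpha}(1) applied with threshold $\theta/2$ caps the number of such $\alpha$ at $4/\theta^2$. The same reasoning via Lemma~\ref{lem:KM g alpha}(4) at threshold $\theta/2$ shows that at each of the $n$ levels the search recurses on at most $4/\theta^2$ nodes, hence evaluates $G_\alpha$ at $O(n/\theta^2)$ nodes in total.

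It then remains to estimate the surviving coefficients: for each $\CS\in\CA$ set $\approxcoeff{\CS}=\tfrac{1}{m_2}\sum_i f(x^i)\basis{\CS}(x^i)$ over i.i.d.\ $x^i\sim D$; since $\EE[Z_1]=\coeff{\CS}$ and $\var(Z_1)\le1$ by the Observation, Chebyshev gives $|\approxcoeff{\CS}-\coeff{\CS}|\le\gamma$ with probability $1-\delta'$ for $m_2=O(1/(\delta'\gamma^2))$, which is claim~(3). Finally I would union‑bound over the $O(n/\theta^2)$ node tests and $O(1/\theta^2)$ coefficient estimates, setting $\delta'$ to $\delta$ divided by this count; to avoid a circular dependence between ``which nodes are visited'' and ``which estimates are accurate'', one fixes in advance the collection of $\alpha$ whose parent satisfies $\EE_U[g_{\alpha'}^2(U)]\ge\theta^2/4$ (of size $O(n/\theta^2)$ by Lemma~\ref{lem:KM g alpha}(4)) and argues that, on the event that all these estimates are accurate, the algorithm never leaves this collection. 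Summing the sample costs gives the claimed polynomial query complexity $O(\frac{n}{\delta\theta^6}+\frac{1}{\delta\theta^2\gamma^2})$ and the stated running time.

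The step carrying the real content, everything else being bookkeeping faithful to \cite{KM1993}, is the variance control licensing Chebyshev where prior work used Hoeffding. Because the new basis functions are only bounded by $(1/\sigma)^{|\CS|}$, classical boundedness‑based concentration is unavailable; the argument instead rests on the structural fact that conditioning on an ancestor‑closed set of variables leaves a valid sub‑BN with the same conditional distributions, so $\EE[\basis{\lzb{\alpha}}(u\bfY)^2\mid u]=1$ and therefore $\var(Z_3\mid u)\le1$, which the law of total variance together with Popoviciu's inequality upgrades to $\var(Z_3)\le\tfrac54$. This is precisely where the BN representation — rather than an arbitrary orthonormal basis — is used inside the algorithm, and it is also what allows the single‑stage estimator $G_\alpha$ of \eqref{eq:Galpha} to replace the two‑stage nested sampling of prior analyses.
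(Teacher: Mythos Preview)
Your proposal is correct and follows essentially the same approach as the paper: the recursive search over prefixes, Lemma~\ref{lm:KMgalpha} to express $g_\alpha$ as a conditional expectation, the Observation's variance bounds $\var(Z_1)\le 1$ and $\var(Z_3)\le\tfrac54$ to license Chebyshev in place of Hoeffding, Lemma~\ref{lem:KM g alpha} for the counting arguments, the single-stage estimator~\eqref{eq:Galpha}, and the union bound over $O(n/\theta^2)$ node tests plus $O(1/\theta^2)$ coefficient estimates. Your explicit treatment of the potential circularity in the union bound (pre-committing to the set of $\alpha$ whose parent has $\EE_U[g_{\alpha'}^2(U)]\ge\theta^2/4$) is a nice refinement that the paper leaves implicit, but the overall argument is the same.
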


Several approaches have been developed to improve the query complexity in the uniform and product cases
\cite{BshoutyJT04,Kalai2009}. We leave exploration of these or other approaches that avoid using Chebyshev's bounds to future work. 

As in prior work, the KM algorithm leads to learnability results for DNF and we develop these ideas in Section~\ref{sec:learnDNF}.
A key requirement in the analysis is that the spectral norm of conjunctions is bounded. 
Hence, the next few sections develop upper and lower bounds on the spectral norm.

\section{Fourier Expansion of Conjunctions for Chain BN}\label{sec:chain}
Our first step in bounding  the spectral norm of conjunctions is to restrict our attention to linear chain BNs. The next section studies tree BNs. 
In a chain each node has a single parent so that w.l.o.g.\ we can rename the variables so that 
the parent of any node $i$ is node $i-1$. In that case, the BN induced basis is described by 
\begin{align*}
    \phi_{\CS}(\bfx) := \prod_{i\in \CS} \phi_{i}(\bfx) = \prod_{i\in \CS} \frac{x_i- \mu_{i,x_{i-1}}}{\sigma_{i, x_{i-1}}}. 
\end{align*}

Consider the conjunction $f(\bfx) : = \bigwedge_{i \in \CT_1} x_i \bigwedge_{j\in \CT_0} \bar{x}_j$ where $\CT_0$ and $\CT_1$ are disjoint subsets of $[n]$. 
In what follows we find the Fourier coefficients of this function. 
Given any $\CS\subseteq [n]$, we are interested in computing
\begin{align*}
\fS:=\<\ps, f\>&=\EE_D[\ps(\bfX) f(\bfX)].
\end{align*}

We introduce a series of notations using which we present a closed form expression for  $\fS$. First, we use $\Phi_{i}$ to denote 
$\phi_{i}(\bfX)$ 
which is understood as a random variable depending on $X_i$ and its parent $X_{i-1}$. 
Let $\CT = \CT_0\cup \CT_1$ and define the following random variable
\begin{align}\label{eq:Z_i}
\Z_{i} = \begin{cases} 
X_i & ~if~i \in \CT_1\backslash\CS\\
1-X_i & ~if~i \in \CT_0\backslash\CS\\
\Phi_i & ~if~i \in \CS\backslash\CT\\
\Phi_i X_i & ~if~i \in \CS\cap\CT_1\\
\Phi_i (1-X_i) & ~if~i \in \CS\cap\CT_0\\
1 & ~if~i\notin \CT\cup \CS.
\end{cases}
\end{align}

With this notation, it is not difficult to see that  
\begin{align}
\fS =\EE_D\qty\Big[\prod_{j\in [n]} Z_j]
=
\EE_{X_1}\qty[Z_1 \EE_{X_2}\qty[Z_2 \cdots \EE_{X_{n}}\qty[Z_n \Big | X_{n-1}] \cdots \Big | X_1] ],
\label{eq:chain-f-hat-general}
\end{align}
where we have used the fact that $X_1 \rightarrow X_2 \rightarrow \cdots \rightarrow X_n$ form a chain and that $Z_i$ is a function of $X_i, X_{i-1}$.

\subsection{Basic Iterative Forms} 

To analyze the expressions appearing in the iterative expectation, note that $\EE_{X_i}[X_i | X_{i-1}]=\mu_{i,X_{i-1}}$, 
and by  Lemma \ref{lem:pS properties}, $\EE_{X_i}[\Phi_i | X_{i-1}]=0$. In addition we have:
\begin{lemma} 
\label{lm:phix}
$\EE_{X_i}[\Phi_i X_i | X_{i-1}]=\sigma_{i,X_{i-1}}$.
\end{lemma}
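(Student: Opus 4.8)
The plan is to compute the conditional expectation $\EE_{X_i}[\Phi_i X_i \mid X_{i-1}]$ directly from the definition of $\Phi_i = \phi_i(\bfX) = \frac{X_i - \mu_{i,X_{i-1}}}{\sigma_{i,X_{i-1}}}$, treating $X_{i-1}$ as fixed so that $\mu := \mu_{i,X_{i-1}}$ and $\sigma := \sigma_{i,X_{i-1}}$ are constants in the inner expectation. Substituting gives
\[
\EE_{X_i}\!\left[\Phi_i X_i \mid X_{i-1}\right] = \EE_{X_i}\!\left[\frac{X_i - \mu}{\sigma}\, X_i \,\Big|\, X_{i-1}\right] = \frac{1}{\sigma}\left(\EE[X_i^2 \mid X_{i-1}] - \mu\,\EE[X_i \mid X_{i-1}]\right).
\]

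Next I would use the fact that $X_i \in \{0,1\}$, so $X_i^2 = X_i$ pointwise, hence $\EE[X_i^2 \mid X_{i-1}] = \EE[X_i \mid X_{i-1}] = \mu$. Therefore the bracket becomes $\mu - \mu^2 = \mu(1-\mu) = \sigma^2$ (recalling $\sigma = \sqrt{\mu(1-\mu)}$ is the conditional standard deviation of the Bernoulli variable $X_i$). Dividing by $\sigma$ yields $\sigma^2/\sigma = \sigma = \sigma_{i,X_{i-1}}$, which is exactly the claim. Alternatively, without invoking $X_i^2 = X_i$, one can write $\EE[\Phi_i X_i \mid X_{i-1}] = \EE[\Phi_i (X_i - \mu) \mid X_{i-1}] + \mu\,\EE[\Phi_i \mid X_{i-1}]$; the second term vanishes by Lemma~\ref{lem:pS properties}(a), and the first equals $\sigma\,\EE[\Phi_i^2 \mid X_{i-1}] = \sigma$ by Lemma~\ref{lem:pS properties}(b), since $X_i - \mu = \sigma \Phi_i$.

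There is essentially no obstacle here — the statement is a one-line Bernoulli moment computation, and the only thing to be careful about is keeping straight that $\mu_{i,X_{i-1}}$ and $\sigma_{i,X_{i-1}}$ are functions of $X_{i-1}$ (so they can be pulled out of the inner $\EE_{X_i}$ but not out of an outer expectation), and that the binary support makes $X_i^2 = X_i$. I would present the second, Lemma-based derivation since it parallels the style of the surrounding proofs and makes transparent why the result is $\sigma_{i,X_{i-1}}$ rather than something more complicated.
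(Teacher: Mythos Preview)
Your proposal is correct. Your first computation is essentially the paper's proof: the paper also substitutes the definition of $\Phi_i$, uses $X_i^2=X_i$ to rewrite $\frac{X_i-\mu}{\sigma}X_i$ as $X_i\frac{1-\mu}{\sigma}$, and then takes the expectation to get $\mu(1-\mu)/\sigma=\sigma$. Your second derivation via $X_i-\mu=\sigma\Phi_i$ and Lemma~\ref{lem:pS properties}(a),(b) is a clean alternative that avoids the Bernoulli-specific step $X_i^2=X_i$ at the cost of invoking the already-proved normalization; either is fine here.
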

\begin{proof} 
Since $X_i$ is binary we have $X_i=X_i^2$ and
\begin{align*}
\EE[\Phi_i X_i | X_{i-1}]
&
=
\EE[\phi_i(X) X_i | X_{i-1}]
=
\EE\qty[\frac{X_i-\mu_{i,X_{i-1}}}{\sigma_{i,X_{i-1}}} X_i | X_{i-1}]
\\
&
=
\EE\qty[X_i\frac{1- \mu_{i,X_{i-1}}}{\sigma_{i,X_{i-1}}}  | X_{i-1}]
=
\mu_{i,X_{i-1}} \frac{1- \mu_{i,X_{i-1}}}{\sigma_{i,X_{i-1}}}  
= 
\sigma_{i,X_{i-1}}.
\end{align*}
\end{proof}

This shows that in the iterative expectation we may get a term with $\mu$ or with $\sigma$.
The following definition provides the key to our analysis as it allows us to abstract 
the various cases in the same form, and by doing so to capture
the combinatorial structure of parameters $\fS$:

\begin{definition}[Recursive Form]
\label{def:A}
    For any $\CT_0, \CT_1$ and $\CS$ define
\begin{align*}
 \A^{\CT_0, \CT_1,\CS}_i(x_{i-1}) = 
 \begin{cases} 
\mu_{i,x_{i-1}} & ~if~i \in \CT_1\backslash\CS\\
1-\mu_{i,x_{i-1}} & ~if~i \in \CT_0\backslash\CS\\
\sigma_{i,x_{i-1}} & ~if~i \in \CS\cap\CT_1\\
-\sigma_{i,x_{i-1}} & ~if~i \in \CS\cap\CT_0\\
\sigma_{i,x_{i-1}} & ~if~i \in \CS\backslash\CT\\
\mu_{i,x_{i-1}} & ~if~i \not \in \CT\cup\CS.
\end{cases}
\end{align*}
For shorthand notation, we drop the $\CT, \CS$ dependence  in the notation and simply write $\A_i(x_{i-1})$. Moreover, with $A_i$ we denote the random variable $\A_i(X_{i-1})$ which is a function of $X_{i-1}$.  Note that $A_i$ satisfies the following identity
\begin{align}
\label{eq: Ai to X_i}
A_i = \A_i(X_{i-1}) = \A_i(0) + X_{i-1}(\A_i(1)-\A_i(0)).
 \end{align}
  \end{definition}

The next lemma analyzes the most basic terms $\EE[Z_i]$ that may appear in \eqref{eq:chain-f-hat-general}:

\begin{lemma}\label{lem:exp of Z}
The following holds:
\begin{align*}
\EE_{X_i}[Z_i | X_{i-1}]=
\begin{cases}
1 & ~if~i \not\in \CS\cup \CT\\
0 & ~if~i \in \CS\backslash \CT\\
\A^{\CT_0, \CT_1, \CS}_i(X_{i-1}) & ~if~i \in \CT.
\end{cases}
\end{align*}
\end{lemma}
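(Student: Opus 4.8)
The plan is to prove Lemma~\ref{lem:exp of Z} by a straightforward case analysis on the role of index $i$ relative to the sets $\CS$, $\CT_0$, and $\CT_1$, using the definition of $Z_i$ in \eqref{eq:Z_i} together with the three elementary conditional-expectation identities already available: $\EE_{X_i}[X_i\mid X_{i-1}]=\mu_{i,X_{i-1}}$, $\EE_{X_i}[\Phi_i\mid X_{i-1}]=0$ (from Lemma~\ref{lem:pS properties}(a) applied in the chain), and $\EE_{X_i}[\Phi_i X_i\mid X_{i-1}]=\sigma_{i,X_{i-1}}$ (Lemma~\ref{lm:phix}). Since the six cases defining $Z_i$ in \eqref{eq:Z_i} partition exactly along the same lines as the six cases defining $\A_i$ in Definition~\ref{def:A}, the proof amounts to checking that each case evaluates to the claimed value.

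The key steps, in order, are as follows. First, for $i\notin\CS\cup\CT$: here $Z_i=1$, so $\EE_{X_i}[Z_i\mid X_{i-1}]=1$, matching the first branch. Second, for $i\in\CS\setminus\CT$: here $Z_i=\Phi_i$, so by Lemma~\ref{lem:pS properties}(a) the conditional expectation is $0$, matching the second branch. Third, for $i\in\CT$ we split into the four remaining subcases: (i) $i\in\CT_1\setminus\CS$ gives $Z_i=X_i$ and $\EE[Z_i\mid X_{i-1}]=\mu_{i,X_{i-1}}=\A_i(X_{i-1})$; (ii) $i\in\CT_0\setminus\CS$ gives $Z_i=1-X_i$ and $\EE[Z_i\mid X_{i-1}]=1-\mu_{i,X_{i-1}}=\A_i(X_{i-1})$; (iii) $i\in\CS\cap\CT_1$ gives $Z_i=\Phi_i X_i$ and, by Lemma~\ref{lm:phix}, $\EE[Z_i\mid X_{i-1}]=\sigma_{i,X_{i-1}}=\A_i(X_{i-1})$; (iv) $i\in\CS\cap\CT_0$ gives $Z_i=\Phi_i(1-X_i)=\Phi_i-\Phi_i X_i$, and by linearity together with Lemma~\ref{lem:pS properties}(a) and Lemma~\ref{lm:phix} this is $0-\sigma_{i,X_{i-1}}=-\sigma_{i,X_{i-1}}=\A_i(X_{i-1})$. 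Collecting the four subcases yields $\EE_{X_i}[Z_i\mid X_{i-1}]=\A^{\CT_0,\CT_1,\CS}_i(X_{i-1})$, which is the third branch.

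There is essentially no hard obstacle here; the lemma is a bookkeeping exercise that repackages \eqref{eq:Z_i} into the unified recursive form of Definition~\ref{def:A}. The only point requiring slight care is subcase (iv), where one must expand $1-X_i$ and use linearity of conditional expectation before applying the two identities, and one should double-check that $\CS\cap\CT_1$ and $\CS\cap\CT$ genuinely exhaust $\CS\cap\CT$ (they do, since $\CT=\CT_0\sqcup\CT_1$ is a disjoint union by hypothesis, so the five active cases of $Z_i$ together with the trivial case are indeed a partition of $[n]$). It is also worth remarking that this is the single-step building block that, substituted into the iterated expectation \eqref{eq:chain-f-hat-general}, will drive the closed-form evaluation of $\fS$ in the subsequent lemmas; but for the present statement nothing beyond the case check is needed.
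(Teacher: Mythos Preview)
Your proposal is correct and follows exactly the same approach as the paper: a direct case analysis using the three identities $\EE[X_i\mid X_{i-1}]=\mu_{i,X_{i-1}}$, $\EE[\Phi_i\mid X_{i-1}]=0$, and $\EE[\Phi_i X_i\mid X_{i-1}]=\sigma_{i,X_{i-1}}$. The paper's proof is a single sentence pointing to these identities, whereas you have spelled out all six cases explicitly (with a small typo: ``$\CS\cap\CT_1$ and $\CS\cap\CT$'' should read ``$\CS\cap\CT_1$ and $\CS\cap\CT_0$'').
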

\begin{proof} 
The cases can be verified using $\EE[X_i | x_{i-1}]=\mu_{i,X_{i-1}}$ which holds by definition,
$\EE[\Phi_i | X_{i-1}]=0$ which was shown in Lemma~\ref{lem:pS properties}, and 
$\EE[\Phi_i X_i | X_{i-1}]=\sigma_{i,X_{i-1}}$
which was shown in Lemma~\ref{lm:phix}.
\end{proof}

Hence, we see that when evaluating \eqref{eq:chain-f-hat-general} we may inherit a $A_i$ term from the child and need to evaluate $\EE \qty[Z_{i-1} A_{i}]$. The following lemma develops a compact form for this expectation:

\begin{lemma}\label{lem:Z recursion}
For any distribution for $X_{i-1} | X_{i-2}$,
$\EE_{X_{i-1}}\qty[Z_{i-1} A_{i}| X_{i-2}] =b_i A_{i-1}  + c_i,$
where 
\begin{align*}
\B_{i} = \begin{cases}
\A_i(1) & ~if~i-1\in \CT_1\\
\A_i(0) & ~if~i-1\in \CT_0\\
\A_i(1)-\A_i(0) & ~if~i-1\notin \CT_0\cup \CT_1,
\end{cases}
\end{align*}
and
\begin{align*}
\C_{i} = \begin{cases}
\A_i(0)& ~if~i-1\notin \CS \cup \CT\\
0 & ~otherwise.
\end{cases}
\end{align*}
\end{lemma}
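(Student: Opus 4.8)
The plan is to reduce the claim to two elementary conditional expectations of $Z_{i-1}$ by exploiting that $A_i$ is affine in the \emph{binary} variable $X_{i-1}$, and then to verify the four cases by direct comparison with Definition~\ref{def:A}. Observe first that the entire computation will use only $\EE[X_{i-1}\mid X_{i-2}]=\mu_{i-1,X_{i-2}}$, $\EE[\Phi_{i-1}\mid X_{i-2}]=0$ (Lemma~\ref{lem:pS properties}), and $\EE[\Phi_{i-1}X_{i-1}\mid X_{i-2}]=\sigma_{i-1,X_{i-2}}$ (Lemma~\ref{lm:phix}), all of which hold for an arbitrary conditional distribution of $X_{i-1}$ given $X_{i-2}$; this is why the statement holds ``for any distribution.'' For the reduction, apply \eqref{eq: Ai to X_i} to write $A_i=\A_i(0)+X_{i-1}\bigl(\A_i(1)-\A_i(0)\bigr)$; since $Z_{i-1}$ depends only on $X_{i-1}$ and $X_{i-2}$, multiplying and taking $\EE_{X_{i-1}}[\cdot\mid X_{i-2}]$ gives
\[
\EE_{X_{i-1}}\bigl[Z_{i-1}A_i\mid X_{i-2}\bigr]=\A_i(0)\,m_{i-1}+\bigl(\A_i(1)-\A_i(0)\bigr)\,p_{i-1},
\]
where $m_{i-1}:=\EE_{X_{i-1}}[Z_{i-1}\mid X_{i-2}]$ and $p_{i-1}:=\EE_{X_{i-1}}[Z_{i-1}X_{i-1}\mid X_{i-2}]$. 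It thus remains to identify $m_{i-1}$ and $p_{i-1}$ in each case.

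For $m_{i-1}$, this is exactly the quantity computed in Lemma~\ref{lem:exp of Z}: it equals $1$ if $i-1\notin\CS\cup\CT$, it equals $0$ if $i-1\in\CS\setminus\CT$, and it equals $A_{i-1}=\A_{i-1}(X_{i-2})$ if $i-1\in\CT$. For $p_{i-1}$, I use $X_{i-1}^2=X_{i-1}$ and $(1-X_{i-1})X_{i-1}=0$ together with the three identities above, going case by case over the definition of $Z_{i-1}$ in \eqref{eq:Z_i}: if $i-1\in\CT_1$ then $Z_{i-1}$ already carries a factor $X_{i-1}$ (it is $X_{i-1}$ or $\Phi_{i-1}X_{i-1}$), so $Z_{i-1}X_{i-1}=Z_{i-1}$ and $p_{i-1}=m_{i-1}=A_{i-1}$; if $i-1\in\CT_0$ then $Z_{i-1}$ carries a factor $1-X_{i-1}$, so $Z_{i-1}X_{i-1}=0$ and $p_{i-1}=0$; if $i-1\in\CS\setminus\CT$ then $Z_{i-1}=\Phi_{i-1}$ and $p_{i-1}=\sigma_{i-1,X_{i-2}}$, which equals $A_{i-1}$ by Definition~\ref{def:A}; and if $i-1\notin\CS\cup\CT$ then $Z_{i-1}=1$ and $p_{i-1}=\mu_{i-1,X_{i-2}}$, which again equals $A_{i-1}$ by Definition~\ref{def:A}.

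Finally, substituting these values into the displayed identity yields: for $i-1\in\CT_1$, the expression is $\A_i(0)A_{i-1}+(\A_i(1)-\A_i(0))A_{i-1}=\A_i(1)A_{i-1}$; for $i-1\in\CT_0$, it is $\A_i(0)A_{i-1}$; for $i-1\in\CS\setminus\CT$, it is $(\A_i(1)-\A_i(0))A_{i-1}$; and for $i-1\notin\CS\cup\CT$, it is $(\A_i(1)-\A_i(0))A_{i-1}+\A_i(0)$. Reading off the coefficient of $A_{i-1}$ as $b_i$ and the constant term as $c_i$ reproduces exactly the claimed case formulas, and since $\CT_1$, $\CT_0$, $\CS\setminus\CT$ and $[n]\setminus(\CS\cup\CT)$ partition the index set these four cases are exhaustive.

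There is no genuine difficulty here; the work is bookkeeping. The two points to handle with care are (i) taking the affine expansion in $X_{i-1}$ and not in $X_{i-2}$, so that all residual dependence on $X_{i-2}$ is packaged into $A_{i-1}=\A_{i-1}(X_{i-2})$ (which is what makes the recursion close), and (ii) confirming that the values $\sigma_{i-1,X_{i-2}}$ and $\mu_{i-1,X_{i-2}}$ produced by $p_{i-1}$ in the last two cases coincide with $\A_{i-1}(X_{i-2})$ — which they do, by construction of $\A_{i-1}$ in Definition~\ref{def:A}.
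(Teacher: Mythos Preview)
Your proof is correct and follows essentially the same approach as the paper: both begin with the affine expansion $A_i=\A_i(0)+X_{i-1}(\A_i(1)-\A_i(0))$ and then reduce to computing $\EE[Z_{i-1}\mid X_{i-2}]$ and $\EE[Z_{i-1}X_{i-1}\mid X_{i-2}]$ by case analysis on the membership of $i-1$. Your presentation is slightly more compact in that you name these two quantities $m_{i-1}$ and $p_{i-1}$ and handle the six sub-cases of \eqref{eq:Z_i} in grouped form, whereas the paper writes out all six cases separately; but the computations and the ingredients (Lemma~\ref{lem:exp of Z}, Lemma~\ref{lm:phix}, $X_{i-1}^2=X_{i-1}$) are identical.
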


\begin{proof}
Note that 
\begin{align*}
\EE_{X_{i-1}}\qty[Z_{i-1} A_{i}| X_{i-2}] = \A_i(0)\EE_{X_{i-1}}\qty[Z_{i-1}| X_{i-2}]  + \EE_{X_{i-1}}\qty[Z_{i-1}X_{i-1}| X_{i-2}](\A_i(1)-\A_i(0)).
\end{align*}
Below we use Lemma~\ref{lem:exp of Z} and the fact that for binary variable $X$ we have $X^2=X$ to analyze this expression. 
We have the following cases:

\noindent\textbf{Case 1:} Suppose $i-1 \notin \CT_0\cup\CT_1 \cup \CS$. Then $Z_{i-1} = 1$ and the  expectation equals 
\begin{align*}
\A_i(0) + \EE_{X_{i-1}}\qty[X_{i-1}| X_{i-2}] (\A_i(1)-\A_i(0)) & = \A_i(0)+ \mu_{i-1,X_{i-2}} (\A_i(1)-\A_i(0)) 
 \\ & = \A_i(0)+A_{i-1}(\A_i(1)-\A_i(0)).
\end{align*}

\noindent\textbf{Case 2.1:} Suppose $i-1 \in \CT_1 \backslash \CS$. Then $Z_{i-1} = X_{i-1}$ and the expectation equals 
\begin{align*}
\A_i(0)\A_{i-1}+\EE_{X_{i-1}}\qty[X_{i-1}^2| X_{i-2}](\A_i(1)-\A_i(0)) &= \A_i(0)\A_{i-1}+A_{i-1}(\A_i(1)-\A_i(0))\\
&=\A_i(1)A_{i-1}.
\end{align*}

\noindent\textbf{Case 2.2:} Suppose $i-1 \in \CT_0 \backslash \CS$. Then $Z_{i-1}= 1-X_{i-1}$ and 
the expectation equals 
\begin{align*}
\A_i(0)\A_{i-1}+\EE_{X_{i-1}}\qty[X_{i-1}(1-X_{i-1})| X_{i-2}](\A_i(1)-\A_i(0)) &= \A_i(0)\A_{i-1}.
\end{align*}

\noindent\textbf{Case 3:} Suppose $i-1 \in \CS \backslash \CT$. Then $Z_{i-1} = \Phi_{i-1}$ which zeros out the first expectation. 
Using Lemma~\ref{lm:phix} the second term equals
\begin{align*}
\EE_{X_{i-1}}\qty[\Phi_{i-1}X_{i-1}| X_{i-2}](\A_i(1)-\A_i(0)) &
= \sigma_{i-1,X_{i-2}} (\A_i(1)-\A_i(0))
= \A_{i-1}(\A_i(1)-\A_i(0)).
\end{align*}

\noindent\textbf{Case 4.1:} Suppose $i-1 \in \CS \cap \CT_1$.  Then $Z_{i-1} = \Phi_{i-1}X_{i-1}$ and 
using 
Lemma~\ref{lm:phix}
the expectation equals
\begin{align*}
\A_i(0)\A_{i-1} + \EE_{X_{i-1}}\qty[\Phi_{i-1}X_{i-1}^2| X_{i-2}] (\A_i(1)-\A_i(0))&= \A_i(0)\A_{i-1} +\A_{i-1}(\A_i(1)-\A_i(0))\\
&=\A_i(1)A_{i-1}.
\end{align*}

\noindent\textbf{Case 4.2:} Suppose $i-1 \in \CS \cap \CT_0$.  Then $Z_{i-1} = \Phi_{i-1}(1-X_{i-1})$ and the expectation equals
\begin{align*}
\A_i(0)\A_{i-1} + \EE_{X_{i-1}}\qty[\Phi_{i-1}(1-X_{i-1})X_{i-1}| X_{i-2}] (\A_i(1)-\A_i(0))&= \A_i(0)\A_{i-1}.
\end{align*}
Putting together all the cases, we have that 
\begin{align*}
\EE_{X_{i-1}}\qty[Z_{i-1} A_{i}| X_{i-2}] = \begin{cases} \A_i(0)+A_{i-1}(\A_i(1)-\A_i(0)) & if~i-1 \notin \CT \cup \CS\\
\A_i(1)A_{i-1} & if~i-1 \in \CT_1 \backslash \CS\\
\A_i(0)A_{i-1} & if~i-1 \in \CT_0 \backslash \CS\\
 \A_{i-1}(\A_i(1)-\A_i(0)) & if~i-1 \in \CS \backslash \CT\\
\A_i(1)A_{i-1} & if~i-1 \in \CS \cap \CT_1\\
\A_i(0)A_{i-1} & if~i-1 \in \CS \cap \CT_0.
\end{cases}
\end{align*}
\end{proof}

The previous lemma allows us to recurse over all the iterative expectations to compute $\fS$.
The additive term $\C_i$ yields hierarchically structured expressions with a constant term and a linear term.
However, 
note that $\C_i$ will be canceled if the next expectation $j<i$ where $Z_j\not=1$ is $j\in \CS\setminus\CT$. 
That is $\EE[\Phi_{j}(c_i+b_i A_{i-1})] = \EE[\Phi_{j} b_i A_{i-1}]$
because $\EE[\Phi_{j}]=0$.  
This is crucial in understanding the structure of $\fS$.

\subsection{The Fourier Coefficients of Conjunctions}
We first present the following example to illustrate the combinatorial structure of the Fourier coefficients.
\begin{example}
Consider a chain with $n=10$ variables and the monotone conjunctions $f=X_2 X_6 X_{10}$ and $g=X_1 X_6 X_{10}$. We calculate $\coeffgen{f}{S}$ and $\coeffgen{g}{S}$ for $S=\{3,6,8\}$.
To facilitate the presentation, for variable $i$ we denote $D_{i,\mu}=\mu_{i,1}-\mu_{i,0}$ and $D_{i,\sigma}=\sigma_{i,1}-\sigma_{i,0}$.
We implicitly use the facts $\EE[X_i]=\mu_{i,X_{i-1}}$, $\EE[\phi_i]=0$, $\EE[\phi_i X_i]=\sigma_{i,X_{i-1}}$,
$\mu_{i,0}+D_{i,\mu}=\mu_{i,1}$, $\sigma_{i,0}+D_{i,\sigma}=\sigma_{i,1}$ and 
$(a+ b X_i) X_i=(a+b)X_i$ as derived above.
Starting with $f$ we have:
\begin{align*}
\coeffgen{f}{S} 
& 
= \EE[X_2 \phi_3 \phi_6 X_6 \phi_8 X_{10}] 
\\ & 
= \EE[X_2 \phi_3 \phi_6 X_6 \phi_8 (\mu_{10,0}+D_{10,\mu} X_9)]
\\ & 
= \EE[X_2 \phi_3 \phi_6 X_6 \phi_8 (\mu_{10,0}+D_{10,\mu}\mu_{9,0} + D_{10,\mu}D_{9,\mu} X_8)]
\\ & 
= (D_{10,\mu}D_{9,\mu}) \EE[X_2 \phi_3 \phi_6 X_6 \phi_8  X_8]
\\ & 
= (D_{10,\mu}D_{9,\mu}) \EE[X_2 \phi_3 \phi_6 X_6 (\sigma_{8,0}+D_{8,\sigma} X_7)]
\\ & 
= (D_{10,\mu}D_{9,\mu}) \EE[X_2 \phi_3 \phi_6 X_6 (\sigma_{8,0}+D_{8,\sigma} \mu_{7,0} +D_{8,\sigma} D_{7,\mu} X_6)]
\\ & 
= (D_{10,\mu}D_{9,\mu}) \EE[X_2 \phi_3 \phi_6 (\sigma_{8,0}+D_{8,\sigma} \mu_{7,0} +D_{8,\sigma} D_{7,\mu})X_6 ]
\\ & 
= [(D_{10,\mu}D_{9,\mu}) (\sigma_{8,0}+D_{8,\sigma} \mu_{7,1})] \EE[X_2 \phi_3 \phi_6 X_6 ]
\\ & 
= [(D_{10,\mu}D_{9,\mu}) (\sigma_{8,0}+D_{8,\sigma} \mu_{7,1})] \EE[X_2 \phi_3  (\sigma_{6,0}+D_{6,\sigma} X_5)]
\\ & 
= [(D_{10,\mu}D_{9,\mu}) (\sigma_{8,0}+D_{8,\sigma} \mu_{7,1})] 
\EE[X_2 \phi_3  (\sigma_{6,0}+D_{6,\sigma} \mu_{5,0} +D_{6,\sigma} D_{5,\mu} X_4)]
\\ & 
= ([D_{10,\mu}D_{9,\mu}) (\sigma_{8,0}+D_{8,\sigma} \mu_{7,1})] 
(D_{6,\sigma} D_{5,\mu}) \EE[X_2 \phi_3  X_4)]
\\ & 
= [(D_{10,\mu}D_{9,\mu}) (\sigma_{8,0}+D_{8,\sigma} \mu_{7,1})] 
(D_{6,\sigma} D_{5,\mu}) \EE[X_2 \phi_3  (\mu_{4,0}+D_{4,\mu} X_3))]
\\ & 
= [(D_{10,\mu}D_{9,\mu}) (\sigma_{8,0}+D_{8,\sigma} \mu_{7,1})] 
(D_{6,\sigma} D_{5,\mu} D_{4,\mu} ) 
\EE[X_2 \phi_3  X_3]
\\ & 
= [(D_{10,\mu}D_{9,\mu}) (\sigma_{8,0}+D_{8,\sigma} \mu_{7,1})] 
(D_{6,\sigma} D_{5,\mu} D_{4,\mu} ) 
\EE[X_2 (\sigma_{3,0}+D_{3,\sigma} X_2)]
\\ & 
= [(D_{10,\mu}D_{9,\mu}) (\sigma_{8,0}+D_{8,\sigma} \mu_{7,1})] 
[(D_{6,\sigma} D_{5,\mu} D_{4,\mu} ) (\sigma_{3,1})]
\EE[X_2]
\\ & 
= [(D_{10,\mu}D_{9,\mu}) (\sigma_{8,0}+D_{8,\sigma} \mu_{7,1})] 
[(D_{6,\sigma} D_{5,\mu} D_{4,\mu} ) (\sigma_{3,1})]
\EE[(\mu_{2,0}+D_{2,\mu} X_1)]
\\ & 
= [(D_{10,\mu}D_{9,\mu}) (\sigma_{8,0}+D_{8,\sigma} \mu_{7,1})] 
[(D_{6,\sigma} D_{5,\mu} D_{4,\mu} ) (\sigma_{3,1})]
[()(\mu_{2,0}+D_{2,\mu} \mu_1)].
\end{align*}
In these equations we identified 3 segments in $[\ldots]$ where each segment includes a $D$ component and a hierarchical component. 
The segments $[7-10],[3-6],[1-2]$ are split at indices in the conjunction $f$ and the split into $D$ component corresponds to the (least) $S$ index in the corresponding region. 
For example, the least $S$ index in $[7-10]$ is 8 and this segment splits into a $D$ portion $[9,10]$ and hierarchical portion $[7,8]$.
In this example the segment $[1-2]$ does not have a $S$ element and correspondingly no $D$ component, which we emphasize with the empty parenthesis.

Turning to the function $g$, all the steps up to the expectation over $X_2$ are the same (replacing $X_2$ with $X_1$ in the conjunction). Continuing from there we have

\begin{align*}
\coeffgen{g}{S} 
& 
= [(D_{10,\mu}D_{9,\mu}) (\sigma_{8,0}+D_{8,\sigma} \mu_{7,1})] 
(D_{6,\sigma} D_{5,\mu} D_{4,\mu} ) 
\EE[X_1 (\sigma_{3,0}+D_{3,\sigma} X_2)]
\\ & 
= [(D_{10,\mu}D_{9,\mu}) (\sigma_{8,0}+D_{8,\sigma} \mu_{7,1})] 
(D_{6,\sigma} D_{5,\mu} D_{4,\mu} ) 
\EE[X_1 (\sigma_{3,0}+D_{3,\sigma} \mu_{2,0}+D_{3,\sigma} D_{2,\mu} X_1)]
\\ & 
= [(D_{10,\mu}D_{9,\mu}) (\sigma_{8,0}+D_{8,\sigma} \mu_{7,1})] 
[(D_{6,\sigma} D_{5,\mu} D_{4,\mu} ) (\sigma_{3,0}+D_{3,\sigma} \mu_{2,1})]
\EE[X_1]
\\ & 
= [(D_{10,\mu}D_{9,\mu}) (\sigma_{8,0}+D_{8,\sigma} \mu_{7,1})] 
[(D_{6,\sigma} D_{5,\mu} D_{4,\mu} ) (\sigma_{3,0}+D_{3,\sigma} \mu_{2,1})]
[()(\mu_1)].
\end{align*}
We see that the same structure arises when $i=1$ is in the conjunction. In this case the last $D$ component is always empty and the hierarchical component is just $\mu_1$.

\end{example}

Our proofs below show that this type of structure 
where each segment is split into a 
$D$ component and a hierarchical component
always arises and therefore leads to a bounded L1 norm. 

Consider an auxiliary node $0$ at the top of the chain BN that is not connected to any other nodes. Therefore, $\EE[X_1|X_0=0] = \EE[X_1|X_0=1]$; implying that $A_1(1)=A_1(0)$; hence $A_1=A_1(0)$.

\begin{definition}\label{def:A' D'}
For any $i\in [n]$ define $\D_i := A_i(1)-A_i(0)$. Given any pair of  subsets $\CS$ and $\CT$, let   $a_1<\cdots<a_m$ denote the the ordered elements of $\CS\cup\CT$. For $i\in [m]$ and any  $r\in [a_{i-1}, a_{i}]$ define
\begin{align}\label{eq:A' D'}
D'_{r, a_i} &= \prod_{r\leq \ell \leq a_{i}} D_\ell, & &  A'_{r, a_i}(0) = \sum_{r \leq \ell \leq  a_{i}} D'_{(\ell+1, a_i)}  A_{\ell}(0),  %
\end{align}
and by convention $D'_{r, a_i}=1$ for any $r>a_i$. Let $A'_{r, a_i}(1) = A'_{r, a_i}(0) + D'_{r, a_i}$ and define the random variable 
$A'_{r, a_i}:= A'_{r, a_i}(0) + X_{r-1}(A'_{r, a_i}(1) -A'_{r, a_i}(0) ) = A'_{r, a_i}(0) + X_{r-1}D'_{r, a_i}$.
Note that when $r>a_i$ then $A'_{r,a_i}(0)=0$ and  $A'_{r,a_i}(1)=D'_{r, a_i}=1$.  

\end{definition}
As a special case of the above definition $D'_{a_i, a_i}=D_{a_i}$ and 
$A'_{a_i, a_i}=A_{a_i}$.  
We highlight that $A', A'(1), A'(0)$  and $D'$ satisfy the same relations as  $A, A(1), A(0)$  and $D$.
Moreover, observe that 
\begin{lemma}\label{A'alt}
$A'_{(r,a_i)} = A'_{(r+1,a_i)}(0)+A_{r}D'_{(r+1,a_i)}.$ 
\end{lemma}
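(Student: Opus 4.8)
The plan is to unfold Definition~\ref{def:A' D'} directly, peeling the $\ell=r$ term out of the sum that defines $A'_{(r,a_i)}(0)$ and pulling the factor $D_r$ out of the product that defines $D'_{(r,a_i)}$. Concretely, first I would write
\[
A'_{(r,a_i)}(0) \;=\; D'_{(r+1,a_i)}\,A_r(0) \;+\!\! \sum_{r+1\leq \ell\leq a_i}\!\! D'_{(\ell+1,a_i)}\,A_\ell(0) \;=\; D'_{(r+1,a_i)}\,A_r(0) + A'_{(r+1,a_i)}(0),
\]
and, since $D'_{(r,a_i)}=\prod_{r\leq \ell\leq a_i} D_\ell = D_r\,D'_{(r+1,a_i)}$, I would substitute both identities into $A'_{(r,a_i)} = A'_{(r,a_i)}(0) + X_{r-1}D'_{(r,a_i)}$.

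This gives
\[
A'_{(r,a_i)} \;=\; A'_{(r+1,a_i)}(0) + D'_{(r+1,a_i)}\,A_r(0) + X_{r-1}\,D_r\,D'_{(r+1,a_i)} \;=\; A'_{(r+1,a_i)}(0) + D'_{(r+1,a_i)}\bigl(A_r(0) + X_{r-1}D_r\bigr),
\]
and by the identity $A_r = A_r(0) + X_{r-1}(A_r(1)-A_r(0)) = A_r(0) + X_{r-1}D_r$ from \eqref{eq: Ai to X_i} together with the definition $D_r = A_r(1)-A_r(0)$, the parenthesized expression is exactly $A_r$, which yields the claim. The only point that needs a separate check is the boundary case $r=a_i$: there $r+1>a_i$, so by the conventions stated after \eqref{eq:A' D'} we have $D'_{(a_i+1,a_i)}=1$ and $A'_{(a_i+1,a_i)}(0)=0$ (an empty sum), and both sides of the claimed identity reduce to $A_{a_i}$, consistent with the special case $A'_{a_i,a_i}=A_{a_i}$ noted above.

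I do not expect any real obstacle: the lemma is precisely the statement that the sequences $A'_{(\cdot,a_i)}$ and $D'_{(\cdot,a_i)}$ satisfy the same one-step recursion as $A$ and $D$, which the text already flagged just before the lemma, and the proof is a short bookkeeping computation with the half-open index ranges and the empty-product/empty-sum conventions. The main thing to be careful about is keeping the subscript ranges $(\ell+1,a_i)$ straight, so that the telescoping of the sum and the factoring of the product line up.
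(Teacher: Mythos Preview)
Your proposal is correct and takes essentially the same approach as the paper: both peel off the $\ell=r$ term from the sum defining $A'_{(r,a_i)}(0)$, factor $D_r$ out of $D'_{(r,a_i)}$, and then recognize $A_r(0)+X_{r-1}D_r=A_r$. Your write-up is slightly more explicit (you factor the common $D'_{(r+1,a_i)}$ and check the boundary $r=a_i$), but the argument is the same bookkeeping computation.
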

\begin{proof}
For the left term note that
$A'_{r, a_i}=  A'_{r, a_i}(0) + X_{r-1}D'_{r, a_i} = A'_{r+1, a_i}(0) + A_r(0) D'_{r+1, a_i}  + X_{r-1}D'_{r, a_i}$.
On the other hand expanding $A_{r}$ in the right term we have
$$A_{r}D'_{(r+1,a_i)} = A_r(0) D'_{r+1, a_i}  + X_{r-1}D_r D'_{r+1, a_i} = A_r(0) D'_{r+1, a_i}  + X_{r-1} D'_{r, a_i}.$$
\end{proof}

As in the example, we split the chain into segments and analyze the expectation calculating $\fS$ over segments. 
The next two lemmas prepare the ground by analyzing individual segments with boundaries at $T$ nodes. 
\begin{lemma}[Single $\CS\cup\CT$ segment]\label{lem:single S T segment}
Consider a chain of nodes  $X_0\rightarrow \cdots \rightarrow X_{n+1}$ that form a segment in the sense that  $\CS, \CT\subseteq \{1\}$. Then,  
\begin{align*}
\EE_{X_{1},\ldots,X_{n}}\qty[Z_1 A'_{(n+1,n+1)}(X_n)| X_{0}] = \begin{cases}
\A'_{(1, n+1)}(X_0) & ~if~ 1\notin \CT\cup \CS\\
\D'_{(2, n+1)} A_{1}(X_0) & ~if~ 1\in \CS\backslash \CT\\
\A'_{(2, n+1)}(1) A_{1}(X_0) & ~if~ 1\in \CT_1\\
\A'_{(2, n+1)}(0)A_{1}(X_0) & ~if~ 1\in \CT_0.
\end{cases}
\end{align*}
\end{lemma}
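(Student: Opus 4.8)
The plan is to evaluate the iterated expectation $\EE_{X_1,\ldots,X_n}[Z_1 A'_{(n+1,n+1)}(X_n)\mid X_0]$ by peeling off the expectations from the bottom of the chain ($X_n$) up to $X_1$, using Lemma~\ref{lem:Z recursion} as the engine and the definitions of $D'$ and $A'$ to package the accumulated coefficients. The key observation is that since $\CS,\CT\subseteq\{1\}$, for every index $i\in\{2,\ldots,n\}$ we are in the ``$i\notin\CT\cup\CS$'' branch, so $Z_i=1$; thus the only nontrivial factor comes from $Z_1$ at the very top, and everything in between is a pure chain of $\mu$-style recursions.

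First I would run the induction on the segment $\{2,\ldots,n\}$: starting from $A'_{(n+1,n+1)}(X_n) = A_{n+1}$ (a random variable depending on $X_n$ through $A_{n+1}=A_{n+1}(0)+X_n D_{n+1}$), apply $\EE_{X_n}[\,\cdot\mid X_{n-1}]$ using Case~1 of Lemma~\ref{lem:Z recursion} (with $Z_n=1$), which produces $\A_{n+1}(0)+A_n\D_{n+1}$; by Lemma~\ref{A'alt} this is exactly $A'_{(n,n+1)}$. Iterating, $\EE_{X_{j},\ldots,X_n}[A_{n+1}\mid X_{j-1}] = A'_{(j,n+1)}$ for each $j$ from $n$ down to $2$, again by repeated use of Lemma~\ref{A'alt}; after taking the expectations over $X_n,\ldots,X_2$ we are left with $\EE_{X_1}[Z_1 A'_{(2,n+1)}\mid X_0]$, where $A'_{(2,n+1)} = A'_{(2,n+1)}(0) + X_1 D'_{(2,n+1)}$ is a function of $X_1$.

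Second I would handle the top node by a case split on the membership of $1$ in $\CT_0,\CT_1,\CS$, exactly mirroring the cases in Lemma~\ref{lem:Z recursion} with $i-1=1$:
\begin{itemize}
\item If $1\notin\CT\cup\CS$: $Z_1=1$ and Case~1 gives $\A'_{(2,n+1)}(0) + A_1 D'_{(2,n+1)}$, which is $A'_{(1,n+1)}(X_0)$ by Lemma~\ref{A'alt}.
\item If $1\in\CS\setminus\CT$: $Z_1=\Phi_1$ kills the constant part and, via Lemma~\ref{lm:phix}, $\EE[\Phi_1 X_1\mid X_0]=\sigma_{1,X_0}=A_1$, giving $\D'_{(2,n+1)} A_1(X_0)$.
\item If $1\in\CT_1$ (so $Z_1=X_1$ or $Z_1=\Phi_1 X_1$ according to whether $1\in\CS$): using $X_1^2=X_1$ and Lemma~\ref{lm:phix}, the expectation collapses to $\A'_{(2,n+1)}(1) A_1(X_0)$.
\item If $1\in\CT_0$ (so $Z_1=1-X_1$ or $\Phi_1(1-X_1)$): the cross term vanishes and we get $\A'_{(2,n+1)}(0) A_1(X_0)$.
\end{itemize}
These four computations are essentially Cases 1--4.2 of Lemma~\ref{lem:Z recursion} specialized to one step, so I would cite that lemma rather than redo the algebra.

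The only mildly delicate point — and the place I'd be most careful — is bookkeeping the index conventions in Definition~\ref{def:A' D'}: making sure that the ``accumulate $A_\ell(0)$ weighted by $D'_{(\ell+1,\cdot)}$'' pattern produced by iterating Lemma~\ref{A'alt} really matches $A'_{(j,n+1)}(0)$ as defined, and that the boundary conventions ($D'_{r,a_i}=1$ for $r>a_i$, and $A'_{(n+1,n+1)}=A_{n+1}$ when we also have an index $a_i=n+1$, or the degenerate reading when there is no such index in the pure-chain interior) are applied consistently. I do not expect any genuine obstacle beyond this indexing care; the structural content is entirely carried by Lemmas~\ref{lem:Z recursion}, \ref{lm:phix}, and \ref{A'alt}.
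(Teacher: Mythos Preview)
Your proposal is correct and follows essentially the same approach as the paper: both arguments first reduce the iterated expectations over $X_2,\ldots,X_n$ (where $Z_i=1$) to $A'_{(2,n+1)}$ by induction using Lemma~\ref{A'alt}, and then handle the outermost expectation over $X_1$ by a case split on the membership of $1$ in $\CS,\CT_0,\CT_1$. The only cosmetic difference is that the paper expands $\EE_{X_1}[Z_1 A'_{(2,n+1)}\mid X_0]$ directly via $\EE[Z_1]$ and $\EE[Z_1 X_1]$ rather than citing the cases of Lemma~\ref{lem:Z recursion}, but the algebra is identical.
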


\begin{proof}
Assume $n>1$ as the lemma follows from Lemma \ref{lem:Z recursion} when $n=1$.  By the law of total expectation, the  expectation in the lemma iterates as
\begin{align*}
\EE_{X_1}\qty[Z_1 \EE_{X_2}\qty[ \cdots \EE_{X_{n}}\qty[A'_{(n+1,n+1)} \Big | X_{n-1}]. \cdots \Big | X_1] \Big | X_0].
 \end{align*} 

By an inductive argument we prove that 
\begin{align}\label{eq:empty segment induction}
\EE_{X_2}\qty[ \cdots \EE_{X_{n}}\qty[A'_{(n+1,n+1)} \Big | X_{n-1}]. \cdots \Big | X_1]=A'_{(2,n+1)}.
 \end{align} 
Observe that $A'_{(n+1,n+1)}(X_n) = A_{n+1}(X_n)$. From Lemma \ref{lem:Z recursion}, as $n\notin \CS\cup\CT$, the innermost expectation equals
\begin{align*}
\EE_{X_{n}}\qty[ A'_{(n+1,n+1)}| X_{n-1}] = D_{n+1}A_n + A_{n+1}(0) = A'_{(n,n+1)},
\end{align*}
where the last equality follows from Lemma~\ref{A'alt}.

Assuming that $\EE_{X_m, \cdots, X_n}\qty[A'_{(n+1,n+1)} \Big | X_{m-1}]=A'_{(m,n+1)}$ holds for a fixed $2 < m \leq n$, for $m-1$ we have 
\begin{align*}
\EE_{X_{m-1}, \cdots, X_n}\qty[A'_{(n+1,n+1)} \Big | X_{m-1}]&=\EE_{X_{m-1}}\qty[ A'_{(m,n+1)}|X_{m-2}]\\
&=A'_{(m,n+1)}(0) + D'_{(m,n+1)}\EE_{X_{m-1}}\qty[ X_{m-1}|X_{m-2}]\\
&=A'_{(m,n+1)}(0) + D'_{(m,n+1)}A_{m-1} = A'_{(m-1,n+1)},
\end{align*}
where the last step holds because $m-1\not\in\CS\cup\CT$ and $\EE[X_{m-1}| X_{m-2} ]=\mu_{m-1,X_{m-2}}=A_{m-1}$. 
Therefore, with this inductive argument, we proved \eqref{eq:empty segment induction}. 
Hence it remains to compute the outermost expectation. From Lemma \ref{lem:exp of Z} and \ref{lem:Z recursion}, have that
   \begin{align*}
   \EE_{X_1}\qty[Z_1 A'_{(2,n+1)}\Big | X_0]  &\stackrel{(a)}{=} A'_{(2,n+1)}(0) \EE_{X_1}\qty[Z_1\big | X_0]+ D'_{(2,n+1)}\EE_{X_{1}}\qty[Z_1 X_{1}|X_{0}]\\\numberthis\label{eq:single segment outer exp}
   & \stackrel{(b)}{=} A'_{(2,n+1)}(0) (\one{1\notin \CT\cup \CS} +   A_1\one{1 \in \CT}) +D'_{(2,n+1)}A_1 \one{1\notin \CT_0},
   \end{align*}
where (a) follows from Definition \ref{def:A' D'}.
For (b) note that when $1\in \CT_0$ we have $\EE_{X_{1}}\qty[Z_1 X_{1}|X_{0}]=0$ because $X_1(1-X_1)=0$.
We then have  $\EE_{X_{1}}\qty[Z_1 X_{1}|X_{0}] = A_1 \one{1\notin \CT_0}$ that holds from the definition of $A_1$ and
   \begin{align*}
    \EE_{X_{1}}\qty[Z_1 X_{1}|X_{0}] = \mu_{1, X_0}\one{1\notin \CT\cup \CS}+\mu_{1, X_0}\one{1\in \CT_1 \setminus \CS} + \sigma_{1, X_0}\one{1\in \CS\setminus \CT} +  \sigma_{1, X_0}\one{1\in \CS\cap \CT_1}.
   \end{align*}
 Separating \eqref{eq:single segment outer exp} for the different memberships of $1$ gives the desired expression. 
\begin{align*}
    \EE_{X_1}\qty[Z_1 A'_{(2,n+1)}\Big | X_0]= \begin{cases} D'_{(2,n+1)}A_1+A'_{(2,n+1)}(0)  = A'_{(1,n+1)} & if~1 \notin \CT \cup \CS\\
        A'_{(2,n+1)}(0) A_1+ D'_{(2,n+1)}A_1 = A'_{(2,n+1)}(1)A_1 & if~1 \in \CT_1 \backslash \CS\\
        A'_{(2,n+1)}(0)A_1  & if~1 \in \CT_0 \backslash \CS\\
        D'_{(2,n+1)}A_1 & if~1 \in \CS \backslash \CT\\
        A'_{(2,n+1)}(0) A_1+ D'_{(2,n+1)}A_1 =A'_{(2,n+1)}(1)A_1   & if~1 \in \CS \cap \CT_1\\
        A'_{(2,n+1)}(0)A_1  & if~1 \in \CS \cap \CT_0.
    \end{cases}
    \end{align*}

\end{proof}
\begin{remark}
The lemma implies an alternative definition of $A'$ that is insightful. Consider $A'_{(r, a_i)}$ as in Definition \ref{def:A' D'} and an empty segment  $[r,a_i)$, then Lemma \ref{lem:single S T segment} (with $1\leftarrow r, n\leftarrow a_i-1$) implies that 
\begin{align}\label{eq:A'_r as cond exp}
A'_{(r, a_i)}(X_{r-1}) = \EE_{X_r,\ldots, X_{a_i-1}}\qty[A_{a_i} | X_{r-1}].
\end{align}
\end{remark}

\begin{lemma}[Single $\CT$-segment, multiple $\CS$-segments]\label{lem:single T multiple S segments}
Consider a chain $X_0 \rightarrow \cdots \rightarrow X_{n+1}$ that potentially has multiple $\CS$ nodes inside a $\CT$ segment,  
where 
$\CT\subseteq \{1,n\}$ and 
$\CS\subseteq [n]$.  

Let  $\Gamma = \EE\qty[ \prod_{1\leq j\leq n} Z_j A'_{(n+1,n+1)}\Big| X_{0}]$. 
If $\CS\setminus\CT$ is not empty, define $s_{\circ} := \min \CS\setminus \CT$. 
Then,
\begin{align*}
    \Gamma = 
\begin{cases}
(a):~if~ \CT=\emptyset~\text{then} & \D'_{(s_\circ+1, n+1)} \A'_{(1, s_\circ)}(X_0),\\
(b):~if~ \CT=\{1\}~\text{then}     & \D'_{(s_\circ+1, n+1)} \A'_{(2, s_\circ)}(y_1) A_1(X_0), \\
(c):~if~ \CT=\{n\}~\text{then}     & A_{n+1}(y_n) \D'_{(s_\circ+1, n)} \A'_{(1, s_\circ)}(X_0),\\
(d):~if~ \CT=\{1, n\}~\text{then}  & A_{n+1}(y_n) \D'_{(s_\circ+1, n)} \A'_{(2, s_\circ)}(y_1) A_1(X_0).\\
\end{cases}
\end{align*}
where  $y_n =\mathbf{1}_{(n\in \CT_1)}$, and $y_1 =\mathbf{1}_{(1\in \CT_1)}$. If $\CS\setminus\CT$ is empty, then 
\begin{align*}
    \Gamma = 
\begin{cases}
(e):~if~ \CT=\emptyset~\text{then} & \A'_{(1, n+1)}(X_0) ,\\
(f):~if~ \CT=\{1\}~\text{then}     & \A'_{(2, n+1)}(y_1) A_1(X_0), \\
(g):~if~ \CT=\{n\}~\text{then}     & A_{n+1}(y_n) \A'_{(1, n)}(X_0),\\
(h):~if~ \CT=\{1, n\}~\text{then}  &  A_{n+1}(y_n) \A'_{(2, n)}(y_1) A_1(X_0).\\
\end{cases}
\end{align*}

\end{lemma}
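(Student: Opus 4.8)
The plan is to expand $\Gamma$ as a nested sequence of conditional expectations
$\Gamma=\EE_{X_1}[Z_1\,\EE_{X_2}[Z_2\cdots\EE_{X_n}[Z_n\,A'_{(n+1,n+1)}\mid X_{n-1}]\cdots\mid X_1]\mid X_0]$
(recalling $A'_{(n+1,n+1)}(X_n)=A_{n+1}(X_n)$), and to evaluate these from the innermost variable $X_n$ outward, peeling the two possible $\CT$-boundaries at $n$ and at $1$ by direct computation and handling the interior — which by hypothesis contains only $\CS$-nodes — by iterating the single-segment analysis. \textbf{Step 1 (top boundary).} If $n\in\CT$ I would first show $\EE_{X_n}[Z_n\,A_{n+1}(X_n)\mid X_{n-1}]=A_{n+1}(y_n)\,A_n(X_{n-1})$ by a four-way case split ($n$ in $\CT_1\setminus\CS$, $\CT_0\setminus\CS$, $\CS\cap\CT_1$, or $\CS\cap\CT_0$), using $X_n^2=X_n$, $X_n(1-X_n)=0$, $\EE[\Phi_n\mid X_{n-1}]=0$ (Lemma~\ref{lem:pS properties}), $\EE[\Phi_n X_n\mid X_{n-1}]=\sigma_{n,X_{n-1}}$ (Lemma~\ref{lm:phix}), and $A_{n+1}(X_n)=A_{n+1}(0)+X_n D_{n+1}$. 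The scalar $A_{n+1}(y_n)$ then factors out and we are left with $\EE[\prod_{1\le j\le n-1}Z_j\,A_n\mid X_0]$, i.e.\ the same problem on $X_0\to\cdots\to X_n$ with top term $A_n=A'_{(n,n)}$ whose only possible $\CT$-node is $1$; if $n\notin\CT$ this step is skipped and the top term stays $A'_{(n+1,n+1)}$.

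\textbf{Step 2 (interior $\CS$-nodes).} Writing $m=n$ if $n\notin\CT$ and $m=n-1$ if $n\in\CT$, I claim, by induction on $|\CS\setminus\CT|$ (peeling the largest remaining element at each step), that evaluating the expectations over $X_m$ down to (and including) node $1$ when $1\notin\CT$, or down to node $2$ when $1\in\CT$, produces $D'_{(s_\circ+1,\,m+1)}$ times the hierarchical factor $A'_{(\rho,\,s_\circ)}(X_{\rho-1})$, where $\rho=1$ if $1\notin\CT$ and $\rho=2$ if $1\in\CT$, and where this collapses to $A'_{(\rho,\,m+1)}(X_{\rho-1})$ with no $D'$ factor when $\CS\setminus\CT=\emptyset$. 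Each inductive step peels the largest remaining $q\in\CS\setminus\CT$: the empty sub-segment strictly above $q$ (empty because $\CS\cap\CT\subseteq\{1,n\}$) is collapsed by $\EE_{X_{q+1},\ldots}[A_a\mid X_q]=A'_{(q+1,a)}(X_q)=A'_{(q+1,a)}(0)+X_q D'_{(q+1,a)}$ (the remark following Lemma~\ref{lem:single S T segment}, i.e.\ iterated Lemma~\ref{A'alt}); then $\EE_{X_q}[Z_q(\cdots)\mid X_{q-1}]$ with $Z_q=\Phi_q$ annihilates the constant part ($\EE[\Phi_q\mid X_{q-1}]=0$) and leaves $D'_{(q+1,a)}\sigma_{q,X_{q-1}}=D'_{(q+1,a)}A_q$ by Lemma~\ref{lm:phix}; the scalar $D'_{(q+1,a)}$ factors out and the residual problem has top term $A_q=A'_{(q,q)}$ with one fewer $\CS\setminus\CT$-node. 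The successive scalar factors, being products of consecutive $D_\ell$'s over adjacent ranges, telescope into $D'_{(s_\circ+1,\,m+1)}$, and once $\CS\setminus\CT$ is exhausted the bottom empty segment $[\rho,s_\circ)$ collapses via $\EE[A_{s_\circ}\mid X_{\rho-1}]=A'_{(\rho,s_\circ)}(X_{\rho-1})$.

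\textbf{Step 3 (bottom boundary and assembly).} If $1\notin\CT$ then $\rho=1$ and $X_{\rho-1}=X_0$, so Steps~1--2 already give $\Gamma$ equal to (an optional factor $A_{n+1}(y_n)$, present iff $n\in\CT$) times $D'_{(s_\circ+1,\,m+1)}$ times $A'_{(1,s_\circ)}(X_0)$, where $m+1$ is $n+1$ or $n$ according to whether $n\notin\CT$ or $n\in\CT$; this matches cases (a),(c),(e),(g). If $1\in\CT$ then after Step~2 one more expectation $\EE_{X_1}[Z_1\,A'_{(2,\,m+1\text{ or }s_\circ)}(X_1)\mid X_0]$ remains, and writing the argument as $A'_{(2,\cdot)}(0)+X_1 D'_{(2,\cdot)}$, the same four-way case split as in Step~1 (Lemma~\ref{lm:phix} at node $1$, plus $X_1(1-X_1)=0$) gives $A'_{(2,\cdot)}(y_1)\,A_1(X_0)$, so $\Gamma$ equals (optional $A_{n+1}(y_n)$) times $D'_{(s_\circ+1,\,m+1)}$ times $A'_{(2,s_\circ)}(y_1)$ times $A_1(X_0)$, matching cases (b),(d),(f),(h); in all cases, if $\CS\setminus\CT=\emptyset$ the $D'$-factor disappears and $s_\circ$ is replaced by $m+1$ inside $A'$.

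The conceptual input here is light — everything reduces to Lemmas~\ref{lem:exp of Z}, \ref{lem:Z recursion}, \ref{lm:phix}, \ref{A'alt} and the remark following Lemma~\ref{lem:single S T segment}. I expect the main obstacle to be purely organizational: keeping the index bookkeeping straight across the eight cases (the $1$-versus-$2$ lower index, the $n$-versus-$n+1$ upper index, and exactly where the substitutions $y_1,y_n$ enter), and verifying that the chain of scalar $D'$-factors extracted at the successive interior $\CS$-nodes multiplies to precisely $D'_{(s_\circ+1,\,m+1)}$, which hinges on $D'$ composing multiplicatively over adjacent index ranges as already built into Definition~\ref{def:A' D'}.
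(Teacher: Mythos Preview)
Your proposal is correct and follows essentially the same strategy as the paper's proof: both peel the $\CT$-boundary nodes (at $n$ and at $1$) and iterate through the interior $\CS\setminus\CT$ nodes using the single-segment collapse $\EE[A_a\mid X_q]=A'_{(q+1,a)}(X_q)$ together with $\EE[\Phi_q X_q\mid X_{q-1}]=\sigma_{q,X_{q-1}}$, extracting the telescoping $D'$-product. The only organizational difference is that the paper case-splits on $\CT\in\{\emptyset,\{1\},\{n\},\{1,n\}\}$ up front and then applies Lemma~\ref{lem:single S T segment} to each of the resulting segments, whereas you process the top boundary, interior, and bottom boundary in sequence and absorb the four cases into optional prefactors; the underlying computations and the lemmas invoked are the same.
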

\begin{proof}
The proof follows  by the law of iterative expectations and by breaking the chain into segments and applying Lemma \ref{lem:single S T segment} on each segment. Let $\Gamma$ be the expectation of interest as in the lemma's statement. If $\CS\setminus \CT$ is not empty, let $s_\circ = a_1<\cdots < a_k$ be the ordered elements of $\CS\setminus \CT$ with $k=|\CS\setminus \CT|$.  We consider four cases depending on $\CT$ elements.

\textbf{Case 1 ($\CT=\emptyset$):} In this case, the segments are $[1, a_1)$, $[a_i, a_{i+1})$ for $i=1,\ldots, k-1$, and $[a_k, n+1)$. %
Starting from  the tail segment $[a_k, n+1)$, from Lemma \ref{lem:single S T segment}, since $a_k\in \CS\setminus\CT$ the contribution  is 
\begin{align*}
\Gamma_{[a_k, n+1)} = \EE_{X_{a_k},\ldots, X_{n}}\qty[ Z_{a_k} A'_{(n+1,n+1)}(X_n)| X_{a_k-1}] = D'_{(a_k+1,n+1)} A_{a_k}(X_{a_k-1}).
 \end{align*}
 Note that $A_{a_k}=A'_{(a_k, a_k)}$, implying that the input to the next segment is an $A'$ term. From this point for any $\CS$-segment $[a_i, a_{i+1})$, the contribution is 
    \begin{align}\label{eq:cont a_i a_i+1}
\Gamma_{[a_i, a_{i+1})} =    \EE_{X_{a_i},\ldots, X_{a_{i+1}-1}}\qty[ Z_{a_i}  A'_{(a_{i+1}, a_{i+1})}| X_{a_i-1}] = D'_{(a_i+1, a_{i+1})} A'_{(a_i, a_i)},
    \end{align}
    where we used the fact that $A'_{(a_i, a_i)}=A_{a_i}$.   Continue this argument until the head segment, if $s_\circ=1$, the head segment is $[a_1, a_2)$ that has been covered and contributing $A'_{(1, 1)}$. If $s_\circ>1$, then the head segment is $[1, s_\circ)$. The input to this segment is $A'_{(a_1, a_1)}$. Again from Lemma \ref{lem:single S T segment}, as $1\notin \CS\cup\CT$, the contribution is 
    \begin{align}\label{eq:cont 1 s_0}
   \Gamma_{[1, s_\circ)} = \EE_{X_{1},\ldots,X_{s_{\circ}-1}}\qty[ Z_1 A'_{(s_{\circ}, s_{\circ})}| X_{0}] = A'_{(1, s_{\circ})}(X_0).
    \end{align}
Notice that the notation $ A'_{(1, s_{\circ})}$ in \eqref{eq:cont 1 s_0} is consistent with the case where $s_\circ=1$.  Lastly, multiplying  all the contributions gives the desired expression:
    \begin{align}\label{eq:1T mS case 1}
\Gamma_{\text{case~1}} =    D'_{(a_k+1,n+1)} \prod_{i=1}^{k-1} D'_{(a_i+1, a_{i+1})} A'_{(1, s_{\circ})}(X_0) = D'_{(s_\circ+1,n+1)}A'_{(1, s_{\circ})}(X_0),
    \end{align}
 where the last equality holds as  products of consecutive $D'$ terms gives another $D'$ with the total interval. 
Now consider  $\CS\setminus\CT=\emptyset$. Since $\CT$ is empty then so is $\CS$. Hence, we have an empty chain. From Lemma \ref{lem:single S T segment}, the contribution is  $A'_{(1, n+1)}(X_0)$. %

\textbf{Case 2 ($\CT=\{1\}$):} 
We first assume $\CS\setminus\CT$ is not empty. 
There are one $\CT$ node and potentially multiple $\CS$ nodes in the chain. The segments are $[1, a_1)$, $[a_i, a_{i+1})$ for $i=1,\ldots, k-1$, and $[a_k, n+1)$. The contributions of the segments are the same as the previous case except the last segment $[1, a_1)$.  If $a_1=1$, the last segment is empty and we inherit $A'_{(1,1)}=A_1(X_0)$. Otherwise, from Lemma \ref{lem:single S T segment}, the contribution is 
    \begin{align}\label{eq:case 2 cont 1 s_0}
   \Gamma_{[1, s_\circ)} = \EE_{X_{1},\ldots,X_{s_{\circ}-1}}\qty[ Z_1 A'_{(s_{\circ}, s_{\circ})}| X_{0}] = A'_{(2, s_{\circ})}(y_1)A_1(X_0),
    \end{align}
    where $y_1=\mathbf{1}_{(1\in \CT_1)}$. Note that the above equation reduces to $A_1(X_0)$ when $a_1=1$ and hence it is consistent with the empty segment case. This is because, when $a_1=1$, then  $s_\circ=1$ and $A'_{(2, s_{\circ})}=1$.
    Therefore, multiplying  all the contributions gives the desired expression:
    \begin{align}\label{eq:1T mS case 2}
\Gamma_{\text{case~2}} =    D'_{(s_\circ+1,n+1)}  A'_{(2, s_{\circ})}(y_1)A_1(X_0).
    \end{align}
If $\CS\setminus\CT$ is empty, then we have a single $\CT$ segment $[1,n+1]$. From Lemma \ref{lem:single S T segment} the contribution is $A'_{(2, n+1)}(y_1)A_1(X_0)$. %

\textbf{Case 3 ($\CT=\{n\}$):}  Assuming that $\CS\setminus\CT$ is not empty, the segments are $[1, a_1)$, $[a_i, a_{i+1})$ for $i=1,\ldots, k-1$, and $[a_k, n)$ and $[n, n+1)$. Starting from the tail segment, from   Lemma \ref{lem:Z recursion}, 
\begin{align*}
\Gamma_{[n, n+1)} = \EE_{X_{n}}\qty[ Z_n A'_{(n+1,n+1)}(X_n)| X_{n-1}] = b_{n+1} A_n = A_{n+1}(y_n) A'_{(n,n)}(X_{n-1}), 
\end{align*}
where $y_n =\mathbf{1}_{(n\in \CT_1)}$. The $\CS$-segments in between contribute $\Gamma_{[a_i, a_{i+1})} $ as in \eqref{eq:cont a_i a_i+1}.  As for the last segment $[1, s_{\circ})$, if $s_\circ=1$, this segment is empty and hence produces $A'_{(1,1)}(X_0)$. Otherwise from Lemma \ref{lem:single S T segment}, as $1\notin \CS\cup \CT$, the last contribution is $A'_{(1, s_{\circ})}(X_0)$. As a result, the overall contribution is 
\begin{align}\label{eq:1T mS case 3}
    \Gamma_{\text{case~3}} = A_{n+1}(y_n)   D'_{(s_\circ+1,n)}  A'_{(1, s_{\circ})}(X_0).
\end{align}
If $\CS\setminus\CT$ is empty, then  we have two segments: an empty segment $[1, n)$ and a $\CT$ (or $\CS\cap\CT$) segment $[n, n+1)$. For the second the contribution is $A_{n+1}(y_n)$, and for the first, it is $A'_{(1,n)}(X_0)$ per  Lemma \ref{lem:single S T segment}.  %

\textbf{Case 4 ($\CT=\{1,n\}$):} Assuming that $\CS\setminus\CT$ is not empty, the segments are  $[1, a_1)$, $[a_i, a_{i+1})$ for $i=1,\ldots, k-1$, and $[a_k, n)$ and $[n, n+1)$. Starting from the tail segment, from   Lemma \ref{lem:Z recursion}, 
\begin{align*}
\Gamma_{[n, n+1)} = \EE_{X_{n}}\qty[ Z_n A'_{(n+1,n+1)}(X_n)| X_{n-1}] = b_{n+1} A_n = A_{n+1}(y_n) A'_{(n,n)}(X_{n-1}), 
\end{align*}
where $y_n =\mathbf{1}_{(n\in \CT_1)}$.

The next segment is a $\CS$ segment $[a_k, n)$ and, from Lemma \ref{lem:single S T segment},  
\begin{align*}
\Gamma_{[a_k, n)} = \EE_{X_{a_k}\ldots, X_{n-1}}\qty[ Z_{a_k} A'_{(n,n)}(X_{n-1})| X_{a_k-1}] = D'_{(a_k+1,n)} A'_{(a_k, a_k)}.
    \end{align*}  
    Again, for any $\CS$-segment $[a_i, a_{i+1})$ the contribution is given by $\Gamma_{[a_i, a_{i+1})},$ as in \eqref{eq:cont a_i a_i+1}.  The last segment is $[1, s_{\circ})$ contributing $\Gamma_{[1, s_\circ)}$ as in 
    \eqref{eq:case 2 cont 1 s_0}. 
    Combining all the contributions gives
    \begin{align}\nonumber
    \Gamma_{\text{case~4}} &= A_{n+1}(y_n) D'_{(a_k+1,n)} \prod_{i=1}^{k-1} D'_{(a_i+1, a_{i+1})} A'_{(2, s_{\circ})}(y_1)A_1(X_0)\\\label{eq:1T mS case 4}
    & = A_{n+1}(y_n)D'_{(s_\circ+1,n)}A'_{(2, s_{\circ})}(y_1)A_1(X_0).
    \end{align}
   If $\CS\setminus\CT$ is empty, we have two $\CT$ (or $\CS\cap\CT$) segments $[1, n)$ and $[n, n+1)$. For the second the contribution is $A_{n+1}(y_n)$, and for the first, it is $A'_{(2,n)}(y_1)A_1(X_0)$ per  Lemma \ref{lem:single S T segment}. %
\end{proof}

The lemma implies that $\CS$ segments inside a $\CT$ segment contribute a $D'$ term which is a product of $D$'s from the the minimum $\CS$-only node to the end $\CT$ node. 
We are finally ready to express $\fS$.
\begin{lemma}[Fourier coefficients of conjunctions]\label{lem:Fourier exact}
Consider a conjunction $f$ with $\CT$ being the set of literals.  Let $\CT_0\subset \CT$ be the negated literals and $\CT_1=\CT\backslash \CT_0$ be the set of literals without negation.  Suppose $\CS$ is any subset of $[n]$. If $\max \CS > \max\CT$, then $\fS=0$; otherwise
\begin{align*}
\fS = \prod_{a_{j}\in \CT_0\cup \{0\}} A'_{(a_{j}+1, a_{j+1})}(0) \prod_{a_{k}\in \CT_1} A'_{(a_{k}+1, a_{k+1})}(1) \prod_{a_{l}\in \CS\backslash \CT} D'_{(a_{l}+1, a_{l+1})}, 
\end{align*}
where    $0<a_1<a_2 < \cdots$ are the  ordered elements of $\CS\cup\CT\cup\{0\}$. 
 
\end{lemma}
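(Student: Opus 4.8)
The plan is to combine the iterative-expectation formula \eqref{eq:chain-f-hat-general} for $\fS$ with the segment decomposition already set up in Lemmas~\ref{lem:single S T segment} and \ref{lem:single T multiple S segments}. First I would dispose of the vanishing case: if $\max\CS > \max\CT$, then in the iterated expectation \eqref{eq:chain-f-hat-general} the outermost nontrivial factor (reading from $X_n$ inward) is $Z_{\max\CS}=\Phi_{\max\CS}$ with nothing to its ``right'' contributing a power of $X_{\max\CS}$; by Lemma~\ref{lem:exp of Z}, $\EE_{X_{\max\CS}}[\Phi_{\max\CS}\mid X_{\max\CS-1}]=0$, so the whole expectation collapses to zero. (More carefully: every index $i>\max\CS$ has $i\notin\CS\cup\CT$, so $Z_i=1$ and those expectations just pass a constant $1$ down; then at $i=\max\CS$ we hit $\EE[\Phi_i\cdot 1\mid X_{i-1}]=0$.) This matches the statement that in the product obtained in the nonzero case, the last factor $A'_{(a_m+1,a_{m+1})}$ requires $a_m=\max(\CS\cup\CT)$ to lie in $\CT$ for the top-of-chain boundary term to be nonzero.

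For the nonzero case ($\max\CS\le\max\CT$, equivalently $\max(\CS\cup\CT)\in\CT$), I would cut the chain $X_0\to\cdots\to X_n$ at the $\CT$-nodes $a_{j}\in\CT$, so that the chain is partitioned into maximal blocks each of which contains exactly one $\CT$-node at its lower end (plus possibly several pure-$\CS$ nodes strictly inside), precisely the situation of Lemma~\ref{lem:single T multiple S segments}. Starting from the top of the chain with the ``seed'' $A'_{(n+1,n+1)}=A_{n+1}$ (using the auxiliary node $n+1$, or just the constant $1$ if $\max\CT=n$) and iterating the law of total expectation block by block from children to parents, Lemma~\ref{lem:single T multiple S segments} tells me that each block with lower $\CT$-endpoint $a_j$ contributes:
\begin{itemize}
\item a factor $A'_{(a_j+1,\,a_{j+1})}(y)$ with $y=\mathbf 1_{\{a_j\in\CT_1\}}$ — this is case (f)/(h) of that lemma applied within the block, giving $A'_{(\cdot)}(1)$ when $a_j\in\CT_1$ and $A'_{(\cdot)}(0)$ when $a_j\in\CT_0$ — together with
\item one factor $D'_{(a_l+1,\,a_{l+1})}$ for each pure-$\CS$ node $a_l$ strictly inside the block (cases (a)--(d)), since such a node contracts its sub-segment to a single $D'$ product;
\end{itemize}
and the leftover head segment below the smallest $\CT$-node gets absorbed into the $j=0$ term $A'_{(1,a_1)}(0)=A'_{(0+1,\,a_1)}(0)$ via the auxiliary node $0$ (which satisfies $A_1(1)=A_1(0)$, so $A_1=A_1(0)$ and the head contributes $A'_{(1,a_1)}(X_0)=A'_{(1,a_1)}(0)$). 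Stitching the blocks together — the output of one block is an $A'_{(a_i,a_i)}=A_{a_i}$ that becomes the seed of the next — and using the ``products of consecutive $D'$ collapse to one $D'$'' identity (as in \eqref{eq:1T mS case 1}), I get exactly the claimed product over $a_j\in\CT_0\cup\{0\}$, $a_k\in\CT_1$, and $a_l\in\CS\setminus\CT$.

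The bookkeeping obstacle — and the step I expect to be the main one — is getting the block boundaries and the index ranges on the $A'$ and $D'$ factors to line up exactly with the global ordered sequence $0<a_1<a_2<\cdots$ of $\CS\cup\CT\cup\{0\}$: Lemma~\ref{lem:single T multiple S segments} is stated for a single $\CT$-segment with local coordinates, and I must verify that when a pure-$\CS$ node $a_l$ sits inside the block whose $\CT$-endpoint is $a_j$, the $D'$ interval it produces is $(a_l+1,a_{l+1})$ where $a_{l+1}$ is its successor in the \emph{global} list (which may be another $\CS$ node or the next $\CT$-node), and that the $A'$ factor carried by $a_j$ has upper limit $a_{j+1}$ = the next element of the global list. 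This is a careful but routine induction on the number of $\CT$-nodes, invoking Lemma~\ref{lem:single T multiple S segments} once per block and using Lemma~\ref{A'alt} / the consistency remarks (e.g.\ $A'_{(a_i,a_i)}=A_{a_i}$, and $A'_{(r,a_i)}=1$ when $r>a_i$) to splice the pieces; I would also double-check the degenerate sub-cases where a $\CT$-node is itself in $\CS$ (i.e.\ $a_j\in\CS\cap\CT$), which the lemma handles since its case analysis already covers $\CS\cap\CT_0$ and $\CS\cap\CT_1$ and yields the same $A'_{(a_j+1,a_{j+1})}(y_j)$ factor with no extra $D'$.
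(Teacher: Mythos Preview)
Your proposal is correct and follows essentially the same approach as the paper: dispose of the $\max\CS>\max\CT$ case via $\EE[\Phi_{\max\CS}\mid X_{\max\CS-1}]=0$, then add the auxiliary node $0$, partition the chain into $\CT$-segments, and apply Lemma~\ref{lem:single T multiple S segments} inductively to each segment, with each segment inheriting $A'_{(t_i,t_i)}=A_{t_i}$ and outputting a constant times $A_{t_{i-1}}$. The paper first records the product in the intermediate form $\fS=\prod_{i=1}^d D'_{(h_i+1,t_i)}A'_{(t_{i-1}+1,h_i)}(y_{i-1})$ (with $h_i=\min(\CS_i\setminus\CT)$ or $t_i$ if empty) and then rewrites it in terms of the global $a_i$'s, which is exactly the bookkeeping reconciliation you anticipated; your one imprecision is the ``seed $A'_{(n+1,n+1)}$'' remark---there is no auxiliary node $n{+}1$ here, the tail above $\tmax$ just contributes $1$ and the first nontrivial seed is $\EE[Z_{\tmax}\mid X_{\tmax-1}]=A_{\tmax}$.
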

\begin{proof}
The main idea is to break the chain into the $\CT$ segments with boundaries at $\CT$ nodes. As we show the contribution of each segment is independent of the rest of the chain and we can analyze each one separately using Lemma \ref{lem:single T multiple S segments}. 

We proceed by adding an auxiliary node $X_0$ as a parent of $X_1$, i.e., the chain is, $X_0\rightarrow X_1 \rightarrow \cdots \rightarrow X_n$ but we assign $p(X_1|X_0)=p(X_1)$. 
This allows us to unify the presentation noting that conditioning on $X_0$ has no effect on the probability of other variables.
The Fourier coefficient is calculated as the following iterative expectation
\begin{align}\label{eq:fs exp}
\fS = \EE_{X_1}\qty[Z_1 \cdots \EE_{X_{n}}\qty[Z_{n} \Big | X_{n-1}] \cdots \Big | X_0],
\end{align}
where $Z_i$ depends on the $i$th node and is defined as in \eqref{eq:Z_i}. 

Let $\tmax=\max(\CT)$ and $\smax=\max(\CS)$ be the largest index in each of the two sets. If $\smax > \tmax$ then $\fS=0$. 
This holds,
because, due to the order of the nodes in the chain and the law of total expectation:
\begin{align*}
\fS = \EE_{X_1}\qty[Z_1 \cdots \EE_{X_{\smax}}\qty[Z_{\smax} \Big | X_{\smax-1}] \cdots \Big | X_0].
\end{align*}
Since the only term dependent on $X_{\smax}$ is $\Phi_{\smax}$, then the inner most expectation equals 
\begin{align*}
\EE_{X_{\smax}}\qty[Z_{\smax} \Big | X_{\smax-1}] = \EE_{X_{\smax}}\qty[\Phi_{\smax} \Big | X_{\smax-1}] =0,
\end{align*}
where we used Lemma \ref{lem:pS properties} implying that $\fS=0$. %

Suppose $\smax \leq \tmax$, and let $t_1 <\cdots, <t_d=\tmax$ be the ordered elements of $\CT$  with $d=|\CT|$.  
These nodes create $d-1$,  $\CT$-segments $[t_{i-1}, t_{i})$, a head segment $[1, t_1)$ and a tail segment $[\tmax, n]$.  
Let $t_0:=0$,
$\CS_i=\CS\cap [t_{i-1}, t_i)$ for any $i\in [d]$ 
and note that since $\CS\subseteq [n]$, we have $\CS_1\subseteq [1, t_1)$.
If $\CS_i\setminus \CT$ is not empty, define $h_i  := \min \CS_i \setminus \CT$; otherwise set $h_i = t_i$.
 For any $i\leq j$, let $\bfZ_i^j=\prod_{i\leq l\leq j}Z_l$. 
Then, $\fS$ breaks into expectations over each segment:
\begin{align*}
\fS = \EE_{[X_1, X_{t_1})}\qty[\bfZ_1^{t_1-1} \cdots \EE_{[X_{t_{d-1}}, X_{\tmax})}\qty[ \bfZ_{t_{d-1}}^{\tmax-1} \EE_{[X_{\tmax}, n]}\qty[ Z_{\tmax} \Big | X_{\tmax-1}]\Big | X_{t_{d-1}-1}] \cdots \Big | X_0] .
\end{align*}
Starting from the tail, we use an inductive argument to calculate $\fS$. We show that each segment  $[t_{i-1}, t_{i})$ inherits $A'_{(t_{i}, t_{i})}(X_{t_{i}-1})$ as the input from the previous inner segment and contributes a variable $A'_{(t_{i-1}, t_{i-1})}(X_{t_{i-1}-1})$ multiplied by a constant. As a result, $\fS$ is the product of all the constants.  To show this argument, we start from the innermost segment that contributes the following based on Lemma \ref{lem:exp of Z}:
\begin{align*}
\EE_{[X_{\tmax}, n]}\qty[ Z_{\tmax} \Big | X_{\tmax-1}] = A_{\tmax} = A'_{(\tmax, \tmax)}.
\end{align*}
By the induction assumption, suppose that the $i$th segment $[t_{i-1}, t_{i})$ inherits $A'_{(t_{i}, t_{i})}$ as the input from the previous inner segment.  Its contribution is given by the $i$th conditional expectation:
\begin{align*}
\EE_{[X_{t_{i-1}}, X_{t_{i}})}\qty[ \bfZ_{t_{i-1}}^{t_{i}-1} A'_{(t_{i}, t_{i})}(X_{t_i-1}) \Big | X_{t_{i-1}-1}].
\end{align*}
This expectation is calculated using Lemma \ref{lem:single T multiple S segments} independently of the outer segments. From Lemma \ref{lem:single T multiple S segments}, (with $\CT=\{t_{i-1}\}$), the contribution of this segment is
\begin{align*}
\D'_{(h_i+1, t_i)} \A'_{(t_{i-1}+1, h_i)}(y_{i-1}) A_{t_{i-1}}(X_{t_{i-1}-1}),
\end{align*}
where $y_{i-1} = \mathbf{1}_{(t_{i-1} \in \CT_1)}$. Notice that the notation above is consistent when $\CS_i\setminus \CT$ is empty and, as a result, $h_i=t_i$. To sum up, each  segment $[t_{i-1}, t_{i})$ produces constants $\D'$ and $A'$ followed by a variable $A'_{(t_{i-1}, t_{i-1})}$ fed to the next segment; hence the induction holds. The induction ends with the head segment $[1, t_1)$. If $t_1>1$, this segment  inherits $A'_{(t_{1}, t_{1})}$ and contributes 
\begin{align*}
\EE_{[X_1, X_{t_1})}\qty[\bfZ_1^{t_1-1} A_{t_{1}}\Big| X_0] = \D'_{(h_1+1, t_1)} \A'_{(1, h_1)}(X_0),
\end{align*}
where we used Lemma \ref{lem:single T multiple S segments} (with $\CT=\emptyset$).  Since, $X_0$ is an auxiliary node independent of other nodes,  $A_1(1)=A_1(0)$. Hence, we obtain that $\A'_{(1, h_1)}(X_0)= \A'_{(1, h_1)}(0)=\A'_{(1, h_1)}(1)$, which implies that we can replace $X_0$ with $0$ in the right-hand side, as if $0\in \CT_0$.  If $t_1=1$, then the output of the previous segment is $A'_{(1, 1)}(X_0) = A_1(X_0) = A_1(0)$. Since, the head segment in this case is empty, then $h_1=1$ and as a result the contribution is $\D'_{(2, 1)} \A'_{(1, 1)}(X_0) = A_1(X_0)=A_1(0)$. 

Lastly, multiplying the constants produced by the segments gives $\fS$:  %
\begin{align}\label{eq:fS_alternative}
\fS = \prod_{i=1}^d \D'_{(h_i+1, t_i)} \A'_{(t_{i-1}+1, h_i)}(y_{i-1}), 
\end{align}
with the convention that $y_0=0$.  The proof is complete by rewriting the above equation using the $a_i$'s as in the statement of the lemma. 
\end{proof}

The above lemma helps us derive an upper bound on each $|\fS|$ and finally obtain a bound on the spectral norm of conjunctions.   

\subsection{Upper bound on the  spectral norm of conjunctions under chain BNs}
We start by bounding the values of $A$ and $D$ terms.
In the following lemmas the terminology of Definition~\ref{def:A' D'}
that defines $a_i$, $A'$ and $D'$ is used implicitly.

\begin{lemma}[Bounding $A'$ terms]
\label{lem:Abound2}
Suppose the distribution $D$ is a $c$-bounded chain %
 for a constant $c\in(0,\frac{1}{2})$.  Then, 
 for all $j$, $\abs{A_j(0)}\leq 1-c$,  $\abs{A_j(1)}\leq 1-c$, 
 and for  $A'_{(r, a_i)}$ as in Definition \ref{def:A' D'} with any $a_i$ and  $r\in [a_{i-1}, a_{i}]$  we have that
 \begin{align}\label{lem:bound hierarchical terms}
  \abs{A'_{(r, a_i)}(0)}\leq 1-c ,& & \abs{A'_{(r, a_i)}(1)} \leq 1-c.
  \end{align} 
  \end{lemma}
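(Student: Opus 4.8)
The plan is to bound the base terms $A_j(0), A_j(1)$ first, then bound the $A'$ terms by induction on the length of the interval $[r, a_i]$, using the recursion from Lemma~\ref{A'alt}. For the base case, recall from Definition~\ref{def:A} that $A_j(x_{j-1})$ equals one of $\mu_{j,x_{j-1}}$, $1-\mu_{j,x_{j-1}}$, $\sigma_{j,x_{j-1}}$, or $-\sigma_{j,x_{j-1}}$. Under $c$-boundedness we have $c \le \mu_{j,x_{j-1}} \le 1-c$, hence also $c \le 1-\mu_{j,x_{j-1}} \le 1-c$, and $\sigma_{j,x_{j-1}} = \sqrt{\mu_{j,x_{j-1}}(1-\mu_{j,x_{j-1}})} \le \sqrt{(1-c)\cdot 1} $; more carefully $\sigma \le 1/2 \le 1-c$ since $c < 1/2$, and actually $\sigma^2 = \mu(1-\mu) \le (1-c)^2$ is not quite right — better to note $\mu(1-\mu) \le \max(\mu,1-\mu) \le 1-c$ so $\sigma \le \sqrt{1-c} $, but since we also need $\sigma \le 1-c$ it suffices that $\sqrt{1-c}\le 1-c$ fails in general; instead use $\mu(1-\mu)\le (1-c)\cdot(1-c)$? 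That is false too. The cleanest route: $\sigma = \sqrt{\mu(1-\mu)} \le 1/2$, and $1/2 \le 1-c$ because $c\le 1/2$. So in every case $|A_j(\cdot)| \le 1-c$. (One should double check the boundary $c = 1/2$ is excluded, which it is since $c\in(0,1/2)$, but $\le 1/2 \le 1-c$ still holds.)

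For the inductive step, I would use the alternative characterization in equation~\eqref{eq:A'_r as cond exp}, namely $A'_{(r,a_i)}(X_{r-1}) = \EE_{X_r,\ldots,X_{a_i-1}}[A_{a_i} \mid X_{r-1}]$, which expresses $A'_{(r,a_i)}(0)$ and $A'_{(r,a_i)}(1)$ as conditional expectations of the single random variable $A_{a_i} = A_{a_i}(X_{a_i-1})$. Since $|A_{a_i}(x_{a_i-1})| \le 1-c$ pointwise by the base case, and a conditional expectation of a random variable bounded in absolute value by $1-c$ is itself bounded in absolute value by $1-c$ (by Jensen/monotonicity of expectation), we immediately get $|A'_{(r,a_i)}(0)| \le 1-c$ and $|A'_{(r,a_i)}(1)| \le 1-c$. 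This avoids having to track the alternating-sign cancellations in the explicit sum $A'_{r,a_i}(0) = \sum_\ell D'_{(\ell+1,a_i)} A_\ell(0)$, which would otherwise be the awkward part since naive triangle-inequality bounding of that sum blows up geometrically.

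The main obstacle is making sure the conditional-expectation identity \eqref{eq:A'_r as cond exp} is actually applicable here: it was derived in the Remark following Lemma~\ref{lem:single S T segment} under the hypothesis that $[r, a_i)$ is an \emph{empty} segment (no $\CS$ or $\CT$ nodes strictly inside). In Definition~\ref{def:A' D'} the indices $a_{i-1}, a_i$ are consecutive elements of $\CS\cup\CT$, so $(a_{i-1}, a_i)$ is indeed empty and for $r \in [a_{i-1}, a_i]$ the interval $[r, a_i)$ contains no $\CS\cup\CT$ node except possibly its left endpoint $r$ (which is handled separately — note $A'_{(a_{i-1}, a_i)}$ versus $A'_{(a_{i-1}+1,a_i)}$ differ). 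I would state this carefully, restricting attention to $r > a_{i-1}$ for the clean conditional-expectation form and checking the endpoint case $r = a_{i-1}$, or more simply: since all the $A_\ell(\cdot)$ entries for $\ell \in (a_{i-1}, a_i)$ are of the "$\mu$" or "$1-\mu$" type (those indices are in neither $\CS$ nor $\CT$), equation~\eqref{eq:A'_r as cond exp} applies verbatim. If one prefers to avoid relying on the Remark, the fallback is a direct induction on $a_i - r$ using $A'_{(r,a_i)}(0) = A'_{(r+1,a_i)}(0) + A_r(0) D'_{(r+1,a_i)}$ together with the fact that $A'_{(r+1,a_i)}(0) + X_r D'_{(r+1,a_i)}$ is a convex combination endpoint — i.e. $A'_{(r,a_i)}(0)$ and $A'_{(r,a_i)}(1) = A'_{(r,a_i)}(0) + D'_{(r,a_i)}$ are the two values of $\EE[A'_{(r+1,a_i)} \mid X_{r-1}]$ where $A'_{(r+1,a_i)}$ ranges in $[\,\min, \max\,]$ of its two endpoint values, both of which are $\le 1-c$ in absolute value by the inductive hypothesis; averaging stays in that range. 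I expect the conditional-expectation argument to be the cleanest to write and would lead with it.
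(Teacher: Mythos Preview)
Your proposal is correct and follows essentially the same route as the paper: bound the base terms $A_j(\cdot)$ by case analysis (using $\mu\in[c,1-c]$ and $\sigma\le 1/2\le 1-c$), then invoke the conditional-expectation identity \eqref{eq:A'_r as cond exp} to conclude that $A'_{(r,a_i)}(\cdot)$, being a conditional expectation of $A_{a_i}$, inherits the same bound. Your extra care about the applicability of \eqref{eq:A'_r as cond exp} at the endpoint $r=a_{i-1}$ goes beyond what the paper writes (it simply cites the identity), but the concern is moot in practice since every downstream use in the paper has $r>a_{i-1}$.
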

\begin{proof} 
By definition $A_{i,x_{i-1}}$ is one of $\mu_{i,x_{i-1}}$, $1-\mu_{i,x_{i-1}}$, $\sigma_{i,x_{i-1}}$, or $-\sigma_{i,x_{i-1}}$. Note that $\mu_{i,x_{i-1}} =  \EE[X_i | X_{i-1}=x_{i-1}]$; implying that $\mu_{i,0}, \mu_{i,1} \in [c, 1-c]$ and the same holds for their complements. Furthermore,
\begin{align*}
\sigma_{i,x_{i-1}} := \sqrt{\var(X_i|X_{i-1}=x_{i-1})}.
\end{align*}
 The variance of a Bernoulli random variable with bias $p$ is $\sigma^2={p(1-p)}$. Therefore, the maximum value of  $\sigma$ is $\frac{1}{2}$ which occurs when $p=\frac{1}{2}$. Moreover, given that $p\in [c, 1-c]$, as  $\sigma^2$ is a quadratic function of $p$, one can easily check that the minimum value occurs when $p=c$ or $1-c$. Therefore, $\sqrt{c(1-c)} \leq \sigma \leq \frac{1}{2},$ when $p\in [c, 1-c]$. With this observation,  $\sigma_{i,0}, \sigma_{i,1} \in [\sqrt{c(1-c)}, \frac{1}{2}]$. Therefore, we have $\abs{A_i} \leq 1-c$, which proves the first claim of the lemma. 
As for the second part, from \eqref{eq:A'_r as cond exp}, $A'_{r, a_i} = \EE\qty[A_{a_{i}} | X_{r-1}]$, and since $\abs{A_{a_{i}}}\leq 1-c$,  the expectation is bounded by the same range. 
\end{proof}

\begin{lemma}[Bounding $D$ terms with weak conditions]
\label{lem:Dbound2}
Suppose   $D$ is a $c$-bounded   chain %
and  
$$\abs\Big{\EE[X_i|X_{i-1}=1]-\EE[X_i|X_{i-1}=0]} \leq \frac{1}{2}-c$$
 for a constant $c\in(0,\frac{1}{2})$.  Then, 
 $|D_i|\leq \alpha_1 := \frac{1}{2}-c$ which means $D$ is $\alpha_1$- difference bounded. 
 \end{lemma}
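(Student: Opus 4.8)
The plan is to unpack the definition of $D_i$ and reduce the claim to two elementary one-variable inequalities. Recall from Definition~\ref{def:A} that $A_i(x_{i-1})$ is exactly one of $\mu_{i,x_{i-1}}$, $1-\mu_{i,x_{i-1}}$, $\sigma_{i,x_{i-1}}$ or $-\sigma_{i,x_{i-1}}$, with the choice among these four forms determined by which of the six membership cases for $i$ (relative to $\CS$, $\CT_0$, $\CT_1$) applies -- but the \emph{same} form is used for both $x_{i-1}=0$ and $x_{i-1}=1$. Hence $D_i = A_i(1)-A_i(0)$ is always $\pm(\mu_{i,1}-\mu_{i,0})$ or $\pm(\sigma_{i,1}-\sigma_{i,0})$, so $\abs{D_i}\in\{\abs{\mu_{i,1}-\mu_{i,0}},\ \abs{\sigma_{i,1}-\sigma_{i,0}}\}$ in every case. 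It therefore suffices to bound each of these two quantities by $\alpha_1=\tfrac12-c$; since for a chain $\pa(v)$ is a single binary variable, these are precisely the mean- and standard-deviation-differences appearing in Definition~\ref{def:D bounded}, so this will simultaneously give $\abs{D_i}\le\alpha_1$ and the conclusion that $D$ is $\alpha_1$-difference bounded.

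For the mean difference I would just invoke the hypothesis of the lemma: $\abs{\mu_{i,1}-\mu_{i,0}}=\abs{\EE[X_i\mid X_{i-1}=1]-\EE[X_i\mid X_{i-1}=0]}\le\tfrac12-c=\alpha_1$. (Note $c$-boundedness alone only yields $\abs{\mu_{i,1}-\mu_{i,0}}\le 1-2c$, which is too weak, so the extra hypothesis is genuinely used here.)

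For the standard-deviation difference I would reuse the elementary fact established inside the proof of Lemma~\ref{lem:Abound2}: under $c$-boundedness $\mu_{i,0},\mu_{i,1}\in[c,1-c]$, and since $p\mapsto\sqrt{p(1-p)}$ is maximized at $p=\tfrac12$ and minimized at the endpoints of $[c,1-c]$, we get $\sigma_{i,0},\sigma_{i,1}\in[\sqrt{c(1-c)},\,\tfrac12]$. Two numbers in an interval of length $\tfrac12-\sqrt{c(1-c)}$ differ by at most that length, so $\abs{\sigma_{i,1}-\sigma_{i,0}}\le\tfrac12-\sqrt{c(1-c)}$. Finally, $c\le\tfrac12$ gives $c(1-c)\ge c^2$, i.e.\ $\sqrt{c(1-c)}\ge c$, whence $\abs{\sigma_{i,1}-\sigma_{i,0}}\le\tfrac12-c=\alpha_1$. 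Combining the two sub-cases completes the argument.

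The only non-routine point is the $\sigma$-difference bound: the tempting approach of showing $p\mapsto\sqrt{p(1-p)}$ is $1$-Lipschitz on $[c,1-c]$ fails, since its derivative blows up as $c\to 0$; the fix is to bound the difference by the \emph{range} of $\sigma$ over the admissible interval and then apply $\sqrt{c(1-c)}\ge c$. (Note also that this range bound $\tfrac12-\sqrt{c(1-c)}$ is actually attained, e.g.\ by $\mu_{i,0}=c$, $\mu_{i,1}=\tfrac12$, so nothing sharper is available via this route.) Everything else is bookkeeping that follows immediately from Definitions~\ref{def:A} and~\ref{def:D bounded}.
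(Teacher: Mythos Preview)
Your proposal is correct and takes essentially the same approach as the paper: split into the $\mu$-difference case (bounded directly by the hypothesis) and the $\sigma$-difference case (bounded by the range $\tfrac12-\sqrt{c(1-c)}$ and then $\sqrt{c(1-c)}\ge c$). The paper's proof is slightly terser but otherwise identical.
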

 \begin{proof}
 When  $i\notin \CS$, then $$|D_i| =|A_{i, 1} - A_{i, 0}| = \abs{\mu_{i,1} -\mu_{i,0}} \leq \frac{1}{2}-c$$ by the second assumption in the lemma.  
When, $i\in \CS$, then $A_i$ is a function of $\sigma_{i,x_{i-1}} $ and 
\begin{align*}
|D_i| = \abs{\sigma_{i,0}-\sigma_{i,1}} \leq \frac{1}{2} - \sqrt{c(1-c)} \leq \frac{1}{2} -c,
\end{align*}
where we used the fact that $\sigma_{i,0}, \sigma_{i,1} \in [\sqrt{c(1-c)}, \frac{1}{2}]$
and $c\leq \sqrt{c(1-c)}$ for $c\leq 0.5$.
\end{proof}

The previous lemma gives a bound on $D_i$ that depends on $c$ that will affect the complexity bound for our algorithm. To generalize the case of product distribution (where $D$ is zero) without such dependence we state a lemma with a stronger condition and conclusion where $D$ is bounded away from 0.5 by a constant.

\begin{lemma}[Bounding $D$ terms with stronger conditions]
\label{lem:Dbound3}
Suppose $D$ is a $c$-bounded chain for a constant $c\in(0,\frac{1}{2})$, and  
$$\abs{\EE[X_i|X_{i-1}=1]-\EE[X_i|X_{i-1}=0]} \leq 0.1.$$
 Then, 
 $|D_i|\leq \alpha_2 := 0.4$, implying that $D$ is $\alpha_2$-difference bounded. 
 \end{lemma}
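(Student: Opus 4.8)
The plan is to mirror the proof of Lemma~\ref{lem:Dbound2} almost verbatim, splitting according to whether $i\in\CS$, and to replace only the step that bounds the difference of conditional standard deviations: instead of the $c$-dependent estimate $\tfrac12-\sqrt{c(1-c)}$, I would derive a $c$-independent bound from the new hypothesis $|\EE[X_i\mid X_{i-1}=1]-\EE[X_i\mid X_{i-1}=0]|\le 0.1$.

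First I would recall from Definition~\ref{def:A} that $D_i=A_i(1)-A_i(0)$, that when $i\notin\CS$ we have $A_i(x_{i-1})\in\{\mu_{i,x_{i-1}},\,1-\mu_{i,x_{i-1}}\}$, and that when $i\in\CS$ we have $|A_i(x_{i-1})|=\sigma_{i,x_{i-1}}$ (with a possible overall sign that does not affect $|D_i|$). Hence in the first case $|D_i|=|\mu_{i,1}-\mu_{i,0}|\le 0.1\le 0.4$ directly by hypothesis, and in the second case $|D_i|=|\sigma_{i,1}-\sigma_{i,0}|$, so the whole lemma reduces to bounding this one quantity by $0.4$.

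For the $\sigma$ case, the key point is that the naive bound $|\sigma_{i,1}-\sigma_{i,0}|\le\tfrac12$ and the $c$-based bound of Lemma~\ref{lem:Dbound2} are both too weak here, since $\tfrac12-\sqrt{c(1-c)}\to\tfrac12>0.4$ as $c\to 0$. The route I would take is to pass to squares: since $\sigma_{i,x_{i-1}}^2=\mu_{i,x_{i-1}}(1-\mu_{i,x_{i-1}})$,
\[
|\sigma_{i,1}^2-\sigma_{i,0}^2|=\big|\mu_{i,1}(1-\mu_{i,1})-\mu_{i,0}(1-\mu_{i,0})\big|=|\mu_{i,1}-\mu_{i,0}|\cdot|1-\mu_{i,0}-\mu_{i,1}|\le 0.1,
\]
using $\mu_{i,0},\mu_{i,1}\in[0,1]$ so that $|1-\mu_{i,0}-\mu_{i,1}|\le 1$. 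Then I would invoke the elementary inequality $|a-b|\le\sqrt{|a^2-b^2|}$ valid for all $a,b\ge 0$ — which follows from $(a-b)^2=|a-b|\cdot|a-b|\le|a-b|(a+b)=|a^2-b^2|$ — applied with $a=\sigma_{i,1}$, $b=\sigma_{i,0}$, to conclude $|D_i|=|\sigma_{i,1}-\sigma_{i,0}|\le\sqrt{0.1}<0.4$. Combining the two cases yields $|D_i|\le 0.4=\alpha_2$ for every $i$, i.e.\ $D$ is $\alpha_2$-difference bounded.

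The only substantive step is this $\sigma$-difference estimate; the rest is identical bookkeeping to Lemma~\ref{lem:Dbound2} and the $c$-boundedness hypothesis is used only to keep all quantities well defined. The gap between $\sqrt{0.1}\approx 0.316$ and the stated constant $0.4$ is deliberate slack that gives a clean value to propagate into the downstream spectral-norm bounds; one could alternatively argue via the $\tfrac12$-H\"older continuity of $p\mapsto\sqrt{p(1-p)}$, but the square trick above is the most direct.
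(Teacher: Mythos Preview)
Your proof is correct and takes a genuinely different route from the paper. The paper argues via calculus: it shows $\sigma(x)=\sqrt{x(1-x)}$ is concave, so the maximal gap $|\sigma_{i,1}-\sigma_{i,0}|$ under the constraint $|\mu_{i,1}-\mu_{i,0}|\le 0.1$ with $\mu\in[c,1-c]$ is attained at an endpoint, and then splits into the cases $c>0.1$ (giving $D_\sigma\le\sigma(0.5)-\sigma(0.1)=0.2$) and $c\le 0.1$ (giving $D_\sigma\le\sigma(c+D_\mu)\le\sigma(0.2)=0.4$). Your square-difference trick, by contrast, is purely algebraic: factor $\sigma_{i,1}^2-\sigma_{i,0}^2=(\mu_{i,1}-\mu_{i,0})(1-\mu_{i,0}-\mu_{i,1})$, bound it by $0.1$, and then use $|a-b|\le\sqrt{|a^2-b^2|}$. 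This is shorter, avoids the case split, actually yields the slightly sharper constant $\sqrt{0.1}\approx 0.316$, and does not use $c$-boundedness at all --- so your remark that the hypothesis is only needed ``to keep all quantities well defined'' is accurate (and in fact even that is not really needed here). The paper's approach, on the other hand, makes the geometric picture of where the extremal configuration lies more transparent, which could be useful if one wanted to optimize constants under different hypotheses.
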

 \begin{proof}
When  $i\notin \CS$, then as above $|D_i| =\abs{\mu_{i,1} -\mu_{i,0}} \leq 0.1.$
When, $i\in \CS$, then $A_i$ is a function of $\sigma_{i,x_{i-1}}$.
Let $D_\mu,D_\sigma$ denote the difference in $\mu,\sigma$ values.
Recalling that 
$\sigma(x)=\sqrt{x(1-x)}$ and calculating 
$\sigma'(x)=\frac{1-2x}{2\sqrt{x(1-x)}}$
and 
$\sigma''(x)= \frac{-4x(1-x)-0.5(1-2x)^2}{4x(1-x)\sqrt{x(1-x)}}<0$
we observe that $\sigma$ is concave and $\sigma'$ is decreasing as a function of $x\in(0,1/2)$. 
This implies that the largest possible segment in $\sigma$ values is obtained when $\mu_1=c$ and $\mu_2=c+D_\mu$.
Now if $c>0.1$ then $D_\sigma\leq \sigma(0.5)-\sigma(0.1)=0.2$.
On the other hand,  if $c\leq 0.1$ then $D_\sigma\leq \sigma(c+D_\mu)-\sigma(c)\leq \sigma(c+D_\mu)\leq \sigma(0.2)=0.4.$
\end{proof}

\noindent
Now, since by definition  $D'_{r, a_i} =  \prod_{r\leq \ell \leq a_{i}} D_\ell$ we can bound $D'$ terms:
\begin{lemma}[Bounding $D'$ terms]
\label{lem:Dbound4}
For any $\alpha$-difference bounded chain and any $r\in (a_{i-1}, a_{i}]$,  $|D'_{r, a_i}|\leq  \alpha^{a_{i}-r+1}$.
 \end{lemma}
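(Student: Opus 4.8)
The plan is to reduce the bound on the product $D'_{r,a_i}$ to a uniform per-term bound $|D_\ell|\le\alpha$ and then count the number of terms. First I would unwind Definition~\ref{def:A' D'}, which gives $D'_{r,a_i}=\prod_{r\le\ell\le a_i}D_\ell$ with $D_\ell=A_\ell(1)-A_\ell(0)$, so that $|D'_{r,a_i}|=\prod_{r\le\ell\le a_i}|D_\ell|$. Since $r\le a_i$ (because $r\in(a_{i-1},a_i]$), the index set $\{\ell:r\le\ell\le a_i\}$ contains exactly $a_i-r+1$ integers, so it suffices to show $|D_\ell|\le\alpha$ for each such $\ell$.

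The key step is the per-term bound. By Definition~\ref{def:A}, for every index $\ell$ the quantity $A_\ell(x_{\ell-1})$ equals one of $\mu_{\ell,x_{\ell-1}}$, $1-\mu_{\ell,x_{\ell-1}}$, $\sigma_{\ell,x_{\ell-1}}$, or $-\sigma_{\ell,x_{\ell-1}}$, where the choice depends only on the membership of $\ell$ in $\CS$, $\CT_0$, $\CT_1$ and not on $x_{\ell-1}$. Consequently $D_\ell=A_\ell(1)-A_\ell(0)$ is, up to sign, either $\mu_{\ell,1}-\mu_{\ell,0}$ or $\sigma_{\ell,1}-\sigma_{\ell,0}$; in the first case $|D_\ell|=|\mu_{\ell,1}-\mu_{\ell,0}|\le\alpha$ and in the second $|D_\ell|=|\sigma_{\ell,1}-\sigma_{\ell,0}|\le\alpha$, both by the $\alpha$-difference-bounded condition of Definition~\ref{def:D bounded}.

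Multiplying these bounds over the $a_i-r+1$ indices $\ell$ gives $|D'_{r,a_i}|\le\alpha^{a_i-r+1}$, as claimed. I expect no real obstacle here; the only point requiring a bit of care is verifying that every one of the six cases of Definition~\ref{def:A} collapses to one of the two difference quantities controlled by Definition~\ref{def:D bounded}, and that the product indeed has exactly $a_i-r+1$ factors, with the degenerate case $r=a_i$ giving $|D'_{a_i,a_i}|=|D_{a_i}|\le\alpha$, consistent with the special case $D'_{a_i,a_i}=D_{a_i}$ noted after Definition~\ref{def:A' D'}.
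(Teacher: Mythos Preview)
Your proposal is correct and matches the paper's approach: the paper treats this lemma as immediate, noting just before the statement that ``since by definition $D'_{r, a_i} = \prod_{r\leq \ell \leq a_{i}} D_\ell$ we can bound $D'$ terms,'' relying on the per-term bound $|D_\ell|\le\alpha$ established in the preceding lemmas. Your write-up simply fills in the case analysis that the paper leaves implicit.
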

\begin{theorem}\label{thm:L1 bound chain}
Suppose the BN distribution is a $c$-bounded chain %
 for a constant $c\in(0,\frac{1}{2})$, and $\abs{D_{\sigma, i}}=\abs{\sigma_{i,1}-\sigma_{i,0}}\leq D_\sigma,$ and $\abs{D_{\mu, i}}=\abs{\mu_{i,1}-\mu_{i,0}}\leq D_\mu$ with $D_\sigma + D_\mu <1$. Then, the spectral norm of any conjunction $f$ with $d$ literals is bounded by 
$\L1(f) \leq \qty(\frac{(2-D_\sigma-D_\mu)(1-c)}{1-D_\sigma - D_\mu} )^{d}$. %
\end{theorem}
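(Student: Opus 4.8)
The plan is to bound each Fourier coefficient $|\fS|$ via the closed form of Lemma~\ref{lem:Fourier exact} and then sum over $\CS$, exploiting the fact that the closed form factorizes over the segments of the chain cut out by the literal indices of $f$. Write $\CT=\{t_1<\cdots<t_d\}$ for the set of literal indices and put $t_0=0$ for the auxiliary root. By Lemma~\ref{lem:Fourier exact} we have $\fS=0$ whenever $\max\CS>t_d$, and for $\CS\subseteq[1,t_d]$ we have (see \eqref{eq:fS_alternative})
\[
\fS=\prod_{i=1}^{d} D'_{(h_i+1,\,t_i)}\;A'_{(t_{i-1}+1,\,h_i)}(y_{i-1}),
\]
where $h_i$ is the smallest element of $\CS$ lying strictly inside the gap $(t_{i-1},t_i)$, or $h_i=t_i$ if that gap contains no element of $\CS$, and $y_{i-1}\in\{0,1\}$ is fixed by whether the literal at $t_{i-1}$ is negated (with $y_0=0$). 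Applying the bound $|A'_{(r,a)}(\cdot)|\le 1-c$ of Lemma~\ref{lem:Abound2} to each of the $d$ hierarchical factors yields $|\fS|\le (1-c)^{d}\prod_{i=1}^{d}|D'_{(h_i+1,t_i)}|$. Since the coefficients with $\max\CS>t_d$ vanish, $L_1(f)=\sum_{\CS\subseteq[1,t_d]}|\fS|$, and it remains to estimate $\sum_{\CS\subseteq[1,t_d]}\prod_{i=1}^{d}|D'_{(h_i+1,t_i)}|$.

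The observation that makes this tractable is that $[1,t_d]$ is the disjoint union of the blocks $B_i:=(t_{i-1},t_i]$ for $i=1,\dots,d$, and that the $i$-th factor $|D'_{(h_i+1,t_i)}|=\prod_{\ell=h_i+1}^{t_i}|D_\ell|$ depends on $\CS$ only through $\CS\cap B_i$: the cutoff $h_i$ is determined by $\CS\cap(t_{i-1},t_i)$, and by Definition~\ref{def:A} each $D_\ell$ with $h_i<\ell\le t_i$ equals, up to sign, either $D_{\sigma,\ell}=\sigma_{\ell,1}-\sigma_{\ell,0}$ or $D_{\mu,\ell}=\mu_{\ell,1}-\mu_{\ell,0}$ according to whether $\ell\in\CS$, so that in particular the membership of the literal index $t_i$ in $\CS$ influences only the $i$-th factor (through $D_{t_i}$). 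Because the blocks partition $[1,t_d]$, the sum therefore factorizes as $\prod_{i=1}^{d}R_i$ with $R_i:=\sum_{\CR\subseteq B_i}\prod_{\ell=h_i(\CR)+1}^{t_i}|D_\ell|$, and it suffices to prove $R_i\le \frac{2-D_\sigma-D_\mu}{1-D_\sigma-D_\mu}$ for each $i$; combining the $d$ such bounds with the factor $(1-c)^{d}$ gives exactly the claimed estimate.

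To bound $R_i$, use $|D_\ell|\le D_\sigma$ when $\ell\in\CS$ and $|D_\ell|\le D_\mu$ otherwise (the theorem's hypotheses), and enumerate $\CR\subseteq B_i$ by the cutoff $h_i$, the indicator $\mathbf{1}_{t_i\in\CR}$, and the memberships of the gap positions strictly above $h_i$. Let $g:=t_i-t_{i-1}-1$ be the number of gap positions. If $\CR$ meets the gap in no point then $h_i=t_i$ and the product is empty, contributing $1$ for each of the two values of $\mathbf{1}_{t_i\in\CR}$, hence $2$ in total. If $h_i$ is the $a$-th gap position then the product runs over the $g-a$ gap positions above $h_i$ together with $t_i$; summing each of those $g-a+1$ free positions over its two membership states multiplies in a factor $D_\sigma+D_\mu$, so the contribution is $(D_\sigma+D_\mu)^{g-a+1}$, and summing over $a=1,\dots,g$ gives $\sum_{k=1}^{g}(D_\sigma+D_\mu)^{k}$. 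Therefore $R_i\le 2+\sum_{k\ge1}(D_\sigma+D_\mu)^{k}=\frac{2-D_\sigma-D_\mu}{1-D_\sigma-D_\mu}$, the geometric series converging because $D_\sigma+D_\mu<1$. The step requiring genuine care is the block factorization of $|\fS|$ — tracking which $D_\ell$ and $A'$ terms land in which block and checking the degenerate cases ($h_i=t_i$, an empty gap, and the auxiliary-node convention at $i=1$); the geometric estimate itself is routine.
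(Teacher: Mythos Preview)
Your proof is correct and follows essentially the same approach as the paper: bound each $|\fS|$ via \eqref{eq:fS_alternative} and Lemma~\ref{lem:Abound2}, factorize the sum over $\CS$ along the blocks cut by the literal indices, and bound each block factor by a geometric series in $D_\sigma+D_\mu$. The only presentational difference is that you go directly to the left-open blocks $B_i=(t_{i-1},t_i]$, whereas the paper first decomposes over the right-open sets $\CS_i\subseteq[t_{i-1},t_i)$ and then explicitly shifts to $\CS'_i\subseteq(t_{i-1},t_i]$ in order to make the dependence of $D_{t_i}$ on $\mathbf{1}_{t_i\in\CS}$ land in the $i$-th factor; your version absorbs that shift into the initial choice of blocks, which is a bit cleaner.
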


\begin{proof}
From Lemma \ref{lem:Fourier exact}, and particularly \eqref{eq:fS_alternative} we have the expression for $\fS$. Recall the notation $0=t_0< t_1 <\cdots, <t_d=\tmax$ with $d=|\CT|$ denoting the ordered elements of $\CT\cup \{0\}$. %
Define $\CS_i=\CS\cap [t_{i-1}, t_i)$ for any $i\in [d]$. 
If $\CS_i\setminus \CT$ is not empty, let $h_i  := \min \CS_i\setminus \CT$; otherwise set $h_i=t_i$.
Then, using \eqref{eq:fS_alternative}, Lemma~\ref{lem:Abound2}
and by the theorem's assumption, 
for all $\CS \subseteq [\tmax]$
the Fourier coefficient is bounded as 
\begin{align}\label{eq:fS bound chain}
    |\fS| 
    = \prod_{i=1}^{d} \abs{\D'_{(h_i+1, t_i)}} \abs{\A'_{(t_{i-1}+1, h_i)}(y_{i-1})}
    \leq  \qty\big(1-c)^{d}  \prod_{i=1}^d \abs{\D'_{(h_i+1, t_i)}}.
\end{align}
If $\CS\nsubseteq [\tmax]$, then $\fS=0$. Therefore, summing over $|\fS|$ for all $\CS\subseteq [\tmax]$ gives the $L_1$ norm, that is bounded as 
 \begin{align}\label{eq:L1 bound 1}
    \sum_{\CS} \abs{\fS} &\leq  \qty\big(1-c)^{d} \sum_{\CS\subseteq [\tmax]} \prod_{i} \abs{\D'_{(h_i+1, t_i)}}.
 \end{align}

By definition, $\cup_i \CS_i$ covers the set  $\set{1, \cdots, \tmax-1}$ %
and the summation on the right hand side of \eqref{eq:L1 bound 1} is replaced by %
  \begin{align*}
    \qty\big(1-c)^{d} \sum_{\CS_1\subseteq [1,t_1)} \cdots \sum_{\CS_{d}\subseteq [t_{d-1}, t_d) }  \sum_{\CS_{d+1}\in \{t_d\}}\prod_{i} \abs{\D'_{(h_i+1, t_i)}},
   \end{align*}
   where we used the fact that $0\notin \CS$ for the first summation.     
We would like to proceed by interchanging the summations and the product. For that we need to show that the variable $\D'_{(h_i+1, t_i)}$ only  depends on the $i$th summation. However, this variable depends on the $i$th and $(i+1)$th summations. Because it is a function of  $D_{t_i}$ which itself depends on whether  $t_i\in \CS_{i+1}$.   To address this issue, we slightly change the summations
(note the switch from right open set to left open set):
\begin{align*}
    \qty\big(1-c)^{d} \sum_{\CS'_1\subseteq (0, t_1]}  \cdots \sum_{\CS'_{d}\subseteq (t_{d-1}, t_d] }\prod_{i} \abs{\D'_{(h_i+1, t_i)}}.
   \end{align*}
Now, we can interchange the product with the summation, because the $i$th term appearing in the product depends only on the $i$th summation. Note that $h_i$ only depends on $\CS'_i$. Because $\CS_i\setminus \CT = \CS'_i\setminus \CT$.  As a result, the above quantity equals the following 
\begin{align*}
\qty\big(1-c)^{d} \prod_{i=1}^{d} \qty\Big(\sum_{\CS'_i\subseteq (t_{i-1}, t_i]}   \abs{\D'_{(h_i+1, t_i)}}).
\end{align*}
By conditioning on the value of $h_i$, the $i$th summation equals 
\begin{align}\label{eq:L1 bound 2}
    \sum_{k = t_{i-1}+1}^{t_i} \sum_{\substack{\CS'_i: h_i =k}}\abs{\D'_{(k+1, t_i)}}.
\end{align}
By definition  $D'=1$ when $h_i= t_i$. This happens when $\CS'_i =\{t_i\}$ or is empty. Otherwise, if  $h_i=k<t_i$ we have 
\begin{align*}
    \abs{\D'_{(k+1, t_i)}} &= \prod_{j=k+1}^{t_i} \abs{D_{j}} =  \prod_{\substack{k < j \leq t_i\\ j \in \CS'_i}} D_{\sigma, j} \prod_{\substack{k < j \leq t_i\\ j \notin \CS'_i}} D_{\mu, j}\\
    & \leq D^{|\CS'_i|-1}_\sigma D_\mu^{t_i-k - |\CS_i|+1},
\end{align*}
where we get $|\CS'_i|-1$ in the first exponent because $k\in \CS'_i$ but $D'$ starts with $k+1$. Moreover, we used  Definition \ref{def:A} implying that   we get $D_{\sigma, j}$ when $j\in \CS$, and $D_{\mu, j}$ otherwise.  
Therefore, by separating $k=t_i$, and  introducing another set $\CA\subseteq (k, t_i]$ such that $\CS'_i = \CA \cup \{k\}$, the summation in \eqref{eq:L1 bound 2} is simplified as
\begin{align}
    \eqref{eq:L1 bound 2} 
    &\leq 2 + \sum_{k = t_{i-1}+1}^{t_i-1} \sum_{\substack{\CA \subseteq (k, t_i]}}D^{|\CA|}_\sigma D_\mu^{t_i-k - |\CA|}
    \label{eq:chainWoA}
    \\
    & =  2+\sum_{k = t_{i-1}+1}^{t_i-1} \sum_{r=0}^{t_i-k}{\binom{t_i-k}{r}}  D^{r}_\sigma D_\mu^{t_i-k-r} \nonumber \\
    &=  2+\sum_{k = t_{i-1}+1}^{t_i-1} (D_\sigma+D_\mu)^{t_i-k} \nonumber \\
    & = 2+\sum_{k'=1}^{t_i-t_{i-1}-1}(D_\sigma+D_\mu)^{k'} \nonumber \\
    & = 1+\sum_{k'=0}^{t_i-t_{i-1}-1}(D_\sigma+D_\mu)^{k'}, \nonumber
\end{align}
where the second equality holds by counting the number of subsets $\CA$ of size $r$, the  third equality follows from the binomial theorem, and the fourth by change of the variable $k$ to $k'$.
The last summation is a geometric sum with the base  $D_\sigma+D_\mu$ which is less than one by assumption.  Therefore,  by increasing the range of the summation to $\infty$, the following inequality is obtained:
\begin{align}\label{eq:L1 bound 4}
    \sum_{\CS'_i\subseteq (t_{i-1}, t_i]}   \abs{\D'_{(h_i+1, t_i)}}  \leq  1+\frac{1}{1-D_\sigma-D_\mu} = \frac{2-D_\sigma-D_\mu}{1-D_\sigma-D_\mu}.  %
\end{align}

Now combining \eqref{eq:L1 bound 1}-\eqref{eq:L1 bound 4} gives the following $L_1$ bound
\begin{align*}
\sum_{\CS} \abs{\fS} &\leq \qty\big(1-c)^{d} \prod_{i=1}^{d} \qty(\frac{2-D_\sigma-D_\mu
}{1-D_\sigma-D_\mu}) \\
&=  \qty(\frac{(2-D_\sigma-D_\mu)(1-c)}{1-D_\sigma-D_\mu} )^{d}.
\end{align*}
\end{proof}

We therefore have the following bound where (i) captures more cases but
(ii) is a strict generalization of the product case because the base of the polynomial is not a function of $c$.

\begin{corollary}
    \noindent (i) Under the conditions of Lemma~\ref{lem:Dbound2} (weak conditions) $D_\mu + D_\sigma \leq  1-2c$  and $\sum_{\CS} \abs{\fS} \leq (\frac{(1+2c)(1-c)}{2c})^{d}$.\\
    \noindent (ii)
    Under the conditions of Lemma~\ref{lem:Dbound3} (strong conditions) $\D_\mu +D_\sigma \leq 0.5$ and $\sum_{\CS} \abs{\fS} \leq (3(1-c))^{d}$.
\end{corollary}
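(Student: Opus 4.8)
The plan is to obtain both bounds as immediate specializations of Theorem~\ref{thm:L1 bound chain}, by substituting the concrete bounds on $\abs{D_{\mu,i}}$ and $\abs{D_{\sigma,i}}$ supplied by Lemmas~\ref{lem:Dbound2} and~\ref{lem:Dbound3}. The only preliminary observation needed is that the right-hand side of Theorem~\ref{thm:L1 bound chain} is monotone in the sum $D_\sigma+D_\mu$: writing $t=D_\sigma+D_\mu$ and $2-t=1+(1-t)$ gives $\frac{(2-t)(1-c)}{1-t}=(1-c)\left(1+\frac{1}{1-t}\right)$, which is increasing for $t\in[0,1)$. Hence it suffices to plug in valid upper bounds for $D_\mu$ and $D_\sigma$ and to check that their sum stays strictly below $1$ so that the theorem's hypothesis holds and $1-D_\sigma-D_\mu>0$.

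For part (i), under the conditions of Lemma~\ref{lem:Dbound2} the hypothesis gives $\abs{D_{\mu,i}}=\abs{\mu_{i,1}-\mu_{i,0}}\leq\tfrac12-c$, and the argument inside that lemma's proof already establishes $\abs{D_{\sigma,i}}=\abs{\sigma_{i,1}-\sigma_{i,0}}\leq\tfrac12-\sqrt{c(1-c)}\leq\tfrac12-c$, using $c\leq\sqrt{c(1-c)}$ for $c\leq\tfrac12$. Taking $D_\mu=D_\sigma=\tfrac12-c$ we get $D_\mu+D_\sigma=1-2c<1$, so Theorem~\ref{thm:L1 bound chain} applies and yields $\L1(f)\leq\left(\frac{(2-(1-2c))(1-c)}{1-(1-2c)}\right)^{d}=\left(\frac{(1+2c)(1-c)}{2c}\right)^{d}$. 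For part (ii), under the conditions of Lemma~\ref{lem:Dbound3} the hypothesis gives $\abs{D_{\mu,i}}\leq 0.1$, and its proof (the case $i\in\CS$) gives $\abs{D_{\sigma,i}}\leq 0.4$. Taking $D_\mu=0.1$, $D_\sigma=0.4$ we get $D_\mu+D_\sigma=0.5<1$, and Theorem~\ref{thm:L1 bound chain} gives $\L1(f)\leq\left(\frac{(2-0.5)(1-c)}{1-0.5}\right)^{d}=(3(1-c))^{d}$.

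There is essentially no obstacle here: the corollary is a direct substitution, and the two points one must not overlook are (a) importing the $\abs{D_{\sigma,i}}$ bound from the \emph{proof} of Lemma~\ref{lem:Dbound2}, not merely the $D_{\mu,i}$ hypothesis stated there, and (b) verifying $D_\sigma+D_\mu<1$ strictly in both regimes ($1-2c<1$ and $0.5<1$). The qualitative contrast between the two parts — part (i) covers a wider range of the difference parameter but with a base depending on $c$, whereas part (ii) has a base independent of $c$ and therefore strictly generalizes the product case — is exactly the trade-off reflected in the two admissible bounds $\alpha_1=\tfrac12-c$ versus $\alpha_2=0.4$.
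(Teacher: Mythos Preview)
Your proposal is correct and matches the paper's intended approach: the corollary is immediate from Theorem~\ref{thm:L1 bound chain} once one plugs in the bounds on $D_\mu$ and $D_\sigma$ furnished by Lemmas~\ref{lem:Dbound2} and~\ref{lem:Dbound3}, and the paper does not even spell out a proof. Your monotonicity observation and the two caveats you flag (extracting the $\sigma$-difference bound from the lemma proofs, and checking $D_\mu+D_\sigma<1$) are exactly the right points to note.
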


\begin{remark}
The bound in Theorem~\ref{thm:L1 bound chain} can be made tighter by a slightly more refined analysis
that accounts for $A'$ terms explicitly instead of bounding them by $1-c$.
In particular, we can distribute the $A'$ terms to their individual segments and in 
\eqref{eq:chainWoA} we can account for the $A'$ term in each element. 
Recall that $A_k$ terms and top level elements in $A_k'$ terms are based on $\sigma$ (when $k\in S$) or $\mu$ (when $k\not \in S$);
cf.\ the terms 
 $(\sigma_{8,0}+D_{8,\sigma} \mu_{7,1})$
and $(\mu_{2,0}+D_{2,\mu} \mu_1)$
in the example above.
Referring to the two cases when $S\setminus T$ is empty, 
when $S=\{\}$ we have $A'_{(t_{i-1}+1,t_i),\mu}$ and when  $S=\{t_i\}$ we have $A'_{(t_{i-1}+1,t_i),\sigma}$ where we have added the annotation for $\mu$ or $\sigma$ to distinguish these cases. Similarly when $S\setminus T$ is not empty and its least index is $k$ we get
$A'_{(t_{i-1}+1,k),\sigma}$. 
Then, proceeding as in  \eqref{eq:L1 bound 2} and \eqref{eq:chainWoA} we obtain an exact computation of the spectral norm and a corresponding bound:
\begin{align}
\label{eq:spectrumExact}
\sum_{\CS} \abs{\fS} 
& = 
\prod_{i=1}^{d} 
\qty\Big(
A'_{(t_{i-1}+1,t_i),\mu} + |A'_{(t_{i-1}+1,t_i),\sigma}|+ \sum_{k = t_{i-1}+1}^{t_i-1} |A'_{(t_{i-1}+1,k),\sigma}| 
\sum_{\substack{\CS'_i: h_i =k}}\abs{\D'_{(k+1, t_i)}}
)
\\
& \leq
\prod_{i=1}^{d} 
\qty\Big(
A'_{(t_{i-1}+1,t_i),\mu} + |A'_{(t_{i-1}+1,t_i),\sigma}|+ \sum_{k = t_{i-1}+1}^{t_i-1} |A'_{(t_{i-1}+1,k),\sigma}| \sum_{\substack{\CA \subseteq (k, t_i]}}D^{|\CA|}_\sigma D_\mu^{t_i-k - |\CA|}
).
\nonumber
\end{align}
Bounding $A'$ terms as $A'_\mu\leq 1-c<1$ and   $|A'_\sigma|\leq 0.5$ yields a final bound of $\qty(\frac{(1.5-D_\sigma-D_\mu)}{1-D_\sigma-D_\mu} )^{d}$.
\end{remark}
The analysis of \eqref{eq:spectrumExact} is especially useful for the case of product distributions. In this case we can view the distribution as a chain (with an arbitrary ordering) where for all $i$ we have $\mu_{i,0}=\mu_{i,1}=\mu_i$.
In this case all the differences are zero, $D_\mu=D_\sigma=0$, and hence the internal sum in \eqref{eq:spectrumExact} is zero. Moreover, due to the same reason, 
the $A'$ terms simplify to their leading terms 
(e.g., $(\sigma_{8,0}+D_{8,\sigma} \mu_{7,1})$
simplifies to  $\sigma_{8,0}$)
that is 
$A'_{\mu,(t_{i-1}+1,t_i)}$ is $\mu_{t_i}$ or $1-\mu_{t_i}$ and $A'_{\sigma,(t_{i-1}+1,k)}=\sigma_{t_i}$. 
This gives an exact value for the spectral norm of product distributions. Noting that $\sigma_i=\sqrt{\mu_i(1-\mu_i)}$ and 
using the fact that $x+\sqrt{x(1-x)}\leq 1.21$ for any $x\in [0,1]$, we obtain 
a slightly tighter bound than the bound $(\sqrt{2})^d$ stated by \cite{Feldman2012}.
\begin{proposition}
\label{prop:product}
Let $f$ be a conjunction of $d$ literals 
with positive literals 
$\CT_1=\{i_1,i_2,\ldots,i_{d_1}\}$ 
and negative literals 
$\CT_0=\{j_1,j_2,\ldots,j_{d_2}\}$ where $d=d_1+d_2$.
Then, for any product distribution,
the spectral norm of $f$ is given by  
\begin{align}
\sum_{\CS} \abs{\fS} & 
= 
\qty(\prod_{i\in \CT_1} (\mu_i+\sigma_i))\qty(\prod_{j\in \CT_0} ((1-\mu_j)+\sigma_j)) \leq 1.21^d. 
\end{align}
\end{proposition}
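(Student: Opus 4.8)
The plan is to instantiate the exact chain-BN computation of the previous section at a \emph{degenerate} chain that realizes the given product distribution. Fix any ordering of the $n$ variables and regard the product distribution as a chain BN in which $p(X_i\mid X_{i-1})=p(X_i)$; then $\mu_{i,0}=\mu_{i,1}=\mu_i$ and $\sigma_{i,0}=\sigma_{i,1}=\sigma_i$ for every $i$, so $D_{\mu,i}=D_{\sigma,i}=0$ and hence $D_i=A_i(1)-A_i(0)=0$ for all $i$. With this choice the hypotheses of Theorem~\ref{thm:L1 bound chain} hold with $D_\mu=D_\sigma=0$, so I may use either the exact coefficient formula \eqref{eq:fS_alternative} of Lemma~\ref{lem:Fourier exact} or the internal-sum-free specialization of the exact spectral-norm identity \eqref{eq:spectrumExact}.

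I would work from \eqref{eq:fS_alternative}. Since $D_\ell=0$, every $D'$ term over a non-empty index interval contains a zero factor and vanishes, while $D'$ over an empty interval is $1$ by convention. Consequently the factor $D'_{(h_i+1,t_i)}$ in \eqref{eq:fS_alternative} is $0$ whenever $h_i<t_i$, so $\fS=0$ unless $h_i=t_i$ for all $i$, i.e.\ unless $\CS\subseteq\CT$ (the boundary cases $\max\CS>\max\CT$ already give $\fS=0$). For $\CS\subseteq\CT$ every segment is empty, so by Definition~\ref{def:A' D'} each surviving $A'_{(t_{i-1}+1,t_i)}(\cdot)$ collapses to its leading term $A_{t_i}(0)$ (all other summands carry a vanishing $D'$, and $A'(1)=A'(0)+D'=A'(0)$). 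Hence $\fS=\prod_{t\in\CT}A_t(0)$, where, by Definition~\ref{def:A} specialized to the product case, $A_t(0)$ equals $\mu_t$ when $t\in\CT_1\setminus\CS$, equals $\sigma_t$ when $t\in\CT_1\cap\CS$, equals $1-\mu_t$ when $t\in\CT_0\setminus\CS$, and equals $-\sigma_t$ when $t\in\CT_0\cap\CS$.

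Summing $\abs{\fS}$ over $\CS\subseteq\CT$ then factorizes across literals, since the membership of each $t\in\CT$ in $\CS$ is an independent binary choice: a positive literal $t\in\CT_1$ contributes $\mu_t+\sigma_t$ and a negative literal $t\in\CT_0$ contributes $(1-\mu_t)+\sigma_t$, yielding the claimed identity $\sum_\CS\abs{\fS}=\prod_{i\in\CT_1}(\mu_i+\sigma_i)\prod_{j\in\CT_0}((1-\mu_j)+\sigma_j)$. Finally, substituting $\sigma_i=\sqrt{\mu_i(1-\mu_i)}$ and applying the elementary bound $x+\sqrt{x(1-x)}\le 1.21$ for all $x\in[0,1]$ (the maximum $\tfrac12+\tfrac{\sqrt2}{2}$ being attained at $x=\tfrac12+\tfrac{\sqrt2}{4}$, with the same bound under $x\leftrightarrow 1-x$) to each of the $d$ factors gives $\sum_\CS\abs{\fS}\le 1.21^d$.

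I expect the only delicate part to be the bookkeeping in the middle step: verifying that, once all the $D$ differences vanish, the recursively defined $A'$ and $D'$ quantities of Definition~\ref{def:A' D'} genuinely collapse to the single-variable values $\mu_t,\sigma_t,1-\mu_t$, and tracking the sign pattern of Definition~\ref{def:A} so that the absolute values line up as $\mu_t+\sigma_t$ and $(1-\mu_t)+\sigma_t$ rather than something else. As a cross-check, one can rederive $\fS=\prod_{t\in\CS}(\pm\sigma_t)\prod_{t\in\CT_1\setminus\CS}\mu_t\prod_{t\in\CT_0\setminus\CS}(1-\mu_t)$ directly from $\fS=\EE_D[\prod_{t\in\CS}\phi_t\cdot f]$ using independence together with $\EE[\phi_t]=0$, $\EE[\phi_t X_t]=\sigma_t$ and $\EE[X_t]=\mu_t$; this independent computation should agree with the output of the chain machinery and confirm both the vanishing pattern and the exact value.
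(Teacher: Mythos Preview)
Your proposal is correct and follows essentially the same route as the paper: view the product distribution as a degenerate chain with $D_\mu=D_\sigma=0$, observe that all $D'$ factors over non-empty intervals vanish so only $\CS\subseteq\CT$ survives, collapse the $A'$ terms to the single-variable values $\mu_t,\ 1-\mu_t,\ \pm\sigma_t$, and then factor the sum over $\CS\subseteq\CT$ literal by literal. The only cosmetic difference is that the paper reads the result off the pre-summed identity \eqref{eq:spectrumExact} whereas you start from \eqref{eq:fS_alternative} and sum explicitly; you yourself note both options, and they are the same computation.
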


\section{Fourier Expansion of Conjunctions for Tree BNs}\label{sec:tree}
In this section, we extend our results to Tree BNs. 
To simplify the presentation we use the generic condition $D_i\leq \alpha$ instead of the more detailed analysis using $D_\mu, D_\sigma$ that was used for chains. That is we assume $\alpha$-difference bounded tree BNs. 
We start with an additional development for chains that facilitates the analysis for trees.

Given the pair $\CS, \CT\subseteq[n]$, recall the notation $0=t_0< t_1 <\cdots, <t_d=\tmax$ with $d=|\CT|$ denoting the ordered elements of $\CT\cup \{0\}$ and $\CS_i=\CS\cap [t_{i-1}, t_i)$ for any $i\in [d]$. If $\CS_i\setminus \CT$ is not empty, let $h_i  := \min \CS_i\setminus \CT$; otherwise $h_i =t_i$.

A function $f:\set{0,1}\rightarrow \RR$ is said to be \textit{bounded single-sided} if  $\max_x |f(x)|\leq 1$ and $f(0)$ and $f(1)$ have the same sign. A trivial example is the constant function $f(x)=1$.
\begin{lemma}[Generic branch]\label{lem:genericbranch}
  Suppose that $X_0\rightarrow \cdots \rightarrow X_n$ form a $\alpha$-difference bounded chain and $f$ is a bounded single-sided function.   Then for any $\CS, \CT\subseteq [n]$ and the corresponding random variables $Z_i, i\in [n]$, we have that 
  \begin{align*}
    \EE_{X_1,\cdots, X_n}\qty[\prod_{i=1}^n Z_{i} ~~ f(X_n) \Big | X_{0}] &=  b'_{\CS, \CT}~ g(X_{0}),
  \end{align*}
  for some bounded single-sided function $g$ and a constant  $ b'_{\CS, \CT}$ bounded as 
  \begin{align*}
    |b'_{\CS, \CT}| \leq \begin{cases}
      |f(1)-f(0)| \alpha^{n-h_{d+1}} \prod_{i=1}^d \alpha^{t_i-h_i} & ~\text{if}~ \max \CS>\max \CT\\
      \prod_{i=1}^d \alpha^{t_i-h_i} & ~\text{otherwise}
    \end{cases}  
  \end{align*}
  where $h_{d+1}  := \min \CS\cap (\tmax, n]$ if $\max \CS > \max \CT$; otherwise $h_{d+1}=n+1$. 
\end{lemma}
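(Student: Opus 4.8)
The plan is to evaluate the conditional expectation by peeling off variables one at a time, from the deepest node $X_n$ up to $X_1$, maintaining an invariant. Concretely, I would prove by downward induction on $m$ (from $n+1$ down to $1$) that
\[
\EE_{X_m,\ldots,X_n}\!\left[\,\prod_{i=m}^{n} Z_i \cdot f(X_n)\,\Big|\,X_{m-1}\right] \;=\; b_m\, c_m\, h_m(X_{m-1}),
\]
where $h_m$ is bounded single-sided, $c_m$ is a constant with $|c_m|\le 1$, and $b_m$ is a constant; for $m=n+1$ this reads $f(X_n)=1\cdot 1\cdot f(X_n)$, which holds since $f$ is bounded single-sided. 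The inductive step is a case analysis on the membership of $m$ relative to $\CS$ and $\CT$, using $\EE[X_m\mid X_{m-1}]=\mu_{m,X_{m-1}}$, $\EE[\Phi_m\mid X_{m-1}]=0$, $\EE[\Phi_m X_m\mid X_{m-1}]=\sigma_{m,X_{m-1}}$ (Lemmas~\ref{lem:pS properties} and \ref{lm:phix}), together with $X_m^2=X_m$ and the expansion $h_{m+1}(X_m)=h_{m+1}(0)+X_m\big(h_{m+1}(1)-h_{m+1}(0)\big)$.

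Carrying out the step: if $m\notin \CS\cup\CT$ then $Z_m=1$ and $h_m(X_{m-1})=(1-\mu_{m,X_{m-1}})h_{m+1}(0)+\mu_{m,X_{m-1}}h_{m+1}(1)$ with $b_m=b_{m+1}$, $c_m=c_{m+1}$; if $m\in\CT_1\setminus\CS$ or $m\in\CS\cap\CT_1$ we extract the value $h_{m+1}(1)$, setting $c_m=c_{m+1}h_{m+1}(1)$, $b_m=b_{m+1}$, and $h_m=\mu_{m,\cdot}$ resp.\ $h_m=\sigma_{m,\cdot}$; symmetrically if $m\in\CT_0\setminus\CS$ or $m\in\CS\cap\CT_0$ we extract $h_{m+1}(0)$, setting $c_m=c_{m+1}h_{m+1}(0)$, $b_m=b_{m+1}$, and $h_m=1-\mu_{m,\cdot}$ resp.\ $h_m=-\sigma_{m,\cdot}$ (here $\EE[\Phi_m(1-X_m)\mid X_{m-1}]=-\sigma_{m,X_{m-1}}$); and if $m\in\CS\setminus\CT$ then $\EE[\Phi_m h_{m+1}(X_m)\mid X_{m-1}]=\big(h_{m+1}(1)-h_{m+1}(0)\big)\sigma_{m,X_{m-1}}$, so we set $b_m=b_{m+1}\big(h_{m+1}(1)-h_{m+1}(0)\big)$, $c_m=c_{m+1}$, $h_m=\sigma_{m,\cdot}$. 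In every case $h_m$ is either a convex combination of the same-signed, modulus-$\le 1$ values $h_{m+1}(0),h_{m+1}(1)$, or one of $\mu_{m,\cdot},\,1-\mu_{m,\cdot},\,\pm\sigma_{m,\cdot}$, all of which are bounded single-sided; and $|c_m|\le|c_{m+1}|\le 1$. Hence $b':=b_1$ and $g:=c_1 h_1$ satisfy the statement: $g$ is a constant of modulus $\le 1$ times a bounded single-sided function, hence bounded single-sided. (If the expectation is $0$, take $b'=0$ and $g\equiv 1$.)

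It remains to bound $|b'|=|b_1|=\prod_{m\in\CS\setminus\CT}\big|h_{m+1}(1)-h_{m+1}(0)\big|$. Observe that whenever a node $m'$ lies in $\CS\cup\CT$ the function $h_{m'}$ coincides with $A_{m'}$ of Definition~\ref{def:A}, so $|h_{m'}(1)-h_{m'}(0)|=|D_{m'}|\le\alpha$ by $\alpha$-difference-boundedness (cf.\ Lemma~\ref{lem:Dbound4}); whereas for $m'\notin\CS\cup\CT$ the convex-combination formula gives $|h_{m'}(1)-h_{m'}(0)|=|\mu_{m',1}-\mu_{m',0}|\cdot|h_{m'+1}(1)-h_{m'+1}(0)|\le\alpha\,|h_{m'+1}(1)-h_{m'+1}(0)|$. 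Iterating from $m+1$ upward, with $r(m)$ the least element of $\CS\cup\CT$ exceeding $m$: if $r(m)$ exists then $|h_{m+1}(1)-h_{m+1}(0)|\le\alpha^{\,r(m)-m}$, and otherwise $|h_{m+1}(1)-h_{m+1}(0)|\le\alpha^{\,n-m}\,|f(1)-f(0)|$, which can only occur when $m=\max\CS>\max\CT=\tmax$. Finally one telescopes: within a single $\CT$-segment $(t_{i-1},t_i)$ the elements of $\CS\setminus\CT$ are consecutive in $\CS\cup\CT$ and chain up to the right endpoint $t_i$, so their exponents sum to $t_i-h_i$; and the elements of $\CS\setminus\CT$ above $\tmax$, together with the $|f(1)-f(0)|$ seed from $m=\max\CS$, contribute $\alpha^{\,n-h_{d+1}}|f(1)-f(0)|$. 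Multiplying these yields exactly the claimed bound on $|b'|$.

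The main obstacle is this last bookkeeping: splitting the extracted factors cleanly into ``value'' factors (which stay $\le 1$ in modulus and are absorbed into $g$) and ``difference'' factors (which carry the powers of $\alpha$ into $b'$), and then matching the telescoped exponents to $\sum_{i=1}^{d}(t_i-h_i)$ plus the tail term $n-h_{d+1}$. A secondary point requiring care is verifying that single-sidedness genuinely survives every case, in particular the sign flip at $\CS\cap\CT_0$ nodes, where $h_m$ becomes the negative function $-\sigma_{m,\cdot}$, and the convex-combination steps at non-$(\CS\cup\CT)$ nodes.
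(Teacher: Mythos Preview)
Your proof is correct and takes a genuinely different route from the paper's. The paper treats $f(X_n)$ as an auxiliary node $n{+}1$ with $A_{n+1}=f$ and $D_{n+1}=f(1)-f(0)$, and then invokes the segment machinery already built for chains (Lemmas~\ref{lem:single S T segment} and~\ref{lem:single T multiple S segments}): it splits $[1,n{+}1]$ at the $\CT$-nodes, reads off one $D'$-factor and one $A'$-factor per segment, and finally identifies $g(X_0)=A'_{(1,h_1)}(X_0)$, checking its bounded single-sidedness via~\eqref{eq:A'_r as cond exp}. You instead do a direct node-by-node downward induction, carrying a single ``current'' bounded single-sided function $h_m$ and splitting each extracted scalar into a value factor ($\le 1$ in modulus, absorbed into $c_m$) and a difference factor (absorbed into $b_m$). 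Your $h_m$ at nodes $m\in\CS\cup\CT$ is exactly $A_m$ from Definition~\ref{def:A}, and the recursion $|h_m(1)-h_m(0)|\le\alpha\,|h_{m+1}(1)-h_{m+1}(0)|$ at empty nodes is precisely what underlies the $D'$-bound in Lemma~\ref{lem:Dbound4}; your final telescoping per $\CT$-segment recovers the same $\alpha^{t_i-h_i}$ factors the paper gets from its $D'$ terms.

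The paper's approach is modular---it reuses lemmas already proved for the exact computation of $\fS$ on chains---while yours is self-contained and arguably cleaner for this statement alone, since it avoids the $A'$, $D'$ bookkeeping and checks single-sidedness step by step rather than appealing to a conditional-expectation identity at the end. Both yield the same bound with the same split into head function times constant.
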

\begin{proof}
  The proof is similar to Lemma \ref{lem:Fourier exact}. Note that $f(X_n)$ behaves as an $A'$ term because it is bounded and single-sided. Therefore, we can imagine an auxiliary node $n+1$ with  $A'_{(n+1,n+1)}=A_{n+1} = f(X_n)$ and $D_{n+1}:= f(1)-f(0)$. %
In the following,
let $t_{d+1}:=n+1$ and extend the notation of $\CS_i$ and $h_i$ to index $d+1$.
  With this notation we have one head segment $[1,t_1)$, $d-1$ segments $[t_{i-1}, t_i)$, and a tail segment $[t_d, n+1)$.

 We repeatedly use Lemma \ref{lem:single T multiple S segments} to find the contribution of each segment. 
  Starting from the tail segment, the contribution is given by the lemma's case (b) or (f):
  \begin{align*}
    \EE_{[X_{t_d},   X_n]}\qty[\bfZ_{t_d}^n f(X_n) \Big | X_{t_d-1} ] = D'_{(h_{d+1}+1, n+1)}A'_{(t_d+1, h_{d+1})}(y_{d}) A_{t_d}(X_{t_d-1}),
  \end{align*}
  where $y_{d}=\mathbf{1}_{(t_d \in \CT_1)}$.
  Note that the notation above covers the case where $\CS_{d+1}$ is empty, for which $h_{d+1}=n+1$ and by convention $D'$ term equals 1. 
  By an inductive argument similar to the proof of Lemma \ref{lem:Fourier exact}, the contribution of any intermediate segment $[t_{i-1}, t_i)$ is given by Lemma \ref{lem:single T multiple S segments} case (b) or (f):
  \begin{align*}
  \D'_{(h_i+1, t_i)}\A'_{(t_{i-1}+1, h_i)}(y_{i-1})  A_{t_{i-1}}(X_{t_{i-1}-1}),
  \end{align*}
  where $y_{i-1} = \mathbf{1}_{(t_{i-1} \in \CT_1)}$. 

  As for the head segment, we use Lemma \ref{lem:single T multiple S segments} case (a) or (e): %
  $$\EE_{[X_1, X_{t_1})}\qty[\bfZ_1^{t_1-1} A_{t_{1}}\Big| X_0] = \D'_{(h_1+1, t_1)} \A'_{(1, h_1)}(X_0).$$ 
  Note that the above expression covers the case where the head segment is empty ($t_1=1$), because $h_1=1$ and the $D'$ term is 1. 
  
  Therefore, the total contribution is given by multiplying the contributions of all segments: 
  \begin{align*}
     \A'_{(1, h_1)}(X_0) \D'_{(h_1+1, t_1)}\prod_{i=2}^{d+1} \D'_{(h_i+1, t_i)} \A'_{(t_{i-1}+1, h_i)}(y_{i-1}). 
  \end{align*}
Let $g(X_0)  =  \A'_{(1, h_1)}(X_{0})$. We need to show that $g$ is a bounded and single-sided function. If $h_1<n+1$, then $g$ is an $A'$ term independent of $f$. In that case, from \eqref{eq:A'_r as cond exp}, $g(x)= \EE[A_{h_1}|X_0=x]$ for any $x\in \{0,1\}$. Therefore, $g$ is bounded as $A_{h_1}$ is bounded from Lemma \ref{lem:Abound2}. Moreover, $g$ is single-sided as $A_{h_1}(0), A_{h_1}(1)$ have the same sign depending whether $h_1 \in \CS\cap \CT_0$ or not (see Definition \ref{def:A}). If $h_1=n+1$, we conclude that $d=0$ and $\CT, \CS$ are empty sets. From \eqref{eq:A'_r as cond exp}
\begin{align*}
  g(x) = \EE[A_{n+1}(X_n)|X_0=x] = \EE[f(X_n)|X_0=x]. 
\end{align*}
In that case, $g$ is bounded and single sided because of $f$.

  It remains to prove the upper bound on $ b'_{\CS, \CT}$ which is defined as 
  \begin{align*}
    b'_{\CS, \CT} = \D'_{(h_1+1, t_1)} \prod_{i=2}^{d+1} \D'_{(h_i+1, t_i)} \A'_{(t_{i-1}+1, h_i)}(y_{i-1}). 
  \end{align*}
  The absolute value of each $A'$ term above is bounded by $1$. This is because of \eqref{eq:A'_r as cond exp} and the facts that $A(0)$, $A(1)$ and $f$ are bounded.  

  For $1\leq i<d+1$, from Lemma \ref{lem:Dbound4} the corresponding  $D'$ term is bounded by $\alpha^{t_i-h_i}$. 
  For $i=d+1$, when, $h_{d+1}=n+1$,  $\D'_{(h_{d+1}+1, n+1)}=\D'_{(n+2, n+1)}=1$
  and  
 $\abs{b'_{\CS, \CT}}\leq \prod_{i=1}^{d} \alpha^{t_i-h_i}$.
 When $h_{d+1}<n+1$, $\max \CS > \max \CT$ and the corresponding $D'$ term is 
  $$\D'_{(h_{d+1}+1, n+1)}=(f(1)-f(0))\D'_{(h_{d+1}+1, n)}.$$
  The $D'$ term is a standard one, implying that its absolute value is bounded by $\alpha^{n-h_{d+1}}$ 
  and
  \begin{align*}
    \abs{b'_{\CS, \CT}}\leq \abs{f(1)-f(0)}\alpha^{(n-h_{d+1})} \prod_{i=1}^{d} \alpha^{t_i-h_i}.
  \end{align*}
\end{proof}

\begin{lemma}[L1 bound for a generic branch]\label{lem:L1genericbranch}
  In the setup of Lemma \ref{lem:genericbranch}, the L1 norm of $b'_{\CS, \CT}$ over all subsets   $\CS\subseteq [n]$ is bounded as
  \begin{align*}
    \sum_\CS \abs{b'_{\CS, \CT}} \leq \qty(1+\frac{\abs{f(1)-f(0)}}{1-2\alpha}) \qty(\frac{2-2\alpha}{1-2\alpha})^{|\CT|}.
  \end{align*}
  \end{lemma}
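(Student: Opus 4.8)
The plan is to start from the per-subset bound proved in Lemma~\ref{lem:genericbranch}: for $d=|\CT|$ and $t_d=\tmax=\max\CT$ one has $|b'_{\CS,\CT}|\le|f(1)-f(0)|\,\alpha^{n-h_{d+1}}\prod_{i=1}^{d}\alpha^{t_i-h_i}$ when $\max\CS>\max\CT$, and $|b'_{\CS,\CT}|\le\prod_{i=1}^{d}\alpha^{t_i-h_i}$ otherwise. The key structural fact I would exploit is that this bound factors along the split $[n]=[1,\tmax]\sqcup(\tmax,n]$: writing $\CS'=\CS\cap[1,\tmax]$ and $\CS''=\CS\cap(\tmax,n]$, each $h_i$ with $i\le d$ is determined by $\CS'$ alone (its defining interval is contained in $[0,\tmax)$), while $\max\CS>\max\CT$ is equivalent to $\CS''\ne\emptyset$, and in that case $h_{d+1}=\min\CS''$ depends only on $\CS''$. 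Summing the bound over all $\CS\subseteq[n]$ and separating the $\CS''=\emptyset$ case from the rest then gives
\[
\sum_{\CS\subseteq[n]}|b'_{\CS,\CT}|\ \le\ \Big(\sum_{\CS'\subseteq[1,\tmax]}\prod_{i=1}^{d}\alpha^{t_i-h_i}\Big)\Big(1+|f(1)-f(0)|\sum_{\emptyset\ne\CS''\subseteq(\tmax,n]}\alpha^{n-\min\CS''}\Big),
\]
so it remains to bound the two factors separately.

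For the tail factor I would group the nonempty subsets of $\{\tmax+1,\dots,n\}$ by their minimum $k$: there are $2^{n-k}$ of them, each contributing $\alpha^{n-k}$, so the sum equals $\sum_{j=0}^{n-\tmax-1}(2\alpha)^{j}\le\frac{1}{1-2\alpha}$ since $\alpha<\tfrac12$. For the head factor I would reuse the computation in the proof of Theorem~\ref{thm:L1 bound chain}: reindex $\CS'$ by the left-open pieces $\CS'_i=\CS'\cap(t_{i-1},t_i]$, which partition $[1,\tmax]$ and make each factor $\alpha^{t_i-h_i}$ depend on $\CS'_i$ only (with $h_i=\min(\CS'_i\setminus\{t_i\})$, or $t_i$ when that set is empty), so the sum over $\CS'$ becomes a product of $d$ one-segment sums. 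Each one-segment sum I would evaluate by conditioning on $h_i$: the two sets with $h_i=t_i$ contribute $1$ apiece, and for each $k<t_i$ the $2^{t_i-k}$ relevant sets contribute $\alpha^{t_i-k}$ each, giving $2+\sum_{k}(2\alpha)^{t_i-k}=1+\sum_{j=0}^{t_i-t_{i-1}-1}(2\alpha)^{j}\le\frac{2-2\alpha}{1-2\alpha}$. Multiplying over the $d=|\CT|$ segments bounds the head factor by $\big(\frac{2-2\alpha}{1-2\alpha}\big)^{|\CT|}$, and combining the two factors yields the stated inequality.

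The main obstacle is justifying the interchange of the sum over $\CS'$ with the product over segments. This is exactly the subtlety handled in Theorem~\ref{thm:L1 bound chain}: under the naive right-open partition the product $\prod_i\alpha^{t_i-h_i}$ does not record which boundary elements $t_i$ lie in $\CS$, so a segment-by-segment sum over right-open pieces would overcount; the remedy is the left-open reindexing, under which each $t_i$ belongs to exactly one piece and the per-segment sum counts everything once. I will need to set this up carefully and verify the characterization of $h_i$ stated above. A secondary point is to confirm that the head/tail split is genuinely clean — no $h_i$ with $i\le d$ depends on $\CS''$, and neither $h_{d+1}$ nor the case distinction depends on $\CS'$ — but this is immediate from the interval structure. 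Everything else is elementary geometric summation.
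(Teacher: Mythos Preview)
Your proposal is correct and takes essentially the same approach as the paper: both start from the per-subset bound of Lemma~\ref{lem:genericbranch}, separate the contribution of the tail piece $\CS\cap(\tmax,n]$ (bounding its sum by $\frac{1}{1-2\alpha}$ via grouping by the minimum), and handle the head piece by the left-open reindexing $\CS'_i\subseteq(t_{i-1},t_i]$ followed by the per-segment geometric sum giving $\frac{2-2\alpha}{1-2\alpha}$. Your presentation is slightly cleaner in that you make the head/tail factorization explicit at the outset, whereas the paper carries a unified $(d{+}1)$-segment decomposition and only isolates the tail at the end, but the underlying computations are identical.
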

\begin{proof}
  Let $b_{\CS_{d+1}} = \abs{f(1)-f(0)}\alpha^{n-h_{d+1}}$ if  $\max \CS > \max \CT$, and $b_{\CS_{d+1}}=1$ otherwise. Using the upper bound in Lemma \ref{lem:genericbranch}  and by decomposing the L1 summation over subsets on each segment we have 
  \begin{align*}
    \sum_\CS |b'_{\CS, \CT}| \leq       \sum_{i=1}^{d+1} \sum_{\CS_i \subseteq \CS\cap [t_{i-1}, t_i)}  b_{\CS_{d+1}} \prod_{i=1}^{d} \alpha^{t_i-h_i}.
  \end{align*}
  We next change the summation ranges as follows
  \begin{align*}
    \sum_{i=1}^{d+1} \sum_{\CS'_i \subseteq (t_{i-1}, t_i]}  b_{\CS_{d+1}} \prod_{i=1}^{d} \alpha^{t_i-h_i}.
  \end{align*}
  Note that $\CS_i'\setminus \CT = \CS_i\setminus \CT$. Therefore, $h_i$ only depends on the $i$th summation as it is a function of $\CS'_i \setminus \CT$. Therefore, the summations and the product are exchangeable because each summation on $\CS_i$ runs over independent terms. The L1 norm is upper bounded by the following
  \begin{align*}
    \qty\Big(\sum_{\CS'_{d+1} \subseteq (t_{d}, n]} b_{\CS_{d+1}})\prod_{i=1}^{d} \qty\Big(\sum_{\CS'_i \subseteq (t_{i-1}, t_i]} \alpha^{t_i-h_i}),  
  \end{align*}
 When $h_i=t_i$, then $\CS'_i=\{t_{i}\}$ or is empty. Therefore, by separating this case, and conditioning on the value of $h_i$, the $i$th summation equals
  \begin{align*}
    \sum_{\CS'_i \subseteq (t_{i-1}, t_i]} \alpha^{t_i-h_i} & = 2+ \sum_{k = t_{i-1}+1}^{t_i-1} \sum_{\substack{\CA \subseteq (k, t_i]}}\alpha^{t_i-k}\\
    &\stackrel{(a)}{=} 2+\sum_{k = t_{i-1}+1}^{t_i-1} (2\alpha)^{t_i-k}\\
    & = 2+\sum_{k'=1}^{t_i-t_{i-1}-1}(2\alpha)^{k'}\\
    & \stackrel{(b)}{\leq} 2+\frac{2\alpha}{1-2\alpha}=\frac{2-2\alpha}{1-2\alpha},
  \end{align*}
where (a) holds by counting the number of subsets of $(k, t_i]$ and (b) follows by increasing the upper range of the geometric sum to $\infty$ and the assumption $2\alpha<1$.  

As for the summation over  $\CS'_{d+1}$, if $t_d=n$, then it is equal 1. Otherwise, if $t_d<n$, separating the two cases of whether $\CS'_{d+1}$ is empty or not leads to the following
\begin{align*}
  1 + \abs{f(1)-f(0)} \sum_{\substack{\CS'_{d+1}\subseteq (t_d, n]\\ \CS_{d+1}\setminus \CT\neq \emptyset}} \alpha^{n-h_{d+1}}.
\end{align*}
Note that $h_{d+1}\leq n$ when  $\CS'_{d+1} \neq \emptyset$. %
By conditioning on the value of $h_{d+1}=k$,
the summation equals 
\begin{align}\nonumber
  \sum_{k = t_{d}+1}^{n} \sum_{\substack{\CA \subseteq (k, n]}}\alpha^{n-k}  = \sum_{k = t_{d}+1}^{n}(2\alpha)^{n-k} =  \sum_{k'=0}^{n-t_{d}-1} \qty\big(2\alpha)^{k'} \leq \frac{1}{1-2\alpha},
\end{align}
where we used the fact that the last summation is a geometric sum with the base  $2\alpha\leq 1$. 

Putting all the arguments together, the L1 bound is 
\begin{align*}
  \sum_\CS \abs{b'_{\CS, \CT}}\leq  \qty(1+\frac{\abs{f(1)-f(0)}}{1-2\alpha}) \qty(\frac{2-2\alpha}{1-2\alpha})^d. 
\end{align*}
\end{proof}

\begin{figure}
\centering 
\includegraphics[scale=1]{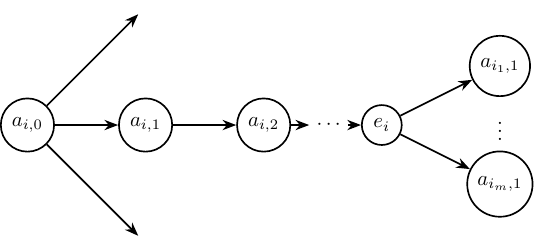}
\caption{A generic branch inside a tree BN. The branch is the chain of the nodes $a_{i,1},\cdots, e_i$.}
\label{fig:tree}
\end{figure}

\begin{theorem}\label{thm:tree2}
Under any $\alpha$-difference bounded tree BN,  the spectral norm of  any conjunction $f$ with $d$ literals is bounded by 
$\L1(f) \leq \qty(\frac{2-2\alpha}{1-2\alpha})^{2d}$.
\end{theorem}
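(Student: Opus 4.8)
The plan is to split the tree BN into \emph{generic branches}, apply Lemma~\ref{lem:genericbranch} and Lemma~\ref{lem:L1genericbranch} to each branch, multiply the per-branch bounds, and control the resulting product by a counting argument on the tree; I assume $\CT\neq\emptyset$, since $\CT=\emptyset$ gives the constant function with $\L1(f)=1$. First I would root the tree, prepend an auxiliary root $0$ with $P(X_{\mathrm{root}}\mid X_0)=P(X_{\mathrm{root}})$ (which changes no conditional distribution and preserves $\alpha$-difference boundedness), and let $\CR$ be the minimal subtree spanning $\{0\}$ and $\CT$, i.e.\ $\{0\}$ together with $\CT$ and all ancestors of $\CT$-nodes. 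I claim $\fS=0$ unless $\CS\subseteq\CR$: if $w\in\CS\setminus\CR$ then the subtree of $w$ contains no $\CT$-node, so picking $v\in\CS$ inside that subtree with no proper descendant in $\CS$ we get $v\notin\CT$ and every descendant of $v$ outside $\CS\cup\CT$, whence in the iterated expectation for $\fS$ all nodes strictly below $v$ contribute a factor $1$ (Lemma~\ref{lem:exp of Z}) while $Z_v=\Phi_v$ and $\EE[\Phi_v\mid X_{\pa(v)}]=0$ by Lemma~\ref{lem:pS properties}(a), which zeros the whole expectation. Hence $\L1(f)=\sum_{\CS\subseteq\CR}|\fS|$, and nodes outside $\CR$ (which always contribute factors $1$) may be ignored.

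Next I would call a node of $\CR$ a \emph{branch point} if it is $0$, a leaf of $\CR$, or has $\ge 2$ children in $\CR$, and let a \emph{generic branch} $B$ be a maximal path between two consecutive branch points, with top $u_B$ (toward the root) and bottom $v_B$, including $v_B$ but not $u_B$, so the branches partition $\CR\setminus\{0\}$; interior nodes are degree-$2$ and may lie in $\CT$ (Figure~\ref{fig:tree}). Processing $\CR$ from the leaves upward, a branch whose bottom is a leaf has no node below $v_B$, so its input in the sense of Lemma~\ref{lem:genericbranch} (with $X_n\leftarrow v_B$) is the constant $f_B\equiv 1$; inductively, when the branches hanging below a branch point $v$ have outputs $b'h$ with $h$ bounded single-sided and $|b'|\le 1$, their product over the children of $v$ is a bounded single-sided function $f_B$ of $X_v$ with $\max_x|f_B(x)|\le 1$ and $|f_B(1)-f_B(0)|\le 1$, and Lemma~\ref{lem:genericbranch} turns it into the output $b'_B\,g_B(X_{u_B})$ of the branch above $v$, with $g_B$ bounded single-sided and $|b'_B|\le 1$ (a product of $A'$ and $D'$ terms of absolute value $\le 1$ times a single factor, either some $\EE[f_B(X_{v_B})\mid\cdot]$ or $f_B(1)-f_B(0)$, which is $\le 1$). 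Carrying this up to $0$ yields $|\fS|\le\prod_B|b'_B|\le\prod_B\beta_B(\CS\cap B)$, where $\beta_B$ is the bound of Lemma~\ref{lem:genericbranch} with the unique $f_B$-dependent factor replaced by its worst case $1$, hence a function of $\CS\cap B$ only; since the branches partition $\CR\setminus\{0\}$ the sum over $\CS$ factors and Lemma~\ref{lem:L1genericbranch} (used with a single-sided $f$ realizing the worst case) gives
\begin{align*}
\L1(f)=\sum_{\CS\subseteq\CR}|\fS|\ \le\ \prod_{B}\Big(\textstyle\sum_{\CS\cap B}\beta_B(\CS\cap B)\Big)\ \le\ \prod_{B}\Big(1+\tfrac{\delta_B}{1-2\alpha}\Big)\Big(\tfrac{2-2\alpha}{1-2\alpha}\Big)^{|\CT\cap B|},
\end{align*}
with $\delta_B=0$ for leaf branches (where $f_B\equiv 1$) and $0\le\delta_B\le 1$ otherwise.

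It then remains a counting argument. Each $\CT$-node lies in exactly one branch, so $\sum_B|\CT\cap B|=|\CT|=d$. Writing $L'\le d$ for the number of leaves of $\CR$ that lie in $\CT$, every leaf of $\CR$ is a terminal and $0$ is always a leaf, so $\CR$ has exactly $1+L'$ leaves, hence at most $L'-1$ nodes of degree $\ge 3$, hence at most $2L'$ branch points and $\le 2L'-1$ branches, of which exactly $L'$ are leaf branches; therefore at most $L'-1\le d-1$ branches have $\delta_B>0$, so
\begin{align*}
\L1(f)\ \le\ \Big(\tfrac{2-2\alpha}{1-2\alpha}\Big)^{\,\sum_B|\CT\cap B|\,+\,(d-1)}\ \le\ \Big(\tfrac{2-2\alpha}{1-2\alpha}\Big)^{2d-1}\ \le\ \Big(\tfrac{2-2\alpha}{1-2\alpha}\Big)^{2d},
\end{align*}
using $\tfrac{2-2\alpha}{1-2\alpha}\ge 1$. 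I expect the main obstacle to be the factorization in the middle step: one must check that Lemma~\ref{lem:genericbranch} indeed outputs a \emph{bounded single-sided} function, so that the bounds $\max_x|f_B(x)|\le 1$ and $|f_B(1)-f_B(0)|\le 1$ on the inputs propagate through every branch point up the tree, and that the single $f_B$-dependent factor inside $b'_B$ can be absorbed into a quantity depending only on $\CS\cap B$ — this is precisely what turns $\sum_\CS$ into a product of per-branch sums. The bookkeeping (leaf branches cost no prefactor, and $\CR$ has $\le d-1$ branching nodes) then delivers the exponent $2d$.
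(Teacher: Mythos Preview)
Your proposal is correct and follows essentially the same approach as the paper: decompose the tree into chain-branches between branch points, apply Lemma~\ref{lem:genericbranch} inductively from the leaves to propagate bounded single-sided functions upward (with products at branch points staying bounded single-sided, as in the paper's Lemma~\ref{lem:product bounded singlesided}), factor $\sum_\CS|\fS|$ over branches, bound each factor via Lemma~\ref{lem:L1genericbranch}, and finish with a tree-counting argument on non-leaf branches. The only cosmetic differences are that you prune to the minimal spanning subtree $\CR$ up front (the paper does this implicitly by showing $\CT$-free leaf branches contribute $1$) and your branch-point definition does not force the original root to be one, which makes your count of non-leaf branches $\le d-1$ rather than the paper's $\le d$; both yield the stated bound.
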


\begin{proof}
The proof builds upon the argument for chains by viewing the  BN tree as a collection of connected branches (chains). More precisely, a branch is a set of nodes that form a chain starting from an expansion node and ending with a leaf or another expansion node. Figure \ref{fig:tree} shows a generic branch with nodes $a_{i,1}, a_{i,2}, \cdots, e_i$ in a generic tree. Notice that the parent of the branch $a_{i,0}$ is not counted toward the branch. To avoid double counting of nodes, we exclude the top expansion node from each branch. 

Let $X_1$ be the root node in the tree. We treat $X_1$ as an expansion point even if it is connected to only one branch. For that we add an auxiliary node $X_0$ that is independent of every other node and create $\ClC_0$  the auxiliary root branch $X_0\rightarrow X_1$. We will use this branch for presentation convenience.  Let $\ClC_0, \cdots, \ClC_k$ be all the branches of the tree that are topologically ordered with $\ClC_0$ at the root.    Let $$Z_{\ClC_i}:= \prod_{j\in \ClC_i}Z_j$$ be the product of the $Z$-variables belonging to the $i$th branch.  Also let  $Z_{\ClC_0}=Z_1$. Then, the Fourier coefficient $\fS$ is written as the iterative expectations over each branch: 
\begin{align*}
\fS = \EE_{\sim \ClC_0}\qty[Z_1 \EE_{\sim \ClC_1}\qty[ Z_{\ClC_1} \cdots \EE_{\sim \ClC_k}\qty[ Z_{\ClC_k}  \Big | X_{\pa(\ClC_{k})}] \cdots \Big | X_1] ],
\end{align*}
where the subscript $\sim \ClC_i$ is to emphasize that  the expectation is taken over all $X$-variables in the branch.

We proceed by induction in a reverse topological order starting from leaf branches to the root branch $\ClC_0$,
showing that each branch 
contributes a constant $b'_{\ClC_i}$ and outputs an $A'$-type term to the parent branch.

Any leaf branch $\ClC_i$ starts from an expansion point $a_{i, 0}$ (or the root in the trivial case) and ends with a leaf. Conditioned on its head node, it is independent of the other branches. We use Lemma \ref{lem:genericbranch} with $\CS \leftarrow \CS\cap \ClC_i$ and $\CT \leftarrow \CT\cap \ClC_i$ and the constant function $f(x)=1$ for $x=0,1$.
If $\max \CS\cap \ClC_i > \CT\cap \ClC_i$,  the contribution is zero and consequently $\fS=0$. 
Therefore, $\fS=0$ when there is a $\CS$ node after all $\CT$ nodes in a leaf branch. 
We can therefore ignore such sets $\CS$ in the calculation of the L1 spectrum.
In addition, if a leaf branch does not include any $\CT$ node then it cannot include any $\CS$ node, because otherwise $\fS=0$. Moreover, if such branches do not have any $\CS$ or $\CT$ nodes, then the expectation is 1. 
We can therefore ignore such branches in $G$ in the computation of the L1 spectrum.
Assuming these do not occur,
 from Lemma \ref{lem:genericbranch}, the contribution of the  leaf branch $\ClC_i$ is 
 \begin{align}\label{eq:branch contribution 1}
 \EE_{\sim \ClC_i}\qty[Z_{\ClC_i} \Big | X_{\pa(\ClC_{i})}] &= b'_{\ClC_i} g_k(X_{\pa(\ClC_{i})}),
 \end{align}
where $g_i$ is a bounded single-sided function and $b'_{\ClC_i}$ a constant bounded as in the lemma. 
Now, since $g_i$ is  bounded single-sided then it behaves as an $A'$ term, because we can write 
\[
  g_i(x) = g_i(0) + x (g_i(1)-g_i(0)),\qquad x=0,1.
\]
Therefore,  a leaf branch contributes a constant $b'_{\ClC_i}$ and outputs an $A'$-type term to the parent branch.
For the inductive step, assume the claim holds for all descendants of  branch $\ClC_i$. Because the child-branches are independent of each other conditioned on the parent, their $A'$ terms will be multiplied to create the input to the parent branch.  
Let $\ClC_{i_1}, \cdots, \ClC_{i_{m_i}}$ be the child branches of $\ClC_i$, where $m_i$ is the number of child branches.
The contribution of each child-branch $\ClC_{i_ j}$ is  $b'_{\ClC_{i_j}} g_{i_j}(X_{\pa(\ClC_{i_j})})$. %
The constants $b'$ can be taken out of the expectation. Each $g_{i_j}$ is a function of the last element in $\ClC_i$, which is an expansion node denoted by $e_i$.  
Ignoring the $b$'s from the child branches, the contribution of $\ClC_i$ is calculated as
\begin{align}\label{eq:branch C_i}
   \EE_{\sim \ClC_i}\qty\bigg[Z_{\ClC_i}   \prod_j  g_{i_j}(X_{e_i})      \Big | X_{\pa(\ClC_{i})}]. 
\end{align}
 We need the following lemma.
\begin{lemma}\label{lem:product bounded singlesided}
  A finite product $\prod_i f_i$ of bounded single-sided functions  is bounded single-sided.
\end{lemma}
\begin{proof}
  It suffices to prove the statement for the product of two bounded single-sided functions, say $f_1 f_2$. The lemma follows from an inductive argument over the number of terms in the product.  Let $h(x)=f_1(x)f_2(x)$ for $x=0,1$. Clearly $\abs{h(x)}=\abs{f_1(x)}\abs{f_2(x)}\leq 1$. Moreover, $h(0)h(1) = f_1(0)f_1(1) \times f_2(0)f_2(1)$. This quantity is non-negative as $f_1$ and $f_2$ are single-sided. Hence, $h$ is bounded single-sided. 
\end{proof}

Therefore, %
the branch $\ClC_i$ takes as input a single-sided function $f = \prod_jg_{i_j}$ which is an $A'$-type term. Then, using Lemma \ref{lem:genericbranch}, this branch produces   a bounded single-sided function $g_i(X_{\pa(\ClC_{i})})$ multiplied by a constant $b'_{\ClC_i}$. With that the induction is established meaning that any branch $\ClC_i$ contributes the constant $b'_{\ClC_i}$ given as in Lemma \ref{lem:genericbranch}. %
 Lastly, the contribution of the auxiliary root branch $X_0\rightarrow X_1$ is $b'_{\ClC_0}g_0(X_0)$ for some bounded single-sided function $g_0$.
Since $X_0$ is independent of all other nodes, then we can replace $g_0(X_0)$ with $g_0(0)$. 

As a result, the expression for $\fS$ is calculated as
\begin{align}\label{eq:final coef tree}
\fS &=\prod_{i=0}^k b'_{\ClC_i}~g_{0}(0).
\end{align}
Next, we establish the L1 bound using  the analysis of Lemma \ref{lem:L1genericbranch}. Since, $\abs{g_0(0)}\leq 1$, we have 
\begin{align*}
\sum_{\CS} \abs{\fS} &\leq  \sum_{\CS} \prod_{i=0}^k \abs{b'_{\ClC_i}}\\
&=  \sum_{\CS\cap \ClC_0} \cdots \sum_{\CS\cap \ClC_k} \prod_{i=0}^k \abs{b'_{\ClC_i}}\\
&\stackrel{(a)}{=}\prod_{i=0}^k \qty\Big(\sum_{\CS\cap \ClC_i} \abs{b'_{\ClC_i}}),
\end{align*}
where for (a) we interchanged the summations with the product as they depend on disjoint sets of variables.
Next, we analyze each summation in the above equation. For each leaf branch,  we use  Lemma \ref{lem:L1genericbranch} with  $f=1$ being the constant function and derive the following L1 bound 
\begin{align*}
  \sum_{\CS\cap \ClC_i} \abs{b'_{\ClC_i}} \leq  \qty(\frac{2-2\alpha}{1-2\alpha})^{|\CT\cap \ClC_i|}.
\end{align*}
For any other branch,  from Lemma \ref{lem:L1genericbranch} each summation is bounded by  
\begin{align*}
  \sum_{\CS\cap \ClC_i} \abs{b'_{\ClC_i}} \leq    \qty(\frac{2-2\alpha}{1-2\alpha})^{|\CT\cap \ClC_i|+1},
\end{align*}
where we used the fact that $1+\frac{\abs{f(1)-f(0)}}{1-2\alpha} \leq \frac{2-2\alpha}{1-2\alpha}$. This is  because $|f(1)-f(0)|\leq 1$ as  $f$ is bounded and single-sided.
Note that these branches contribute an additional $\frac{2-2\alpha}{1-2\alpha}$ compared to the leaf branches. 
Combining these bounds we get the following L1 bound for the tree
 \begin{align*}
  \sum_{\CS} \abs{\fS} &\leq \prod_{i:~ \text{leaf}~ \ClC_i} \qty(\frac{2-2\alpha}{1-2\alpha})^{|\CT\cap \ClC_i|} \prod_{i: ~\text{non leaf}~ \ClC_i} \qty(\frac{2-2\alpha}{1-2\alpha})^{|\CT\cap \ClC_i|+1} \\
  & = \qty(\frac{2-2\alpha}{1-2\alpha})^{|\CT|+E},
 \end{align*}
where $E$ is the number of non-leaf branches. %

Finally, recall that we can ignore leaf branches without any $\CT$ nodes.
Hence the number of leaf branches is $\leq d$ and since we have a tree the number of non-leaf branches (excluding $\ClC_0$) is at most $d-1$
and $E\leq d$.
As a result, the final expression for the L1 bound is 
 \begin{align*}
  \sum_{\CS} \abs{\fS}\leq \qty(\frac{2-2\alpha}{1-2\alpha})^{2d}.
 \end{align*}
\end{proof}

The result of  Theorem \ref{thm:tree2} is readily extendable to forests. 
\begin{corollary}
  Under any $\alpha$-difference bounded forest BN, the spectral norm of any conjunction $f$ with $d$ literals is bounded by 
$\L1(f) \leq \qty(\frac{2-2\alpha}{1-2\alpha})^{2d}$.
\end{corollary}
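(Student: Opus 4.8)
The plan is to deduce the forest case from the tree case of Theorem~\ref{thm:tree2}, using that a forest BN is simply a product of independent tree BNs. Write the forest as a disjoint union of tree BNs $G_1,\ldots,G_m$ on disjoint variable blocks $V_1,\ldots,V_m$, so that the distribution factors as $D=\prod_{j=1}^m D_j$ where $D_j$ is the tree BN on $V_j$. Split the conjunction accordingly as $f=\bigwedge_{j=1}^m f_j$, where $f_j$ collects the $d_j$ literals whose variables lie in $V_j$, so $\sum_{j=1}^m d_j=d$. Because the BN-induced basis function $\phi_v$ depends only on $x_v$ and $x_{\pa(v)}$, which lie in a single block, the basis also factors: $\phi_\CS(\bfx)=\prod_{j=1}^m \phi^{(j)}_{\CS\cap V_j}(\bfx_{V_j})$ for every $\CS\subseteq[n]$, with $\phi^{(j)}$ the basis of $G_j$.

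Combining the factorizations of $f$, of the basis, and of $D$ (independence across blocks), the Fourier coefficient factors as $\fS=\prod_{j=1}^m \widehat{f_j}_{\,\CS\cap V_j}$, with $\widehat{f_j}$ computed under $D_j$. Summing absolute values and using that $\CS\mapsto(\CS\cap V_1,\ldots,\CS\cap V_m)$ is a bijection onto $2^{V_1}\times\cdots\times 2^{V_m}$ gives $\L1(f)=\sum_{\CS}|\fS|=\prod_{j=1}^m\bigl(\sum_{\CS_j\subseteq V_j}|\widehat{f_j}_{\CS_j}|\bigr)=\prod_{j=1}^m\L1(f_j)$. Each $G_j$ is itself an $\alpha$-difference bounded tree BN, so Theorem~\ref{thm:tree2} gives $\L1(f_j)\le\bigl(\frac{2-2\alpha}{1-2\alpha}\bigr)^{2d_j}$, hence $\L1(f)\le\prod_{j=1}^m\bigl(\frac{2-2\alpha}{1-2\alpha}\bigr)^{2d_j}=\bigl(\frac{2-2\alpha}{1-2\alpha}\bigr)^{2d}$.

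An equivalent route, which I would mention in passing, is to physically turn the forest into a tree: add a fresh auxiliary root $X_0$ that is independent of all other variables, with an edge to each root $r_j$ of $G_j$ and $P(X_{r_j}\mid X_0)=P(X_{r_j})$. The result is a tree BN on $\{0\}\cup V$, still $\alpha$-difference bounded (the new edges carry zero $\mu$- and $\sigma$-difference), in which $f$ remains a $d$-literal conjunction not mentioning $X_0$ and every original $\phi_v$ is unchanged. Since $X_0$ is independent and no $\phi_v$ for $v\ne 0$ depends on $x_0$, any coefficient $\fS$ with $0\in\CS$ factors out a $\phi_0$ with $\EE[\phi_0(X_0)]=0$ and therefore vanishes, while the other coefficients coincide with those of the forest BN; Theorem~\ref{thm:tree2} then yields the same bound. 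Either way there is essentially no obstacle: the whole content is the bookkeeping that the relevant structure is preserved — product form in the first approach, tree structure and $\alpha$-difference boundedness in the second — which is immediate from the definitions and from facts already established in the proof of Theorem~\ref{thm:tree2}.
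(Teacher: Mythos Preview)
Your proposal is correct. Your secondary route---adding an auxiliary root $X_0$ connected to all forest roots---is precisely the paper's approach; the paper phrases it as re-running the analysis of Theorem~\ref{thm:tree2} and noting that the final $g_0(0)$ term becomes $\prod_j g_{r_j}(0)$, still bounded single-sided, whereas you (slightly more cleanly) observe that the augmented structure is literally a tree so the theorem applies as a black box, after checking that coefficients involving $0$ vanish and the rest are unchanged.

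Your primary route---factoring the forest BN as a product of independent tree BNs, so that the basis, the conjunction, and hence each Fourier coefficient factor across blocks, giving $\L1(f)=\prod_j \L1(f_j)$---is a genuinely different and more elementary argument that the paper does not use. It has the advantage of being completely self-contained once Theorem~\ref{thm:tree2} is in hand, with no need to revisit the internals of that proof or to introduce an auxiliary variable; it also makes transparent why the exponent is additive in $d=\sum_j d_j$. The paper's auxiliary-root approach, by contrast, keeps everything inside a single tree and reuses the branch-by-branch machinery already built, which is convenient if one wants to track constants more carefully or handle variants where the trees are not fully independent. Either argument is fine here.
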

\begin{proof}
We can convert the forest to a tree by adding an auxiliary node $X_0$, independent of every other node, and connecting it to all the roots $r_j$ in the forest.  
We can apply the analysis in Theorem \ref{thm:tree2} on this tree and the only difference is that, in the final form of the coefficient in \eqref{eq:final coef tree}, 
$g_{0}(0)$ is replaced with $\tilde{g}_0(0)=\prod_j g_{r_1}(0)$ which is also a bounded single sided function.
Therefore the same bound applies to the forest.
\end{proof}

\section{Lower Bounds on the Spectral Norm}\label{sec:lb}

Our upper bounds for chains and trees require $D_\mu+D_\sigma<1$. In this section we show that this is necessary. 
Without this condition, even for chains the spectral norm of a single literal can be exponentially large.
In addition, for general graphs, the L1 can be exponentially large even when $D_\mu+D_\sigma<1$.

\subsection{Lower Bounds for Chains}

 \begin{lemma}[Lower bound for chains]
Consider a chain BN with $n+2$ variables $X_0,\ldots,X_{n+1}$ and the function $f=X_{n+1}$.
Then there exist distributions instantiating this structure where all nodes share the same conditional distribution table,
with
$D_\mu=|\mu_{i,1}- \mu_{i,0}|>0$, $D_\sigma=|\sigma_{i,1}- \sigma_{i,0}|>0$ 
and $\L1(f) =\Omega((D_\mu + D_\sigma)^n)$.
\end{lemma}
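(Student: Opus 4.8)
The plan is to hard-wire a single conditional probability table along the entire chain, observe that because the generalized basis is \emph{not} sparse a family of $2^{n+1}$ Fourier coefficients does not vanish, and then add their magnitudes by the binomial theorem; this already forces $\L1(f)=\Omega((D_\mu+D_\sigma)^n)$.

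Concretely, I would first re-index the path as nodes $1,\dots,N$ with $N=n+2$, so that node $1$ is the root $X_0$ and node $N$ is $X_{n+1}$, and — exactly as in the proof of Lemma~\ref{lem:Fourier exact} — place an auxiliary node $0$ above node $1$ that is independent of everything. Fix $p_0,p_1\in(0,1)$ with $p_0\neq p_1$ and $p_0+p_1\neq1$, set $P(X_i=1\mid X_{i-1}=b)=p_b$ for all $i\ge1$ (the common table), and let $X_0\sim\mathrm{Bernoulli}(\tfrac12)$. Writing $\sigma(p)=\sqrt{p(1-p)}$, this choice gives $D_\mu=\abs{p_1-p_0}>0$ and $D_\sigma=\abs{\sigma(p_1)-\sigma(p_0)}>0$, the latter being strictly positive precisely because $\sigma(p_0)=\sigma(p_1)$ would force $p_1\in\{p_0,\,1-p_0\}$, both of which are excluded; this is the one place where the hypothesis $p_0+p_1\neq1$ is used.

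Next I would apply Lemma~\ref{lem:Fourier exact} to $f=X_{n+1}$, which is a single positive literal, so $\CT=\CT_1=\{N\}$, $d=\abs{\CT}=1$, $t_0=0$ and $t_1=N$. For every $\CS=\{1\}\cup\CA$ with $\CA\subseteq\{2,\dots,N\}$ we have $\max\CS\le\max\CT$, and since $1\in\CS\setminus\CT$ the smallest index of $\CS\setminus\CT$ is $h_1=1$, so formula \eqref{eq:fS_alternative} collapses to
\[
\fS \;=\; \D'_{(2,N)}\,\A'_{(1,1)}(0)\;=\;A_1(0)\prod_{\ell=2}^{N}D_\ell .
\]
Here $A_1(0)=\tfrac12$, the standard deviation of $X_0$ (node $1$ lies in $\CS\setminus\CT$ and has a trivial parent), and by Definition~\ref{def:A} each $D_\ell=A_\ell(1)-A_\ell(0)$ has magnitude $D_\sigma$ when $\ell\in\CA$ (node $\ell$ is then in $\CS$) and magnitude $D_\mu$ otherwise — including the endpoint $\ell=N\in\CT_1$, which contributes $D_\sigma$ or $D_\mu$ according to whether $N\in\CA$ — and these values are identical at every $\ell$ since all internal tables coincide. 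Hence $\abs{\fS}=\tfrac12\,D_\sigma^{\abs{\CA}}D_\mu^{n+1-\abs{\CA}}$, and crucially none of these coefficients is $0$.

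Summing absolute values over this sub-family alone and grouping by $\abs{\CA}$ (with $\abs{\{2,\dots,N\}}=N-1=n+1$),
\[
\L1(f)\;\ge\;\sum_{\CA\subseteq\{2,\dots,N\}}\abs{\hat f_{\{1\}\cup\CA}}
\;=\;\tfrac12\sum_{r=0}^{n+1}\binom{n+1}{r}D_\sigma^{r}D_\mu^{n+1-r}
\;=\;\tfrac12\,(D_\mu+D_\sigma)^{n+1}\;=\;\Omega\!\big((D_\mu+D_\sigma)^n\big),
\]
which is the claim; equivalently one could invoke the exact spectral-norm identity \eqref{eq:spectrumExact} and retain only its $k=1$ summand. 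There is no genuine obstacle here — the work is purely organizational: matching the auxiliary-node/indexing convention of Lemma~\ref{lem:Fourier exact}, checking that the prefactor $A_1(0)$ is a fixed positive constant, and noting that $p_0+p_1\neq1$ is exactly what prevents $D_\sigma$ from collapsing. The same construction, now tuned so that $D_\mu+D_\sigma>1$ (e.g.\ $p_0\to0$ and $p_1\approx\tfrac12+\tfrac{\sqrt2}{4}$, for which $D_\mu+D_\sigma\to\tfrac12+\tfrac{\sqrt2}{4}+\sigma\!\big(\tfrac12+\tfrac{\sqrt2}{4}\big)\approx1.21$), is what drives the exponential chain lower bound stated next.
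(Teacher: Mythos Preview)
Your proof is correct and follows essentially the same approach as the paper: restrict to the family of coefficient sets that contain the root node, invoke Lemma~\ref{lem:Fourier exact} so that each such $\fS$ becomes $\sigma_0\prod_\ell D_\ell$, and then sum the magnitudes via the binomial theorem. The only cosmetic differences are that the paper's family ${\cal S}_0$ additionally excludes $X_{n+1}$ from $\CS$ (giving $\sigma_0\,D_\mu(D_\mu+D_\sigma)^n$ rather than your $\tfrac12(D_\mu+D_\sigma)^{n+1}$), and that you make the root marginal and the condition $p_0+p_1\neq1$ ensuring $D_\sigma>0$ explicit, which the paper leaves implicit.
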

\begin{proof}
By Lemma~\ref{lem:Fourier exact} for each $S$, $\fS$ is a product of one $A'$ term and a number of $D$ terms. 
Let ${\cal S}_0$ be the set of subsets of the variables that include $X_0$ and exclude $X_{n+1}$.
Recall that when $X_0$ is in $S$ all nodes above index zero contribute a $D_i$ term to $\fS$ and the $A$ term is the one associated with node 0.
We have that $A'_0=\sigma_{0}$ because $X_0$ is in $S$ and $|D_{n+1}|=D_\mu$ because $X_{n+1}$ is excluded from $S$. 
For $1\leq i \leq n$,
the value of $|D_i|$ is $D_\sigma$ when $i\in S$ and $D_\mu$ otherwise. 
We therefore have:
\begin{align*}
\L1(f) %
 & > \sum_{S \in {\cal S}_0} |\fS| \\
 & = \sigma_{0} D_\mu \sum_{k=1}^n {\binom{n}{k}} D_\mu^k D_\sigma^{n-k} \\
 & =  \sigma_{0} D_\mu (D_\mu + D_\sigma)^n.
\end{align*}
\end{proof}

As illustrated in the next corollary, this yields an exponential lower bound for a range of probability models
including ones where $D_\mu$ is arbitrarily close to 0.5 (although the latter requires a small value of $c$):

\begin{corollary}
\label{cor:LBchain}
(1)
Choosing 
$\mu_1=c$ , $\mu_2=0.5+c+\alpha$, where $c=0.00001$ and $\alpha=0.353$ (with $D_\mu=0.853$) yields $D_\mu+D_\sigma>1.2$
and $\L1(f) =\Omega(1.2^n)$.
\\
(2)
For any $c\leq 0.0246$, choosing
$\mu_1=c$ , $\mu_2=0.5+c+\alpha$ where $\alpha=2\sqrt{c}$ (with $D_\mu=0.5+2\sqrt{c}$) yields $D_\mu + D_\sigma>1+c$
and $\L1(f) =\Omega((1+c)^n)$.
\end{corollary}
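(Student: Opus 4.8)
The plan is to derive both parts directly from the chain lower bound proved just above, which already does all the real work: for any chain whose nodes share a common conditional table with $D_\mu>0$ and $D_\sigma>0$, it gives $\L1(f) > \sigma_0 D_\mu (D_\mu+D_\sigma)^n$, where $\sigma_0=\sqrt{\mu_0(1-\mu_0)}>0$ is the standard deviation of the (freely chosen) root marginal, say $P(X_0{=}1)=\mu_{i,0}$. Since $\sigma_0 D_\mu$ is a fixed positive constant, all that will remain in each part is: (i) to check that the prescribed table is a legitimate, indeed $c$-bounded, distribution, i.e.\ that every conditional probability lies in $[c,1-c]\subseteq(0,1)$; and (ii) to show that the constant $D_\mu+D_\sigma$ strictly exceeds the claimed base of the exponential, after which $\Omega(\cdot)$ is immediate. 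Writing $\mu_1:=\mu_{i,0}$, $\mu_2:=\mu_{i,1}$ for the common table and using $\sigma_{i,x}=\sqrt{\mu_{i,x}(1-\mu_{i,x})}$, I would record $D_\mu=|\mu_2-\mu_1|$ and $D_\sigma=\bigl|\sqrt{\mu_2(1-\mu_2)}-\sqrt{\mu_1(1-\mu_1)}\bigr|$ as the two quantities to evaluate.

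For part (1) this is just a numerical substitution: with $\mu_1=c=10^{-5}$ and $\mu_2=0.5+c+0.353=0.85301$ both values lie strictly in $[c,1-c]$, so the chain is valid and $c$-bounded; then $D_\mu=\mu_2-\mu_1=0.853$, and plugging the $\mu$'s into the expression for $D_\sigma$ gives $D_\sigma>0.35$ (because $\sqrt{\mu_2(1-\mu_2)}>0.354$ while $\sqrt{\mu_1(1-\mu_1)}<0.004$), hence $D_\mu+D_\sigma>1.2$. Invoking the chain bound then yields $\L1(f)=\Omega(1.2^n)$.

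Part (2) is the only step that needs care, since the inequality $D_\mu+D_\sigma>1+c$ must hold uniformly for all $c\in(0,0.0246]$ rather than at a single point. I would substitute $u=\sqrt c$, so $\mu_1=u^2$ and $\mu_2=\tfrac12+u^2+2u$, check that $\mu_2<1$ (i.e.\ $u^2+2u<\tfrac12$) on the whole range so the distribution is $c$-bounded, note that $\mu_2>\tfrac12>\mu_1$ forces $\sigma_{i,1}\ge\sigma_{i,0}$ so that $D_\sigma=\sigma_{i,1}-\sigma_{i,0}$ with $\sigma_{i,1}=\sqrt{\tfrac14-(u^2+2u)^2}$, and then bound $\sigma_{i,0}=\sqrt{u^2(1-u^2)}<u$ to drop that term. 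It then suffices to prove $\sigma_{i,1}+u-u^2>\tfrac12$; both sides being nonnegative on the range, squaring reduces this to $\tfrac14-(u^2+2u)^2>(\tfrac12-u+u^2)^2$, which upon expansion collapses to the single polynomial inequality $6u+2u^2+2u^3<1$. As the left side is increasing in $u\ge0$, it is enough to verify it at the endpoint $u=\sqrt{0.0246}\approx0.1568$, where it evaluates to about $0.998<1$; hence $D_\mu+D_\sigma>1+c$ throughout and the chain bound gives $\L1(f)=\Omega((1+c)^n)$. The main obstacle is precisely this radical-to-polynomial reduction: one has to manipulate the square-root expressions in $u=\sqrt c$ carefully, verifying nonnegativity of everything being squared on $(0,0.0246]$, and then use monotonicity in $u$ so that a single endpoint check closes the argument — everything else is plugging the chosen tables into the lower bound already in hand.
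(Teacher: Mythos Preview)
Your proposal is correct and follows essentially the same route as the paper: both parts invoke the preceding chain lower bound and reduce to showing $D_\mu+D_\sigma$ exceeds the claimed base; for part (2) both the paper and you substitute $u=\sqrt{c}$, bound $\sigma_{i,0}=\sqrt{c(1-c)}<\sqrt{c}=u$, square, and arrive at exactly the same polynomial inequality $6u+2u^2+2u^3<1$, checked at the right endpoint. One small imprecision: the implication ``$\mu_2>\tfrac12>\mu_1$ forces $\sigma_{i,1}\ge\sigma_{i,0}$'' is not valid in general (take $\mu_1=0.49$, $\mu_2=0.9$); what you actually need is that $\mu_2$ is closer to $\tfrac12$ than $\mu_1$ is, i.e.\ $u^2+2u<\tfrac12-u^2$, which does hold on $(0,0.0246]$ and is easy to verify alongside your check that $\mu_2<1$.
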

\begin{proof}
For (1) we optimize 
$
D_\mu+D_\sigma=(0.5+\alpha) + (\sqrt{0.25-(\alpha+c)^2}-\sqrt{c(1-c))} 
$
w.r.t.\ $\alpha$ to obtain $(\alpha+c)=\sqrt{0.125}$.
Picking values to roughly match this, that is, $c=0.00001$ and $\alpha=0.353$ yields
$D_\mu+D_\sigma>1.203$.

For (2)
$
D_\mu+D_\sigma=(0.5+\alpha) + (\sqrt{0.25-(\alpha+c)^2}-\sqrt{c(1-c))} 
\geq 
(0.5+\alpha) + (\sqrt{0.25-(\alpha+c)^2}-\sqrt{c}).
$
Using $\alpha=2\sqrt{c}$ we have that
$
D_\mu+D_\sigma \geq 
0.5+\sqrt{c} + \sqrt{0.25-(2\sqrt{c}+c)^2}$.
This quantity is greater than $1+c$ when
$0.25-(2\sqrt{c}+c)^2 > (0.5+c-\sqrt{c})^2$ which is equivalent to 
$2c^2+2c\sqrt{c}+6c-\sqrt{c}<0$.
Writing $a=\sqrt{c}$ and noting that $c>0$ this holds when 
$2a^3+2a^2+6a-1<0$ which is true for $a\leq 0.157$ or $c\leq 0.0246$.
\end{proof}

\subsection{Lower Bounds for General Graphs}

We first provide a second lower bound for chains which is useful in constructing a lower bound for general graphs even when 
$D_\mu+D_\sigma <1$. Using the following observation, the lower bound is obtained on the sum of coefficients (without absolute values)
that we can calculate exactly.

\begin{observation}
$\L1(f) = \sum_S |\coeff{S}|\geq | \sum_S \coeff{S}|$. 
\end{observation}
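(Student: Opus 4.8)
The plan is simply to apply the triangle inequality over a finite index set. By Definition~\ref{def:spectral norm} the spectral norm is $\L1(f) = \sum_{\CS \subseteq [n]} |\fS|$, a sum of $2^n$ nonnegative terms, so there are no convergence issues. The finite triangle inequality (subadditivity of $|\cdot|$) then immediately gives $\big|\sum_{\CS} \fS\big| \leq \sum_{\CS} |\fS| = \L1(f)$, which is exactly the claim. There is no technical obstacle here: the statement is elementary and holds for any real-valued family of coefficients, not just the Fourier coefficients $\fS$; the only thing to note is that the sum is over a finite collection of subsets.

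Although trivial, the observation is the right tool for the lower bounds that follow, and the reason it is worth stating is the asymmetry between the signed and unsigned sums. The \emph{signed} sum $\sum_{\CS} \fS$ is typically easy to evaluate in closed form --- one can substitute the closed form of Lemma~\ref{lem:Fourier exact} and collapse the resulting geometric-type sums, or manipulate the Fourier expansion directly --- whereas computing $\sum_{\CS} |\fS|$ exactly would require tracking the signs of all coefficients and their cancellations. So the plan for the subsequent general-graph lower bound is: fix a distribution and a simple target $f$ (e.g.\ a single literal), compute $\sum_{\CS} \fS$ exactly, take its absolute value, and read off an exponential lower bound on $\L1(f)$ from that expression. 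The work, such as it is, lives entirely in that later exact computation of $\sum_{\CS}\fS$, not in this observation.
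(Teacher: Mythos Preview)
Your proof is correct: this is precisely the finite triangle inequality, and the paper treats the observation as immediate without giving any proof at all. Your additional commentary on why the signed sum is tractable while the unsigned sum is not accurately anticipates how the observation is used in the subsequent lower-bound arguments.
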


 \begin{lemma}[Exact sum of coefficients for a chain with a single $T$ segment]
 \label{lm:exactSumChain}
Consider a chain BN with $n$ variables $X_1,\ldots,X_{n}$ and the function $f=X_{n}$
and let 
$D_{k,\mu}=\mu_{k,1}- \mu_{k,0}$, $D_{k,\sigma}=\sigma_{k,1}- \sigma_{k,0}$
(note that $D_{k,\mu}$ and $D_{k,\sigma}$ may be positive or negative).
Then 
\begin{align}
\sum_S \coeff{S} = \sum_{k=1}^{n}(\mu_{k,0}+\sigma_{k,0})\prod_{\ell=k+1}^{n}(D_{\ell,\mu}+D_{\ell,\sigma}).
\end{align}
\end{lemma}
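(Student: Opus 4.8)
The plan is to bypass the explicit closed form of Lemma~\ref{lem:Fourier exact} and work directly with the iterated-expectation representation \eqref{eq:chain-f-hat-general} of $\fS$, exploiting the fact that each random variable $Z_i$ in \eqref{eq:Z_i} is a multilinear function of the indicator $\mathbf{1}_{\{i\in\CS\}}$. For $f=X_n$ we have $\CT=\CT_1=\{n\}$ and $\CT_0=\emptyset$, so $Z_i$ depends on $\CS$ only through whether $i\in\CS$: for $i<n$, $Z_i=1$ when $i\notin\CS$ and $Z_i=\Phi_i$ when $i\in\CS$; for $i=n$, $Z_n=X_n$ when $n\notin\CS$ and $Z_n=\Phi_nX_n$ when $n\in\CS$. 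Writing $\fS=\EE_{X_1}[Z_1\EE_{X_2}[Z_2\cdots\EE_{X_n}[Z_n\mid X_{n-1}]\cdots\mid X_1]]$, summing over $\CS$ amounts to summing each $Z_i$ independently over its two possible values; since conditional expectation is linear, I can push $\sum_\CS$ through all the nested expectations and replace each $Z_i$ by $W_i:=Z_i|_{i\notin\CS}+Z_i|_{i\in\CS}$, giving $W_i=1+\Phi_i$ for $i<n$, $W_n=(1+\Phi_n)X_n$, and $\sum_\CS\fS=\EE_{X_1}[W_1\EE_{X_2}[W_2\cdots\EE_{X_n}[W_n\mid X_{n-1}]\cdots\mid X_1]]$.

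Next I evaluate this nested expectation from the inside out, maintaining the invariant that after taking expectations over $X_n,\dots,X_{k+1}$ the result is an affine function $\alpha_k+\beta_kX_k$ of $X_k$. The base case uses $\EE[X_n\mid X_{n-1}]=\mu_{n,X_{n-1}}$ and Lemma~\ref{lm:phix} to get $\EE_{X_n}[W_n\mid X_{n-1}]=\mu_{n,X_{n-1}}+\sigma_{n,X_{n-1}}$, which by \eqref{eq: Ai to X_i} equals $(\mu_{n,0}+\sigma_{n,0})+X_{n-1}(D_{n,\mu}+D_{n,\sigma})$, i.e.\ $\alpha_{n-1}=\mu_{n,0}+\sigma_{n,0}$ and $\beta_{n-1}=D_{n,\mu}+D_{n,\sigma}$. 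For the inductive step, $\EE_{X_k}[(1+\Phi_k)(\alpha_k+\beta_kX_k)\mid X_{k-1}]=\alpha_k+\beta_k(\mu_{k,X_{k-1}}+\sigma_{k,X_{k-1}})$, using Lemma~\ref{lem:pS properties}(a) to kill the $\Phi_k$ term, Lemma~\ref{lm:phix} for the $\Phi_kX_k$ term, and $X_k^2=X_k$; expanding $\mu_{k,X_{k-1}}+\sigma_{k,X_{k-1}}$ via \eqref{eq: Ai to X_i} gives the recursions $\alpha_{k-1}=\alpha_k+\beta_k(\mu_{k,0}+\sigma_{k,0})$ and $\beta_{k-1}=\beta_k(D_{k,\mu}+D_{k,\sigma})$.

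Unrolling these yields $\beta_k=\prod_{m=k+1}^n(D_{m,\mu}+D_{m,\sigma})$ and $\alpha_k=\sum_{j=k+1}^n(\mu_{j,0}+\sigma_{j,0})\prod_{m=j+1}^n(D_{m,\mu}+D_{m,\sigma})$. It remains to treat the outermost expectation over the root $X_1$, which has no parent: $\EE[(1+\Phi_1)(\alpha_1+\beta_1X_1)]=\alpha_1+\beta_1(\EE[X_1]+\EE[\Phi_1X_1])=\alpha_1+\beta_1(\mu_1+\sigma_1)$, and, exactly as in the proof of Lemma~\ref{lem:Fourier exact}, adjoining an auxiliary independent parent $X_0$ lets us write $\mu_1=\mu_{1,0}$ and $\sigma_1=\sigma_{1,0}$. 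Hence $\sum_\CS\fS=\alpha_1+\beta_1(\mu_{1,0}+\sigma_{1,0})=\sum_{k=1}^n(\mu_{k,0}+\sigma_{k,0})\prod_{\ell=k+1}^n(D_{\ell,\mu}+D_{\ell,\sigma})$, which is the claim. The only real subtlety is justifying the interchange of $\sum_\CS$ with the chain of conditional expectations, and this becomes routine once one notes that the whole expression is multilinear in the membership indicators so the sum over $\CS$ factors into independent two-term sums over the individual $Z_i$; the remainder is telescoping bookkeeping.
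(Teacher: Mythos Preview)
Your proof is correct and takes essentially the same approach as the paper's: both evaluate the nested conditional expectations from the inside out while maintaining an affine invariant $\alpha_k+\beta_k X_k$, and both decompose the sum over $\CS$ into independent two-term sums over the membership of each index. Your packaging via $W_i=1+\Phi_i$ is a slightly cleaner way to express what the paper writes as $\sum_{S_1\subseteq\{X_j\}}\EE[\phi_{S_1}(\cdot)]$ in its inductive step, but the underlying computation and the recursions for $\alpha_k,\beta_k$ (the paper's $C_1,C_2$) are identical.
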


\begin{corollary}[Lower bound for chains]
\label{cor:exactSumChain}
Under the conditions of the lemma and 
where all nodes share the same conditional distribution, i.e., 
$D_{k,\mu}=D_\mu$ and $D_{k,\sigma}=D_\sigma$, and $D_{\mu}+D_{\sigma}>0$
\begin{align}
\L1(f) \geq
(\mu_{0}+\sigma_{0})
\sum_{k=0}^{n-1} (D_{\mu}+D_{\sigma})^k
=
(\mu_{0}+\sigma_{0})\frac{1-(D_{\mu}+D_{\sigma})^n}{1-(D_{\mu}+D_{\sigma})}.
\end{align}
\end{corollary}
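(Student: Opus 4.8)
The plan is to derive the bound directly from the exact evaluation in Lemma~\ref{lm:exactSumChain}, combined with the Observation that $\L1(f)=\sum_S|\coeff{S}|\ge|\sum_S\coeff{S}|$. The only real point is that, under the stated sign hypothesis $D_\mu+D_\sigma>0$, the exact sum $\sum_S\coeff{S}$ turns out to be a sum of nonnegative terms, so passing from $|\sum_S\coeff{S}|$ to a geometric series loses nothing.

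Concretely, I would proceed as follows. (i) Invoke the Observation to get $\L1(f)\ge|\sum_S\coeff{S}|$. (ii) Apply Lemma~\ref{lm:exactSumChain}, which gives
\[
\sum_S\coeff{S}=\sum_{k=1}^{n}(\mu_{k,0}+\sigma_{k,0})\prod_{\ell=k+1}^{n}(D_{\ell,\mu}+D_{\ell,\sigma}).
\]
(iii) Substitute the homogeneity assumptions $D_{\ell,\mu}=D_\mu$, $D_{\ell,\sigma}=D_\sigma$ and $\mu_{k,0}=\mu_0$, $\sigma_{k,0}=\sigma_0$, and reindex by $j=n-k$ (the $k=n$ term uses the empty product $=1$):
\[
\sum_S\coeff{S}=(\mu_0+\sigma_0)\sum_{j=0}^{n-1}(D_\mu+D_\sigma)^{j}.
\]
(iv) Observe that $\mu_0+\sigma_0\ge 0$ (an expectation plus a standard deviation of a Bernoulli variable) and that each power $(D_\mu+D_\sigma)^j\ge 0$ since $D_\mu+D_\sigma>0$; hence $\sum_S\coeff{S}\ge 0$ and the absolute value may be dropped. (v) Recognize the finite geometric sum, obtaining $\L1(f)\ge(\mu_0+\sigma_0)\sum_{k=0}^{n-1}(D_\mu+D_\sigma)^k=(\mu_0+\sigma_0)\frac{1-(D_\mu+D_\sigma)^n}{1-(D_\mu+D_\sigma)}$, where the closed form is read as $n(\mu_0+\sigma_0)$ in the degenerate case $D_\mu+D_\sigma=1$.

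For the corollary itself there is no real obstacle: it is a one-line consequence of Lemma~\ref{lm:exactSumChain} plus a geometric-series identity, and the hypothesis $D_\mu+D_\sigma>0$ (the \emph{signed} quantity, not $|D_\mu|+|D_\sigma|$) is precisely what makes the triangle-inequality step tight in the needed direction. The genuine work sits in Lemma~\ref{lm:exactSumChain}, which I would prove by specializing the exact product expression for $\fS$ from Lemma~\ref{lem:Fourier exact} to the single-$\CT$-segment case $\CT=\{n\}$, then summing $\fS$ over all $\CS\subseteq[n]$ by distributing the sum across the $A'$ and $D'$ factors one segment at a time, as in the proof of Theorem~\ref{thm:L1 bound chain} but retaining signs; consecutive $D'$ products then telescope and collapse the double sum to the claimed form.
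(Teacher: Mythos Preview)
Your proof of the corollary is correct and matches the paper's approach: the paper treats it as an immediate consequence of Lemma~\ref{lm:exactSumChain} combined with the Observation $\L1(f)\ge|\sum_S\coeff{S}|$ and the geometric-series identity, exactly as you outline. One side remark: your proposed route for the supporting Lemma~\ref{lm:exactSumChain} (specialize Lemma~\ref{lem:Fourier exact} and sum) differs from the paper, which instead proves that lemma by a direct downward induction on $j$ establishing \eqref{eq:variableportion}; both work, but the induction avoids having to manage the $A'$/$D'$ bookkeeping.
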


This expression yields 
an exponential lower bound $\Omega((D_{\mu}+D_{\sigma})^{n})$ when $D_{\mu}+D_{\sigma}>1$.
We can also use the right hand side to get a constant lower bound when $D_{\mu}+D_{\sigma}<1$. This will be useful when analyzing general graphs.

\smallskip
\begin{proof} {\textbf{(Lemma~\ref{lm:exactSumChain})}}
As above for convenience of notation we extend the chain with $X_0$ with $\mu_{1,0}=\mu_{1,1}=\mu_1$ and conditioning on $X_0$ does not affect the result. 
We first prove by induction from $j=n$ to $1$ that 
\begin{align}\nonumber
\sum_{S\subseteq \{X_j,\ldots,X_n\}} \EE_{X_j,\ldots,X_n | X_{j-1}} [\phi_{S}  X_n] 
& = 
\sum_{k=j}^{n}(\mu_{k,0}+\sigma_{k,0})\prod_{\ell=k+1}^{n}(D_{\ell,\mu}+D_{\ell,\sigma})
\\ &
+ \prod_{\ell=j}^{n}(D_{\ell,\mu}+D_{\ell,\sigma}) X_{j-1}.
\label{eq:variableportion}
\end{align}
This implies the claim in the lemma because 
$\sum_S \coeff{S} = \sum_{S\subseteq \{X_1,\ldots,X_n\}} \EE_{X_1,\ldots,X_n | X_{0}} [\phi_{S}  X_n]$
and 
for $j=1$, $D_{j,\mu}=D_{j,\sigma}=0$ and the term in \eqref{eq:variableportion} is zero.

For the base case, $j=n$ we have two cases: $S$ is empty or $S=\{X_n\}$.
When $S$ is empty, we have $$\EE_{X_n}[X_n]=\mu_{n,X_{n-1}}=\mu_{n,0}+D_{n,\mu}X_{n-1}.$$
When $S=\{X_n\}$, 
using Lemma~\ref{lm:phix},
we have $\EE_{X_n}[\phi_{X_n} X_n]=\sigma_{n,X_{n-1}}=\sigma_{n,0}+D_{n,\sigma}X_{n-1}$.
Adding the two cases we obtain $\sum_{S}\EE_{X_n}[\phi_{S}  X_n] = (\mu_{n,0}+\sigma_{n,0})+(D_{n,\mu}+D_{n,\sigma})X_{n-1}$.

Assume the claim holds for $j+1$ then we have
\begin{align}
\nonumber
 & \sum_{S\subseteq \{X_j,\ldots,X_n\}} \EE_{X_j,\ldots,X_n | X_{j-1}} [\phi_{S}  X_n]
 = 
\sum_{S_1\subseteq \{X_j\}} \sum_{S_2\subseteq \{X_{j+1},\ldots,X_n\}} \EE_{X_j} \EE_{X_{j+1},\ldots,X_n | X_{j-1}} [\phi_{S_1} \phi_{S_2} X_n]\\
\label{eq:expLBstepa}
&\quad =
\sum_{S_1\subseteq \{X_j\}} \EE_{X_j}  [\phi_{S_1} \sum_{S_2\subseteq \{X_{j+1},\ldots,X_n\}} \EE_{X_{j+1},\ldots,X_n | X_{j-1}}  [\phi_{S_2} X_n] ]
 \\
\label{eq:expLBstepb}
&\quad =
\sum_{S_1\subseteq \{X_j\}} \EE_{X_j}  [\phi_{S_1} 
\sum_{k=j+1}^{n}(\mu_{k,0}+\sigma_{k,0})\prod_{\ell=k+1}^{n}(D_{\ell,\mu}+D_{\ell,\sigma})
+ \prod_{\ell=j+1}^{n}(D_{\ell,\mu}+D_{\ell,\sigma}) X_{j}
]\\
\nonumber
& \quad =
\sum_{S_1\subseteq \{X_j\}} \EE_{X_j}  [\phi_{S_1} 
(C_1+ C_2 X_{j})
],
\end{align}
where $C_1=\sum_{k=j+1}^{n}(\mu_{k,0}+\sigma_{k,0})\prod_{\ell=k+1}^{n}(D_{\ell,\mu}+D_{\ell,\sigma})$ and $C_2=\prod_{\ell=j+1}^{n}(D_{\ell,\mu}+D_{\ell,\sigma})$. 
In \eqref{eq:expLBstepa} we can pull $\phi_{S_1}$ out of the inner expectations because its variables (and their parents) precede $X_{j+1}$ in the BN ordering. Eq~\eqref{eq:expLBstepb} holds by the inductive assumption.
Now when $S_1$ is empty $$\EE_{X_j}  [\phi_{S_1} (C_1+ C_2 X_{j})]=\EE_{X_j}  [(C_1+ C_2 X_{j})]=C_1+ C_2 \mu_{j,X_{j-1}}.$$
When $S_1=\{X_j\}$, 
using Lemma~\ref{lem:pS properties} and Lemma~\ref{lm:phix}, 
$\EE_{X_j}  [\phi_{X_j} (C_1+ C_2 X_{j})] = C_2 \sigma_{j,X_{j-1}}$.
Adding the two cases we obtain $C_1 + (\mu_{j,0}+\sigma_{j,0}) C_2 + (D_{j,\mu}+D_{j,\sigma}) C_2 X_{j-1}$, as required.
\end{proof}

\begin{corollary}[Lower bound for chains, excluding coefficient of $S=\phi$]
\label{cor:constLBchain}
Under the conditions of Corollary~\ref{cor:exactSumChain}
\begin{align}
\sum_{S\not=\phi}  |\coeff{S}| 
\geq
(\mu_{0}+\sigma_{0})\frac{1-(D_{\mu}+D_{\sigma})^n}{1-(D_{\mu}+D_{\sigma})} - \frac{\mu_{0}}{1-D_\mu}.
\end{align}
Choosing $\mu_0=0.07$, $\mu_1=0.56$ (with $D_\mu=0.49$ 
and $(D_{\mu}+D_{\sigma})\approx 0.731$)
and $n\geq 23$ so that $(D_{\mu}+D_{\sigma})^n<10^{-3}$
we have
$\sum_{S\not=\phi}  |\coeff{S}| \geq 1.07147$.
\end{corollary}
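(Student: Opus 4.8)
The plan is to separate the empty-set coefficient from the rest of the spectrum, bound the former directly, and read the latter off the exact sum-of-coefficients computed in Lemma~\ref{lm:exactSumChain}. Since the empty-set basis function is the constant $1$, we have $\coeff{\phi}=\EE_D[f]=P(X_n=1)$. First I would control this probability through the one-step recursion $p_k:=P(X_k=1)=\mu_0+D_\mu\,p_{k-1}$, obtained from the law of total probability and the common conditional table; subtracting the fixed point gives $p_k-\frac{\mu_0}{1-D_\mu}=D_\mu\big(p_{k-1}-\frac{\mu_0}{1-D_\mu}\big)$, so since $D_\mu\in(0,1)$ the sequence $p_k$ stays between its root value and the fixed point $\mu_0/(1-D_\mu)$. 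Under the common-CPT convention of Corollary~\ref{cor:exactSumChain} the root marginal is $\mu_0\le\mu_0/(1-D_\mu)$, hence $|\coeff{\phi}|=\coeff{\phi}\le \mu_0/(1-D_\mu)$ (and in any case, for an arbitrary root marginal the bound holds up to a negligible term of order $D_\mu^{\,n}$, which is immaterial in the regime $n\ge 23$ of interest).

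Next, using $|x|\ge x$ termwise, $\sum_{S\neq\phi}|\coeff{S}|\ge \sum_{S\neq\phi}\coeff{S}=\big(\sum_{S}\coeff{S}\big)-\coeff{\phi}$, and by Lemma~\ref{lm:exactSumChain} specialized to a common CPT the total sum telescopes into a geometric series:
\[
\sum_{S}\coeff{S}=(\mu_0+\sigma_0)\sum_{k=0}^{n-1}(D_\mu+D_\sigma)^k=(\mu_0+\sigma_0)\,\frac{1-(D_\mu+D_\sigma)^n}{1-(D_\mu+D_\sigma)},
\]
which is exactly the quantity behind Corollary~\ref{cor:exactSumChain}. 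Combining the two displays immediately yields the stated inequality $\sum_{S\neq\phi}|\coeff{S}|\ge (\mu_0+\sigma_0)\frac{1-(D_\mu+D_\sigma)^n}{1-(D_\mu+D_\sigma)}-\frac{\mu_0}{1-D_\mu}$, which is also trivially valid whenever its right-hand side is negative since the left-hand side is nonnegative.

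For the numerical part I would substitute $\mu_0=0.07$, $\mu_1=0.56$ (so $D_\mu=0.49$, $\sigma_0=\sqrt{0.07\cdot 0.93}$, $\sigma_1=\sqrt{0.56\cdot 0.44}$, $D_\sigma=\sigma_1-\sigma_0$), verify that $D_\mu+D_\sigma\approx 0.731<1$ and that $(D_\mu+D_\sigma)^n<10^{-3}$ for $n\ge 23$, so that $1-(D_\mu+D_\sigma)^n>0.999$, and then plug these numbers into the bound to conclude $\sum_{S\neq\phi}|\coeff{S}|\ge 1.07147$. The argument is essentially routine once Lemma~\ref{lm:exactSumChain} is in hand; the two spots needing care are (i) checking that $|\coeff{\phi}|\le \mu_0/(1-D_\mu)$ holds irrespective of the root marginal (handled by the contracting recursion above) and (ii) the arithmetic bookkeeping certifying the final constant $1.07147$, which is the only place where a slip could creep in.
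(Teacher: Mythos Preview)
Your proposal is correct and follows essentially the same route as the paper: separate out $\coeff{\phi}$, bound it above by $\mu_0/(1-D_\mu)$, and subtract from the exact sum $\sum_S \coeff{S}$ supplied by Lemma~\ref{lm:exactSumChain}. The only cosmetic difference is that you bound $\coeff{\phi}=P(X_n=1)$ via the contracting recursion $p_k=\mu_0+D_\mu p_{k-1}$, whereas the paper reads the same quantity directly from the $A'$ expansion as $\coeff{\phi}=\sum_{k=0}^{n-1}\mu_0 D_\mu^k<\mu_0/(1-D_\mu)$; these are two presentations of the same geometric sum.
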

\begin{proof}
We can directly calculate $|\coeff{\phi}|$ and subtract it. 
Since there are no $S$ terms, $\coeff{\phi}$ is a single compound $A'$ term (Eq~\eqref{eq:A' D'}) which simplifies to
$\coeff{\phi}=\sum_{k=0}^{n-1} \mu_{0} D_\mu^{k} < \frac{\mu_{0}}{1-D_\mu}$.
\end{proof}

\begin{remark}
The product case also yields a bound greater than 1 when the empty set is included. 
For products with $f(\bfx)=X_k$ we have
$\sum_{S}  \coeff{S}=\mu_k+\sigma_k$ which can be $>1$. 
For example, for $\mu=0.9$ where $\sigma=0.3$ we have $\sum_{S}  \coeff{S}=1.2$ which is close to the maximum possible. 
For chains we can get a slightly larger lower bound. 
For example, choosing $\mu_0=0.29$, $\mu_1=0.78$ (with $D_\mu=0.49$ 
and $(D_{\mu}+D_{\sigma})\approx 0.450$)
and $n\geq 10$ so that $(D_{\mu}+D_{\sigma})^n<10^{-3}$
we have
$\sum_{S\not=\phi}  |\coeff{S}| > 1.35$.
\end{remark}

\begin{remark}
However, it is important to distinguish chains from the product case. 
For products we cannot exclude the empty set (that contributes $\mu_k$) and retain the $>1$ condition
whereas as shown by Corollary~\ref{cor:constLBchain} this is possible for chains. 
\end{remark}

We next show that for general BN, the spectral norm of a single literal can be exponentially large even when $D_{\mu}+D_{\sigma}<1$.
The case of nodes with multiple parents requires a more general notation. Here we focus on nodes with two parents
where $$P(X_i=1|X_jX_k=(00,01,10,11))=(\mu_{i,00},\mu_{i,01},\mu_{i,10},\mu_{i,11})$$ and
we denote $D_{i,ab}=\mu_{i,ab}-\mu_{i,00}$. With this we have
\begin{align}
A_{i|jk} = \EE[X_i | X_jX_k] = \mu_{i,00} + D_{i,10} X_j + D_{i,01} X_k + (D_{i,11} -D_{i,10} -D_{i,01} ) X_j X_k.
\end{align}
For the construction below it suffices to discuss $p(X_i=1|X_jX_k=(00,01,10,11))=(\alpha+D,\alpha,\alpha,\alpha+D)$
so that $D_{i,10} = D_{i,01} =-D$ and $(D_{i,11} -D_{i,10} -D_{i,01} )=2D$ and the conditional expectation simplifies to 
\begin{align}
\label{eq:binexpand2}
A_{i|jk} = \EE[X_i | X_jX_k] = \mu_{i,00} - D X_j - D X_k + 2D X_j X_k.
\end{align}

\newcommand{\floor}[1]{\lfloor #1 \rfloor}

With this, we consider the following graph $G^*$ where all nodes have at most one child (which we refer to as an  anti-tree), and where the graph has a single sink.
We have 3 sets of nodes. $X_1,\ldots,X_n$ are the leaves of a full binary anti-tree, whose internal nodes are $V_1,\ldots V_{n-1}$ with $V_1$ as the unique sink. For concreteness, we can set $V_1$ as the unique sink, and define the edges $V_{j}\rightarrow V_{\floor{j/2}}$
and (seeing $X_{k}$ as $V_{k+n-1}$) edges $X_{k}\rightarrow V_{\floor{(k+n-1)/2}}$.
In addition,
for $k\in[1,n]$ we have a chain of nodes $Y_{k,1}\rightarrow \ldots\rightarrow Y_{k,m-1}$ and an additional edge connecting $Y_{k,m-1}\rightarrow X_k$.
Hence the use of $X_1,\ldots,X_n$ is just for notational convenience and we can identify 
$X_k$ as $Y_{k,m}$. 
We use the setting of Corollary~\ref{cor:constLBchain} and set $m=23$ so $G^*$ has $n(m+1)-1=24n-1$ nodes.

\begin{theorem}[Exponential Lower Bound for General Graphs]
 \label{thm:expLBantitree}
Consider the anti-tree $G^*$ (with $N=24n-1$ nodes) and the function $f(\{Y\},\{V\})=V_1$, i.e., a conjunction of size 1.
There exist bounded BN distributions defined by $G^*$ such that for all nodes $i$,
$\mu_{i,pa(i)}\geq 0.01$ and 
for all assignments $\gamma_1,\gamma_2$ to the parents of node $i$,
$| \mu_{i,pa(i)=\gamma_1}-\mu_{i,pa(i)=\gamma_2}| <0.49$
and 
$ \sum_S |\coeff{S}| = \Omega(1.05^n)= 2^{\Omega(N)}$.
\end{theorem}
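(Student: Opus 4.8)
The plan is to exploit the collider (``anti-tree'') structure of $G^\ast$. When the sink $V_1$ is written in the BN-induced basis, the sibling-product term $2D\,V_{2j}V_{2j+1}$ that each collider node forces into its conditional expectation \eqref{eq:binexpand2} propagates all the way up the full binary anti-tree, so that the polynomial $\EE[V_1\mid X_1,\dots,X_n]$ contains the single monomial $X_1X_2\cdots X_n$ with coefficient $(2D)^{n-1}$. Although $(2D)^{n-1}$ is exponentially small, each of the $n$ chains $\ClC_k=\{Y_{k,1}\to\cdots\to Y_{k,m-1}\to X_k\}$ hanging below a leaf then contributes an \emph{independent} multiplicative factor that exceeds $1$ by Corollary~\ref{cor:constLBchain}, and $2D\cdot 1.07147>1.05$.

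First I would condition on the leaves. In $G^\ast$ the set $\{X_1,\dots,X_n\}$ $d$-separates $V_1$ from the chain variables, so $\EE[V_1\mid \{X_k\},\{Y_{k,j}\}]=\EE[V_1\mid X_1,\dots,X_n]=:P(X_1,\dots,X_n)$, a multilinear polynomial $P=\sum_{B\subseteq[n]}c_B\prod_{k\in B}X_k$. Using the collider CPTs from \eqref{eq:binexpand2} together with the conditional independence $V_{2j}\perp V_{2j+1}\mid\{X_1,\dots,X_n\}$ (their anti-tree subtrees have disjoint leaf sets), an induction up the anti-tree shows $g_j:=\EE[V_j\mid X_1,\dots,X_n]$ is multilinear with unique top-degree term $2D\,g_{2j}g_{2j+1}$, hence its leading coefficient obeys $c(g_j)=2D\,c(g_{2j})c(g_{2j+1})$; since $c(g)=1$ at every leaf $X_k=V_{k+n-1}$ and the full binary anti-tree has $n-1$ internal nodes, the coefficient of $X_1\cdots X_n$ in $P=g_{V_1}$ is $c_{[n]}=(2D)^{n-1}$.

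Next I would isolate this monomial and sum. Restrict the spectral-norm sum to subsets $\CS$ that avoid every $V$-node and satisfy $\CS_k:=\CS\cap\ClC_k\neq\emptyset$ for all $k$. For such $\CS$ we have $\phi_\CS=\prod_k\phi_{\CS_k}$, involving only chain variables, and since the $n$ chains are mutually independent (colliders not conditioned on), $\fS=\EE[\phi_\CS P]=\sum_B c_B\prod_k E_{k,B}$, where $E_{k,B}=\EE_{\ClC_k}[\phi_{\CS_k}X_k]$ if $k\in B$ and $E_{k,B}=\EE_{\ClC_k}[\phi_{\CS_k}]$ otherwise. For any $B\neq[n]$ there is $k\notin B$ with $\CS_k\neq\emptyset$, and $\EE_{\ClC_k}[\phi_{\CS_k}]=0$ by Lemma~\ref{lem:pS properties}(a) applied to $\ClC_k$ (itself a BN with root $Y_{k,1}$); so only $B=[n]$ survives and $\fS=(2D)^{n-1}\prod_{k=1}^n\EE_{\ClC_k}[X_k\phi_{\CS_k}]$. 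Therefore
\begin{align*}
\L1(f)\;\ge\;\sum_{\CS:\ \forall k\,\CS_k\neq\emptyset}\abs{\fS}
\;=\;(2D)^{n-1}\prod_{k=1}^{n}\Big(\sum_{\CS_k\neq\emptyset}\abs{\EE_{\ClC_k}[X_k\phi_{\CS_k}]}\Big)
\;\ge\;(2D)^{n-1}\,(1.07147)^{n},
\end{align*}
the last step applying Corollary~\ref{cor:constLBchain} to each $\ClC_k$ (it has $m=23$ nodes, target $X_k$, and shared CPT with $\mu_0=0.07,\mu_1=0.56$). Taking $D<0.49$ close enough to $0.49$ that $2D\cdot 1.07147>1.05$ gives $\L1(f)\ge (2D)^{-1}(1.05)^{n}=\Omega(1.05^{n})=2^{\Omega(N)}$ since $N=24n-1$; the chosen parameters satisfy $\mu_{i,\pa(i)}\ge 0.01$ and all parent-difference gaps $\le D<0.49$ on the $V$-nodes (the chain gap being $0.49$, relaxed by an infinitesimal perturbation at negligible cost to the base).

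The main obstacle is the middle step: rigorously establishing that conditioning on $\{X_1,\dots,X_n\}$ factorizes the whole computation across the $n$ independent chains, and that the collider products really do produce a \emph{unique} all-leaves monomial with coefficient exactly $(2D)^{n-1}$; once this is in hand the rest is bookkeeping plus a direct appeal to Corollary~\ref{cor:constLBchain}. A secondary nuisance is that the constant $1.05$ and the bound ``$<0.49$'' are essentially tight together, so one must either let $D\uparrow 0.49$ (base $\to 1.05$) or instead use the stronger chain constant from the remark following Corollary~\ref{cor:constLBchain} for extra slack; either way the conclusion $2^{\Omega(N)}$ is unaffected.
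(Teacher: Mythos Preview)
Your proposal is correct and follows essentially the same approach as the paper: restrict to $\CS$ that avoid all $V$-nodes and hit every chain $\ClC_k$, marginalize out the $V$-variables to extract the polynomial in $X_1,\dots,X_n$ whose top monomial has coefficient $(2D)^{n-1}$, use orthogonality (Lemma~\ref{lem:pS properties}(a)) to kill every lower-degree monomial since each one misses some chain, then factor across the independent chains and invoke Corollary~\ref{cor:constLBchain} on each. The only cosmetic difference is that the paper simply sets $D=0.49$ (yielding $2D\cdot 1.07147>1.05$ directly) rather than taking a limit $D\uparrow 0.49$; your remark about needing a perturbation to satisfy the strict inequality ``$<0.49$'' in the statement is a fair observation.
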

\begin{proof}
For the lower bound we consider sets $S$ which include at least one element from $Y_{k,1},\ldots,Y_{k,m}$ for each $k$ and  no element in $V_1,\ldots V_{n-1}$. Let ${\cal S}$ be the set of sets satisfying this condition.
Our goal is to obtain a lower bound on
\begin{align}
\sum_{S}
| \coeff{S}|
>
\sum_{S\in {\cal S}}
| \coeff{S}|
=
\sum_{S\in {\cal S}}
|
\EE_{\{Y\},\{V\}} [\phi_S V_1 ] |
=
\sum_{S\in {\cal S}}
|
\EE_{\{Y\}}[\phi_S \EE_{\{V\}} [V_1 ]]|. 
\end{align}

Now $\EE_{\{V\}} [V_1 ]$ can be derived by recursively applying \eqref{eq:binexpand2}.
This creates a multinomial expression with one distinguished term that includes both variables in each application of \eqref{eq:binexpand2}. This term is $(2D)^{n-1}\prod_{k=1}^{n} X_k$. All other terms in the multinomial expression exclude at least one $X_k$ variable.
We denote such terms generically as $ t_\ell(\bar{X}_\ell)$ where $\bar{X}_\ell$ is a strict subset of $X_1,\ldots,X_n$ and $ t_\ell$ may be positive or negative and it includes some power of $2D$.
With this we have
\begin{align}
\EE_{\{V\}} [V_1 ]=(2D)^{n-1}\prod_{k=1}^{n} X_k + \sum_\ell t_\ell(\bar{X}_\ell).
\end{align}
We next consider the expectation w.r.t.\ variables in $\{Y\}$. Due to the partial order we can start with any $k$ chain and we can choose a different expectation ordering for each term in $\EE_{\{V\}} [V_1 ]$.
For each term $t_\ell(\bar{X}_\ell)$ let $j$ be the least index for which $X_j$ does not appear in $\bar{X}_\ell$
and let $\ell^*$ be the largest index for which $Y_{j,\ell^*}\in S$.
Letting $S_{|j}=S\cap \{Y_{j,1},\ldots,Y_{j,m}\}$ and $S_{|\not=j}=S\setminus S_{|j}$
we have 
\begin{align*}
\EE_{\{Y\}}  [t_\ell(\bar{X}_\ell) ]
& =
\EE_{\{Y_k | k\not = j\}} [\phi_{S_{|\not=j}}  \EE_{\{Y_j\}} [\phi_{S_{|j}} t_\ell(\bar{X}_\ell) ]]
\\ & =
\numberthis \label{eq:genExpLBstep3}
\EE_{\{Y_k | k\not = j\}} [\phi_{S_{|\not=j}}  
\EE_{\{Y_{j,1},\ldots,Y_{\ell^*-1} \}} [\phi_{S_{|j\setminus (j,\ell^*)}} t_\ell(\bar{X}_\ell) \EE_{\{Y_{j,\ell^*},\ldots,Y_m \}} [\phi_{\ell^*}() ]]]
\\ & =
\EE_{\{Y_k | k\not = j\}} [0]=0.
\end{align*}
As above, in \eqref{eq:genExpLBstep3}
we can pull $\phi_{S_{|\not=j}} $ out of the inner expectations because its variables (and their parents) are not descendants  of the variables in the expectation. The inner expectation in \eqref{eq:genExpLBstep3} is zero due to Lemma~\ref{lem:pS properties}. Hence all terms other than the distinguished term contribute 0. Now recalling that $X_k$ is $Y_{k,m}$ and noting the independence between the $Y$ chains we have
\begin{align*}
\sum_{S\in {\cal S}}
| \coeff{S}|
 & =
(2D)^{n-1}  
\sum_{S\in {\cal S}}
|
\EE_{\{Y\}}[\phi_S  \prod_{k=1}^{n} Y_{k,m} ]|
\\ & =
(2D)^{n-1}  
\sum_{S\in {\cal S}} \prod_{k=1}^{n} |\EE_{\{Y_k\}} [\phi_{S_{|k}}  Y_{k,m} ]|
\\ & =
(2D)^{n-1}  
 \prod_{k=1}^{n} \sum_{S\subseteq Y_k, S\not=\phi} |\EE_{\{Y_k\}} [\phi_{S_{|k}}  Y_{k,m} ]|,
\end{align*}
where we can swap the order of summation and expectation because the functions in the expression are evaluated on disjoint sets of variables. 

Now using Corollary~\ref{cor:constLBchain} we have
\begin{align*}
\sum_{S}
| \coeff{S}|
\geq
\sum_{S\in {\cal S}}
| \coeff{S}|
>
(2D)^{n-1}  (1.07147)^n > (2\times 0.49 \times 1.07147)^{n}  > 1.05^n
\end{align*}
where we have chosen
$\alpha=0.01$, $D=0.49$ for the binary nodes and $Y$ nodes follow the setting in
Corollary~\ref{cor:constLBchain}, 
i.e.,  
$\mu_0=0.07$, $\mu_1=0.56$ with $D_\mu=0.49$. 
\end{proof}

\section{Fourier Expansion of Conjunctions for $k$-Junta Distributions}\label{sec:Kjunta}

In this section we consider the class of $k$-junta distributions  \cite{AliakbarpourBR16} that model distributions where, intuitively, $k$ of the variables capture the complexity of the distribution. 
More formally, 
let generalized $k$-junta distributions be distributions such that conditioned on a set $J$ of size $k$ the remaining variables have a product distribution (uniform in the original definition). 
Similarly, a depth-$d$ decision-tree (DT) distribution \cite{BlancLMT23} induces a uniform distribution in each leaf of the tree, hence it is a $k\leq 2^d$, $k$-junta distribution. \citet{AliakbarpourBR16} showed that $k$-junta distributions are learnable in time $O(n^k)$ from random examples, 
and \citet{ChenJLW21} show that the complexity can be reduced with subcube conditioning queries.
\citet{BlancLMT23,BlancLST25} 
developed distribution lifting results for depth $d$ decision-tree distributions that allow, for example, to lift 
learnability of functions under the uniform distribution to depth $d$ decision-tree distributions and can hence be applied to DNF learning.

It is interesting to compare these notions to representations of distributions using BN. First, it is clear that
a generalized $k$-junta distribution can be captured with a BN with a simple structure. Some arbitrary DAG represents the distribution 
over the set $J$, and all other variables depend on $J$ and are conditionally independent given $J$.
On the other hand, even limited BN provide flexibility that cannot be captured with shallow DT. For example consider a chain BN with $n$ variables with significant correlation in conditional expectations.
For example, we can use
$p(x_i=1|x_{i-1}=1)=0.25+c/2$ and $p(x_i=1|x_{i-1}=0)=0.75-c/2$ which is difference bounded.
If a path in a DT conditions on $d<n/2-1$ variables then the remaining variables form 
$<d+1$ segments in the chain, and by the pigeonhole principle at least one segment has more than one variable. Then the conditional distribution is not a product distribution, and a shallow decision tree cannot capture such distributions. 

As the next lemma shows, the spectral norm of conjunctions for $k$-junta distributions is bounded and hence the learnability results in this paper hold for this class.

\begin{lemma}\label{lm:kjunta}
Let $D$ be a $k$-junta distribution. Then the spectral norm of conjunctions $f$ with $d$ literals is bounded by 
$\L1(f) \leq 2^{(k+d)/2}$. 
\end{lemma}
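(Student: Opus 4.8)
The plan is to exploit the special BN representation of a $k$-junta distribution: a set $J$ with $|J|=k$ carrying some arbitrary DAG, and every other node conditionally independent given $X_J$ with parent set contained in $J$; hence all non-junta nodes are sinks, so the marginal of $D$ on $X_J$ is itself the BN given by the induced sub-DAG, and its induced basis $\{\phi_{\CS_J}:\CS_J\subseteq J\}$ involves only the $J$-variables. I would write $\CS=\CS_J\cup\CS_{\bar J}$ with $\CS_J=\CS\cap J$ and $\CS_{\bar J}=\CS\setminus J$, and factor the conjunction $f=g_J\cdot g_{\bar J}$, where $g_J$ collects the literals with index in $\CT\cap J$ (say $d_1$ of them) and $g_{\bar J}$ the remaining $d_2=d-d_1$ literals; both factors are $\{0,1\}$-valued.

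First I would record the factorization of the coefficients obtained by conditioning on $X_J$. Conditioned on $X_J=a$, the non-junta variables are independent Bernoullis, and for $v\notin J$ the function $\phi_v$ restricted to $x_J=a$ equals $(x_v-\mu_{v,a})/\sigma_{v,a}$, exactly the product-distribution basis function of that conditional product measure. Therefore
\[
\hat f_\CS=\EE_{X_J}\!\big[\phi_{\CS_J}(X_J)\,g_J(X_J)\,c_{\CS_{\bar J}}(X_J)\big],\qquad c_{\CS_{\bar J}}(a):=\EE_{X_{\bar J}\mid X_J=a}\!\big[g_{\bar J}(X_{\bar J})\,\phi_{\CS_{\bar J}}(X_{\bar J})\big],
\]
and $c_{\CS_{\bar J}}(a)$ is the product-Fourier coefficient of the conjunction $g_{\bar J}$ indexed by $\CS_{\bar J}$ under the conditional product distribution. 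Since the product Fourier expansion is sparse --- if some $v\in\CS_{\bar J}$ is not a literal of $g_{\bar J}$, then independence and $\EE[\phi_v]=0$ force $c_{\CS_{\bar J}}(a)=0$ --- we obtain $\hat f_\CS=0$ unless $\CS_{\bar J}\subseteq\CT\setminus J$, leaving at most $2^{d_2}$ relevant sets $\CS_{\bar J}$. For such $\CS$, the numbers $\{\hat f_{\CS_J\cup\CS_{\bar J}}\}_{\CS_J\subseteq J}$ are exactly the Fourier coefficients of $h_{\CS_{\bar J}}:=g_J\cdot c_{\CS_{\bar J}}$ (a function of $X_J$) under the $J$-marginal BN.

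The bound then follows from two Cauchy--Schwarz steps together with Parseval (Fact~\ref{fact:fourier1}). For each fixed $\CS_{\bar J}\subseteq\CT\setminus J$, using the $2^k$ subsets $\CS_J\subseteq J$ and then Parseval for $h_{\CS_{\bar J}}$,
\[
\sum_{\CS_J\subseteq J}\big|\hat f_{\CS_J\cup\CS_{\bar J}}\big|\;\le\;2^{k/2}\Big(\sum_{\CS_J\subseteq J}\hat f_{\CS_J\cup\CS_{\bar J}}^{\,2}\Big)^{1/2}=2^{k/2}\,\EE_{X_J}\!\big[h_{\CS_{\bar J}}(X_J)^2\big]^{1/2}=2^{k/2}\,\EE_{X_J}\!\big[g_J(X_J)\,c_{\CS_{\bar J}}(X_J)^2\big]^{1/2},
\]
using $g_J^2=g_J$. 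Summing over the $\le 2^{d_2}$ choices of $\CS_{\bar J}$ and applying Cauchy--Schwarz again gives
\[
\L1(f)\;\le\;2^{k/2}\,2^{d_2/2}\Big(\EE_{X_J}\Big[g_J(X_J)\!\!\sum_{\CS_{\bar J}\subseteq\CT\setminus J}\!\! c_{\CS_{\bar J}}(X_J)^2\Big]\Big)^{1/2}.
\]
By Parseval for the conditional product distribution together with the sparsity above, $\sum_{\CS_{\bar J}\subseteq\CT\setminus J}c_{\CS_{\bar J}}(a)^2=\EE_{X_{\bar J}\mid a}[g_{\bar J}(X_{\bar J})^2]=\Pr[g_{\bar J}=1\mid X_J=a]\le 1$, so the inner expectation is at most $\Pr[f=1]\le 1$, and hence $\L1(f)\le 2^{(k+d_2)/2}\le 2^{(k+d)/2}$.

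The main obstacle is getting the second paragraph right: setting up the iterated conditioning on $X_J$ and identifying the inner conditional expectations with product-distribution Fourier coefficients, because this is precisely what yields the essential sparsity $\hat f_\CS=0$ for $\CS_{\bar J}\not\subseteq\CT\setminus J$. Without that sparsity the outer sum would range over all $2^{n-k}$ subsets $\CS_{\bar J}$ and the bound would collapse; once it is in place, the two Cauchy--Schwarz/Parseval steps are routine.
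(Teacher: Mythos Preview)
Your proof is correct, but it is considerably more elaborate than the paper's argument while yielding essentially the same bound. The paper observes the same sparsity you derive---$\hat f_\CS=0$ unless $\CS\subseteq J\cup\CT$---but obtains it in one line: pick any $j\in\CS\setminus(J\cup\CT)$; since every non-junta node is a sink in the BN, one can take the conditional expectation over $X_j$ first, and as $f$ and all other $\phi_u$ are independent of $X_j$, the factor $\EE[\phi_j\mid X_{\pa(j)}]=0$ kills $\hat f_\CS$. With at most $2^{k+d}$ surviving coefficients, a single Cauchy--Schwarz together with Parseval for $f$ itself gives $\sum_\CS|\hat f_\CS|\le 2^{(k+d)/2}\sqrt{\EE[f^2]}\le 2^{(k+d)/2}$.

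Your route---conditioning on $X_J$, identifying $c_{\CS_{\bar J}}(a)$ with product-basis coefficients, and then applying Cauchy--Schwarz and Parseval twice (once on the $J$-marginal and once on the conditional product)---reproduces the sparsity and bound through a finer decomposition. The payoff is a marginally sharper intermediate estimate $2^{(k+d_2)/2}\sqrt{\Pr[f=1]}$ with $d_2=|\CT\setminus J|$, but the paper's direct argument already gets $2^{(|J\cup\CT|)/2}\sqrt{\Pr[f=1]}=2^{(k+d_2)/2}\sqrt{\Pr[f=1]}$ if one tracks $|J\cup\CT|$ rather than the cruder $k+d$, so there is no real gain. In short: your decomposition is sound and the two-step Parseval is a nice touch, but the paper's one-step version is both shorter and equally strong.
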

\begin{proof}
Given a conjunction $f$ with $d$ literals using variable set $C$, the set of ancestors of $C$ in the BN is a subset of  $C\cup J$, and hence includes at most $d+k$ variables. Consider any set $S$ such that $S\setminus (C\cup J)\not=\emptyset$. 
As in the proof of Lemma~\ref{lem:pS properties}, we can compute 
$\fS=\EE_D[f(\bfX)\pS(\bfX)]$ 
by first computing the expectation over a variable $x_j\in S\setminus (C\cup J)$. Since this satisfies the BN ordering the internal expectation is 
$\EE[\phi_{x_j}(\bfX)]=0$ implying that $\hat{f}_S=0$. 
This implies that the number of non-zero coefficients of $f$ is bounded by $2^{k+d}$.
Due to the sparsity, we can use the argument for product distributions \cite{Feldman2012}.
In particular, note that $1\geq \EE[f^2]=\sum_S {\fS}^2$ and by the Cauchy-Schwartz inequality $$\sum_S |\fS| \leq 2^{(k+d)/2}\sqrt{\sum_S \fS^2}\leq 2^{(k+d)/2}.$$
\end{proof}

\section{Learning DNF and Decision Trees}
\label{sec:learnDNF}
In this section, we investigate the learnability of DNF expressions  using the BN-induced Fourier expansion.
While these results were previously proved for the uniform or product distributions, they hold more or less directly in the new setting. This section reviews some of these implications. We note that decision trees with $s$ leaves are a subset of disjoint DNF with $s$ terms. Hence while we present the results for disjoint DNF, they apply to decision trees as well.

A key requirement in this analysis is that the spectral norm 
of conjunctions is bounded. 
In this section, we assume that such a bound exists.
In particular for a distribution $D$ and conjunction $f$ with $d$ literals, we assume that $\L1(f) \leq L_1(d)$ for some function $L_1(d)$
and prove the results in a general form using that bound. 
Hence learnability holds whenever $L_1(d)$ is available, 
including for difference bounded tree distributions where $L_1(d)=O((\frac{2-2\alpha}{1-2\alpha})^{2d})$,
and $k$-junta distributions where $L_1(d) =O(2^{(k+d)/2})$. 

We need the following observation for bounded distributions:
\begin{lemma}
\label{lm:boundedprob}
For any $c$-bounded distribution $D$ and for any conjunction $f$ with $d$ literals, $c^d \leq \EE_D[f(\bfX)]\leq (1-c)^d$.
\end{lemma}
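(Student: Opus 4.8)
The plan is to compute $\EE_D[f(\bfX)]=P_D(f(\bfX)=1)$ directly as a product of $d$ conditional probabilities, one per literal, and to show that each such factor lies in $[c,1-c]$ using the Markov property of the BN. Writing $f=\bigwedge_{i\in\CT_1}x_i\wedge\bigwedge_{j\in\CT_0}\bar x_j$, set $C=\CT_0\cup\CT_1$ with $|C|=d$, and list the elements of $C$ in the BN topological order as $v_1,\ldots,v_d$. For each $\ell$ let $b_\ell\in\{0,1\}$ be the value that the literal on $v_\ell$ requires, i.e.\ $b_\ell=1$ if $v_\ell\in\CT_1$ and $b_\ell=0$ if $v_\ell\in\CT_0$. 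Then $\EE_D[f(\bfX)]=P(X_{v_1}=b_1,\ldots,X_{v_d}=b_d)$, and I would expand this by the chain rule as $\prod_{\ell=1}^d P(X_{v_\ell}=b_\ell\mid X_{v_1}=b_1,\ldots,X_{v_{\ell-1}}=b_{\ell-1})$; the positivity of the conditioning events needed for these conditionals to be well defined will be obtained along the way by induction.

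The key step is to bound an individual conditional factor $q_\ell:=P(X_{v_\ell}=b_\ell\mid X_{v_1}=b_1,\ldots,X_{v_{\ell-1}}=b_{\ell-1})$. I would condition additionally on $X_{\pa(v_\ell)}$ and use the law of total probability to write $q_\ell=\sum_{x_{\pa(v_\ell)}}P(X_{v_\ell}=b_\ell\mid X_{\pa(v_\ell)}=x_{\pa(v_\ell)},X_{v_1}=b_1,\ldots,X_{v_{\ell-1}}=b_{\ell-1})\cdot P(X_{\pa(v_\ell)}=x_{\pa(v_\ell)}\mid X_{v_1}=b_1,\ldots,X_{v_{\ell-1}}=b_{\ell-1})$. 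Since $v_1,\ldots,v_{\ell-1}$ precede $v_\ell$ in a topological order, none of them is a descendant of $v_\ell$, so the Markov property of the BN yields $P(X_{v_\ell}=b_\ell\mid X_{\pa(v_\ell)}=x_{\pa(v_\ell)},X_{v_1}=b_1,\ldots)=P(X_{v_\ell}=b_\ell\mid X_{\pa(v_\ell)}=x_{\pa(v_\ell)})$, which is exactly a CPT entry and hence lies in $[c,1-c]$ by $c$-boundedness. Thus $q_\ell$ is a convex combination of numbers in $[c,1-c]$ and so $q_\ell\in[c,1-c]$ as well.

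The induction then runs cleanly: let $p_\ell=P(X_{v_1}=b_1,\ldots,X_{v_\ell}=b_\ell)$, with $p_0=1$; assuming $c^{\ell-1}\le p_{\ell-1}\le(1-c)^{\ell-1}$, in particular $p_{\ell-1}>0$ so $q_\ell$ is well defined, and $p_\ell=p_{\ell-1}\,q_\ell$ gives $c^\ell\le p_\ell\le(1-c)^\ell$. Taking $\ell=d$ yields $c^d\le\EE_D[f(\bfX)]=p_d\le(1-c)^d$. The only nontrivial point is the conditional-probability bound in the second paragraph; once the Markov property reduces $q_\ell$ to a convex combination of CPT entries, everything else is a routine induction, so I expect that reduction to be the main obstacle to state precisely.
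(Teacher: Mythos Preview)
Your proposal is correct and follows essentially the same approach as the paper: both reduce $\EE_D[f(\bfX)]$ to a product of $d$ conditional probabilities over the literals and argue that each factor lies in $[c,1-c]$ because it is a convex combination of CPT entries. The paper's proof is simply more terse, phrasing the key step as ``marginalizing out the variables not in the conjunction preserves the $[c,1-c]$ bound,'' whereas you spell out the chain-rule expansion and the Markov-property reduction explicitly.
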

\begin{proof}
The marginal probability (for both values 0 and 1) for any root variable is bounded in $[c,1-c]$. Similarly, for any node in the BN its conditional probabilities, conditioned on any value of the parents, is bounded in $[c,1-c]$. This implies that the same condition holds if we marginalize out all the variables not in the conjunction from the BN. Therefore the joint setting of the remaining variables is bounded as claimed. 
\end{proof}

\subsection{Using KM Directly for Disjoint DNF}
We first show that the KM algorithm can be used directly to learn disjoint DNF under any distribution where the spectral norm of conjunctions is bounded.
Recall that DNFs are disjunctions of conjunctions and that in disjoint DNF the conjunctions are mutually exclusive. 
Decision trees whose node tests are individual variables are a subset of disjoint DNF, because they can be captured by the set of paths to leaves labeled 1. 

In particular, we show the following result which is a generalization of Theorem 3.10 of \citet{KM1993} to any distribution with a bounded spectral norm for conjunctions.

\newcommand{\putcorKMdisjointDNF}{
Consider any $c$-bounded distribution $D$ with its corresponding Fourier basis where $L_1(d)$ is a bound on the spectral norm of conjunctions of size $d$.
Let $f$ be any $s$-term disjoint DNF and
let $h(\bfx)$ be the output of $\mbox{KM}(D,f,\theta,\gamma,\delta)$
with 
$\epsilon'=\epsilon/3$,
$\theta=\epsilon'/2L_1$ 
and 
$\gamma^2={\epsilon'^3}/{16L_1^2}$
where $L_1 = s L_1(d)$ and $d=\log_{1-c}(\epsilon'/4s)$. 
Then with probability at least $1-\delta$, 
$\PP_D(f(\bfX)\not = \sign(h(\bfX))) \leq \epsilon$.
}

\begin{corollary}
\label{cor:KM DNF} 
\putcorKMdisjointDNF
\end{corollary}
Appendix \ref{app:km learn} presents the details of this result. The main ingredient is Lemma~\ref{lm:sparse-approx-via-L1} which shows that if $f$ can be approximated in square norm with a sparse function then it can be approximated with (estimates of) the large coefficients of $f$.

\subsection{Learning DNF Through Feldman's Algorithm}

The KM algorithm cannot be used directly to learn (non-disjoint) DNF. 
A lower bound by \citet{Mansour1995} shows that a super-polynomial set of coefficients of $f$ is needed to achieve a small square error. 
Despite this, two approaches exist to learn DNF through the Fourier basis. The first by \citet{Jackson1997} uses boosting to learn a Fourier based representation (but where the coefficients are different from the coefficients of $f$). The second, by \citet{Feldman2012}, gives a more direct algorithm with the same effect. \citet{Feldman2012}'s algorithm uses the KM algorithm as a subroutine to extract coefficients of functions so KM still plays an important role. This can be shown to succeed with the generalized basis.

In particular, we show the following result which is a generalization of Corollary 15 of \citet{Feldman2012} to any distribution with a bounded spectral norm for conjunctions.
\newcommand{\putcorFeldmanDNF}{
Consider any $c$-bounded distribution $D$ with its corresponding Fourier basis where $L_1(d)$ is a bound on the spectral norm of conjunctions of size $d$.
Let $f$ be any $s$-term DNF and
let $g(\bfx)$ be the output of $\mbox{PTFconstruct}(D,f,\gamma,\delta)$ 
where 
$\epsilon'=\epsilon/6$,
$d=\log_{1-c}(\epsilon'/4s)$,
$L_1=2 s L_1(d)+1$, 
and 
$\gamma=\frac{\epsilon'}{L_1}$.
Then with probability at least $1-\delta$, 
$\PP_D(f(\bfX)\not = g(\bfX)) \leq \epsilon$.
}
\begin{corollary} 
\label{cor:FC15}
\putcorFeldmanDNF
\end{corollary}

Appendix \ref{app:feldman dnf} presents the details of this result and reviews the main steps of \citet{Feldman2012}'s algorithm and how they can be used with the new basis and the generalized KM algorithm.

\subsection{Agnostic Learning of Disjoint DNF Through \citet{Gopalan2008}'s Algorithm}

The algorithm of \citet{Gopalan2008} is based on the observation that, under the uniform distribution, decision trees with $t$ leaves have  spectral norm at most $t$.
Therefore, one can use the class of functions with bounded spectral norm for agnostic learning of decision trees.
Their algorithm achieves this by using the KM algorithm to compute a subgradient for  the corresponding $2^n$ dimensional optimization problem. 
For general distributions, the spectral norm of Disjoint DNF (and decision trees) is not bounded. But Disjoint DNF can be approximated with Disjoint DNF whose terms are shorter than $d=\log_{1-c}(\epsilon/4s)$, that do have bounded spectral norm as derived above.
With this observation and the extended KM algorithm the analysis can be shown to apply more generally.

In particular, we show the following result which is a generalization of Theorem 20 of \citet{Gopalan2008} to any distribution with a bounded spectral norm for conjunctions.
\newcommand{\putagnosticDTcor}{
Consider any $c$-bounded distribution $D$ with its corresponding Fourier basis where $L_1(d)$ is a bound on the spectral norm of conjunctions of size $d$.
    Let $\mathcal{C}$ be the class of disjoint DNF with $s$ terms and 
    let 
    $d=\log_{1-c}(\epsilon/4s)$,
   and $t=s L_1(d)$.

    There exists an algorithm that, given $t$, $\varepsilon > 0$, and black-box
    access to any Boolean function
    $f : \{-1,1\}^n \to \{-1,1\}$, runs in time
    $\mathrm{poly}(n,t,\varepsilon^{-1})$ and outputs a hypothesis
    $h : \{-1,1\}^n \to \{-1,1\}$ such that
    \[
    \PP_{D}[h(X) \neq f(X) ] \leq \mathrm{opt}_{\mathcal{C}} + \varepsilon .
    \]
}
\begin{corollary}\label{cor:DT}
\putagnosticDTcor
\end{corollary}
Appendix \ref{app:goplan dnf} presents the details of this results and reviews the main steps of \citet{Gopalan2008}'s algorithm and how they can be used with the new basis and the generalized KM algorithm.

\section{Learning Difference bounded Tree BNs}
\label{sec:learnTrees}
In this section we remove the assumption that the distribution is known and given as a Bayesian network.
To reduce notational clutter (using $D$ for multiple purposes) in this section the distribution is denoted by $P$. 
Recall that for our learnability result we need bounded conditions on conditionals and differences between conditionals.
For concreteness, we focus on $\alpha$-difference bounded distributions $P$ as in  Definition~\ref{def:D bounded}
with $\alpha=0.5-c$.

It is well known that tree BN are learnable using the Chow-Liu algorithm \cite{Chow1968,Chow1973} and 
\citet{Hoffgen93,Dasgupta97,BhattacharyyaGPTV23}
provide finite samples guarantees for the agnostic case. 
That is, for any distribution $P$ one can learn a tree distribution which is at most $\epsilon$ away from the best tree approximation of $P$. 
However, the output of the Chow-Liu algorithm may not produce a difference bounded BN which may prevent it from being usable by our algorithm. 
In the following we show how this can be circumvented.
We provide two approaches. One for the realizable case, where the target distribution can be represented by a difference bounded tree, and a second slightly more complex algorithm for the unrealizable case when this does not hold. To facilitate the presentation we first review the algorithm and analysis for trees without constraints.

\subsection{Learning Tree BN Distributions}

\begin{definition}
    For any discrete probability distribution $P$, define $$H(P):=\sum_x -P(x)\log P(x) = \EE_{X\sim P}[-\log P(X)]$$ as the entropy function. For a joint probability distribution $P$ with marginals $P_X, P_Y$ define the mutual information as $$I(P) := \sum_{x,y}P(x,y)\log \tfrac{P(x,y)}{P_X(x)P_Y(y)}.$$  
\end{definition}

\noindent The {\textbf{Undirected Chow-Liu Algorithm with smoothing}} works as follows:
\begin{itemize}
\item Take a large sample and calculate the empirical distribution $\hat{P}_{i,j}$ for every pair of variables $(i,j)$.
\item
Construct a weighted complete undirected graph $G=(V,E)$ where $V=\{1,\ldots, n\}$, and for all $e=\{i,j\}\in E$ assign $w(i, j)=I(\hat{P}_{i,j})$.
\item
Find a maximum spanning tree of $G$ (or a minimum spanning tree with edge weights $w(i,j)$ replaced by $M-w(i,j)$ for some constant $M$).
Orient the tree using any node as root and let this tree be $\hat{T}$.
\item
Estimate conditional probabilities $P(X_i| X_{\pa(i)})$ on the edges of $\hat{T}$ using 
Laplace smoothing (i.e., adding 1 to all counts; for binary random variable $z$ when we have $k$ successes in $m$ trials the estimate is $p(z=1)=\frac{k+1}{m+2}$). 
Note that here we adopt the convention from prior work and for a root node this means that we estimate its marginal probability.
Denote the resulting tree-induced BN distribution by $\hat{P}^+$.
\end{itemize}

\newcommand{\wtp}{\mbox{wt}_P}
\newcommand{\wtph}{\mbox{wt}_{\hat{P}}}
\newcommand{\wtG}{\mbox{wt}_{G}}
\newcommand{\dkl}{d_{KL}}

For Any distribution $P$ and tree $T$, let ${P}_{{T}}$ be the distribution induced by $T$ when using exact conditional probabilities from $P$.
For any tree $T$ let $$\wtp(T)=\sum_{(i,j)\in T} I(P_{i,j})$$ be the weight of $T$ under distribution $P$.
We have the following: 
\begin{lemma}
\label{lm:CL} 
{\textbf{(\cite{Chow1968}. See also Lemma~3.3 of \citet{BhattacharyyaGPTV23})}}
For any tree $T$,
\begin{align}
\label{eq:treeweight}
\dkl(P|| P_T)=J_P -\wtp(T)
\end{align}
where $J_P :=\sum_i H(P_i)-H(P)$ is independent of $T$.
\end{lemma}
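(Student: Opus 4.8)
The plan is to prove the classical Chow--Liu identity by expanding the KL divergence $\dkl(P\| P_T)$ directly and reorganizing the sum over variables using the tree structure of $T$. First I would write $\dkl(P\|P_T)=\sum_x P(x)\log\frac{P(x)}{P_T(x)}=-H(P)-\EE_{X\sim P}[\log P_T(X)]$. The key step is to expand $\log P_T(X)$ using the factorization of $P_T$ over the tree: since $P_T$ uses the exact conditionals from $P$, we have $P_T(x)=\prod_{i} P(x_i\mid x_{\pa(i)})$, where the root has no parent. Therefore $\EE_P[\log P_T(X)]=\sum_i \EE_P[\log P(X_i\mid X_{\pa(i)})]$, and here the expectation of each term is taken under the true marginal of $P$ over $(X_i,X_{\pa(i)})$ — this is the crucial point that makes the identity clean.

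Next I would rewrite each conditional-entropy-like term. For a non-root node $i$ with parent $j=\pa(i)$, $\EE_P[\log P(X_i\mid X_j)]=\EE_P\big[\log\frac{P(X_i,X_j)}{P(X_j)}\big]= \EE_P\big[\log\frac{P(X_i,X_j)}{P(X_i)P(X_j)}\big]+\EE_P[\log P(X_i)] = I(P_{i,j}) - H(P_i)$, using $\EE_P[\log P(X_i,X_j)/(P(X_i)P(X_j))]=I(P_{i,j})$ and $\EE_P[\log P(X_i)]=-H(P_i)$. For the root $r$, $\EE_P[\log P(X_r)]=-H(P_r)$ with no mutual-information contribution. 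Summing over all $n$ nodes, each node contributes $-H(P_i)$ (so $-\sum_i H(P_i)$ in total), and each of the $n-1$ edges of $T$ contributes $+I(P_{i,j})$, i.e.\ $+\wtp(T)$. Hence $\EE_P[\log P_T(X)]=\wtp(T)-\sum_i H(P_i)$, and plugging back, $\dkl(P\|P_T)=-H(P)-\wtp(T)+\sum_i H(P_i)=J_P-\wtp(T)$ with $J_P=\sum_i H(P_i)-H(P)$, which is manifestly independent of $T$.

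I would also note the one bookkeeping subtlety worth a sentence: the decomposition $\EE_P[\log P_T(X)]=\sum_i \EE_P[\log P(X_i\mid X_{\pa(i)})]$ requires that the variable--parent pairs $\{(i,\pa(i))\}$ indeed trace out the edge set of $T$ exactly once, which holds because $T$ is a tree oriented from a root, so every non-root node has a unique parent and the map $i\mapsto \{i,\pa(i)\}$ is a bijection onto $E(T)$. Another minor point is that all the expectations are well defined and finite under the $c$-bounded / $\alpha$-difference-bounded assumptions (or more simply, since we may restrict to distributions with full support in the relevant analysis), so no $0\log 0$ issues arise; if one wants the statement for arbitrary $P$ the usual convention $0\log0=0$ suffices.

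The main obstacle here is essentially presentational rather than mathematical: the identity is a short computation, and the only thing to be careful about is matching the orientation-dependent factorization $P_T(x)=\prod_i P(x_i\mid x_{\pa(i)})$ against the orientation-independent right-hand side $J_P-\wtp(T)$ — i.e.\ verifying that the sum of edge mutual informations $\wtp(T)$ does not depend on which node was chosen as root, which is immediate since $I(P_{i,j})$ is symmetric in $i,j$. So I would spend most of the write-up making the expansion steps explicit and flagging the bijection between oriented parent-edges and the undirected edge set of $T$.
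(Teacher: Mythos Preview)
Your proof is correct and is exactly the standard Chow--Liu computation. The paper does not give its own proof of this lemma; it simply cites \cite{Chow1968} and Lemma~3.3 of \citet{BhattacharyyaGPTV23}, so there is nothing further to compare.
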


This claim and the more general statement of Lemma 3.3 of \citet{BhattacharyyaGPTV23} hold also for forests.
In these results, note that the root (or roots in forest) does not have a parent. \citet{Chow1968} implicitly use node 0 as the parent where $I(P_{i,j})$ is zero due to independence. This can be verified to be correct both for trees and forests. 
The lemma implies that maximizing $\wtp(T)$ is equivalent to minimizing $\dkl(P|| P_T)$ and hence motivates the algorithm.
The proof shows that using estimates from $\hat{P}$ with smoothing yields a sufficiently good approximation.

For finite sample analysis, let $T^*=\argmin_T \dkl(P|| P_T)$ and $\gamma=  \dkl(P|| P_{T^*})$
and
let $\hat{T}$ be the tree output by the CL algorithm, that is $\hat{T}=\argmax_T \wtph(T)$.
Note that since $\hat{P}^+$ is a tree BN distribution structured by $\hat{T}$ we have $\hat{P}^+=\hat{P}^+_{\hat{T}}$.
The following observation gives the key component of the analysis:
\begin{observation} [High Level Argument \cite{Hoffgen93,BhattacharyyaGPTV23}]
\label{lm:hatTworks} \ \\ 
If the following conditions hold 
\begin{align*}
        &(C1)\qquad     &\wtp(\hat{T})&\geq \wtp(T^*)-\epsilon/2,\\
        &(C2)\qquad     &\dkl(P|| \hat{P}^+_{\hat{T}})&\leq \dkl(P|| P_{\hat{T}})+\epsilon/2,
\end{align*}
then 
$\dkl(P|| \hat{P}^+_{\hat{T}})\leq  \dkl(P|| P_{{T^*}})+\epsilon.$
\end{observation}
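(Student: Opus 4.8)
The plan is to prove Observation~\ref{lm:hatTworks} as a short chain of inequalities, combining the two hypotheses (C1) and (C2) with Lemma~\ref{lm:CL}. The key algebraic fact is that, by Lemma~\ref{lm:CL}, for \emph{any} tree $T$ we have $\dkl(P\| P_T)=J_P-\wtp(T)$, where $J_P$ does not depend on $T$; so maximizing the weight is literally the same as minimizing the KL divergence, up to the additive constant $J_P$. I would first apply this identity to $T=\hat T$ to rewrite $\dkl(P\| P_{\hat T})=J_P-\wtp(\hat T)$, then invoke (C1) to get $\dkl(P\| P_{\hat T})=J_P-\wtp(\hat T)\le J_P-\wtp(T^*)+\epsilon/2$, and finally apply the identity again, this time to $T=T^*$, to recognize $J_P-\wtp(T^*)=\dkl(P\| P_{T^*})$. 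This yields $\dkl(P\| P_{\hat T})\le \dkl(P\| P_{T^*})+\epsilon/2$. Then I would add (C2), which bounds the gap between the smoothed tree distribution $\hat P^+_{\hat T}$ and the exact-parameter tree distribution $P_{\hat T}$ on the \emph{same} tree $\hat T$, to conclude $\dkl(P\| \hat P^+_{\hat T})\le \dkl(P\| P_{\hat T})+\epsilon/2\le \dkl(P\| P_{T^*})+\epsilon$, which is exactly the claimed inequality.

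Concretely, the proof is the display
\begin{align*}
\dkl(P\| \hat P^+_{\hat T})
&\stackrel{(C2)}{\le} \dkl(P\| P_{\hat T})+\frac{\epsilon}{2}
= J_P-\wtp(\hat T)+\frac{\epsilon}{2}\\
&\stackrel{(C1)}{\le} J_P-\wtp(T^*)+\frac{\epsilon}{2}+\frac{\epsilon}{2}
= \dkl(P\| P_{T^*})+\epsilon,
\end{align*}
where the two equalities use \eqref{eq:treeweight} applied to $\hat T$ and to $T^*$ respectively. Since $T^*=\argmin_T \dkl(P\| P_T)$, the right-hand side is exactly $\gamma+\epsilon$, matching the intended reading of the observation.

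There is really no substantial obstacle here: the observation is purely a bookkeeping step that isolates the two quantitative lemmas (the spanning-tree optimization guarantee giving (C1), and the smoothing/parameter-estimation guarantee giving (C2)) from the information-theoretic identity of Lemma~\ref{lm:CL}. The only thing to be careful about is that (C2) is stated for the pair $(\hat P^+_{\hat T},P_{\hat T})$ built on the \emph{same} tree $\hat T$, so that Lemma~\ref{lm:CL} is applied to $\hat T$ (not to the empirical-distribution tree or to $T^*$) when converting $\dkl(P\| P_{\hat T})$ into weights; and that $\hat P^+=\hat P^+_{\hat T}$, which was already noted before the statement. If anything requires a word of justification, it is that Lemma~\ref{lm:CL} is invoked twice with two different trees but the same constant $J_P$ — which is precisely the content of the lemma, so no extra work is needed.

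Thus the remaining burden of the learnability argument is pushed entirely into establishing (C1) and (C2) with the required sample complexity, which is where the real technical content (concentration of empirical mutual informations for the Kruskal step, and the effect of Laplace smoothing on per-edge conditional KL terms, under the $c$-boundedness assumption) will live; the hard part of the overall section is there, not in this observation.
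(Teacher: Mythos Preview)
Your proof is correct and matches the paper's own argument essentially line for line: both use \eqref{eq:treeweight} to convert (C1) into $\dkl(P\|P_{\hat T})\le \dkl(P\|P_{T^*})+\epsilon/2$ and then chain with (C2). The only cosmetic difference is that the paper applies (C1) first and (C2) second, whereas you start with (C2); the content is identical.
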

\begin{proof}
If (C1) holds then by \eqref{eq:treeweight}, 
$$\dkl(P|| {P}_{\hat{T}})=J_P-\wtp(\hat{T}) \leq J_P-\wtp({T^*})+\epsilon/2 
= \dkl(P|| P_{{T^*}})+\epsilon/2.$$
Then, using (C2) we have 
$$\dkl(P|| \hat{P}^+_{\hat{T}})\leq \dkl(P|| P_{\hat{T}})+\epsilon/2
\leq \dkl(P|| P_{{T^*}})+\epsilon.$$
\end{proof}

\noindent
\textbf{Note:} The algorithm uses non smoothed probabilities to select $\hat{T}$ and smoothed probabilities to calculate $\hat{P}^+$ but the chaining of conditions in (C1), (C2) is designed to work.

\smallskip
\noindent
\citet{BhattacharyyaGPTV23}  show that the conditions needed are satisfied.
We first have:

\begin{lemma}  [$I$ is well approximated: \cite{BhattacharyyaGPTV23} Lemma 5.2]
\label{lm:BGPV-Iapprox}
When using sample size 
$$M=\Theta\qty(\frac{n^2}{\epsilon^2}\log\frac{n}{\delta}\log^2\qty(\frac{n}{\epsilon}\log\frac{n}{\delta})),$$
w.p.\ $\geq 1-\delta$, we have
(1) for any edge $|I(\hat{P}_{i,j})-I(P_{i, j})|\leq \frac{\epsilon}{4n}$, and therefore
(2) for any tree $T$,
$|\wtp(T)-\wtph(T)|\leq \epsilon/4$.
\end{lemma}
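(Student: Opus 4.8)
The plan is to reduce the mutual-information statement to an $\ell_1$-concentration statement about the pairwise empirical distributions, and then transfer it through a continuity (stability) bound for Shannon entropy. First I would fix a pair $(i,j)$ and use the identity $I(P_{i,j})=H(P_i)+H(P_j)-H(P_{i,j})$, where $P_i,P_j$ are the one-variable marginals of $P_{i,j}$; the same identity holds with $\hat P_{i,j}$. Since each of the four cell probabilities $\hat P_{i,j}(a,b)$ is the average of $M$ i.i.d.\ Bernoulli indicators with mean $P_{i,j}(a,b)$, a Hoeffding bound gives $|\hat P_{i,j}(a,b)-P_{i,j}(a,b)|\le\eta$ with $\eta=\Theta\big(\sqrt{M^{-1}\log(n/\delta)}\big)$, simultaneously for all four cells of all $\binom n2$ pairs after a union bound (the factor $4\binom n2$ enters only inside the logarithm, changing constants). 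This also controls the marginals, so on the good event $\|\hat P_{i,j}-P_{i,j}\|_1\le 4\eta$ and likewise for each one-variable marginal.

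Second, I would invoke the standard stability bound for entropy: for distributions $Q,Q'$ on a domain of size $k$ with $\|Q-Q'\|_1=:\tau\le 1/2$ one has $|H(Q)-H(Q')|\le \tau\log(k/\tau)$. Applying this to the three entropy terms (each with $k\le 4$) and using the triangle inequality yields $|I(\hat P_{i,j})-I(P_{i,j})|\le c_0\,\eta\log(1/\eta)$ for an absolute constant $c_0$, valid once $\eta$ is small enough. It then remains to pick $M$ so that $c_0\,\eta\log(1/\eta)\le \epsilon/(4n)$. Since $1/\eta^2=\Theta(M/\log(n/\delta))$, choosing $M=\Theta\!\big(\tfrac{n^2}{\epsilon^2}\log\tfrac n\delta\log^2(\tfrac n\epsilon\log\tfrac n\delta)\big)$ makes $\log(1/\eta)=\Theta\!\big(\log(\tfrac n\epsilon\log\tfrac n\delta)\big)$ and hence $\eta\log(1/\eta)=O(\epsilon/n)$ with the correct constant; this is the routine algebra behind the stated sample size, which I would reproduce along the lines of \citet{BhattacharyyaGPTV23}. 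That establishes (1).

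For (2), every tree $T$ on $n$ vertices has $n-1$ edges, so on the same good event, by (1) and the triangle inequality, $|\wtp(T)-\wtph(T)|\le\sum_{(i,j)\in T}|I(P_{i,j})-I(\hat P_{i,j})|\le (n-1)\cdot\tfrac{\epsilon}{4n}<\epsilon/4$, and this holds uniformly over all trees $T$ simultaneously because the event in (1) is over all pairs at once.

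The main obstacle is the non-Lipschitzness of mutual information: a naive bound in total-variation distance fails near the boundary of the simplex, so the logarithmic correction in the entropy-stability inequality is essential, and carrying this $\log(1/\eta)$ factor through is exactly what forces the extra $\log^2$ term in $M$. I note that in the $\alpha$-difference-bounded setting of this section all cell and marginal probabilities are bounded away from $0$ (marginals lie in $[c,1-c]$, and a pairwise cell is at least $c^2$), so one could alternatively use a clean Lipschitz bound for $\log$ on $[c^2,1]$ and obtain slightly sharper constants; but the general argument above already gives the claimed bound.
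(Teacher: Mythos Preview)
The paper does not give its own proof of this lemma; it is stated verbatim as a citation of Lemma~5.2 in \citet{BhattacharyyaGPTV23} and used as a black box. Your sketch is correct and follows the standard route---Hoeffding on the four cell probabilities, a union bound over all $\binom{n}{2}$ pairs, then the entropy-continuity inequality $|H(Q)-H(Q')|\le \tau\log(k/\tau)$ applied to each term in $I=H(P_i)+H(P_j)-H(P_{i,j})$---and the algebra matching $\eta\log(1/\eta)$ to $\epsilon/n$ is exactly what produces the $\log^2$ factor in $M$. Part~(2) is immediate from~(1) via the triangle inequality over the $n-1$ edges, as you wrote.
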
 
This implies that the empirical weight maximizer on $\hat{P}$ is not far off from optimal:

\begin{lemma} [Condition C1 holds: \cite{BhattacharyyaGPTV23} Lemma 5.2]
\label{lm:BGPV-c1}
When running the CL algorithm  with sample size 
$M=\Theta\qty(\frac{n^2}{\epsilon^2}\log\frac{n}{\delta}\log^2(\frac{n}{\epsilon}\log\frac{n}{\delta}))$,
w.p.\ $\geq 1-\delta$, 
it outputs a tree $\hat{T}$, such that 
$\wtp(\hat{T})\geq \wtp(T^*)-\epsilon/2$.
\end{lemma} 
The second component is given by:

\begin{lemma} [Condition C2 holds: \cite{BhattacharyyaGPTV23} Theorem 1.4]

\label{lm:BGPV-c2}
For any $\hat{T}$,
for sample size $$M=\Theta\qty(\frac{n}{\epsilon}\log\frac{n}{\delta}\log(\frac{n}{\epsilon}\log\frac{1}{\delta})),$$ w.p.\ $\geq 1-\delta$, $\dkl(P|| \hat{P}^+_{\hat{T}})\leq \dkl(P|| P_{\hat{T}})+\epsilon/2$.
\end{lemma}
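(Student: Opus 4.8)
\textbf{Step 1 (chain-rule decomposition).} The plan is to reduce the excess divergence to a sum of per-edge conditional divergences and then bound each one. Since $P_{\hat T}$ and $\hat P^+_{\hat T}$ are both structured by $\hat T$ and differ only in their conditional tables, the difference telescopes into a sum of $n$ non-negative terms:
\begin{align*}
\dkl(P\| \hat P^+_{\hat T}) - \dkl(P\| P_{\hat T})
&= \EE_{X\sim P}\Big[\sum_v \log\tfrac{P(X_v\mid X_{\pa(v)})}{\hat P^+(X_v\mid X_{\pa(v)})}\Big]\\
&= \sum_v \EE_{X_{\pa(v)}\sim P}\big[\dkl\big(P(\cdot\mid X_{\pa(v)})\,\big\|\,\hat P^+(\cdot\mid X_{\pa(v)})\big)\big],
\end{align*}
the last step because, conditioned on $X_{\pa(v)}=a$, the inner expectation over $X_v$ equals $\dkl(P(\cdot\mid a)\|\hat P^+(\cdot\mid a))\ge 0$. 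It therefore suffices to bound, for each ordered pair $(i,j)$ and value $a\in\{0,1\}$, the term $P_j(a)\,\dkl\big(P(X_i\mid X_j{=}a)\,\|\,\hat P^+(X_i\mid X_j{=}a)\big)$ by $O(\epsilon/n)$ with high probability, and then sum over the $\le n$ edges of $\hat T$; a union bound over the $O(n^2)$ pairs makes this hold simultaneously and contributes the $\log(n/\delta)$ factor.

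\textbf{Step 2 (heavy vs.\ light parent configurations).} Fix $(i,j,a)$, write $q=P_j(a)$, $p=P(X_i{=}1\mid X_j{=}a)$, and let $m_a$ be the number of training points with $X_j=a$. Laplace smoothing forces both entries of $\hat P^+(\cdot\mid a)$ into $[\tfrac1{m_a+2},\tfrac{m_a+1}{m_a+2}]$, so $\dkl(P(\cdot\mid a)\|\hat P^+(\cdot\mid a))=O(\log M)$ \emph{always}. If $q\le\tau:=c_0\epsilon/(n\log M)$ (``light''), this worst-case bound already gives $q\cdot\dkl=O(\epsilon/n)$. If $q>\tau$ (``heavy''), a Chernoff bound gives $m_a\ge Mq/2$ except with probability $\delta/\mathrm{poly}(n)$; then a Hoeffding bound gives $|\hat P^+(X_i{=}1\mid a)-p|=O\big(\sqrt{\log(n/\delta)/m_a}\big)$, and since $P$ is $c$-bounded we have $p\in[c,1-c]$, so for $M$ large enough $\hat P^+(X_i{=}1\mid a)$ stays bounded away from $\{0,1\}$ and the elementary inequality $\dkl(\mathrm{Ber}(p)\|\mathrm{Ber}(q'))\le(p-q')^2/(q'(1-q'))$ yields $\dkl(P(\cdot\mid a)\|\hat P^+(\cdot\mid a))=O\big(\tfrac{\log(n/\delta)}{c\,m_a}\big)=O\big(\tfrac{\log(n/\delta)}{c\,Mq}\big)$, hence $q\cdot\dkl=O\big(\tfrac{\log(n/\delta)}{cM}\big)=O(\epsilon/n)$ once $M=\Omega\big(\tfrac{n}{c\epsilon}\log\tfrac n\delta\big)$.

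\textbf{Step 3 (assembling $M$).} Summing the per-configuration bounds over the $\le n$ edges of $\hat T$ (two values each) gives $\dkl(P\|\hat P^+_{\hat T})-\dkl(P\|P_{\hat T})\le\epsilon/2$. The constraints accumulated above---heavy configurations need $M\tau$ to exceed the Hoeffding threshold $\Omega(\log(n/\delta)/c^2)$, light configurations need $\tau\log M=O(\epsilon/n)$, and the union bound runs over $O(n^2)$ pairs---are simultaneously satisfied by $M=\Theta\big(\tfrac n\epsilon\log\tfrac n\delta\,\log(\tfrac n\epsilon\log\tfrac1\delta)\big)$, with the $\log M$ term producing the trailing $\log(\tfrac n\epsilon\log\tfrac1\delta)$ factor.

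The main obstacle is Step~2 for parent configurations that are seen rarely or not at all: there the conditional estimate carries essentially no signal, so the bound must come solely from (i) smoothing keeping the divergence at most logarithmic rather than infinite and (ii) the total probability mass of all such configurations being negligible; balancing these while keeping only a single logarithmic overhead in $M$ is the delicate point. For fully general (not $c$-bounded) $P$---the setting of \citet{BhattacharyyaGPTV23} Theorem~1.4---one additionally needs the classical fact that the per-step redundancy of Laplace's rule is $O(1/m)$ uniformly in $p\in[0,1]$ to replace the Hoeffding step.
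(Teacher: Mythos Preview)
The paper does not give its own proof of this lemma; it is quoted verbatim as Theorem~1.4 of \citet{BhattacharyyaGPTV23} and used as a black box. So there is no in-paper argument to compare against, and the question is simply whether your sketch stands on its own.

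Your Step~1 decomposition is correct and is exactly the right starting point: subtracting the two divergences cancels the $J_P$ and mutual-information pieces of Lemma~\ref{lem:dkl p Q tree}, leaving the sum of weighted conditional KL terms. The heavy/light split in Step~2 is also the standard mechanism.

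The one genuine gap is that your heavy-configuration argument invokes $c$-boundedness of $P$ (``since $P$ is $c$-bounded we have $p\in[c,1-c]$''), which is \emph{not} an assumption of this lemma as stated. The lemma is presented in the paper's review of the unconstrained Chow--Liu analysis and must hold for arbitrary $P$; the $c$-boundedness hypothesis enters only later, in Section~\ref{sec:learnTrees}'s difference-bounded variants. Without it, your Hoeffding-plus-quadratic-KL step fails when the true conditional $p$ is near $0$ or $1$, since the denominator $q'(1-q')$ in your bound is no longer controlled. You flag this yourself in the final paragraph and correctly name the missing ingredient---the uniform $O(1/m)$ redundancy of Laplace smoothing over all $p\in[0,1]$---but as written the body of the proof does not establish the lemma at its stated generality. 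If you want a self-contained argument, replace the Hoeffding step in the heavy case by that redundancy bound (in expectation) together with a concentration argument for the empirical conditional counts; this is essentially what \citet{BhattacharyyaGPTV23} do.
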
 
This lemma also holds for forests where the empty set of parents behaves as in Lemma~\ref{lm:CL}.

\subsection{Learning Difference Bounded Tree Distributions: The Realizable Case}

In this section we assume the realizable case. Let $T^*$ be a difference bounded forest with constant $\alpha=(0.5-{c})$ such that $P=P_{T^*}$.
We have $\gamma=  \dkl(P|| P_{T^*})=0$.

\noindent The \textbf{Difference-restricted directed Chow-Liu Algorithm with smoothing} 
works as follows:
\begin{itemize}
\item
Take a large sample and calculate the empirical mutual information $I(\hat{P}_{i,j})$ between every pair of variables
using the empirical distribution $\hat{P}_{i,j}$.
\item
Construct a weighted complete {\em directed} graph $G=(V,E)$ where $V=\{1,\ldots, N\}$, and for all pairs $\{i, j\}$ 
assign the same weight to edges in both directions
$w(i, j)=w(j,i)=I(\hat{P}_{i,j})$.
\item
For each edge $(i, j)$, 
compute its Laplace smoothing conditional probability, and if it is not $(0.5-\frac{c}{2})$-difference bounded replace the weight with $w(i, j)=0$. 
\item
Find a maximum directed spanning tree $\tilde{T}$ of $G$ 
using the algorithm of \citet{Edmonds1967,GGST1986}
(or a minimum spanning tree with edge weights $w(i,j)$ replaced by $M-w(i,j)$).
\item
Remove 0 weight edges from $\tilde{T}$ to form a forest $\hat{T}$.
\item
Estimate conditional probabilities $p(X_i| X_{pa(X_i)})$ on the edges of $\hat{T}$ using the original sample,
with Laplace smoothing (i.e., adding 1 to all counts). 
Here too, for a root node this means that we estimate its marginal probability.
Denote the resulting forest induced BN distribution by $\hat{P}^+$.
\end{itemize}

Note that we assume that $T^*$ is difference bounded with constant $(0.5-{c})$ and that the algorithm filters edges violating boundedness for $(0.5-\frac{c}{2})$ which is less strict.

\begin{lemma} [Condition C1 holds] 
\label{lm:directed-c1-realizable}
When running the directed CL the algorithm  with sample size 
$M=\Theta(\frac{n^2}{\epsilon^2}\log\frac{n}{\delta}\log^2\qty(\frac{n}{\epsilon}\log\frac{n}{\delta}))$,
w.p.\ $\geq 1-\delta$, 
it outputs a tree $\hat{T}$, such that 
$\wtp(\hat{T})\geq \wtp(T^*)-\epsilon/2$.
\end{lemma}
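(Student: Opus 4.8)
The plan is to follow the undirected analysis of \citet{BhattacharyyaGPTV23} (Lemma~\ref{lm:BGPV-c1}), adding one new ingredient to control the edge-filtering step. We work on the intersection of two high-probability events. Event $E_1$: for every pair $\{i,j\}$, $|I(\hat P_{i,j})-I(P_{i,j})|\le \epsilon/(4n)$, which by Lemma~\ref{lm:BGPV-Iapprox} holds with probability $\ge 1-\delta/2$ for the stated $M$ and implies $|\wtp(T)-\wtph(T)|\le \epsilon/4$ for every forest $T$. Event $E_2$: for every directed edge $(i,j)$ and every $a\in\{0,1\}$, the Laplace-smoothed estimate $\hat p(X_j=1\mid X_i=a)$ lies within $c/4$ of $P(X_j=1\mid X_i=a)$. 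Since $P=P_{T^*}$ is $c$-bounded, every marginal $P(X_i=a)$ is at least $c$, so a Chernoff bound gives $\Omega(Mc)$ samples with $X_i=a$, and then a Hoeffding bound together with the $O(1/(Mc))$ Laplace smoothing bias gives the claimed accuracy; a union bound over the $O(n^2)$ edge/value pairs shows $E_2$ holds with probability $\ge 1-\delta/2$ once $M=\Omega(c^{-3}\log(n/\delta))$, which is dominated by the stated sample size for constant $c$.

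Next I would show that on $E_2$ no edge of $T^*$ is zeroed out. Fix a non-root node $j$ of $T^*$ with parent $p=p(j)$. By the realizability assumption the conditional $P(X_j\mid X_p)$ is $(0.5-c)$-difference bounded, i.e.\ $|P(X_j=1\mid X_p=1)-P(X_j=1\mid X_p=0)|\le 0.5-c$; combining with $E_2$ and the triangle inequality, the corresponding smoothed estimate has difference at most $(0.5-c)+c/4+c/4=0.5-c/2$, so the directed edge $(p,j)$ passes the $(0.5-\tfrac{c}{2})$-difference-bounded filter and retains weight $w(p,j)=I(\hat P_{p,j})\ge 0$. Hence $T^*$ itself is a feasible structure for the maximum-weight routine (it is a branching, since every node has in-degree $\le 1$; if a connected arborescence is required, extend $T^*$ to a spanning arborescence by adding nonnegative-weight edges between the component roots), so the structure $\tilde T$ returned satisfies $\sum_{e\in\tilde T}w(e)\ge \sum_{e\in T^*}w(e)=\wtph(T^*)$, using in the last equality that no edge of $T^*$ was filtered. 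Finally $\hat T$ is $\tilde T$ with only $0$-weight edges removed, and every surviving edge $e$ has $w(e)=I(\hat P_e)$ (it was not filtered), so $\wtph(\hat T)=\sum_{e\in\hat T}I(\hat P_e)=\sum_{e\in\hat T}w(e)=\sum_{e\in\tilde T}w(e)\ge \wtph(T^*)$.

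Putting the two events together, which happens with probability $\ge 1-\delta$, and applying $E_1$ twice: $\wtp(\hat T)\ge \wtph(\hat T)-\epsilon/4\ge \wtph(T^*)-\epsilon/4\ge \wtp(T^*)-\epsilon/2$, as claimed. The main obstacle is the argument for $E_2$ and the non-filtering of $T^*$'s edges: the algorithm filters with the looser threshold $0.5-c/2$ whereas $T^*$ is only assumed $(0.5-c)$-difference bounded, so one must argue that the slack $c/2$ absorbs the estimation error of the smoothed conditionals uniformly over all edges --- this is exactly where realizability enters, and where $c$-boundedness is used to guarantee enough samples for each parent value. Everything else is a direct transcription of the undirected Chow--Liu analysis (with the edge-weight transfer bound of Lemma~\ref{lm:BGPV-Iapprox} applied to forests rather than spanning trees).
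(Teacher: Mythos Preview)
Your proposal is correct and follows essentially the same approach as the paper: intersect the mutual-information accuracy event of Lemma~\ref{lm:BGPV-Iapprox} with the event that conditional-probability estimates are within $c/4$ of the truth, deduce that no edge of $T^*$ is filtered, use optimality of $\tilde T$ to get $\wtph(\hat T)\ge\wtph(T^*)$, and transfer to true weights. You are slightly more careful than the paper in two places (explicitly handling the Laplace-smoothing bias, and noting the forest-to-arborescence extension needed when $T^*$ is not spanning), and you take the union bound for $E_2$ over all $O(n^2)$ directed edges whereas the paper only bounds over the $O(n)$ edges of $T^*$; both versions are dominated by the stated sample size for constant $c$.
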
 
\begin{proof}
We first argue that 
with probability $\geq 1-\delta/2$
the edges of $T^*$ are not artificially assigned weight 0 in $G$.
To prove this it suffices to show that: 
\begin{align}
\label{eq:cestimate}
\mbox{ For all variables $i$ and values $b$ to their parent $j$},
|\hat{p}(x_i|X_j=b)-p(x_i|X_j=b)|\leq c/4. 
\end{align}
If this holds then 
$3c/4\leq 
\hat{p}(x_i|X_j=b) \leq
1-3c/4$
and $$|\hat{p}(x_i|X_j=1)-\hat{p}(x_i|X_j=0)|\leq |{p}(x_i|X_j=1)-{p}(x_i|X_j=0)|+c/2\leq 0.5-c/2,$$ as required.

For fixed $i$ and $b$, if we have at least $M_1=\frac{8}{c^2}\ln\frac{16n}{\delta}$ 
samples with $X_j=b$
then Hoeffding's bound implies that this holds with probability 
 $\geq 1-\delta/8n$.
If we have enough samples for all $i,b$ then \eqref{eq:cestimate} holds with probability $\geq 1-\delta/4$.

To guarantee $M_1$ samples note that for any $b$ we have $P(success)=q \geq c$ and let $\hat{q}$ be the number of successes in $M_2=2M_1/c$ trials. 
Using a 1 sided Hoeffding bound we have that if $M_2\geq \frac{2}{c^2}\ln\frac{8n}{\delta}$ 
then with probability $\geq 1-\delta/4$,
for all $i,b$ we have
$q-\hat{q}\leq c/2$ and the number of successes is at least $(c/2) M_2 =M_1$.
This implies that \eqref{eq:cestimate} holds with probability $\geq 1-\delta/2$.

Next, for any tree $T$ let $\wtG({T})$ be the sum of $T$ edge weights in $G$.
We have that
with probability $\geq 1-\delta/2$
\begin{align}
\label{eq:empiricalDCL}
\wtph(\hat{T}) = \wtG(\tilde{T})\geq  \wtG({T^*})=\wtph({T^*}).
\end{align}
The left equality holds because $\hat{T}$ only removes zero edge weights from $\tilde{T}$ and for other edges the weights are equal.
The right equality holds because $T^*$ edges are not assigned zero weight in $G$.

We next observe that  Lemma~\ref{lm:BGPV-Iapprox} implies that with probability $\geq 1-\delta/2$, $|\wtp(\hat{T})-\wtph(\hat{T})|<\epsilon/4$
and
$|\wtp({T^*})-\wtph({T^*})|<\epsilon/4$.
Combining this with \eqref{eq:empiricalDCL}
we have with probability $\geq 1-\delta$,
$\wtp(\hat{T}) \geq  \wtp({T^*})+\epsilon/2.$
Finally, note that for constant $c$ the sample bound in the statement (from Lemma~\ref{lm:BGPV-Iapprox}) is larger than $M_2$ 
and the condition for $M_2$ above holds.
\end{proof}

We next note that Lemma~\ref{lm:BGPV-c2} can be used without change on the forest $\hat{T}$ to guarantee C2
and therefore Observation~\ref{lm:hatTworks} implies

\begin{corollary} [Learning $(0.5-c)$-Difference bounded Tree BN]
\label{lm:directed-c1}
When running the directed CL the algorithm  with sample size 
$M=\Theta(\frac{n^2}{\epsilon^2}\log\frac{n}{\delta}\log\qty(\frac{n}{\epsilon}\log\frac{n}{\delta}))$
on distribution $P$ which can be represented by a $(0.5-c)$ difference bounded tree BN,
w.p.\ $\geq 1-\delta$, 
it outputs a forest $\hat{T}$ and a $(0.5-\frac{c}{2})$ difference bounded distribution $\hat{P}^+$ such that 
$\dkl(P|| \hat{P}^+_{\hat{T}})\leq  
\epsilon.$
\end{corollary}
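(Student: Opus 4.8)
The plan is to combine conditions (C1) and (C2) from Observation~\ref{lm:hatTworks} with the realizability assumption, and to separately verify that the returned distribution $\hat P^+$ is difference bounded by construction. First I would invoke Lemma~\ref{lm:directed-c1-realizable}, which (after rescaling $\delta$ by a constant) says that with the stated sample size the directed Chow--Liu run outputs a forest $\hat T$ with $\wtp(\hat T)\geq \wtp(T^*)-\epsilon/2$, i.e.\ condition (C1) holds, with probability at least $1-\delta/2$. This lemma already absorbs the subtlety introduced by the filtering step: its proof shows, via the estimation bound \eqref{eq:cestimate}, that with high probability the edges of the true tree $T^*$ are never assigned weight $0$ in $G$, so the directed maximum spanning tree is competitive with $T^*$ even after the zero-weight edges are dropped.

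Second, I would apply Lemma~\ref{lm:BGPV-c2} to the forest $\hat T$; as remarked in the text this lemma holds verbatim for forests, with the empty parent set of each root treated as in Lemma~\ref{lm:CL}. With the stated sample size it gives, with probability at least $1-\delta/2$, $\dkl(P\|\hat P^+_{\hat T})\leq \dkl(P\|P_{\hat T})+\epsilon/2$, which is condition (C2). Since $\hat P^+$ is a forest-induced BN distribution structured by $\hat T$, we have $\hat P^+=\hat P^+_{\hat T}$. A union bound makes (C1) and (C2) simultaneously true with probability at least $1-\delta$, so Observation~\ref{lm:hatTworks} gives $\dkl(P\|\hat P^+_{\hat T})\leq \dkl(P\|P_{T^*})+\epsilon$. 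In the realizable case $P=P_{T^*}$, hence $\dkl(P\|P_{T^*})=0$ and $\dkl(P\|\hat P^+_{\hat T})\leq \epsilon$ as claimed. Finally, the sample size in the statement dominates the requirements of both Lemma~\ref{lm:directed-c1-realizable} and Lemma~\ref{lm:BGPV-c2}, so a single sample of that size suffices for all of the above.

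Third, I would check the difference-boundedness of $\hat P^+$. By construction every edge that survives into $\hat T$ had its Laplace-smoothed conditional probability table tested and retained only if it is $(0.5-\tfrac c2)$-difference bounded, and the conditionals defining $\hat P^+$ are precisely those same smoothed estimates; a node whose incoming edge was removed during forest formation becomes a root, for which the difference condition is vacuous. Hence $\hat P^+$ is $(0.5-\tfrac c2)$-difference bounded, which (together with the $\dkl$ bound) is exactly what is asserted, and is what makes $\hat P^+$ admissible input to the spectral-norm bounds of Section~\ref{sec:tree} and thus to the DNF-learnability corollaries.

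The main obstacle I anticipate is bookkeeping rather than conceptual: ensuring the filtering threshold $(0.5-\tfrac c2)$ is loose enough that the genuinely $(0.5-c)$-difference-bounded edges of $T^*$ survive with high probability (handled by \eqref{eq:cestimate} inside Lemma~\ref{lm:directed-c1-realizable}), yet tight enough that the retained distribution is still difference bounded so the downstream results apply; and tracking the $\delta$-splitting and the sample-size dominance across the two cited lemmas. Everything else follows by chaining the already-established statements.
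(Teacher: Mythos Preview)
Your proposal is correct and follows essentially the same approach as the paper: invoke Lemma~\ref{lm:directed-c1-realizable} for (C1), Lemma~\ref{lm:BGPV-c2} for (C2), combine via Observation~\ref{lm:hatTworks}, and use realizability to set $\dkl(P\|P_{T^*})=0$. Your added explicit verification that $\hat P^+$ is $(0.5-\tfrac c2)$-difference bounded by construction (surviving edges passed the filter, and roots have vacuous difference condition) is a useful elaboration that the paper leaves implicit.
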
 

\subsection{Learning Difference Bounded Tree Distributions: The Unrealizable Case}

In this subsection, we extend our results to the unrealizable case where $P\neq P_T$ for any difference bounded tree $T$. The unrealizable case encapsulates two scenarios: (i) $P$ does not have any tree structure BN, and (ii) $P=P_T$ for some tree $T$ that is not difference bounded.  We present an algorithm that learns a difference bounded tree distribution that is close to the best such approximation of $P$. 
This criterion is captured by the following definition.
\begin{definition}\label{def:tree approx}
A difference-bounded tree BN $Q^*_{T^*}$ is an $\epsilon$ tree-approximation of  $P$ if 
\begin{align}\label{eq: eps approximate}
   \dkl(P|| Q^*_{T^*}) \leq \epsilon + \min_{\text{tree}~ T}\min_{\text{difference bounded}~Q} \dkl(P \| Q_{T}).
\end{align}
\end{definition}

For any pair of nodes $(i,j)$, let
$$f(P_j, P_{i|j}, Q_{i|j}):=\sum_{x\in \{0,1\}} P_j(x) \dkl\qty(P_{i|j}(\cdot | x) ~||~ Q_{i|j}(\cdot | x))-I(P_jP_{i|j}),$$ where $P_jP_{i|j}$ gives a joint distribution for variables $i,j$. 
Then, for any tree $T$, and distributions $P, Q$ we define $f(P_T, Q_T)$  to be the sum  
$$f(P_T, Q_T) := \sum_{v\in T} f(P_{\pa(v)}, P_{v|\pa(v)}, Q_{v|\pa(v)}).$$

With the above notations and recalling the definition of $J_p$, the following statement holds.
\begin{lemma}[\cite{BhattacharyyaGPTV23} Lemma 3.3]\label{lem:dkl p Q tree}
    For any tree $T$ and distributions $P, Q$
    \begin{align*}
        \dkl(P || Q_T) = J_P  +f(P_T, Q_T).
    \end{align*}
\end{lemma}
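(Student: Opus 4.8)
\textbf{Proof plan for Lemma~\ref{lem:dkl p Q tree}.}
The plan is to expand $\dkl(P\|Q_T)$ directly from its definition and use the product (factored) form of the tree-structured distribution $Q_T$. Since $T$ is a tree, $Q_T(x)=\prod_{v\in T} Q_{v|\pa(v)}(x_v|x_{\pa(v)})$, where the root (or each root in a forest) is handled by the convention that $\pa(v)=\emptyset$ and $Q_{v|\pa(v)}$ is just the marginal $Q_v$; consistently with the rest of the paper, one may also introduce an auxiliary node $0$ as the parent of each root with $Q_{v|0}=Q_v$, so that $I(P_0 P_{v|0})=0$ and the formula is uniform across all nodes. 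First I would write
\[
\dkl(P\|Q_T)=\EE_{X\sim P}\Big[\log\frac{P(X)}{Q_T(X)}\Big]
=\EE_{X\sim P}\big[\log P(X)\big]-\sum_{v\in T}\EE_{X\sim P}\big[\log Q_{v|\pa(v)}(X_v|X_{\pa(v)})\big].
\]
The first term is $-H(P)$. For the second term, for each $v$ the expectation only depends on the pair $(X_v,X_{\pa(v)})$, so it equals $\EE_{(X_v,X_{\pa(v)})\sim P}[\log Q_{v|\pa(v)}(X_v|X_{\pa(v)})]$, which I would rewrite by conditioning on $X_{\pa(v)}$.

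The key algebraic step is the standard cross-entropy identity: for each fixed value $x$ of the parent,
\[
\EE_{X_v\sim P_{v|\pa(v)}(\cdot|x)}\big[-\log Q_{v|\pa(v)}(X_v|x)\big]
=H\big(P_{v|\pa(v)}(\cdot|x)\big)+\dkl\big(P_{v|\pa(v)}(\cdot|x)\,\|\,Q_{v|\pa(v)}(\cdot|x)\big).
\]
Averaging over $x\sim P_{\pa(v)}$ gives $H(P_{v|\pa(v)}\mid P_{\pa(v)})+\sum_x P_{\pa(v)}(x)\dkl(P_{v|\pa(v)}(\cdot|x)\|Q_{v|\pa(v)}(\cdot|x))$. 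The second summand is exactly the $\sum_x P_j(x)\dkl(\cdots)$ piece appearing in the definition of $f(P_j,P_{i|j},Q_{i|j})$. Then I would use the chain-rule/mutual-information rewriting of the conditional entropy: $H(P_{v|\pa(v)}\mid P_{\pa(v)}) = H(P_v) - I(P_{\pa(v)}P_{v|\pa(v)})$. Summing over $v\in T$, the $-H(P_v)$ terms aggregate to $\sum_v H(P_v)$, which together with the $-H(P)$ from the first term assembles $J_P=\sum_v H(P_v)-H(P)$; the $-I(P_{\pa(v)}P_{v|\pa(v)})$ terms combine with the KL terms to form precisely $\sum_v f(P_{\pa(v)},P_{v|\pa(v)},Q_{v|\pa(v)})=f(P_T,Q_T)$. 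This yields $\dkl(P\|Q_T)=J_P+f(P_T,Q_T)$, as claimed.

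I do not anticipate a serious obstacle here — the result is essentially a bookkeeping exercise combining the factorization of $Q_T$ with two elementary information-theoretic identities (cross-entropy $=$ entropy $+$ KL, and $H(X_v\mid X_{\pa(v)})=H(X_v)-I(X_v;X_{\pa(v)})$). The one point requiring a little care is the treatment of the root(s): one must check that the conventions for $\pa(v)=\emptyset$ (using the marginal, and setting the mutual-information term to $0$) are consistent, so that the single clean formula holds for trees and, as remarked after Lemma~\ref{lm:CL}, for forests as well. Since this is exactly Lemma~3.3 of \citet{BhattacharyyaGPTV23}, one could alternatively just cite it; I would include the short derivation above for self-containedness.
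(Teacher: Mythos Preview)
Your derivation is correct. The paper itself does not give a proof of this lemma; it is stated with a citation to \citet{BhattacharyyaGPTV23} (their Lemma~3.3) and then used directly. Your argument --- factoring $Q_T$ over the tree, applying the cross-entropy identity to obtain the conditional-KL terms, and then using $H(X_v\mid X_{\pa(v)})=H(X_v)-I(X_v;X_{\pa(v)})$ to assemble $J_P$ and $f(P_T,Q_T)$ --- is exactly the standard computation underlying the cited result, and your handling of the root convention matches the paper's remark after Lemma~\ref{lm:CL}. So your proposal is a correct self-contained proof of a lemma the paper simply imports by reference.
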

The lemma implies that the KL divergence decomposes into two components: the first term $J_p$ depends only on $P$, the second term depends on $T$ and $Q$. Next, we define the following term for any pair of nodes $(i,j)$
\begin{align}
    & \ltp(i,j):= \min_{Q_{i|j}}f(P_j, P_{i|j}, Q_{i|j}), 
    \nonumber
\\ &
\label{eq:lPoptimization}
\mbox{subject to } \qquad
    c\leq Q_{i|j}( y| x)\leq 1-c, \qquad \abs{Q_{i|j}(y | 0) - Q_{i|j}(y | 1)}\leq \alpha.
\end{align}
This definition is naturally extended to a tree: 
\begin{align*}
    \ltp(T):=\sum_{v\in T} \ltp(v, \pa(v)).
\end{align*}
With this notation the right-hand side of \eqref{eq: eps approximate} is simplified to 
\begin{align*}
    \dkl(P|| Q^*_{T^*}) \leq \epsilon + J_P + \min_{\text{tree}~ T} \ltp(T)
\end{align*}
Therefore, we consider $\ltp(T)$ as the metric for choosing the right tree structure and modify the algorithm accordingly.
Here we combine the two steps of tree selection and parameter estimation into one step where the latter is performed via  \eqref{eq:lPoptimization}.
Note that the optimization over $Q_{i|j}$ is done independently for each edge in $T$.
In addition, it is easy to check that the optimization objective in \eqref{eq:lPoptimization} is convex in its parameters 
$Q_{i|j}(y | 0),Q_{i|j}(y | 1)$, the constraints are linear and the domain is convex the therefore the optimization can be solved efficiently.

\noindent The \textbf{$l_P$-based Difference-restricted directed Chow-Liu Algorithm} works as follows:
\begin{itemize}
\item
Take a large sample and calculate the empirical   distribution $\hat{P}_{i,j}$ for any pair of variables $X_i, X_j$, including the marginals $\hat{P}_i, \hat{P}_j$.  %
\item
Construct a weighted complete {\em directed} graph $G=(V,E)$ where $V=\{1,\ldots, N\}$, and for all edges $(i,j)$ assign the weight
$\ltphat(i, j)$.
\item
Find a minimum directed spanning tree $\hat{T}$ of $G$ 
using the algorithm of \citet{Edmonds1967,GGST1986}.
\item
Return $\hat{T}$ and the  difference bounded distribution  $\hat{Q}$ induced by the optimization problem in each  $\ltphat(i,j)$.
\end{itemize}

To show that the above algorithm works, we  first need the following remark extending Observation \ref{lm:hatTworks}.
\begin{remark}
    If the following conditions hold 
    \begin{align*}
        &(A1)\qquad     &\ltp(\hat{T})&\leq \min_{\text{tree}~T} \ltp(T)+\epsilon/2\\
        &(A2)\qquad     &\dkl(P|| \hat{Q}_{\hat{T}}) & \leq \dkl(P|| {Q}^*_{\hat{T}}) +\epsilon/2
    \end{align*}
   where $Q^*_{\hat{T}}$ is the minimizer of $\ltp(\hat{T})$, then $(\hat{T}, \hat{Q})$ is 
   an $\epsilon$ tree-approximation of  $P$
\end{remark}

\begin{proof}
    \begin{align*}
        \dkl(P\| \hat{Q}_{\hat{T}}) &%
                                      {\leq} \dkl(P\| {Q}^*_{\hat{T}}) +\epsilon/2\\
                                    &= J_P + \ltp(\hat{T}) +\epsilon/2\\
                                    & \leq  J_P  + \min_{\text{tree}~ T}\ltp(T)+\epsilon
    \end{align*}
\end{proof}
It remains to show that conditions (A1) and (A2) hold when the sample size is large enough. We proceed with the following technical results:  

\begin{proposition}\label{prop:KL upper bound}
    If $Q$ is a distribution such that $Q(x)\geq c$  for any $x$, then $\dkl\qty(P \| Q)\leq \log 1/c$.
\end{proposition}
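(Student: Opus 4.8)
The plan is to expand the KL divergence into an entropy term and a cross-entropy term and bound each separately. Writing $\dkl(P\|Q) = \sum_x P(x)\log\frac{P(x)}{Q(x)}$, I would split this as $\sum_x P(x)\log P(x) - \sum_x P(x)\log Q(x) = -H(P) - \sum_x P(x)\log Q(x)$, where $H(P)$ is the (discrete) entropy as in the definition given earlier in this section.

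First I would note that $H(P)\ge 0$ for any discrete distribution, since each summand $-P(x)\log P(x)\ge 0$ because $P(x)\in[0,1]$. Hence $-H(P)\le 0$, and it remains to bound the cross-entropy term $-\sum_x P(x)\log Q(x)$. Here I would use the hypothesis $Q(x)\ge c$ for all $x$: since $\log$ is increasing, $\log Q(x)\ge \log c$, so $-\log Q(x)\le -\log c = \log(1/c)$. Taking the $P$-weighted average and using $\sum_x P(x)=1$ gives $-\sum_x P(x)\log Q(x)\le \log(1/c)$.

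Combining the two bounds yields $\dkl(P\|Q)\le 0 + \log(1/c) = \log(1/c)$, as claimed. There is essentially no obstacle here; the only point that requires a word of care is the non-negativity of the discrete entropy (which fails for differential entropy but holds in the finite/discrete setting relevant to this paper, where all variables are binary), and implicitly that $P(x)>0$ only where the summand is well defined, with the usual convention $0\log 0 = 0$ handling the boundary. I would state the proof in two or three lines in the final version.
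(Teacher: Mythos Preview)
Your proof is correct and is essentially identical to the paper's: the paper writes $\dkl(P\|Q) = -H(P) + \EE_{X\sim P}[\log\tfrac{1}{Q(X)}]$, uses non-negativity of entropy, and bounds the expectation by $\max_x \log\tfrac{1}{Q(x)} \leq \log\tfrac{1}{c}$.
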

\begin{proof}
    By definition, $\dkl\qty(P \| Q) = -H(P) + \EE_{X\sim P}[\log \frac{1}{Q(X)}].$ Hence, with the non-negativity of the entropy, the KL divergence is not greater than $\max_x \log \frac{1}{Q(x)}\leq \log\frac{1}{c}$. 
\end{proof}
\begin{proposition}\label{prop:binary entropy}
    (1) For any pair of binary probability distributions $P, Q$  the binary entropy satisfies:  $|H(P)-H(Q)|\leq h_b(\dtv(P, Q))$. (2) For any $p\in [0,\tfrac{1}{2}]$,    $h_b(p)\leq 3p\log 1/p$.
\end{proposition}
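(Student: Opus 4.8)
For part (1), write $P=(p,1-p)$ and $Q=(q,1-q)$, so that $H(P)=h_b(p)$, $H(Q)=h_b(q)$, and $\dtv(P,Q)=|p-q|$. The plan is to use that the binary entropy $h_b$ is concave on $[0,1]$ --- since $h_b''(p)=-(1/p+1/(1-p))<0$ --- and vanishes at the origin, $h_b(0)=0$. A concave function with $h_b(0)=0$ is super-homogeneous, $h_b(\lambda t)\ge\lambda h_b(t)$ for $\lambda\in[0,1]$, and hence subadditive: for $u,v\ge 0$ with $u+v\le 1$ one gets $h_b(u)\ge\frac{u}{u+v}h_b(u+v)$ and $h_b(v)\ge\frac{v}{u+v}h_b(u+v)$, so $h_b(u)+h_b(v)\ge h_b(u+v)$. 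Taking without loss of generality $q\ge p$, and setting $u=q-p$, $v=p$, this gives $h_b(q)-h_b(p)\le h_b(q-p)=h_b(\dtv(P,Q))$; combined with the trivial bound $h_b(p)-h_b(q)\le 0$ we obtain $|H(P)-H(Q)|\le h_b(\dtv(P,Q))$. (An information-theoretic alternative, should a self-contained argument be undesirable, is to take a maximal coupling $(X,Y)$ of $P,Q$ with $\Pr[X\neq Y]=\dtv(P,Q)$ and note $H(X)\le H(Y)+H(X\mid Y)$ with $H(X\mid Y)\le H(\mathbf{1}[X\neq Y])=h_b(\dtv(P,Q))$, since for binary $X$ the pair $(Y,\mathbf{1}[X\neq Y])$ determines $X$.)

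For part (2), assume $p\in(0,1/2]$ --- the case $p=0$ holds trivially with the convention $0\log(1/0)=0$ --- and expand $h_b(p)=p\log\frac1p+(1-p)\log\frac1{1-p}$. The plan is to bound the second summand by $2p\log\frac1p$. Since $x\mapsto-\log x$ is convex, on $[\tfrac12,1]$ it lies below the chord joining $(\tfrac12,\log 2)$ and $(1,0)$; evaluating this chord at $x=1-p$ gives $-\log(1-p)\le 2p\log 2$, hence $(1-p)\log\frac1{1-p}\le 2p\log 2$. Because $p\le\tfrac12$ we have $\log 2\le\log\frac1p$, so $(1-p)\log\frac1{1-p}\le 2p\log\frac1p$, and therefore $h_b(p)\le p\log\frac1p+2p\log\frac1p=3p\log\frac1p$.

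The steps here are all routine: the subadditivity of $h_b$ in (1) and the chord estimate in (2). The only points that need a little care are the $p=0$ boundary case in (2) and the consistency of the logarithm base --- but the argument is base-independent, since every logarithm that appears is in the same base and the only comparison used, $\log 2\le\log\frac1p$, is pure monotonicity. I do not anticipate a genuine obstacle; if a subtlety arises it would be in confirming that no implicit assumption $p,q\le\tfrac12$ is needed in (1), which the subadditivity route avoids entirely.
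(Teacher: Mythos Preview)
Your argument for part~(2) is correct and is a clean variant of the paper's: the paper bounds $-\log(1-p)$ via $\log x\ge 1-1/x$ and then uses $(1-p)\log(1/p)\ge 1/2$ on $(0,1/2]$, whereas you bound $-\log(1-p)$ by the secant $2p\log 2$ and then use $\log 2\le\log(1/p)$. Both routes are elementary and base-independent; yours is arguably the tidier of the two.

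Part~(1), however, has a genuine gap. After establishing $h_b(q)-h_b(p)\le h_b(q-p)$ for $q\ge p$ by subadditivity, you dispose of the other direction by calling $h_b(p)-h_b(q)\le 0$ ``trivial.'' It is not: $h_b$ is unimodal, not monotone, so e.g.\ $p=0.5$, $q=0.9$ gives $h_b(p)>h_b(q)$ with $q>p$. Your closing remark that ``the subadditivity route avoids entirely'' any $p,q\le 1/2$ assumption is therefore exactly backwards---as written, your argument is only valid on $[0,1/2]$.

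The fix is immediate and stays within your framework: apply subadditivity a second time to $1-p\ge 1-q$ to get $h_b(1-p)-h_b(1-q)\le h_b((1-p)-(1-q))=h_b(q-p)$, and then use the symmetry $h_b(x)=h_b(1-x)$ to read this as $h_b(p)-h_b(q)\le h_b(q-p)$. With this, both directions are covered. Alternatively, your coupling argument is complete as stated (it gives both inequalities by swapping $X$ and $Y$), so you could simply promote it from ``alternative'' to primary proof. The paper's own argument is the concavity observation that $p\mapsto h_b(p+d)-h_b(p)=\int_p^{p+d}h_b'$ is monotone in $p$, with extremes $\pm h_b(d)$ at $p=0$ and $p=1-d$; your (repaired) subadditivity argument is essentially a repackaging of the same fact.
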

\begin{proof}
    For (1), note that the highest slope of $h_b(x)$ is at $x=0, 1$. Hence the maximum absolute difference of the binary entropies is at the highest when $P$ or $Q$ are trivial probability distributions. 
    Denoting $p=P(x=1)$ and  $q=Q(x=1)$ we have $|H(P)-H(Q)|\leq |H(|p-q|)-H(0)|=H(|p-q|)=h_b(\dtv(P, Q))$.
    For (2) By definition, we can write $h_b(p) = p\log \tfrac{1-p}{p}-\log (1-p).$ From the inequality, $\log x \geq 1-\tfrac{1}{x}$ for any $x>0,$ the binary entropy is upper bounded as $ h_b(p)\leq p\log \tfrac{1}{p} + \tfrac{p}{1-p}.$ Note that $(1-x)\log\tfrac{1}{x}\geq \tfrac{1}{2}$ for any $x \in (0, 1/2)$ implying that $2 \log \tfrac{1}{p} \geq \tfrac{1}{1-p}.$ Hence, $h_b(p) \leq 3p\log 1/p$.
\end{proof}

Next, we prove a lemma which is the basis for showing that conditions A1 and A2 hold. 
\begin{lemma}\label{lem:estimated lp}
   For any tree ${T}$, and the empirical distribution $\hat{P}$ obtained from $M=\tilde{\Theta}(\tfrac{n^2}{\epsilon^2} \log^2(1/c))$ samples, $\abs{\ltphat({T})- \ltp({T})}\leq \epsilon/2$.
\end{lemma}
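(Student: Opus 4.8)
The plan is to reduce the statement, via a union bound over the $\binom{n}{2}$ pairs of variables, to controlling each edge term $|\ltphat(i,j)-\ltp(i,j)|$ on its own. The starting point is the elementary inequality $|\min_{Q}g(Q)-\min_{Q}h(Q)|\le \sup_{Q}|g(Q)-h(Q)|$. It applies here because $\ltp(i,j)$ and $\ltphat(i,j)$ are minima of the \emph{same} objective form, over the \emph{same} feasible region --- the set of $Q_{i|j}$ obeying the constraints in \eqref{eq:lPoptimization}, which involve only the fixed constants $c,\alpha$ and not the data. Hence it suffices to bound $\sup_{Q_{i|j}}\big|f(\hat P_j,\hat P_{i|j},Q_{i|j})-f(P_j,P_{i|j},Q_{i|j})\big|$, the supremum being taken over that same feasible region.

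First I would rewrite $f$ so that its dependence on the data distribution is affine. Expanding the KL term and the mutual information and cancelling the common $\log P_{i|j}$ factor gives
\[
f(P_j,P_{i|j},Q_{i|j}) \;=\; -H(P_i)\;-\;\sum_{x_i,x_j}P_{ij}(x_i,x_j)\log Q_{i|j}(x_i|x_j),
\]
where $P_{ij}$ is the joint of $(X_i,X_j)$ determined by $P_jP_{i|j}$ and $P_i$ is its $X_i$-marginal; this form also makes $f$ manifestly well defined even when an empirical conditional is undefined. Subtracting the empirical version, the difference splits into an entropy term $H(P_i)-H(\hat P_i)$ and an affine term $\sum_{x_i,x_j}(P_{ij}-\hat P_{ij})(x_i,x_j)\log Q_{i|j}(x_i|x_j)$. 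Writing $\tau_{ij}:=\dtv(P_{ij},\hat P_{ij})$, the affine term is at most $\log(1/c)\sum_{x_i,x_j}|P_{ij}-\hat P_{ij}|=2\log(1/c)\,\tau_{ij}$, using the lower bound $Q_{i|j}(\cdot|\cdot)\ge c$ built into the constraints; and the entropy term is at most $h_b(\dtv(P_i,\hat P_i))\le h_b(\tau_{ij})\le 3\tau_{ij}\log(1/\tau_{ij})$ by Proposition~\ref{prop:binary entropy}, using that total variation does not increase under marginalization, that $h_b$ is increasing on $[0,1/2]$, and that $\tau_{ij}\le 1/2$ once the sample is large enough. Thus $|\ltphat(i,j)-\ltp(i,j)|\le 3\tau_{ij}\log(1/\tau_{ij})+2\log(1/c)\,\tau_{ij}$.

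The last step is concentration and assembly. Since $P_{ij}$ is supported on the four points of $\{0,1\}^2$, Hoeffding's inequality applied to each atom shows that $\tau_{ij}\le\tau$ with probability at least $1-\delta/\binom{n}{2}$ once $M=\Theta(\tau^{-2}\log(n/\delta))$; a union bound makes this hold simultaneously for \emph{all} pairs $(i,j)$, and hence simultaneously for every tree $T$ (a tree has $n-1$ edges) --- which is exactly the uniform-over-trees form needed to later compare the empirical minimizer $\hat T$ against the optimal tree. Summing the per-edge bound over the $n-1$ edges of $T$ gives $|\ltphat(T)-\ltp(T)|\le n\big(3\tau\log(1/\tau)+2\tau\log(1/c)\big)$. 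Choosing $\tau=\tilde{\Theta}\big(\epsilon/(n\log(1/c))\big)$ --- small enough that the $\log(1/\tau)$ factor is absorbed into the $\tilde{\Theta}$ --- makes the right-hand side at most $\epsilon/2$, and the corresponding sample size is $M=\Theta(\tau^{-2}\log(n/\delta))=\tilde{\Theta}\big(\tfrac{n^2}{\epsilon^2}\log^2(1/c)\big)$, as claimed.

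The one genuinely delicate point --- the hard part --- is the entropy term: $H$ is not Lipschitz near the boundary of the simplex, so the naive bound $|H(P_i)-H(\hat P_i)|\le\dtv(P_i,\hat P_i)$ is false, and one must instead invoke the $p\log(1/p)$-type estimate of Proposition~\ref{prop:binary entropy} and check that the extra logarithmic factor it contributes is harmless (it affects only the hidden polylogarithmic factors in $\tilde{\Theta}$, not the stated order $\tfrac{n^2}{\epsilon^2}\log^2(1/c)$). Everything else --- the $\min$--$\min$ inequality, the algebraic simplification of $f$, and the union bound over pairs --- is routine bookkeeping.
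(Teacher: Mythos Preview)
Your proof is correct and is in fact cleaner than the paper's. Both proofs hinge on the same algebraic simplification --- writing $f(P_j,P_{i|j},Q_{i|j})$ as an entropy of the $X_i$-marginal plus a term that is affine in the joint law of $(X_i,X_j)$ --- and both control the entropy piece via Proposition~\ref{prop:binary entropy}. The difference is in the wrapping: the paper perturbs $P_j$ and $P_{i|j}$ in two separate steps, obtaining bounds \eqref{eq: P_j purturb} and \eqref{eq: P_i|j purturb}, and then chains through an auxiliary minimizer $\tilde Q_{i|j}$ (the minimizer of $f(\hat P_j,P_{i|j},\cdot)$) to pass from $\ltphat$ to $\ltp$. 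You instead invoke $|\min_Q g - \min_Q h|\le\sup_Q|g-h|$ up front (valid because the feasible set in \eqref{eq:lPoptimization} depends only on $c,\alpha$), and work directly with the joint empirical $\hat P_{ij}$ rather than the marginal and conditional separately. Your route is shorter and has the side benefit of sidestepping any concern about estimating $P_{i|j}(\cdot\,|\,x)$ when $P_j(x)$ is small: concentration for the four-point joint $\hat P_{ij}$ needs no such case analysis.
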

\begin{proof}
    We start with a pair of nodes $i,j$ as a potential edge $j\rightarrow i$ in ${T}$. Recall the definition of  $f(P_j, P_{i|j}, Q_{i|j})$ that appears in $\ltp(i,j)$. We first show that, for any possible $Q$, $f$ does not change much when $P_j$ or $P_{i|j}$ are perturbed by a small amount in terms of total variation distance. 

    \noindent\textbf{Robustness against perturbations.} From Proposition \ref{prop:KL upper bound},  $\dkl\qty(P_{i|j}(\cdot | x) ~||~ Q_{i|j}(\cdot | x))\leq \log 1/c$, when  $Q_{i|j}$   is $c$-bounded. Then, for any $\hat{P}_j$
    \begin{align*}%
        \abs{f(P_j, P_{i|j}, Q_{i|j}) - f(\hat{P}_j, P_{i|j}, Q_{i|j})} \leq 2 \log \tfrac{1}{c} ~ d_{TV}(P_j, \hat{P}_j) + \abs{I(P_jP_{i|j}) - I(\hat{P}_jP_{i|j})},
    \end{align*} 
    where the inequality follows from the triangle inequality and the definition of the total variation distance. As for the difference of the mutual information terms, note that 
    \begin{align*}
        I(P_jP_{i|j}) - I(\hat{P}_jP_{i|j}) = H(P_i)-H(\hat{P}_i) + \sum_x (\hat{P}_j(x)-P_j(x)) H(P_{i|j}(\cdot|x)).
    \end{align*}
    Therefore, the absolute difference is bounded by 
    \begin{align*}
        h_b(\dtv(P_i, \hat{P}_i)) + 2\dtv(\hat{P}_j, P_j)
    \end{align*} 
    where $\hat{P}_i$ is the marginal induced by $\hat{P}_j P_{i|j}$.
    The total variation distance is a special case of $f$-divergence that satisfies the data processing inequality. For more details, see  \cite{Ali1966,Csiszar1963,Sason2015}. 
    This inequality implies that $\dtv(P_i, \hat{P}_i)\leq \dtv(P_j, \hat{P}_j)$. Therefore, using Proposition \ref{prop:binary entropy}, 
      \begin{align}\label{eq: P_j purturb}
        \abs{f(P_j, P_{i|j}, Q_{i|j}) - f(\hat{P}_j, P_{i|j}, Q_{i|j})} \leq \dtv(\hat{P}_j, P_j) \qty(4\log\tfrac{1}{c} + 3 \log 1/\dtv(\hat{P}_j, P_j)).
    \end{align} 
    This  establishes the robustness of $f$ against perturbations of $P_j$.     Note that the argument above does not rely on the identity of $P_{i|j}$, i.e., it also applies to $\hat{P}_{i|j}$ and the same bound applies to
    $| f(P_j, \hat{P}_{i|j}, Q_{i|j}) - f(\hat{P}_j, \hat{P}_{i|j}, Q_{i|j})|$.

    Next, we show the robustness of $f$ against perturbations of $P_{i|j}$. 
    Using the identity $\dkl(P \| Q) = -H(P)-\sum_xP(x)\log Q(x)$ we have that for any $P_j, P_{i|j}, Q_{i|j}$
    \begin{align*}
        f({P}_j, {P}_{i|j}, {Q}_{i|j}) &= -\sum_x P_j(x) H(P_{i|j}(\cdot | x)) - \sum_{y,x} P_j(x)P_{i|j}(y|x) \log Q_{i|j}(y|x) - I(P_jP_{i|j})\\
        & = H(P_i) - \sum_{y,x} P_j(x)P_{i|j}(y|x) \log Q_{i|j}(y|x),
    \end{align*}
    where $P_i = \sum_x P_j(x)P_{i|j}(\cdot|x),$ and we used $I(X;Y)=H(Y)-H(Y|X)$. Note that $|\log Q_{i|j}(y|x)|\leq \log (1/c)$ for any $c$-bounded $Q_{i|j}$. Therefore,  from the triangle inequality for any $P_j, P_{i|j}$ and $c$-bounded $Q_{i|j}$
\begin{align*}
    \abs{f({P}_j, \hat{P}_{i|j}, {Q}_{i|j}) - f({P}_j, {P}_{i|j}, {Q}_{i|j})} \leq &\abs\Big{H\qty\Big(\sum_x {P}_j(x)\hat{P}_{i|j}(\cdot|x)) - H\qty\Big(\sum_x {P}_j(x)P_{i|j}(\cdot|x))}\\
    & + \sum_{y,x} {P}_j(x) \log(\tfrac{1}{c}) \abs{\hat{P}_{i|j}(y|x) - P_{i|j}(y|x)}.
\end{align*}
   The second term in the RHS is upper bounded by $2\dtv^{*} \log(1/c)$, where $\dtv^{*}:= \max_x \dtv(\hat{P}_{i|j}(\cdot|x), {P}_{i|j}(\cdot|x))$. As for the first term, we use Proposition \ref{prop:binary entropy} to upper bound the difference of the two  entropy terms based on the total variation distance. Note that 
\begin{align*}
    \dtv\qty(\sum_x {P}_j(x)\hat{P}_{i|j}(\cdot|x), \sum_x {P}_j(x)P_{i|j}(\cdot|x)) \leq \max_x \dtv\qty(\hat{P}_{i|j}(\cdot|x), {P}_{i|j}(\cdot|x))=\dtv^{*}. 
\end{align*}
Therefore, from  Proposition \ref{prop:binary entropy} part (1) and then (2),  
\begin{align}\label{eq: P_i|j purturb}
    \abs{f({P}_j, \hat{P}_{i|j}, {Q}_{i|j}) - f({P}_j, {P}_{i|j}, {Q}_{i|j})}\leq 2\dtv^{*}\log\tfrac{1}{c} + 3\dtv^{*} \log 1/\dtv^{*}.
\end{align}
 This  establishes the robustness of $f$ against perturbations of $P_{i|j}$.    Note that the argument above does not rely on the identity of $P_{j}$, i.e., it also applies to $\hat{P}_{j}$ and the same bound applies to
    $| f(\hat{P}_j, \hat{P}_{i|j}, {Q}_{i|j}) - f(\hat{P}_j, {P}_{i|j}, {Q}_{i|j})|$.

\vspace{5pt}
 \noindent\textbf{Sample complexity analysis.}
In light of \eqref{eq: P_j purturb} and \eqref{eq: P_i|j purturb}, we present a sample complexity analysis for bounded total variation distance. We show that the RHS of \eqref{eq: P_j purturb}, is bounded by $\tfrac{\epsilon}{4n},$ if 
    \begin{align}\label{eq:dtv P_j}
        \dtv(P_j, \hat{P}_j)\leq \frac{a}{\log(1/a)},%
    \end{align}
    for all $j$, where $a=\frac{\epsilon}{48 n \log (1/c)}$. If this inequality holds, then the first term $4\log\tfrac{1}{c}\dtv(\hat{P}_j, P_j)$ in RHS of \eqref{eq: P_j purturb} is bounded by $\epsilon/8n$. The second term is also bounded by $\tfrac{\epsilon}{8n}$. To see this, let $d:=\dtv(P_j, \hat{P}_j)$ so that the second term in RHS of \eqref{eq: P_j purturb} is $3\log(1/c)d\log(1/d)$. 
    Note that $d\log(1/d)$ is monotonic for $d\leq 0.5$ implying that 
    \begin{align*}
        d\log(1/d)&\leq \frac{a}{\log(1/a)}[\log\log(1/a)-\log(a)]\leq \frac{a}{\log(1/a)}[2\log(1/a)]=2a.
    \end{align*}
    Hence, the second term is bounded by $6\log(1/c) a \leq \tfrac{\epsilon}{8n}$. 

     Using Chernoff's bound and a union bound over $j$, we have that \eqref{eq:dtv P_j} holds for all $j$ w.p $(1-\delta/2)$ when $\hat{P}_j$ is estimated over 
     \begin{align*}
        M = \Theta\qty(\frac{1}{\epsilon^2} n^2 \log^2(\frac{1}{c}) \log^2(\frac{1}{\epsilon}48 n \log\big(\frac{1}{c}))\log\frac{2n}{\delta}) = \tilde{\Theta}(\tfrac{n^2}{\epsilon^2}\log^2 (1/c)),  
     \end{align*}
     samples.  With this estimation, for all $j$, RHS of \eqref{eq: P_j purturb} is upper bounded by $\tfrac{\epsilon}{4n}$ for any $c$-bounded $Q_{i|j}$.
     
     Similarly, with a union bound over $i,j$ pairs,  we can show that for all $i,j$, the RHS of  \eqref{eq: P_i|j purturb} is smaller than $\tfrac{\epsilon}{4n}$ when $\dtv^*$ is $\tilde{O}(\tfrac{\epsilon}{n\log(1/c)\log n/\epsilon})$ as in the RHS of \eqref{eq:dtv P_j}. This is ensured when $\hat{P}_{i|j}$ is estimated over $M$ samples.

We have therefore established that for all $Q$ and for all $i,j$ with probability $>1-\delta$
        \begin{align}\label{eq: P_j purturb 2}
        \abs{f(\hat{P}_j, P_{i|j}, {Q}_{i|j}) - f({P}_j, P_{i|j}, {Q}_{i|j})}\leq \frac{\epsilon}{4n},
    \end{align}
     and 
    \begin{align}\label{eq:purturb Pij 2}
        \abs{f(\hat{P}_j, \hat{P}_{i|j}, {Q}_{i|j}) - f(\hat{P}_j, {P}_{i|j}, {Q}_{i|j})} \leq  \frac{\epsilon}{4n}.
    \end{align}

 \noindent\textbf{Final stage.}    
 Let ${Q}^*_{i|j}$ be the minimizer in  $\ltp(i,j)$ and $\tilde{Q}_{i|j}$ be the difference bounded distribution  minimizing  $f(\hat{P}_j, P_{i|j}, {Q}_{i|j})$. 
     Since $Q^*_{i|j}$ is not necessarily the minimizer of $f(\hat{P}_j, P_{i|j}, {Q}_{i|j})$, then 
    \begin{align}\label{eq:f* P hat vs f*}
        f(\hat{P}_j, {P}_{i|j}, \tilde{Q}_{i|j})  \leq f(\hat{P}_j, P_{i|j}, {Q}^*_{i|j})%
    \end{align}
    On the other hand,    
    \begin{align}\label{eq:f* hat Qtilde}
       f(\hat{P}_j, \hat{P}_{i|j}, \hat{Q}_{i|j})  \leq f(\hat{P}_j, \hat{P}_{i|j}, \tilde{Q}_{i|j}),
    \end{align}
    as $\hat{Q}_{i|j}$ is 
    the minimizer of $f(\hat{P}_j, \hat{P}_{i|j}, {Q}_{i|j})$.
    Therefore, from the robustness analysis of $f$, we have the following chain of inequalities, 
    \begin{align*}
        \ltphat(i,j) &= f(\hat{P}_j, \hat{P}_{i|j}, \hat{Q}_{i|j})   \leq f(\hat{P}_j, \hat{P}_{i|j}, \tilde{Q}_{i|j})\\
        & \stackrel{(a)}{\leq}  f(\hat{P}_j, {P}_{i|j}, \tilde{Q}_{i|j}) + \frac{\epsilon}{4n}\\
        & \leq  f(\hat{P}_j, P_{i|j}, {Q}^*_{i|j})+  \frac{\epsilon}{4n}\\
        & \stackrel{(b)}{\leq} f({P}_j, P_{i|j}, {Q}^*_{i|j})+  \frac{\epsilon}{2n} = \ltp(i,j)+\frac{\epsilon}{2n},
    \end{align*}
        where (a) is due to \eqref{eq:purturb Pij 2} and (b) is due to \eqref{eq: P_j purturb 2}.

    Repeating the same  argument but for $P$ and  $\hat{P}$ interchanged gives the inequality in the reverse direction, implying that for all $i,j$ we have a bound on the absolute difference 
    $\abs{\ltp(i,j) - \ltphat(i,j)}\leq  \frac{\epsilon}{2n}.$ 
    Using 
    the additivity of $\ltp$, and the triangle inequality for the absolute difference, this bound can be generalized to any tree ${T}$ as  $\abs{\ltphat({T}) - \ltp({T})}\leq \epsilon/2,$ where we used the fact that $\hat{T}$ has at most $n$ edges. 
\end{proof}

\begin{lemma}[Condition A2 holds]
    When running the $l_P$-based directed CL algorithm with $\ltphat$ as the weights and sample size $M=\tilde{\Theta}(\tfrac{n^2}{\epsilon^2} \log^2(1/c))$, w.p. $\geq 1-\delta$, condition A2 holds, that is $\dkl(P|| \hat{Q}_{\hat{T}}) \leq \dkl(P|| {Q}^*_{\hat{T}})+\epsilon/2$.
\end{lemma}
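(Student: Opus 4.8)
The plan is to deduce Condition A2 directly from the decomposition in Lemma~\ref{lem:dkl p Q tree} together with the approximation guarantee on $\ltphat$ from Lemma~\ref{lem:estimated lp}. Recall that for any tree $T$ and distributions $P,Q$ we have $\dkl(P\|Q_T)=J_P+f(P_T,Q_T)$, and that $J_P$ is independent of $T$ and $Q$. Applying this to the specific tree $\hat T$ output by the algorithm, with the estimated distribution $\hat Q$ on one hand and the optimizer $Q^*_{\hat T}$ of $\ltp(\hat T)$ on the other, we get
\begin{align*}
\dkl(P\|\hat Q_{\hat T}) &= J_P + f(P_{\hat T},\hat Q_{\hat T}),\\
\dkl(P\|{Q}^*_{\hat T}) &= J_P + f(P_{\hat T},{Q}^*_{\hat T}) = J_P + \ltp(\hat T),
\end{align*}
where the last equality is the definition of $Q^*_{\hat T}$ as the edgewise minimizer, so $f(P_{\hat T},Q^*_{\hat T})=\sum_{v\in\hat T}\ltp(v,\pa(v))=\ltp(\hat T)$. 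Subtracting, Condition A2 is equivalent to showing $f(P_{\hat T},\hat Q_{\hat T}) \le \ltp(\hat T) + \epsilon/2$.

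Next I would bound $f(P_{\hat T},\hat Q_{\hat T})$ in terms of $\ltphat(\hat T)$ and $\ltp(\hat T)$ by a two-step chain, reusing exactly the robustness estimates established inside the proof of Lemma~\ref{lem:estimated lp}. For each edge $j\to i$ of $\hat T$, the term appearing in $f(P_{\hat T},\hat Q_{\hat T})$ is $f(P_j,P_{i|j},\hat Q_{i|j})$, whereas the algorithm's weight is $\ltphat(i,j)=f(\hat P_j,\hat P_{i|j},\hat Q_{i|j})$. By inequalities \eqref{eq: P_j purturb 2} and \eqref{eq:purturb Pij 2} (perturbing $\hat P_j\to P_j$ and $\hat P_{i|j}\to P_{i|j}$ while holding $\hat Q_{i|j}$ fixed, which is legitimate since those bounds hold for every $c$-bounded $Q$), we get $|f(P_j,P_{i|j},\hat Q_{i|j}) - \ltphat(i,j)| \le \epsilon/(2n)$ for every edge, w.p.\ $\ge 1-\delta$ over the same sample of size $M=\tilde\Theta(n^2\epsilon^{-2}\log^2(1/c))$. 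Summing over the at most $n$ edges of $\hat T$ and using the additivity of $f$ and of $\ltphat$ gives $f(P_{\hat T},\hat Q_{\hat T}) \le \ltphat(\hat T) + \epsilon/2$. Finally, Lemma~\ref{lem:estimated lp} itself gives $\ltphat(\hat T)\le \ltp(\hat T)+\epsilon/2$; but note that this would yield a bound of $\epsilon$ rather than $\epsilon/2$, so I would instead split the overall target accuracy: run the estimation at precision $\epsilon/2$ in Lemma~\ref{lem:estimated lp} (i.e.\ use edge-accuracy $\epsilon/(4n)$), so that both $|f(P_{\hat T},\hat Q_{\hat T})-\ltphat(\hat T)|\le \epsilon/4$ and $|\ltphat(\hat T)-\ltp(\hat T)|\le \epsilon/4$, and combine to get $f(P_{\hat T},\hat Q_{\hat T})\le \ltp(\hat T)+\epsilon/2$, as required. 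Adjusting the constant inside the $\tilde\Theta$ in $M$ absorbs this factor of $2$.

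The main obstacle — and the only genuinely non-routine point — is making sure the robustness inequalities \eqref{eq: P_j purturb 2} and \eqref{eq:purturb Pij 2} are applied with $\hat Q_{i|j}$ (the algorithm's chosen distribution, which is a random object depending on the sample) held fixed in the perturbation argument; this is fine because those inequalities in the proof of Lemma~\ref{lem:estimated lp} are stated uniformly over all $c$-bounded $Q_{i|j}$, so they in particular apply to the data-dependent $\hat Q_{i|j}$. A secondary subtlety is that $\hat T$ is also sample-dependent, but since the per-edge bounds hold simultaneously for all ordered pairs $(i,j)$ with high probability, they hold in particular for the edges actually selected, and the union bound over $O(n^2)$ pairs is already folded into the sample complexity. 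I would conclude by observing that A1 (which is essentially the content of Lemma~\ref{lem:estimated lp} applied to optimal trees, exactly as in the proof of Lemma~\ref{lm:directed-c1-realizable}'s Condition C1 analogue) together with A2 yields, via the Remark following the algorithm description, that $(\hat T,\hat Q)$ is an $\epsilon$ tree-approximation of $P$ in the sense of Definition~\ref{def:tree approx}.
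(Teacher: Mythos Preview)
Your proposal is correct and follows essentially the same approach as the paper: reduce A2 to $f(P_{\hat T},\hat Q_{\hat T}) \le \ltp(\hat T)+\epsilon/2$ via Lemma~\ref{lem:dkl p Q tree}, then use the uniform-in-$Q$ robustness bounds \eqref{eq: P_j purturb 2}, \eqref{eq:purturb Pij 2} from the proof of Lemma~\ref{lem:estimated lp} (applied to the data-dependent $\hat Q_{i|j}$) together with the minimizer property of $\hat Q$ with respect to $\hat P$, absorbing the factor of $2$ into the $\tilde\Theta$. The paper's only cosmetic difference is that instead of invoking Lemma~\ref{lem:estimated lp} for the second half, it writes the chain $f(P_{\hat T},\hat Q_{\hat T})\le f(\hat P_{\hat T},\hat Q_{\hat T})+\epsilon/4 \le f(\hat P_{\hat T},Q^*_{\hat T})+\epsilon/4 \le f(P_{\hat T},Q^*_{\hat T})+\epsilon/2$ inline; your identification of the two subtleties (uniformity over $Q$ and over edges) is exactly what makes this work.
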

\begin{proof}
    Recall the definition of $f(P_j, P_{i|j}, Q_{i|j})$ and  $f(P_T, Q_T)$, and that $\dkl(P|| \hat{Q}_{\hat{T}})  =  J_P + f(P_{\hat{T}}, \hat{Q}_{\hat{T}})$.
    Using a similar argument to the proof of Lemma \ref{lem:estimated lp}, we next show that given $M=\tilde{\Theta}(\tfrac{n^2}{\epsilon^2} \log^2(1/c))$ samples, 
 for all possible $Q$, we have
$\abs{f(\hat{P}_{\hat{T}}, {Q}_{\hat{T}})-f(P_{\hat{T}}, {Q}_{\hat{T}})}\leq \epsilon/4$. 
To see this note that from the triangle inequality we have that
\begin{align*}
    \abs{f(\hat{P}_{\hat{T}}, {Q}_{\hat{T}})-f(P_{\hat{T}}, {Q}_{\hat{T}})} \leq \sum_{v\in \hat{T}} \abs{f(\hat{P}_{\pa(v)}, \hat{P}_{v|\pa(v)}, {Q}_{v|\pa(v)})- f(P_{\pa(v)}, P_{v|\pa(v)}, {Q}_{v|\pa(v)})}.
\end{align*}
Then, for each $v\in \hat{T}$ the absolute difference is bounded by
\begin{align*}
    &\abs{f(\hat{P}_{\pa(v)}, \hat{P}_{v|\pa(v)}, {Q}_{v|\pa(v)})- f(\hat{P}_{\pa(v)}, P_{v|\pa(v)}, {Q}_{v|\pa(v)})}\\
    &  +\abs{f(\hat{P}_{\pa(v)}, {P}_{v|\pa(v)}, {Q}_{v|\pa(v)})- f(P_{\pa(v)}, P_{v|\pa(v)}, {Q}_{v|\pa(v)})}
\end{align*}
    The first term can be bounded by $\frac{\epsilon}{8n}$ per \eqref{eq:purturb Pij 2} by increasing $M$ by a constant factor to obtain a $1/8$ factor. Similarly, the second term can be bounded by $\frac{\epsilon}{8n}$ per \eqref{eq: P_j purturb 2}. Given that $|\hat{T}|\leq n$, then $\abs{f(\hat{P}_{\hat{T}}, \hat{Q}_{\hat{T}})-f(P_{\hat{T}}, \hat{Q}_{\hat{T}})}\leq \epsilon/4$.

    Hence, 
    as $\hat{Q}_{\hat{T}}$ is the minimizer of $\ltphat(\hat{T})$ and 
    $Q^*_{\hat{T}}$ is the minimizer of $\ltp(\hat{T})$ 
    the following inequalities hold 
    \begin{align*}
        \dkl(P|| \hat{Q}_{\hat{T}}) & =  J_P + f(P_{\hat{T}}, \hat{Q}_{\hat{T}}) \\
           			 & \leq J_P + f(\hat{P}_{\hat{T}}, \hat{Q}_{\hat{T}}) + \epsilon/4\\
                                   & \leq J_P + f(\hat{P}_{\hat{T}}, {Q}^*_{\hat{T}}) + \epsilon/4\\
                                   & \leq J_P + f({P}_{\hat{T}}, {Q}^*_{\hat{T}}) + \epsilon/2\\
                                   & = \dkl(P|| {Q}^*_{\hat{T}})+\epsilon/2.
    \end{align*}
\end{proof}

\begin{lemma}[Condition A1 holds]
    When running the $l_P$-based directed CL algorithm with $\ltphat$ as the weights 
    and sample size $M=\tilde{\Theta}(\tfrac{n^2}{\epsilon^2} \log^2(1/c))$, w.p. $\geq 1-\delta$ the algorithm outputs a tree $\hat{T}$, such that $\ltp(\hat{T}) \leq \min_{\text{tree}~T} \ltp(T)+\epsilon/2$. 
\end{lemma}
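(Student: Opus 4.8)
The plan is to combine the uniform estimation guarantee of Lemma~\ref{lem:estimated lp} with the exact optimality of the arborescence returned by the Edmonds/GGST routine, exactly mirroring the ``uniform estimate $+$ exact optimizer $\Rightarrow$ near-optimal selection'' pattern used for Condition C1 in the unconstrained case (Lemma~\ref{lm:BGPV-c1}, Observation~\ref{lm:hatTworks}). First I would re-invoke Lemma~\ref{lem:estimated lp} with accuracy parameter $\epsilon/2$ in place of $\epsilon$; this only inflates the sample size by a constant factor, so it remains $\tilde{\Theta}(\tfrac{n^2}{\epsilon^2}\log^2(1/c))$, and it gives, with probability at least $1-\delta$, the uniform bound $\abs{\ltphat(T)-\ltp(T)}\le \epsilon/4$ \emph{simultaneously for every rooted tree $T$}. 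The uniformity is not an issue: the proof of Lemma~\ref{lem:estimated lp} actually establishes the per-edge bound $\abs{\ltphat(i,j)-\ltp(i,j)}\le \tfrac{\epsilon}{2n}$ for all ordered pairs $(i,j)$ at once, and summing over the at most $n$ edges of any tree yields the tree-level bound.

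Next I would set $T^\circ \in \argmin_{\text{tree}~T}\ltp(T)$ and use the defining property of $\hat{T}$: by construction $\hat{T}$ is a minimum-weight directed spanning tree of $G$ under the (asymmetric, since the parent/child roles in $f(\hat{P}_j,\hat{P}_{i|j},Q_{i|j})$ are not interchangeable) weights $\ltphat(\cdot,\cdot)$. Since $T^\circ$ together with its orientation is itself an arborescence on $\{1,\dots,n\}$, it lies in the feasible set of the Edmonds/GGST algorithm, so $\ltphat(\hat{T})\le \ltphat(T^\circ)$. Chaining this with the uniform bound from the previous step,
\[
\ltp(\hat{T}) \le \ltphat(\hat{T}) + \tfrac{\epsilon}{4}
\le \ltphat(T^\circ) + \tfrac{\epsilon}{4}
\le \ltp(T^\circ) + \tfrac{\epsilon}{2}
= \min_{\text{tree}~T}\ltp(T) + \tfrac{\epsilon}{2},
\]
which is precisely Condition A1.

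The only points requiring care are bookkeeping rather than substance: that each $\ltphat(i,j)$ is computed by an exact solver (the defining optimization over $Q_{i|j}$ is convex with linear constraints on a convex domain, as noted just before the algorithm, so it can be solved to the required precision); that the relevant combinatorial object is a minimum arborescence rather than an undirected MST, because $\ltphat(i,j)$ encodes the directed edge ``$j$ is the parent of $i$'' and is generally not symmetric; and that the $\delta$-dependence and the factor-of-two rescaling are absorbed into the $\tilde{\Theta}(\cdot)$ sample bound. I do not anticipate a genuine obstacle here: all the analytic work (robustness of $f$ to perturbations of $\hat{P}_j$ and $\hat{P}_{i|j}$, and the resulting concentration of $\ltphat$ around $\ltp$) has already been carried out in Lemma~\ref{lem:estimated lp}, and this lemma is just the optimization-theoretic wrapper around it.
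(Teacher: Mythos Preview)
Your proposal is correct and follows essentially the same approach as the paper: invoke Lemma~\ref{lem:estimated lp} to get a uniform bound on $\abs{\ltphat(T)-\ltp(T)}$, then sandwich using the optimality of $\hat{T}$ under $\ltphat$. The paper's proof applies Lemma~\ref{lem:estimated lp} as stated (giving $\epsilon/2$ per application and hence $+\epsilon$ in the final bound, which is a minor mismatch with the stated $+\epsilon/2$), whereas you explicitly rescale to $\epsilon/4$ per application so that the chain lands exactly on $+\epsilon/2$; this is only a bookkeeping difference.
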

\begin{proof}
    Using Lemma \ref{lem:estimated lp}, for $\hat{T}$ and $T^*$
    \begin{align*}
        \ltp(\hat{T}) \leq \ltphat(\hat{T}) +\epsilon/2 \leq \ltphat({T}^*) +\epsilon/2 \leq \ltp({T}^*) +\epsilon.
    \end{align*}
\end{proof}

\begin{corollary} [Learning $\epsilon$ Tree-Approximation Distributions] %
\label{lm:directed-c1 unrealizable}
When running the $\ltp$-based directed CL the algorithm with parameters $c, \alpha$ and   with sample size 
$M=\tilde{\Theta}(\tfrac{n^2}{\epsilon^2} \log^2(1/c))$
on any distribution $P$,
w.p.\ $\geq 1-\delta$, 
it outputs an $\epsilon$ tree-approximation of $P$ as in Definition \ref{def:tree approx}.
\end{corollary}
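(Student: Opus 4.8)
The plan is to assemble the corollary directly from the three ingredients that precede it: the Remark extending Observation~\ref{lm:hatTworks}, which reduces the claim to verifying conditions (A1) and (A2); the lemma showing (A1) holds; and the lemma showing (A2) holds. First I would fix the sample size $M=\tilde{\Theta}(\tfrac{n^2}{\epsilon^2}\log^2(1/c))$ large enough that the conclusions of both the (A1)-lemma and the (A2)-lemma hold, each with failure probability at most $\delta/2$. Since both lemmas already require a sample of this order (the constants and polylog factors are absorbed into the $\tilde{\Theta}$), it suffices to take the larger of the two stated sample bounds and rescale their internal confidence parameters by a constant factor. A union bound then gives that, with probability at least $1-\delta$, we simultaneously have
\[
\ltp(\hat T)\le \min_{\text{tree }T}\ltp(T)+\epsilon/2
\quad\text{and}\quad
\dkl(P|| \hat Q_{\hat T})\le \dkl(P|| Q^*_{\hat T})+\epsilon/2 ,
\]
where $Q^*_{\hat T}$ is the minimizer defining $\ltp(\hat T)$.

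Second, I would invoke the Remark: conditioned on (A1) and (A2) it yields
\[
\dkl(P|| \hat Q_{\hat T})\le J_P+\min_{\text{tree }T}\ltp(T)+\epsilon
= \epsilon+\min_{\text{tree }T}\min_{\text{difference bounded }Q}\dkl(P|| Q_T) ,
\]
where the equality is exactly the simplification of the right-hand side of \eqref{eq: eps approximate} obtained from Lemma~\ref{lem:dkl p Q tree} and the definition of $\ltp$, i.e.\ $J_P+\ltp(T)=\min_Q\dkl(P|| Q_T)$ over difference-bounded $Q$. This is precisely the statement that $(\hat T,\hat Q)$ is an $\epsilon$ tree-approximation of $P$ in the sense of Definition~\ref{def:tree approx}. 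I would also remark that the pair returned is valid by construction: $\hat Q$ is produced edge-by-edge as the minimizer in each $\ltphat(i,j)$ over distributions satisfying the $c$-boundedness and $\alpha$-difference-boundedness constraints of \eqref{eq:lPoptimization}, so $\hat Q_{\hat T}$ is a difference-bounded tree (forest) distribution, and the minimum directed spanning tree is computed in polynomial time by the algorithm of \citet{Edmonds1967,GGST1986}.

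The main obstacle has in fact already been discharged inside Lemma~\ref{lem:estimated lp}: the perturbation analysis giving $|\ltphat(T)-\ltp(T)|\le\epsilon/2$ uniformly over trees, together with the analogous uniform bound $|f(\hat P_{\hat T},Q_{\hat T})-f(P_{\hat T},Q_{\hat T})|\le\epsilon/4$ used for (A2), is what carries the whole argument. At the level of this corollary the only remaining subtleties are bookkeeping ones: checking that the $\epsilon/2+\epsilon/2$ split in the Remark matches the tolerances stated in the (A1)- and (A2)-lemmas, that the sample-size expression absorbs the constant-factor inflation from the union bound over the two failure events (and over the $O(n^2)$ pairs inside each lemma), and that the two roles of $\hat Q$ — the edge-wise minimizer of $\ltphat(i,j)$ versus the global forest distribution $\hat Q_{\hat T}$ — are consistent, which holds because the optimization \eqref{eq:lPoptimization} and hence $\ltphat(\cdot)$ decomposes additively over the edges of $\hat T$.
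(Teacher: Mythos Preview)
Your proposal is correct and matches the paper's approach: the corollary is stated immediately after the (A1)- and (A2)-lemmas and the Remark, with no separate proof, so the intended argument is exactly the assembly you describe—union bound to get (A1) and (A2) simultaneously with high probability, then invoke the Remark. Your additional bookkeeping remarks (feasibility of $\hat Q$ by construction, additivity of $\ltphat$ over edges, absorbing constants into $\tilde\Theta$) are accurate and simply make explicit what the paper leaves implicit.
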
 

\subsection{Implications for Learning DNF}
We can use Pinsker's inequality
$d_{TV}(p_1||p_2) \leq \sqrt{0.5 \ \dkl(p_1||p_2)}$ to bound the total variation distance.
Assume we learn the BN tree distribution to accuracy $\tilde{\epsilon}$.
Then  for the realizable case w.p.\ $\geq 1-\delta$, 
$d_{TV}(P,\hat{P}^+)\leq \epsilon = \sqrt{0.5 \tilde{\epsilon}}$. Hence, w.p.\ $\geq 1-\delta$,  for any event $E$, $P(E)\leq \hat{P}^+(E) + \epsilon$.
We then learn $f(x)$ using the basis derived from $\hat{P}^+_{\hat{T}}$ and combine with the  learning results from above.
For example, 
combining with Corollary~\ref{cor:FC15} we can conclude
that with probability at least $1-2\delta$, 
$P(f(X)\not = g(X)) \leq \hat{P}^+(f(X)\not = g(X)) + \epsilon \leq 2 \epsilon$.
For the unrealizable case, the algorithm outputs $\hat{T}$ and $\hat{Q}$, so that with high probability $\dkl(P|| \hat{Q}_{\hat{T}}) \leq \tilde{\epsilon} + \opt,$ where $\opt$ is the minimum KL divergence between $P$ and any difference bounded $Q_T$ with any tree structure $T$.   Hence, $$d_{TV}(P,\hat{Q})\leq \sqrt{0.5 (\tilde{\epsilon} + \opt)} \leq \epsilon + \sqrt{0.5 \opt} $$
with $\epsilon = \sqrt{0.5 \tilde{\epsilon}}$.
We then learn $f(x)$ using the basis derived from $\hat{Q}_{\hat{T}}$ and combine with the  learning results from above. 
In this case, we will have a residual error of  $\sqrt{0.5 \opt}$, that is, we have
$P(f(X)\not = g(X)) \leq 2 \epsilon+ \sqrt{0.5 \opt}$.

\section{Discussion and Conclusion}\label{sec:conclusion}

The paper develops a generalized Fourier basis and shows that major algorithmic tools from learning theory
can be used with this basis.
Using these and an analysis of the spectral norm for conjunctions the paper shows learnability of DNF under a broad class of distributions including $k$-junta distributions and difference bounded tree distributions,
significantly extending previous results.
We emphasize that the basis and the extended KM algorithm are valid for any distribution and they do not require a tree structure or boundedness. These conditions were only required for the bounds on the spectral norm of conjunctions which are required for learnability results for decision trees and DNF.

The introduction of the generalized Fourier basis suggests many questions for future work.
Characterizing the scope of distributions where the approach is successful is one such direction.
In particular,
our results provided upper bounds for the spectral norm using conditions on the structure of the graph (i.e., a tree) and the parameters of the distribution (difference bounded) and lower bounds showing that such restrictions are needed.
It would be interesting to investigate whether some tradeoff exists so that more general graphs can be accommodated with stricter requirements on parameters.

Another question concerns the low-degree algorithm of \citet{linial1993constant}. 
Our results for disjoint DNF indirectly imply that for a suitable degree $d$ the low-degree algorithm will succeed because, by the characterization of $\fS$, the large coefficients recovered by the KM algorithm necessarily yield low degree coefficients. 
But this does not necessarily hold for 
PTFconstruct \cite{Feldman2012} 
so it is not implied for DNF.
Hence it would be interesting to characterize the large degree $L_2$ spectral norm of DNF and constant depth circuits for general distributions. 

In addition, as discussed by \citet{ODonnell2014,Wolf2008},
the standard Fourier representation for the uniform distribution has found numerous applications in algorithms and complexity analysis. 
However, much of prior work relies significantly on independence in the distribution and sparsity of coefficient structure. 
It would be interesting to explore such applications using the generalized basis.

It is interesting to discuss two recent papers that imply learnability results for DNF with MQ using different techniques
that do not directly rely on Fourier analysis. 
The first \citep{BlancLST25} develops a general distribution lifting algorithm that, 
given learnability w.r.t.\ $D'$, 
shows learnability w.r.t.\ distribution $D$ where the complexity depends on decomposing $D$ via subcubes with distribution $D'$, 
which includes decision tree and $k$-junta distributions. 
This is a more general result than the one in this paper in that it applies to any distribution-specific learning algorithm,
and it implies learnability of DNF with depth $d$ decision tree distributions in time $O(n^{O(d)})$.
However, as discussed in Section~\ref{sec:Kjunta},
tree BN distributions are much more expressive than depth $d$ decision tree distributions, even for $d=O(\log n)$ where their algorithm is quasi-polynomial time.
The second work, by \citet{AlmanNPS25}, provides a new approach for learning DNF by generating terms through random walks over positive examples,
where the random walks can be obtained with MQ. Their results hold for any distribution, but they require a structural limitation on the class of DNF expressions, where all terms must have exactly the same length, and in addition their algorithm runs in  quasi-polynomial time.
In contrast our results increase the scope of distributions where polynomial time learnability of general DNF is known for a large class of tree BN distributions, but the lower bounds shown for the spectral norm indicate that the Fourier approach may not be successful for all distributions. 
A potentially interesting future direction is investigating learnability using the ideas from these papers and our approach based on BNs.

\section*{Acknowledgment}
This work was partially supported by the NSF Grant CCF-2211423.

\newpage
\appendix

\section{Learnability Results for DNF}

The appendix provides details of the learnability results for DNF stated in the paper. 
Recall that we use the notation $L_1(f)=\sum_S |\hat{f}_S|$ for the spectral norm of function $f$ with Fourier coefficients $\hat{f}_S$.
In this section we define $\|\hat{f}\|_\infty= \max_S |\hat{f}_S|$ and similarly use $\|\hat{f}\|_1= L_1(f)=\sum_S |\hat{f}_S|$.

\subsection{Using KM Directly for Disjoint DNF}\label{app:km learn}
Analysis in prior work required either a bound on the spectral norm of the learned function $f$ \cite{KM1993}, or a square error approximation of $f$ using a sparse function \cite{Khardon94}.
For general distributions, the lemma below provides sufficient conditions using the combination of square norm approximation and L1 approximation which is implicitly assumed in some prior work \cite{Mansour1995}.


\begin{lemma}
\label{lm:sparse-approx-via-L1} 
Consider any distribution $D$ specified by a BN and its corresponding Fourier basis, and any Boolean function $f$
that can be approximated in square norm by a function $h$ with bounded spectral norm, that is, 
$\EE_D[(f(X)-h(X))^2]\leq\epsilon/4$ and $\L1(h)\leq L_1$.

Then there exists a $g$ such that 
(1)
$\L1(g) \leq L_1$,
(2) $g$ is $T=4 L_1^2/\epsilon$ sparse,
and 
(3) 
$\EE_D[(f(X)-g(X))^2]\leq \epsilon$. 

Moreover, $f$ can be approximated by approximating its large Fourier coefficients. In particular, 
let
${\cal S}=\{S \mbox{ s.t. } |\coeffgen{f}{S}|\geq \sqrt{\epsilon/T}\}$, 
and let
$\tilde{h}(x)=\sum_{S\in {\cal S^*}} \approxcoeffgen{f}{S} \basis{S}(x)$, 
for some 
${\cal S^*}$ where
${\cal S}\subseteq {\cal S^*}$, $|{\cal S^*}|\leq 4T/\epsilon$,
where $|\approxcoeffgen{f}{S}-\coeffgen{f}{S}|\leq \gamma$,
and $\gamma^2 \leq \epsilon^2/4T$.
Then 
$\PP_D(f(X)\not = \sign(\tilde{h}(X))) \leq \EE_D[(f(X)-\tilde{h}(X))^2] \leq 3 \epsilon$.
\end{lemma}

Lemma~\ref{lm:sparse-approx-via-L1} is a combination of the next two results.
The next lemma provides the key ingredient in many Fourier based learning results. It shows that if $f$ can be approximated 
{\em in square norm} with a sparse function then it can be approximated with (estimates of) the large coefficients of $f$. 
Hence, the KM algorithm can be used directly to learn such functions. 

\begin{lemma} [cf.\ Lemma~3.1 in \cite{KM1993}]
\label{lm:sparse-approx-by-f}
Consider any distribution $D$ specified by a BN and its corresponding Fourier basis, and any Boolean function $f$
that can be approximated in square norm by a function $g$ such that $g$ is $T$-sparse. 
That is,  $g(\bfx) =\sum_{S\in {\cal T}} g_S \basis{S}(\bfx)$, $|{\cal T}|=T$ and 
$\EE_D[(f(\bfX)-g(\bfX))^2] \leq \epsilon$.

Let
${\cal S}=\{S \mbox{ s.t. } |\coeffgen{f}{S}|\geq \sqrt{\epsilon/T}\}$, where by Parseval's identity
$|{\cal S}|\leq T/\epsilon$.

Let $h_1(\bfx)=\sum_{S\in {\cal S}} \coeffgen{f}{S} \basis{S}(\bfx)$,
$h_2(\bfx)=\sum_{S\in {\cal S}} \approxcoeffgen{f}{S} \basis{S}(\bfx)$, 
and $h_3(\bfx)=\sum_{S\in {\cal S^*}} \approxcoeffgen{f}{S} \basis{S}(\bfx)$, 
where
${\cal S}\subseteq {\cal S^*}$, $|{\cal S^*}|\leq 4T/\epsilon$,
where $|\approxcoeffgen{f}{S}-\coeffgen{f}{S}|\leq \gamma$,
and $\gamma^2 \leq \epsilon^2/4T$.

Then $\EE_D[(f(\bfX)-h_1(\bfX))^2] \leq 2 \epsilon$, $\EE_D[(f(\bfX)-h_2(\bfX))^2] \leq 3 \epsilon$ and 
$\PP_D(f(\bfX)\not = \sign(h_3(\bfX))) \leq \EE_D[(f(\bfX)-h_3(\bfX))^2] \leq 3 \epsilon$.
\end{lemma}
\begin{proof}
First note that $\EE_D[(f(\bfX)-g(\bfX))^2]=\sum_{S\not \in {\cal T}} \hat{f}_S^2 + \sum_{S \in {\cal T}} (\hat{f}_S-\hat{g}_S)^2\leq\epsilon$ 
implying that $\sum_{S\not \in {\cal T}} \hat{f}_S^2\leq\epsilon$.
Let ${\cal T'}={\cal T}\cap {\cal S}$ and $r_1(\bfx)=\sum_{S\in {\cal T'}} \coeffgen{f}{S} \basis{S}(\bfx)$.
Then $\EE_D[(f(\bfX)-r_1(\bfX))^2] = \sum_{S\not \in {\cal T}} \coeff{S}^2+\sum_{S\in {\cal T}\setminus {\cal T'}} \coeff{S}^2 $.
The first term is bounded by $\epsilon$. For the second term, 
$\sum_{S\in {\cal T}\setminus {\cal T'}} \coeff{S}^2 \leq T (\epsilon/T)=\epsilon$.
Thus $\EE_D[(f(\bfX)-r_1(\bfX))^2]\leq 2\epsilon$.

Now 
$\EE_D[(f(\bfX)-h_1(\bfX))^2] = \sum_{S \not\in {\cal S}} \coeff{S}^2 \leq
\sum_{S\not \in {\cal T}} \coeff{S}^2+\sum_{S\in {\cal T}\setminus {\cal T'}} \coeff{S}^2 \leq 2\epsilon$.

Similarly, 
$\EE_D[(f(\bfX)-h_2(\bfX))^2] = \sum_{S \not\in {\cal S}} \coeff{S}^2 
+ \sum _{S \in {\cal S}} (\coeff{S}-\approxcoeff{S})^2
\leq 2\epsilon + \gamma^2 T/\epsilon\leq 3\epsilon$.

Similarly, 
$\EE_D[(f(\bfX)-h_3(\bfX))^2] = \sum_{S \not\in {\cal S^*}} \coeff{S}^2 
+ \sum _{S \in {\cal S^*}} (\coeff{S}-\approxcoeff{S})^2
\leq 2\epsilon + \gamma^2 4T/\epsilon\leq 3\epsilon$.
\end{proof}

The next lemma shows that the conditions of Lemma~\ref{lm:sparse-approx-by-f} can be achieved if our function can be approximated by another function with a low spectral norm. This has been used implicitly in several papers (see \cite{Mansour1995}).

\begin{lemma} %
\label{lm:sparse-approx-via-L1a} 
Consider any distribution $D$ specified by a BN and its corresponding Fourier basis, and any Boolean function $f$
that can be approximated in square norm by a function $h$ with low L1 spectrum, that is, 
$\EE_D[(f(\bfX)-h(\bfX))^2]\leq\epsilon/4$ and $\|\hat{h}\|_1\leq L_1$.
Then there exists a $g$ such that 
(1)
$\|\hat{g}\|_1\leq L_1$,
(2) $g$ is $T=4 L_1^2/\epsilon$ sparse,
and 
(3) 
$\EE_D[(f(\bfX)-g(\bfX))^2]\leq \epsilon$. 
\end{lemma}
\begin{proof}
Let ${\cal S}=\{S \mbox{ s.t. } |\coeffgen{h}{S}|\geq \epsilon/4L_1\}$ and let $g(\bfx)=\sum_{S\in {\cal S}} \coeffgen{h}{S} \basis{S}(\bfx)$.
We have $\|\hat{g}\|_1\leq\|\hat{h}\|_1\leq L_1$.
In addition, 
$L_1\geq \sum_S |\coeffgen{{h}}{S}|\geq \sum_{S\in {\cal S}} |\coeffgen{{h}}{S}|\geq |{\cal S}| \epsilon/4L_1$
and $|{\cal S}| \leq 4 L_1^2/\epsilon$.

For (3) note that
$\EE_D[(g(\bfX)-h(\bfX))^2]=\sum_{S\not \in {\cal S}}\coeffgen{h}{S}^2\leq 
\max_{S\not \in {\cal S}}|\coeffgen{h}{S}|
*
\sum_{S\not \in {\cal S}}|\coeffgen{h}{S}|
\leq (\epsilon/4 L_1)* L_1=\epsilon/4.
$
We can therefore use $\|A+B\|^2\leq 2(\|A\|^2+\|B\|^2)$, to get $\EE_D[(f(\bfX)-g(\bfX))^2]\leq 2(\EE_D[(f(\bfX)-h(\bfX))^2]+\EE_D[(g(\bfX)-h(\bfX))^2])\leq \epsilon$.
\end{proof}

Recall that DNFs are disjunctions of conjunctions and that in disjoint DNF the conjunctions are mutually exclusive. 
Decision trees whose node tests are individual variables are a subset of disjoint DNF, because they can be captured by the set of paths to leaves labeled 1. 
The next lemma shows that, when a bound $L_1(d)$ exists, disjoint DNF satisfies the conditions of Lemma~\ref{lm:sparse-approx-via-L1}.

\begin{lemma}
\label{lm:sparse-approx-of-disj-DNF} 
Consider any distribution $D$ with its corresponding Fourier basis where $L_1(d)$ is a bound on the spectral norm of conjunctions of size $d$.
For any $s$-term disjoint DNF $f$, 
there exists a function $h(\bfx):\{0,1\}^n \rightarrow\{0,1\}$ such that 
$\EE_D[(f(\bfX)-h(\bfX))^2]\leq\epsilon/4$ and $\L1(h)\leq L_1$
where
$d=\log_{1-c}(\epsilon/4s)$,
and
$L_1=s L_1(d)$.
\end{lemma}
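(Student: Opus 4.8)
The plan is to handle the high-probability terms and the low-probability terms of the disjoint DNF separately. Write $f = \bigvee_{i=1}^{s} T_i$ where the $T_i$ are mutually exclusive conjunctions. By Lemma~\ref{lm:boundedprob}, each conjunction $T_i$ with more than $d := \log_{1-c}(\epsilon/4s)$ literals satisfies $\EE_D[T_i(\bfX)] \leq (1-c)^d = \epsilon/4s$. So I would define $h = \bigvee_{i : |T_i| \leq d} T_i$, i.e., drop every long conjunction. Since the $T_i$ are disjoint, $f$ and $h$ differ only on inputs satisfying some dropped (long) conjunction, and again by disjointness $\EE_D[(f(\bfX)-h(\bfX))^2] = P_D(f \neq h) \leq \sum_{i : |T_i| > d} \EE_D[T_i(\bfX)] \leq s \cdot (\epsilon/4s) = \epsilon/4$. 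This establishes the square-norm bound.

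For the spectral norm, I would use the triangle inequality on the Fourier expansion: since $h$ is (again by disjointness) literally the sum of the Boolean functions $T_i$ over short conjunctions, $h(\bfx) = \sum_{i : |T_i|\le d} T_i(\bfx)$ as real-valued functions, so $\hat{h}_{\CS} = \sum_i \widehat{(T_i)}_{\CS}$ and hence $\L1(h) = \sum_\CS |\hat h_\CS| \leq \sum_{i : |T_i|\le d} \sum_\CS |\widehat{(T_i)}_\CS| = \sum_{i : |T_i|\le d} \L1(T_i) \leq s \cdot L_1(d)$, using the assumed bound $\L1(T_i) \leq L_1(|T_i|) \leq L_1(d)$ since $L_1(\cdot)$ is nondecreasing in $d$ (and if one wants to be careful, $L_1(d)$ as stated is monotone in all the cases of interest). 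Setting $L_1 = s\, L_1(d)$ finishes this part. Note that $h$ is indeed $\{0,1\}$-valued precisely because the retained conjunctions are still mutually exclusive, so their sum never exceeds $1$.

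The only subtlety — and the part I would be most careful about — is the step "$h = \sum_i T_i$ as real functions implies $\L1(h) \le \sum_i \L1(T_i)$." This is just subadditivity of $\L1$ under addition of functions, which is immediate from linearity of the Fourier coefficients (Corollary~\ref{cor:orthonormalbasis}) and the triangle inequality; there is no sparsity or orthogonality issue here since we are only summing $\ell_1$ norms of coefficient vectors. The one place to double-check is that $L_1(d)$ genuinely bounds $\L1(T_i)$ for every short conjunction regardless of how many literals are negated and regardless of the BN ordering, but that is exactly what Theorem~\ref{thm:tree2} / Lemma~\ref{lm:kjunta} (or whichever bound instantiates $L_1(d)$) provides. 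With both bounds in hand, $h$ witnesses the claim and one can feed it into Lemma~\ref{lm:sparse-approx-via-L1}.
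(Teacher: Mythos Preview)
Your proposal is correct and follows essentially the same argument as the paper: drop the long terms, use disjointness to write both $f$ and $h$ as real sums of their conjunctions, bound the square error via Lemma~\ref{lm:boundedprob} and the union bound, and bound $\L1(h)$ by subadditivity of the spectral norm. Your remark about monotonicity of $L_1(d)$ is a point the paper leaves implicit but also relies on.
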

\begin{proof}
Let $f=\vee_{i=1}^s t_i=\sum_{i} t_i$ where we can swap the disjunction by an addition due to the disjointness. 
Let $L=\{i| t_i \mbox{ is not longer than }d\}$ and let $h=\sum_{i\in L} t_i$. For all $S$, $\coeffgen{h}{S}=\sum_{i\in L}\coeffgen{t_i}{S}$ and $|\coeffgen{h}{S}|\leq \sum_{i\in L}|\coeffgen{t_i}{S}|$, 
and by the assumption of the theorem this implies $\|\hat{h}\|_1\leq L_1 = s L_1(d)$. Now using Lemma~\ref{lm:boundedprob} and the union bound we have
$\EE_D[(f(\bfX)-h(\bfX))^2]=\EE_D[\sum_{i\not \in L} t_i]\leq s (1-c)^{d}=\epsilon/4$.
\end{proof}

Combining the lemmas with Theorem~\ref{thm:EKM} we have:

\begin{customcorollary}{\ref{cor:KM DNF}}
[cf. Theorem 3.10 of \cite{KM1993}] 
\putcorKMdisjointDNF
\end{customcorollary}

\begin{proofnote}
The flow is $L_1 \rightarrow T \rightarrow \theta \rightarrow \gamma$. 
We have $T=4 L_1^2/\epsilon'$, 
$\theta=\sqrt{\epsilon'/T} = \epsilon'/2L_1$
and
$\gamma^2 \leq \epsilon'^2/4T={\epsilon'^3}/{16L_1^2}$.
\end{proofnote}

\subsection{Learning DNF Through Feldman's Algorithm}\label{app:feldman dnf}


We note that 
the original results \cite{Feldman2012} are developed with two restrictions. The first is a restriction to $c$-bounded product distributions.
The second is to restrict the scope of coefficients to be for sets of ``low degree", that is, zero out any higher order coefficients. 
However, these restrictions are only used through their implied properties of boundedness 
which was shown to hold in Lemma~\ref{lm:boundedprob} and a bound on the spectral norm of conjunctions which we assume generically in the form $L_1(d)$.
With these in place, the proofs go through 
without an explicit filtering of high degree coefficients. 
In addition, \citet{Feldman2012} considers the input domain to be $\{-1,1\}^n$ instead of $\{0,1\}^n$ which changes the value of $c$ for $c$-bounded product distributions. 
However, while this changes the constants slightly it does not affect the arguments. 
We therefore re-state the results in their general version without repeating the proofs.

Consider a functions $f:\cuben\rightarrow\{-1,1\}$ and any function $p:\cuben\rightarrow\reals$. We say that $p()$ 1-sign represents $f$ if $\forall \bfx, f(\bfx)=\sign(p(\bfx))$ and $|p(\bfx)|\geq 1$.

The key lemma allows us to use an approximation of the Fourier representation of $f$ (which in addition must have a bounded range) to approximate $f$.

\begin{lemma} [cf. Lemma 7 in \cite{Feldman2012}]
Consider any distribution $D$ specified by a BN and its corresponding Fourier basis, and any Boolean function $f$. 
Let $p()$ be a function that 1-sign represents $f$, $p'()$ any other function, and $g()$ a bounded function with range in $[-1,1]$.
Then: 
$$\PP_D[f(\bfX)\not = \sign(g(\bfX))] \leq \EE_D[|f(\bfX)-g(\bfX)|]\leq \|\hat{f}-\hat{g}\|_\infty \cdot \|\hat{p'}\|_1 + 2 \EE_D[|p'(\bfX)-p(\bfX)|].$$
\end{lemma}
\begin{proofnote}
Both the statement of the lemma and its
proof are identical to the original one except that we do not restrict the set of coefficients to those of low degree.
\end{proofnote}

Note that the only requirements for the predictor $g()$ is a bounded range, and an L$_\infty$ approximation of the Fourier coefficients of $f$. This allows us to avoid going through a square error approximation. 
The functions $p()$ and $p()'$ present structural requirements on the class of target functions $f$ but they are not directly required by the intended learning algorithm. 
Recall that the KM algorithm already provides a L$_\infty$ approximation. But the corresponding function may not be bounded. 
The algorithm and analysis of \citet{Feldman2012} show how this can be achieved.

The next lemma, uses a decomposition similar to Lemma~\ref{lm:sparse-approx-of-disj-DNF} to apply the previous result to DNF.

\begin{lemma} [cf. Theorem 11 in \cite{Feldman2012}]
\label{lm:dnfFeldman}
Consider any distribution $D$ with its corresponding Fourier basis where $L_1(d)$ is a bound on the spectral norm of conjunctions of size $d$.
Let $f$ be any $s$-term DNF expression, 
$d=\log_{1-c}(\epsilon/4s)$,
$L_1=2s L_1(d)+1$,
and $g()$ a bounded function with range in $[-1,1]$.
Then
$$\PP_D(f(\bfX)\not = \sign(g(\bfX))) \leq \EE_D[|f(\bfX)-g(\bfX)|]\leq L_1 \cdot \|\hat{f}-\hat{g}\|_\infty + \epsilon.$$
\end{lemma}
\begin{proofnote}
The proof is identical to the original one with a slight change in the values of the quantities. We have the same polynomial construction $p=2\sum t_i -1$ with the consequence that $\|\hat{p}\|_1\leq L_1$. Removing terms longer than $d$ yields $\EE_D[|p'(\bfX)-p(\bfX)|]\leq \epsilon/2$.
\end{proofnote}

Hence the algorithm {\em PTFconstruct} of \citet{Feldman2012}, with input $\gamma^*$, aims to approximate $\hat{f}$ with $\hat{g}$ while maintaining a bounded range for $g()$.
The algorithm runs in two phases.\footnote{The original presentation assumes the output of phase (1) as input and combines them for learning DNF. Here we combine them directly to simplify the presentation.
} 
(1) It first runs KM to get approximations $\approxcoeff{S}$ of all large coefficients of $f$ with $\theta=\gamma=\gamma^*$, 
(2) It then iteratively adds coefficients to the unbounded approximation $g'$ which is initialized to $g'=0$. 
The bounded $g$ is defined through a restriction of $g'$ to the range $[-1,1]$.
The algorithm iteratively finds $S$ such that $\approxcoeff{S}$ is far from $\approxcoeffgen{g}{S}$, and sets 
$g'=g'+c \ \basis{S}$ for a suitable value of $c$ ($c\in\pm \gamma^*$). Note that the updates are done on $g'$ that has an explicit Fourier representation, and that $g$ is defined through a restriction of the range of $g'$. Hence to find the violating coefficient $S$, the algorithm runs KM on the function $g$ 
with $\theta=\gamma=\gamma^*/2$
and compares the succinctly represented outputs of KM for $f$ and $g$. 
As noted by \citet{Feldman2012} the analysis of the algorithm does not depend on any property of the basis (other than orthonormality). 

\begin{lemma} [cf. Theorem 21 in \cite{Feldman2012}]
Consider any distribution $D$ specified by a BN and its corresponding Fourier basis, and any Boolean function $f$. 
Algorithm $\mbox{PTFconstruct}(D,f,\gamma,\delta)$ is given access to the BN representation of $D$ and a MQ oracle for $f$ and two accuracy parameters.
The algorithm runs in time polynomial in $n$,$1/\gamma$,$1/\delta$ and returns a list of sets ${\cal S}=\{S\}$
of bounded size $|{\cal S}| \leq 1/(2\gamma^2)$,
values for the corresponding coefficients $\approxcoeffgen{g}{S}$, and a hypothesis
$g(\bfx)=P_1[g'(\bfx)]$ where
$g'(\bfx) =\sum_{S\in {\cal S}} \approxcoeffgen{g}{S} \basis{S}(\bfx)$,
and $P_1(z)=\sign(z)\cdot \min(1,|z|)$.

With probability at least $1-\delta$, the function $g$ satisfies: $\|\hat{f}-\hat{g}\|_\infty \leq 5 \gamma$. 
\end{lemma}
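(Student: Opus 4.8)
The plan is to follow the analysis of \citet{Feldman2012} for PTFconstruct essentially verbatim, substituting the extended KM algorithm of Theorem~\ref{thm:EKM} for the standard KM subroutine and invoking orthonormality of the BN-induced basis (Corollary~\ref{cor:orthonormalbasis}, Fact~\ref{fact:fourier1}, and the identities $\EE_D[\phi_S^2]=1$ and $\EE_D[\phi_S\phi_T]=0$ for $S\neq T$ from Lemma~\ref{lem:pS properties}) wherever the original argument uses orthonormality of the standard characters. Since Feldman's analysis uses no property of the basis beyond Parseval/Plancherel, the structure of the proof transfers directly; the real work is to re-track the accuracy parameters so that the final guarantee is exactly $\|\hat f-\hat g\|_\infty\le 5\gamma$, and to verify that the number of KM invocations stays polynomial, so that a union bound combined with the inverse-polynomial (Chebyshev-based) confidence of our KM still yields a polynomial running time.

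First I would fix notation for the iteration. Let $g'_t$ be the unclipped hypothesis after $t$ updates and $g_t=P_1(g'_t)$ its clipped version, with $g'_0=0$. In each round the algorithm runs KM on $f$ (once, in phase~1, with $\theta=\gamma=\gamma^*$) and on $g_t$ (with $\theta=\gamma=\gamma^*/2$) to obtain estimates $\tilde f_S$ and $\tilde g_{t,S}$ that are within the corresponding $\gamma^*$-level accuracy of the true coefficients for every $S$ in the output list and are taken to be $0$ for every $S$ outside it (still within $\gamma^*$ of the truth, since omitted coefficients have magnitude below $\theta$). The algorithm scans the two polynomially short lists for a set $S_t$ with $|\tilde f_{S_t}-\tilde g_{t,S_t}|$ above a fixed threshold of order $\gamma^*$; if such a set exists it updates $g'_{t+1}=g'_t+c_t\phi_{S_t}$ with $c_t=\pm\gamma^*$ in the sign of $\hat f_{S_t}-\hat g_{t,S_t}$, and otherwise it halts and outputs $g=g_t$.

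Second I would prove the progress bound and conclude. The clean fact is that $f$ has range $[-1,1]$ and $P_1$ is the pointwise nearest-point projection onto $[-1,1]$, so $|f(x)-g_t(x)|\le|f(x)-g'_t(x)|$ for every $x$, hence $\EE_D[(f-g_t)^2]\le\EE_D[(f-g'_t)^2]$. Writing $r_t$ for the $S_t$-th Fourier coefficient of $f-g'_t$, orthonormality gives $\EE_D[(f-g'_{t+1})^2]=\EE_D[(f-g'_t)^2]-2c_t r_t+c_t^2$, and one shows, following Feldman, that whenever a violated $S_t$ is found this drop is at least a fixed $\Omega((\gamma^*)^2)$. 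Since $\EE_D[(f-g'_0)^2]=\EE_D[f^2]\le1$ and the quantity is never negative, the number of updates is $O(1/(\gamma^*)^2)$; matching constants to the statement gives $|{\cal S}|\le1/(2\gamma^2)$ and a polynomial bound on the total number of KM calls (each on an explicitly represented, poly-time-evaluable function). At termination the failed scan means that every coefficient exceeding $\theta$ in either $f$ or $g$ has $|\tilde f_S-\tilde g_S|$ below the threshold; propagating the $\gamma^*/2$ estimation errors and the $\theta$-level omissions and setting $\gamma^*=\gamma$ yields $\|\hat f-\hat g\|_\infty\le 5\gamma$. Finally, each KM call succeeds with probability $1-\delta'$ by Theorem~\ref{thm:EKM}; with $O(1/(\gamma^*)^2)$ calls, taking $\delta'=\Theta(\delta(\gamma^*)^2)$ and a union bound gives overall success probability $1-\delta$, and since Theorem~\ref{thm:EKM} runs in time polynomial in $1/\delta'$ the whole algorithm runs in time polynomial in $n$, $1/\gamma$, $1/\delta$.

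The step I expect to be the main obstacle is the progress bound: one must argue that a coefficient that looks ``violated'' in the clipped function $g_t$ still yields genuine $L_2(D)$ progress for the \emph{unclipped} update on $g'_t$, which is precisely where Feldman's potential argument does its work, and one must check that this argument used nothing about the standard characters beyond $\EE_D[\phi_S^2]=1$ and orthogonality --- both available from Lemma~\ref{lem:pS properties}. A secondary but genuine concern is the bookkeeping that carries the constant $5$ through the chain of choices for $\theta$, $\gamma$, the violation threshold, and the estimation accuracy: this is routine but must be done carefully to land exactly on the stated $5\gamma$.
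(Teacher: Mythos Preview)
Your proposal is correct and matches the paper's approach exactly: the paper does not write out a proof for this lemma at all, but simply states it after the remark ``As noted by \citet{Feldman2012} the analysis of the algorithm does not depend on any property of the basis (other than orthonormality),'' i.e., it defers entirely to Feldman's argument with the extended KM of Theorem~\ref{thm:EKM} substituted in. Your sketch is a faithful expansion of precisely that deferral, and you have correctly identified the one nontrivial point (that the potential argument relating progress on $g'_t$ to violations detected via the clipped $g_t$ uses only orthonormality and the contraction property of $P_1$).
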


Combining the previous two results and setting $\gamma$ appropriately, we get:

\begin{customcorollary}{\ref{cor:FC15}} [cf. Corollary 15 in \cite{Feldman2012}]
\putcorFeldmanDNF
\end{customcorollary}
\begin{proofnote}
The proof is identical to the original one with a slight change in the values of the quantities to adapt to the setting of Lemma~\ref{lm:dnfFeldman}. 
With this setting the error is bounded by $5\gamma L_1+\epsilon'=\epsilon$.
\end{proofnote}

\subsection{Agnostic Learning of Disjoint DNF Through \citet{Gopalan2008}'s Algorithm}\label{app:goplan dnf}

The algorithm of \citet{Gopalan2008} is based on the observation that, under the uniform distribution, decision trees with $t$ leaves have  spectral norm at most $t$.
Therefore, one can use the class of functions with bounded spectral norm for agnostic learning of decision trees.
By Lemma~\ref{lm:sparse-approx-of-disj-DNF} a $s$-term disjoint DNF function $f$ can be approximated by a function $g$ with error $\leq \epsilon/4$ using 
$d=\log_{1-c}(\epsilon/4s)$,
and
$t=L_1(g)=s L_1(d)$.
Therefore,
an agnostic learner for the class of functions with spectral norm bounded by $t$ with error guarantee of $\epsilon/2$ yields an agnostic guarantee for disjoint DNF with error  $< \epsilon$.



The algorithm of \citet{Gopalan2008} is based on a projected subgradient descent for the following problem:
\begin{equation}\label{eq:l1 reg}
    \min_{P \in K_t}  \err_f(P),
\qquad
\err_f(P) = \EE_D[|P(X) - f(X)|],
\end{equation}
where $K_t=\{P: L_1(P)\leq t\}$.
This is a convex optimization problem, but it range over $2^n$ variables and it is non-differentiable. 
To use gradient descent it is sufficient to compute a subgradient. For a polynomial $P$, define the pointwise sign function
\[
\nabla_f P(x) = \sign(P(x) - f(x)), \qquad \forall x \in \{0, 1\}^n.
\]
We note that, even in the non-uniform case, this is a valid subgradient of $\err_f$ at $P$.
\begin{lemma}[cf. Lemma 5 in \cite{Gopalan2008}]
    For any polynomial $P$, the function $\nabla_f P$ is a valid subgradient of $\err_f$ at $P$; that is for all polynomials $Q$,
    \[
\EE_D\qty[\nabla_f P(X) (P(X) - Q(X))] \ge \err_f(P) - \err_f(Q).
\]
\end{lemma}
\begin{proofnote}
    The proof follows the same steps as in the uniform case as in Lemma 5 of \cite{Gopalan2008} with the expectation taken over $D$ instead of the uniform distribution.
\end{proofnote}

Let $\mathrm{proj}_{K_t}$ denote the projection onto the set $K_t$ under the $\ell_2$ norm: 
\[
\mathrm{proj}_{K_t}(P) = \arg\min_{Q: L_1(Q)\leq t} \EE_D [\abs{P(X) - Q(X)}^2].
\]
Then the projected subgradient descent is given by the iteration:
\[
P_{k}' = P_{k-1} - \eta \nabla_f P_{k-1},
\qquad
P_k = \mathrm{proj}_{K_t}(P_k').
\]
Standard analysis implies convergence to within $O(\tfrac{t}{\sqrt{T}})$ of optimal error in $T$ iterations,
but this idealized algorithm is exponential-time due to the $2^n$-dimensional
representation.
To make this efficient, \citet{Gopalan2008} used the KM algorithm to approximate the subgradient $\nabla_f P$ at each step.  The key observation is that the Fourier coefficients of $\nabla_f P$ can be approximated using membership queries to $f$ and evaluation queries to $P$.  Since $P$ is $t$-sparse, evaluating $P(x)$ can be done efficiently.  Thus the KM algorithm can be used to find all large Fourier coefficients of $\nabla_f P$, which suffices to approximate the subgradient. 

In the non-uniform case, we can use the extended KM algorithm (Algorithm \ref{alg:KM}) to approximate the Fourier coefficients of $\nabla_f P$ under the distribution $D$ (see Algorithm \ref{alg:km-gradient-descent}).  
Given oracle access to a function $g$, and the Fourier basis corresponding to $D$, the extended-KM algorithm returns a sparse polynomial $Q$ satisfying
\[
\|\widehat{g} - \widehat{Q}\|_\infty \le \theta.
\]
which is an $\ell_\infty$ approximation of the Fourier
coefficients. Therefore, we can use the extended-KM to approximate $\nabla_f P_{k-1}$ at each iteration, implying that $\nabla_f P_{k-1}$ can be efficiently computed pointwise
using oracle access to $f$. 

Projection onto $K_t$ in the non-uniform case is the same as \citet{Gopalan2008}'s in the uniform case: For $\lambda \ge 0$, define
\[
{\mathrm{shrink}}(P,\lambda)(\CS) :=
\begin{cases}
\widehat{P}_\CS - \lambda & \widehat{P}_\CS \ge \lambda,\\
\widehat{P}_\CS + \lambda & \widehat{P}_\CS \le -\lambda,\\
0 & \text{otherwise}.
\end{cases}
\]
where $\widehat{P}_\CS$ are the Fourier coefficients of $P$ corresponding to distribution $D$.  The rest of the analysis follows through with minor modifications.

\begin{lemma}
    [cf. Lemma 6 in \cite{Gopalan2008}]
    The projection onto the set $K_t$ is given by
    $\mathrm{proj}_{K_t}(P) = \mathrm{shrink}(P,\lambda)$
for the smallest $\lambda$ such that
$L_1(\mathrm{shrink}(P,\lambda)) \le t$.
\end{lemma}

\begin{algorithm}[t]
\caption{Gradient Descent using Extended-KM}
\label{alg:km-gradient-descent}
\KwIn{Number of iterations $T$, step size $\eta \in (0,1)$, accuracy parameter $\theta \in (0,1)$}

Initialize $P_0 \gets 0$\;

\For{$k = 1$ \KwTo $T$}{
    $P_k' \gets P_{k-1} - \eta \cdot \mathrm{ExtendedKM}(\nabla_f P_{k-1}, \theta)$\;
    $P_k \gets \mathrm{ExtendedKM}(\mathrm{proj}_{K_t}(P_k'), \theta)$\;
}

\KwRet{$\arg\min_{k \in \{1,\dots,T\}} err_f(P_k)$}
\end{algorithm}

Another technical point is that $\mathrm{proj}_{K_t}$
is stable under $\ell_\infty$ perturbations of the Fourier coefficients (see Lemma 8 of \cite{Gopalan2008}). Hence, although the extended-KM only returns estimated  coefficients within additive error, the projection step ensures that the square error is also controlled.


Finally, \citet{Gopalan2008} shows that show that agnostic learning of a class $\mathcal{C}$, which is approximable via an $L_1$-bounded Fourier representation, is reducible to the sparse regression problem of (\ref{eq:l1 reg}).
To perform learning, the algorithm chooses a threshold $\alpha$ and outputs $\mathbf{1}[P(\gamma)\ge \alpha]$, where $\alpha$ minimizes empirical error on an additional sample. This argument holds for any distribution and deterministic  target function. The generalization to the agnostic setting where the target function is not necessarily deterministic also holds to our setting. The only difference is that Lemma~19 of \citet{Gopalan2008} is modified to the following form:

\begin{lemma}[cf. Lemma 19 in \cite{Gopalan2008}]
\label{lm:agnostic-learning}
Let $A^f$ be any algorithm that makes $q$ distinct queries to some fixed function $f : \{-1, 1\}^n \to \{-1,1\}$ and outputs  a function $h^f : \{-1,1\}^n \to \{-1,1\}$. Then
$$\PP_{(X,Y)\sim D}[h^f(X) \neq Y] \leq \EE_{f\sim F(D)} \qty[\PP[h^f(X) \neq f(X)| f]] + q (1-c)^n,$$
where $F(D)$ is the distribution over functions induced by $D$.
\end{lemma}
\begin{proofnote}
The proof is identical to the original one with a slight change in the values of the quantities. The  proof relies on the fact that if $f,f'\sim F(D)$ are two independent draws from $F(D)$, then  $\PP_{(X,Y)\sim D}[h^f(X)\neq f(X)]$ and $\PP_{(X,Y)\sim D}[h^{f}(X)\neq f'(X)]$  differ only on the $q$ queried inputs that contribute to a total mass of at most $q (1-c)^n$, as $D(x) \leq (1-c)^n$ for all $x$. 
\end{proofnote}

With this lemma, we can show that Theorem 20 of \citet{Gopalan2008} holds for any distribution $D$ with its corresponding Fourier basis where $L_1(d)$ is a bound on the spectral norm of conjunctions of size $d$:  
\begin{customcorollary}{\ref{cor:DT}}
\putagnosticDTcor
\end{customcorollary}
%

\newpage
\bibliography{ref_Learning}

\begin{thebibliography}{58}
\providecommand{\natexlab}[1]{#1}
\providecommand{\url}[1]{\texttt{#1}}
\expandafter\ifx\csname urlstyle\endcsname\relax
  \providecommand{\doi}[1]{doi: #1}\else
  \providecommand{\doi}{doi: \begingroup \urlstyle{rm}\Url}\fi

\bibitem[Aizenstein et~al.(1998)Aizenstein, Blum, Khardon, Kushilevitz, Pitt,
  and Roth]{ABKKPR1998}
Howard Aizenstein, Avrim Blum, Roni Khardon, Eyal Kushilevitz, Leonard Pitt,
  and Dan Roth.
\newblock On learning read-k-satisfy-j {DNF}.
\newblock \emph{SIAM Journal on Computing}, 27\penalty0 (6):\penalty0
  1515--1530, 1998.

\bibitem[Ali and Silvey(1966)]{Ali1966}
S.~M. Ali and Samuel~D. Silvey.
\newblock A general class of coefficients of divergence of one distribution
  from another.
\newblock \emph{Journal of the royal statistical society series
  b-methodological}, 28:\penalty0 131--142, 1966.

\bibitem[Aliakbarpour et~al.(2016)Aliakbarpour, Blais, and
  Rubinfeld]{AliakbarpourBR16}
Maryam Aliakbarpour, Eric Blais, and Ronitt Rubinfeld.
\newblock Learning and testing junta distributions.
\newblock In Vitaly Feldman, Alexander Rakhlin, and Ohad Shamir, editors,
  \emph{Proceedings of the 29th Conference on Learning Theory}, volume~49,
  pages 19--46, 2016.

\bibitem[Alman et~al.(2025)Alman, Nadimpalli, Patel, and Servedio]{AlmanNPS25}
Josh Alman, Shivam Nadimpalli, Shyamal Patel, and Rocco~A. Servedio.
\newblock {DNF} learning via locally mixing random walks.
\newblock In \emph{Proceedings of the 57th Annual {ACM} Symposium on Theory of
  Computing, {STOC}}, pages 2055--2061. {ACM}, 2025.

\bibitem[Alon et~al.(1999)Alon, Matias, and Szegedy]{Alon1999}
Noga Alon, Yossi Matias, and Mario Szegedy.
\newblock The space complexity of approximating the frequency moments.
\newblock \emph{Journal of Computer and System Sciences}, 58\penalty0
  (1):\penalty0 137--147, February 1999.
\newblock ISSN 0022-0000.
\newblock \doi{10.1006/jcss.1997.1545}.

\bibitem[Angluin and Kharitonov(1995)]{AngluinK95}
Dana Angluin and Michael Kharitonov.
\newblock When won't membership queries help?
\newblock \emph{Journal of Computer and System Sciences}, 50\penalty0
  (2):\penalty0 336--355, 1995.

\bibitem[Bellare(1991)]{Bellare1991}
M.~Bellare.
\newblock The spectral norm of finite functions.
\newblock Technical report, MIT, USA, 1991.

\bibitem[Bhattacharyya et~al.(2023)Bhattacharyya, Gayen, Price, Tan, and
  Vinodchandran]{BhattacharyyaGPTV23}
Arnab Bhattacharyya, Sutanu Gayen, Eric Price, Vincent Y.~F. Tan, and N.~V.
  Vinodchandran.
\newblock Near-optimal learning of tree-structured distributions by {Chow} and
  {Liu}.
\newblock \emph{{SIAM} Journal of Comput.}, 52\penalty0 (3):\penalty0 761--793,
  2023.

\bibitem[Blanc et~al.(2023)Blanc, Lange, Malik, and Tan]{BlancLMT23}
Guy Blanc, Jane Lange, Ali Malik, and Li{-}Yang Tan.
\newblock Lifting uniform learners via distributional decomposition.
\newblock In \emph{Proceedings of the 55th Annual {ACM} Symposium on Theory of
  Computing}, pages 1755--1767. {ACM}, 2023.

\bibitem[Blanc et~al.(2025)Blanc, Lange, Strassle, and Tan]{BlancLST25}
Guy Blanc, Jane Lange, Carmen Strassle, and Li{-}Yang Tan.
\newblock A distributional-lifting theorem for {PAC} learning.
\newblock In \emph{Proceedings of the Conference on Learning Theory}, 2025.
\newblock Paper available as arXiv preprint 2506.016651.

\bibitem[Blum et~al.(1994)Blum, Furst, Jackson, Kearns, Mansour, and
  Rudich]{BlumFJKMR94}
Avrim Blum, Merrick~L. Furst, Jeffrey~C. Jackson, Michael~J. Kearns, Yishay
  Mansour, and Steven Rudich.
\newblock Weakly learning {DNF} and characterizing statistical query learning
  using {Fourier} analysis.
\newblock In \emph{Proceedings of the Twenty-Sixth Annual {ACM} Symposium on
  Theory of Computing}, pages 253--262. {ACM}, 1994.

\bibitem[Bshouty(1995)]{Bshouty95}
Nader~H. Bshouty.
\newblock Exact learning boolean function via the monotone theory.
\newblock \emph{Information and Computation}, 123\penalty0 (1):\penalty0
  146--153, 1995.

\bibitem[Bshouty and Tamon(1996)]{BshoutyTamon96}
Nader~H. Bshouty and Christino Tamon.
\newblock On the {Fourier} spectrum of monotone functions.
\newblock \emph{Journal of the {ACM}}, 43\penalty0 (4):\penalty0 747--770,
  1996.

\bibitem[Bshouty et~al.(2004)Bshouty, Jackson, and Tamon]{BshoutyJT04}
Nader~H. Bshouty, Jeffrey~C. Jackson, and Christino Tamon.
\newblock More efficient {PAC}-learning of {DNF} with membership queries under
  the uniform distribution.
\newblock \emph{Journal of Computer and System Sciences}, 68\penalty0
  (1):\penalty0 205--234, 2004.

\bibitem[Chastaing et~al.(2012)Chastaing, Gamboa, Prieur,
  et~al.]{Chastaing2012}
Ga{\"e}lle Chastaing, Fabrice Gamboa, Cl{\'e}mentine Prieur, et~al.
\newblock Generalized {Hoeffding-Sobol} decomposition for dependent
  variables-application to sensitivity analysis.
\newblock \emph{Electronic Journal of Statistics}, 6:\penalty0 2420--2448,
  2012.

\bibitem[Chen et~al.(2021)Chen, Jayaram, Levi, and Waingarten]{ChenJLW21}
Xi~Chen, Rajesh Jayaram, Amit Levi, and Erik Waingarten.
\newblock Learning and testing junta distributions with sub cube conditioning.
\newblock In Mikhail Belkin and Samory Kpotufe, editors, \emph{Conference on
  Learning Theory}, volume 134 of \emph{Proceedings of Machine Learning
  Research}, pages 1060--1113. {PMLR}, 2021.

\bibitem[Chow and Liu(1968)]{Chow1968}
C.~Chow and C.~Liu.
\newblock Approximating discrete probability distributions with dependence
  trees.
\newblock \emph{IEEE Transactions on Information Theory}, 14\penalty0
  (3):\penalty0 462--467, May 1968.

\bibitem[Chow and Wagner(1973)]{Chow1973}
C.~Chow and T.~Wagner.
\newblock Consistency of an estimate of tree-dependent probability
  distributions (corresp.).
\newblock \emph{IEEE Transactions on Information Theory}, 19\penalty0
  (3):\penalty0 369--371, May 1973.

\bibitem[Csiszár(1963)]{Csiszar1963}
I.~Csiszár.
\newblock Eine informationstheoretische ungleichung und ihre anwendung auf den
  beweis der ergodizität on markoffschen ketten.
\newblock \emph{Publ. Math. Inst. Hungar. Acad. Sci.}, 8:\penalty0 84--108,
  1963.

\bibitem[Dasgupta(1997)]{Dasgupta97}
Sanjoy Dasgupta.
\newblock The sample complexity of learning fixed-structure {Bayesian}
  networks.
\newblock \emph{Machine Learning}, 29\penalty0 (2-3):\penalty0 165--180, 1997.

\bibitem[Edmonds(1967)]{Edmonds1967}
J.~Edmonds.
\newblock Optimum branchings.
\newblock \emph{Journal of Research of the National Bureau of Standards,
  Section B}, 71:\penalty0 233--240, 1967.

\bibitem[Feldman(2007)]{Feldman07}
Vitaly Feldman.
\newblock Attribute-efficient and non-adaptive learning of parities and {DNF}
  expressions.
\newblock \emph{Journal of Machine Learning Research}, 8:\penalty0 1431--1460,
  2007.

\bibitem[Feldman(2012)]{Feldman2012}
Vitaly Feldman.
\newblock Learning {DNF} expressions from {Fourier} spectrum.
\newblock In Shie Mannor, Nathan Srebro, and Robert~C. Williamson, editors,
  \emph{Proceedings of the 25th Annual Conference on Learning Theory},
  volume~23 of \emph{Proceedings of Machine Learning Research}, pages
  17.1--17.19, 2012.

\bibitem[Feldman et~al.(2009)Feldman, Gopalan, Khot, and
  Ponnuswami]{Feldman2009}
Vitaly Feldman, Parikshit Gopalan, Subhash Khot, and Ashok~Kumar Ponnuswami.
\newblock On agnostic learning of parities, monomials, and halfspaces.
\newblock \emph{{SIAM} Journal on Computing}, 39\penalty0 (2):\penalty0
  606--645, January 2009.

\bibitem[Furst et~al.(1991)Furst, Jackson, and Smith]{furst1991improved}
Merrick~L Furst, Jeffrey~C Jackson, and Sean~W Smith.
\newblock Improved learning of ${AC}^0$ functions.
\newblock In \emph{Proceedings of the Annual Conference on Learning Theory
  (COLT)}, pages 317--325, 1991.

\bibitem[Gabow et~al.(1986)Gabow, Galil, Spencer, and Tarjan]{GGST1986}
H~N Gabow, Z~Galil, T~Spencer, and R~E Tarjan.
\newblock Efficient algorithms for finding minimum spanning trees in undirected
  and directed graphs.
\newblock \emph{Combinatorica}, 6\penalty0 (2):\penalty0 109–122, 1986.

\bibitem[Goldreich and Levin(1989)]{Goldreich1989}
O.~Goldreich and L.~A. Levin.
\newblock A hard-core predicate for all one-way functions.
\newblock In \emph{Proceedings of the twenty-first annual ACM symposium on
  Theory of computing}, pages 25--32. ACM Press, 1989.

\bibitem[Gopalan et~al.(2008)Gopalan, Kalai, and Klivans]{Gopalan2008}
Parikshit Gopalan, Adam~Tauman Kalai, and Adam~R. Klivans.
\newblock Agnostically learning decision trees.
\newblock In \emph{Proceedings of the fortieth annual ACM symposium on Theory
  of computing}, STOC ’08, pages 527--536. ACM, May 2008.
\newblock \doi{10.1145/1374376.1374451}.

\bibitem[Heidari et~al.(2021)Heidari, Sreedharan, Shamir, and
  Szpankowski]{HeidariICML2021}
Mohsen Heidari, Jithin Sreedharan, Gil~I Shamir, and Wojciech Szpankowski.
\newblock Finding relevant information via a discrete {Fourier} expansion.
\newblock In Marina Meila and Tong Zhang, editors, \emph{Proceedings of the
  38th International Conference on Machine Learning}, volume 139 of
  \emph{Proceedings of Machine Learning Research}, pages 4181--4191, 2021.

\bibitem[Heidari et~al.(2022)Heidari, Sreedharan, Shamir, and
  Szpankowski]{Heidari2022}
Mohsen Heidari, Jithin~K. Sreedharan, Gil Shamir, and Wojciech Szpankowski.
\newblock Sufficiently informative and relevant features: An
  information-theoretic and {Fourier}-based characterization.
\newblock \emph{IEEE Transactions on Information Theory}, 68\penalty0
  (9):\penalty0 6063--6077, September 2022.

\bibitem[Hellerstein et~al.(2012)Hellerstein, Kletenik, Sellie, and
  Servedio]{HellersteinKSS12}
Lisa Hellerstein, Devorah Kletenik, Linda Sellie, and Rocco~A. Servedio.
\newblock Tight bounds on proper equivalence query learning of {DNF}.
\newblock In \emph{The 25th Annual Conference on Learning Theory}, pages
  31.1--31.18, 2012.

\bibitem[Hoeffding(1948)]{HoeffdingDecomp}
Wassily Hoeffding.
\newblock A class of statistics with asymptotically normal distribution.
\newblock \emph{The Annals of Mathematical Statistics}, 19\penalty0
  (3):\penalty0 293--325, 1948.

\bibitem[H{\"{o}}ffgen(1993)]{Hoffgen93}
Klaus{-}Uwe H{\"{o}}ffgen.
\newblock Learning and robust learning of product distributions.
\newblock In Lenny Pitt, editor, \emph{Proceedings of the Sixth Annual {ACM}
  Conference on Computational Learning Theory}, pages 77--83. {ACM}, 1993.

\bibitem[Jackson(1997)]{Jackson1997}
Jeffrey~C Jackson.
\newblock An efficient membership-query algorithm for learning {DNF} with
  respect to the uniform distribution.
\newblock \emph{Journal of Computer and System Sciences}, 55\penalty0
  (3):\penalty0 414--440, December 1997.

\bibitem[Jerrum et~al.(1986)Jerrum, Valiant, and Vazirani]{Jerrum1986}
Mark~R. Jerrum, Leslie~G. Valiant, and Vijay~V. Vazirani.
\newblock Random generation of combinatorial structures from a uniform
  distribution.
\newblock \emph{Theoretical Computer Science}, 43:\penalty0 169--188, 1986.
\newblock ISSN 0304-3975.
\newblock \doi{10.1016/0304-3975(86)90174-x}.

\bibitem[Kalai et~al.(2008)Kalai, Klivans, Mansour, and Servedio]{Kalai2008}
Adam~Tauman Kalai, Adam~R. Klivans, Yishay Mansour, and Rocco~A. Servedio.
\newblock Agnostically learning halfspaces.
\newblock \emph{{SIAM} Journal on Computing}, 37\penalty0 (6):\penalty0
  1777--1805, January 2008.

\bibitem[Kalai et~al.(2009)Kalai, Samorodnitsky, and Teng]{Kalai2009}
Adam~Tauman Kalai, Alex Samorodnitsky, and Shang-Hua Teng.
\newblock Learning and smoothed analysis.
\newblock In \emph{2009 50th Annual IEEE Symposium on Foundations of Computer
  Science}, pages 395--404. IEEE, October 2009.

\bibitem[Kearns et~al.(1994)Kearns, Schapire, and Sellie]{Kearns1994}
Michael~J. Kearns, Robert~E. Schapire, and Linda~M. Sellie.
\newblock Toward efficient agnostic learning.
\newblock \emph{Machine Learning}, 17\penalty0 (2-3):\penalty0 115--141, 1994.

\bibitem[Khardon(1994)]{Khardon94}
Roni Khardon.
\newblock On using the {Fourier} transform to learn disjoint {DNF}.
\newblock \emph{Information Processing Letters}, 49\penalty0 (5):\penalty0
  219--222, 1994.

\bibitem[Klivans and Servedio(2004)]{KlivansS04}
Adam~R. Klivans and Rocco~A. Servedio.
\newblock Learning {DNF} in time
  2\({}^{\mbox{{\~{o}}(n\({}^{\mbox{1/3}}\))}}\).
\newblock \emph{Journal of Computer and System Sciences}, 68\penalty0
  (2):\penalty0 303--318, 2004.

\bibitem[Koller and Friedman(2009)]{KollerFriedman2009}
Daphne Koller and Nir Friedman.
\newblock \emph{Probabilistic Graphical Models - Principles and Techniques}.
\newblock {MIT} Press, 2009.

\bibitem[Kushilevitz(1996)]{Kushilevitz96}
Eyal Kushilevitz.
\newblock A simple algorithm for learning {O}(log n)-term {DNF}.
\newblock In \emph{Proceedings of the Ninth Annual Conference on Computational
  Learning Theory}, pages 266--269, 1996.

\bibitem[Kushilevitz and Mansour(1993)]{KM1993}
Eyal Kushilevitz and Yishay Mansour.
\newblock Learning decision trees using the {Fourier} spectrum.
\newblock \emph{SIAM Journal on Computing}, 22\penalty0 (6):\penalty0
  1331--1348, December 1993.

\bibitem[Lerasle(2019)]{Lerasle2019}
Matthieu Lerasle.
\newblock Lecture notes: Selected topics on robust statistical learning theory.
\newblock \emph{arXiv preprint 1908.10761}, August 2019.
\newblock \doi{10.48550/ARXIV.1908.10761}.

\bibitem[Linial et~al.(1993)Linial, Mansour, and Nisan]{linial1993constant}
Nathan Linial, Yishay Mansour, and Noam Nisan.
\newblock Constant depth circuits, {F}ourier transform, and learnability.
\newblock \emph{J. ACM}, 40\penalty0 (3):\penalty0 607--620, 1993.

\bibitem[Lugosi and Mendelson(2019)]{Lugosi2019}
Gábor Lugosi and Shahar Mendelson.
\newblock Sub-gaussian estimators of the mean of a random vector.
\newblock \emph{The Annals of Statistics}, 47\penalty0 (2), April 2019.
\newblock ISSN 0090-5364.
\newblock \doi{10.1214/17-aos1639}.

\bibitem[Mansour(1995)]{Mansour1995}
Y.~Mansour.
\newblock An {O}(n log log n) learning algorithm for {DNF} under the uniform
  distribution.
\newblock \emph{Journal of Computer and System Sciences}, 50\penalty0
  (3):\penalty0 543--550, June 1995.

\bibitem[Nemirovsky and Yudin(1983)]{Nemirovskij1983}
A.~S. Nemirovsky and D.~B. Yudin.
\newblock \emph{Problem complexity and method efficiency in optimization}.
\newblock A Wiley-Interscience publication. Wiley, Chichester, 1983.
\newblock ISBN 0471103454.
\newblock Includes index.

\bibitem[O'Donnell(2014)]{ODonnell2014}
Ryan O'Donnell.
\newblock \emph{Analysis of Boolean functions}.
\newblock Cambridge University Press, 2014.

\bibitem[Pearl(1989)]{Pearl1989}
Judea Pearl.
\newblock \emph{Probabilistic reasoning in intelligent systems - networks of
  plausible inference}.
\newblock Morgan Kaufmann series in representation and reasoning. Morgan
  Kaufmann, 1989.

\bibitem[Sakai and Maruoka(2000)]{SakaiMuroka00}
Yoshifumi Sakai and Akira Maruoka.
\newblock Learning monotone log-term {DNF} formulas under the uniform
  distribution.
\newblock \emph{Theory of Computing Systems}, 33\penalty0 (1):\penalty0 17--33,
  2000.

\bibitem[Sason and Verdú(2015)]{Sason2015}
Igal Sason and Sergio Verdú.
\newblock $f$-divergence inequalities.
\newblock \emph{IEEE Transactions on Information Theory}, 62\penalty0
  (11):\penalty0 5973--6006, November 2015.

\bibitem[Servedio(2004)]{Servedio04}
Rocco~A. Servedio.
\newblock On learning monotone {DNF} under product distributions.
\newblock \emph{Information and Computation}, 193\penalty0 (1):\penalty0
  57--74, 2004.

\bibitem[Sobol(1993)]{Sobol1993}
Ilya~M Sobol.
\newblock Sensitivity estimates for nonlinear mathematical models.
\newblock \emph{Mathematical modelling and computational experiments},
  1\penalty0 (4):\penalty0 407--414, 1993.

\bibitem[Valiant(1984)]{Valiant1984}
L.~G. Valiant.
\newblock A theory of the learnable.
\newblock \emph{Communications of the {ACM}}, 27\penalty0 (11):\penalty0
  1134--1142, November 1984.

\bibitem[Valiant(1985)]{Valiant1985}
Leslie~G. Valiant.
\newblock Learning disjunction of conjunctions.
\newblock In \emph{Proceedings of the 9th International Joint Conference on
  Artificial Intelligence}, pages 560--566. Morgan Kaufmann, 1985.

\bibitem[Verbeurgt(1990)]{Verbeurgt90}
Karsten~A. Verbeurgt.
\newblock Learning {DNF} under the uniform distribution in quasi-polynomial
  time.
\newblock In \emph{Proceedings of the Third Annual Workshop on Computational
  Learning Theory, {COLT}}, pages 314--326, 1990.

\bibitem[Wolf(2008)]{Wolf2008}
Ronald~{de} Wolf.
\newblock \emph{A Brief Introduction to {Fourier} Analysis on the Boolean
  Cube}.
\newblock Number~1 in Graduate Surveys. Theory of Computing Library, 2008.

\end{thebibliography}

\end{document}